\newcommand \red[1] {{\color{red}#1}}
\newcommand \edit[1] {{{\red{#1}}}}
\def \bi {\begin{itemize}\item}
\def \ei {\end{itemize}}
\def \be {\begin{equation}}
\def \ee {\end{equation}}
\def \ba {\begin{aligned}}
\def \ea {\end{aligned}}
\def \l {\bigg{(}}
\def \r {\bigg{)}}
\newcommand{\LyX}{L\kern-.1667em\lower.25em\hbox{Y}\kern-.125emX\spacefactor1000}
\title{Is Non-Neutrality Profitable for the Stakeholders of the Internet Market?}
\author{Mohammad Hassan Lotfi, Saswati Sarkar, and George Kesidis\thanks{Parts of this work were presented in CISS'16~\cite{CISS}. Some of the  preliminary results and ideas of this work were presented as a poster in NetEcon'14 \cite{NetEcon}.}\thanks{This research was supported by NSF CNS (NeTS) collaborative grants 1526133 and 1525457.}
\thanks{ M. H. Lotfi is at Capital One.  Saswati Sarkar is with the Department of ESE at University of Pennsylvania, Philadelphia,
		PA, U.S.A.  George Kesidis is with the school of EECS of Pennsylvania State University, University Park, PA, U.S.A. Their email addresses are mohammadhassanlotfi@gmail.com, 
		swati@seas.upenn.edu, gik2@psu.edu, respectively.}}
\begin{document}
\maketitle\newtheorem{lemma}{Lemma}
\newtheorem{note}{Note}
\newtheorem{property}{Property}
\newtheorem{theorem}{Theorem}
\newtheorem{definition}{Definition}
\newtheorem{corollary}{Corollary}
\newtheorem{remark}{Remark}
\newtheorem{assumption}{Assumption}

\begin{abstract}
We consider an Internet market with one “big” monopolistic Content Provider (CP), one neutral and another non-neutral Internet Service Provider (ISP) and a
continuum of End-Users (EUs). The CP can differentiate between ISPs by controlling the quality of the content she offers on
each. EUs have different levels of innate preferences for ISPs. We formulate a
sequential game, and prove that if a Sub-game Perfect Nash Equilibrium (SPNE)
 exists, it would be one of the five possible strategies each
of which we explicitly characterize. We prove that 1) when EUs have sufficiently low innate preferences
for ISPs, a unique SPNE exists in which the neutral ISP is driven out of the market,  and 2) when these preferences are sufficiently high, there exists a unique SPNE with a non-neutral
outcome in which both ISPs are active. Numerical results reveal that the neutral ISP receives a lower
payoff and the non-neutral ISP receives a higher payoff (most of the time) in a non-neutral scenario.
However, we identify scenarios in which the non-neutral ISP loses payoff by adopting non-neutrality.
We also show that a non-neutral regime may yield a higher or a lower welfare for EUs in comparison to a neutral one -  the former happens
if the market power of the non-neutral ISP is small and the sensitivity of EUs (respectively, the CP) to the
quality is low (respectively, high).
\end{abstract}

\section{Introduction}

\subsection{Motivation} Net-neutrality on the Internet is the set of policies that prevents  discrimination by Internet Service Providers (ISPs) among different types of transmitted data \cite{progressive}. The net-neutrality debate has received impetus since
 parts of the Federal Communication Commission's (FCC) rules for net-neutrality has been struck down in court in January 2014 \cite{NetNeutrality}.  In February 2015, the FCC reclassified the Internet as a utility \cite{FCCutility}, providing grounds for securing even  stricter net-neutrality rules.
However, the situation is still fluid because both  ISPs and Content Providers (CPs) have incentives to adopt a non-neutral regime: 1) some CPs are willing to pay for a premium quality by which they can increase the usage, the satisfaction, or the number of their subscribers \cite{sponsoring_journal}, 2) ISPs can increase their profit by charging CPs for a premium quality.  In October 2015, the European parliament has rejected legal amendments for strict net-neutrality rules, and has allowed  sponsored data plans and Internet fast lanes for ``specialized services" \cite{EU_net_neutrality}.

Net-neutrality rules often have loopholes. For example, as part of a peering agreement towards resolving the traffic imbalance, Netflix has agreed to  pay Comcast for a faster access to Comcast's subscribers since February 2014 \cite{NYtimes2}.   However, after deploying the agreement, the average Netflix download speed improved significantly \cite{netflix_deal}. Note that
a contract for resolving {\em aggregate}  traffic imbalance at tier-1 ties Service Level Agreements (SLAs)
(particularly between an ``eyeball" ISP  and one serving a CP) in which the party receiving the net traffic imbalance gets paid is considered ``neutral" \cite{Kesidis13,Kesidis14}. Thus, although the Netflix-Comcast deal does not violate the net-neutrality rules, it has a non-neutral outcome of a side-payment between a residential ISP and a CP.


We  consider a market in which some of the ISPs are neutral and some are non-neutral, and model their interaction with each other and with a  CP  in the presence of \emph{asymmetric} competition between ISPs. This is because at initial stages of migration to a non-neutral regime, some ISPs would adopt a non-neutral regime before others. We consider CPs that can differentiate between ISPs by controlling the quality of the content they  offer on each one.  
 We consider the incentives of individual ISPs to adopt a non-neutral regime, while the focus of  most of previous works  is on the social welfare analysis of the market when all ISPs are neutral and/or all are non-neutral. We seek  intuitions with respect to a diverse set of  parameters (e.g. market powers of ISPs\footnote{Market power is the ability of a decision maker to raise the market price for a good or service.},  sensitivity of EUs and the CP to the quality of the content)
for the new equilibrium of the market, when the current equilibrium (neutral regime) is disrupted due to adoption of a  non-neutral regime by some ISPs.  Intuitions from our model and analysis can be used by the regulator in designing efficient rules for the Internet market.

\subsection{Model and Formulation}
We consider a market with two ISPs, one neutral and one non-neutral. This can represent two competing \emph{groups} of  ISPs. We also consider  a ``big" monopolistic CP  (eg, Google in the search space) with high market power that chooses her strategies to influence the equilibrium outcome of the market. 
We consider a continuum of End-Users (EUs) that decide on the ISP they want to subscribe to.  EUs have different  levels of \emph{innate preferences} for each ISP which arise because of their pre-existing relations,  initial set-up costs they incur  upon switch and goodwill of the ISPs. These innate preferences capture the degree by which EUs are locked in with a particular ISP, and determine the market powers of ISPs.

Both ISPs offer a free service for CPs up to a threshold on quality. The non-neutral ISP offers a premium quality, in which CPs can send their content with a higher throughput,  in lieu of a side-payment from the CP. This side-payment can be negative or positive, where a negative side-payment means a net payment from the non-neutral ISP to the CP, and may help the non-neutral ISP ensure that the monopolistic CP offers with a premium quality and exclusively for her EUs. The CP earns through advertisements, with the advertisement profit  increasing with the quality she offers to EUs.

We formulate a sequential game and seek its Sub-game Perfect Nash Equilibrium (SPNE). 

Note that the SPNE has a complex dependency on a wide range of parameters.
Thus, the structure, the existence and the uniqueness of the SPNE is not apriori clear. One can expect different SPNEs in which either  the CP offers her content (i)  only with a free (best effort) quality, or (ii)  with free quality on the neutral and with premium quality on the non-neutral ISP, or (iii)  with a premium quality only on the non-neutral ISP.
 Moreover, the ISPs can select different SPNE Internet access fees and side-payments   whose value directly affect the welfare of EUs. For example, the non-neutral ISP can select a low Internet access fee to increase the number of her EUs and generate most of her revenue through the side-payment she charges the CP. Competition would now force the neutral ISP to decrease her Internet access fee. Thus, the welfare of EUs would be high. Or,  the non-neutral ISP may select a low side-payment (possibly negative) to ensure that the CP offers with a premium quality, and generate her revenue by increasing Internet access fees, which enables the neutral ISP to increase her access fees. This reduces the welfare for the EUs. Note that certain SPNEs may be more desirable for the CP, as they correspond to more desirable splits of the EUs between ISPs. The CP may steer the system to these SPNEs   by controlling the quality of her content on each ISP appropriately.

\subsection{Analytical Results}

We show that if an SPNE exists, it would be one of the five possible strategies each of which we explicitly characterize. 
 We also show that an SPNE does not always exist.

We prove that  when  EUs have sufficiently low \emph{inertia} for ISPs, i.e. when the preferences are ``relatively" small and do not overrule major discrepancies on price and quality, the SPNE is unique. In this SPNE, the CP offers her content with premium quality on the non-neutral ISP while she does not offer her content on the neutral ISP, to push all EUs to the non-neutral ISP which provides a better quality. Thus, the neutral ISP would be driven out of the market. 

We also consider the case that EUs have sufficiently high inertia for at least one of the ISPs, and EUs cannot easily switch between ISPs.   We prove that there exists a unique SPNE with a non-neutral outcome; in it  both  ISPs receive a positive share of EUs, i.e., both are active, and the CP offers her content with free quality on the neutral ISP and with premium quality on the non-neutral ISP.

 We also consider a benchmark case in which both ISPs are neutral, and prove that there exists a unique SPNE, in which  the CP offers her content over both ISPs with free quality, and both ISPs are  active. The results in the benchmark case helps us  assess the extent of benefit of switching to non-neutrality for  different entities of the market.

\subsection{Numerical Results}
Numerical results help pinpoint which of the five possible SPNE strategies occurs when the inertias are between the two extreme cases, high and low inertias, in which the analytical results guarantee existence and uniqueness of the SPNE. We learn that if the inertia is on the lower end of the intermediate range, the game has an SPNE outcome in which both  ISPs are active, but the CP offers her content with premium quality and only on the non-neutral ISP. Also, if the inertia is  on the upper end of the intermediate region, then the game has no SPNE. Simulation over large sets of parameters also suggest that in all scenarios, the SPNE is unique if it were to exist.

Numerical results reveal that  the neutral ISP loses payoff in all SPNE outcomes in comparison to the benchamrk case.  
In addition,  for a wide range of parameters, the non-neutral ISP receives a better payoff under a non-neutral scenario. 
However, switching to a non-neutral regime is \emph{not} always profitable for ISPs. If EUs or the CP are not sensitive to the quality of the content delivered and the market power of the non-neutral ISP is small, then ISPs are better off staying neutral.

Results also reveal that the  welfare of EUs (EUW) in  non-neutral scenarios may either be higher or lower than in neutral scenarios. The former happens if (i) the market power of the non-neutral ISP is low, (ii) the sensitivity of the CP to the quality of content is high, or (iii) EUs are not very sensitive to the quality.
 In this case, the non-neutral ISP offer a cheaper Internet access fee leading to a higher EUW.

\subsection{Related Works}
This work falls in the category of economic models for a
non-neutral Internet \cite{survey}.
This line of work can be divided into two broad categories:  those that consider   (a) a non-neutral regime in which a non-neutral ISP blocks the content of the CPs that do not pay the side-payment \cite{economides,asu},   (b) a non-neutral ISP that provides quality differentiations for CPs and do not necessarily block a content \cite{ma2013public,maille2016content,Kramer,Kim_hotelling,cheng_queue,walrand2009,altman2011_monopoly,altman2013_competition,choi2015net,bourreau2015hotel}. We consider the second scenario in this work, since it is likely to emerge owing to FCC restrictions on content blocking.

These works can also be further  divided into two other categories: (i) those that  consider monopolistic ISPs \cite{maille2016content,Kramer,Kim_hotelling,cheng_queue,Katz,walrand2009,altman2011_monopoly}, and (ii) those that consider competition between ISPs \cite{ma2013public,economides,asu,altman2013_competition,choi2015net,bourreau2015hotel}. Our work belongs to the latter case.  

 While we consider the incentives of individual ISPs to adopt a non-neutral regime, the  previous works (except \cite{ma2013public}) focus on the social welfare analysis of the market when all ISPs are neutral and/or all are non-neutral. \cite{ma2013public}    considers  competition between a neutral (public option) ISP with non-neutral ISPs, and  argues that the existence of a public option ISP in a non-neutral scenario (when other ISPs are non-neutral) increases the customer surplus in comparison to when all ISPs are neutral. In contrast, we show that  the competition between the neutral and non-neutral ISPs would not always increase the customers welfare.  The reason for the differences between the results of our paper with those of \cite{ma2013public} lies in the differences in the models of the two papers. We show that different market powers of ISPs, and the sensitivity of EUs and CPs to the quality  of the content   are important factors in determining the welfare of EUs. These factors are absent in the model of \cite{ma2013public}.

In addition, in contrast to the previous works, we consider  competition between ISPs that have different market powers, i.e. an asymmetric competition. Also, in most of the previous works, CPs are passive in that they are only price-takers. However, we consider that the CP can influence the market equilibrium by appropriately choosing the quality of the content that she offers for EUs of each ISP, e.g.,  she can select a particular ISP and offer with a high quality on  this ISP, and stop offering her content on other ISPs. Thereby, the CP might be able to migrate EUs of other ISPs to the selected ISP.



\subsection{Organization of the Paper}
In Section~\ref{section:model}, we present the model. In Sections \ref{section:theory},  we find the SPNE strategies, and provide insights from the key analytical results. In Section    \ref{section:proofsBenchamrk}, we present the results for a benchmark case, in which both ISPs are neutral. 
In Section~\ref{section:numericalresults}, we present numerical results. In Section~\ref{section:implicationAssum}, we comment on the assumptions  and their generalizations and present directions for future research. Theorems have been proven in the appendices. 

\section{Model and Formulation}\label{section:model}
We consider two Internet Service Providers (ISPs), a Content provider (CP), and  a continuum of End Users (EUs).
\subsection{Internet Service Providers (ISPs)}
\label{section:ISPmodel}
One ISP is neutral (ISP N), and the other is non-neutral (ISP NoN); NoN can offer a premium quality for CPs in lieu of a side-payment.  The strategies of the  ISPs N and NoN are to determine the marginal Internet access fees for the EUs, i.e. $p_N$ and $p_{NoN}$, respectively. We show that most of  the results will depend on the difference, $\Delta p:=p_{NoN}-p_N$.

The CP will pay a premium quality fee, i.e. the side-payment, to  ISP NoN if she chooses to offer a quality ($q$) higher than the free quality threshold ($\tilde{q}_f$); she can offer with up to the quality $\tilde{q}_f$ for free on both ISPs.
  ISP NoN also determines $\tilde{p}$, i.e. the marginal side-payment. Note that $\tilde{p}$ can be positive or negative, in which a negative value implies a reverse flow of money from  NoN to the CP.    The side-payment\footnote{The side-payment is discontinuous in that it is $0$ for $q \leq \tilde{q}_f$, and exceeds $\tilde{p}\tilde{q}_f $ for $q > \tilde{q}_f$. As long as $\tilde{p}$ is chosen non-zero, $|\tilde{p}\tilde{q}_f| $ is bounded away from $0.$ This discontinuity captures the additional network-management costs NoN incurs for service differentiation, which is what providing premium service to the CP entails. This discontinuity, however, precludes the application of the standard game theoretical results for existence and uniqueness of the SPNE.} is:  
$$
\text{Side-payment} =
\left\{
\begin{array}{ll}
\tilde{p}q,  &  \mbox{if }   q> \tilde{q}_f,  \\
0, & \mbox{otherwise. }
\end{array}
\right.
$$

 Both ISPs generate their profits from EUs; additionally, ISP NoN  earns from the CP (if $\tilde{p}>0$ and the CP is willing to pay for a premium quality).
Their payoffs are:
\begin{equation}\label{equ:payoffISPsGeneral_new}
	\begin{aligned}
		\pi_N(p_N)&=(p_N-c)n_N,\\
		\pi_{NoN}(p_{NoN},\tilde{p})&=(p_{NoN}-c)n_{NoN}+z\tilde{p}q_{NoN},
	\end{aligned}
\end{equation}
where $n_N$ and $n_{NoN}$ are the fractions of EUs that have access to Internet via  ISPs, N and NoN, respectively, and $c$ is the service cost that an ISP incurs for each subscribing EU.\footnote{Different EUs may consume different amounts of data, and therefore pay different amounts to the ISPs and incur different costs for them. The consumptions would be random, apriori unknown. If these are uniformly distributed,  then $p_N, p_{NoN}$ (respectively, $c$)  may be considered, appropriately normalized, expected payments (respectively, costs) for EUs. \eqref{equ:payoffISPsGeneral_new} still constitute the expected overall payoffs. Also, the ISPs' marginal profits per EU, $p_N-c, p_{NoN}-c$, as in \eqref{equ:payoffISPsGeneral_new}, do not depend on quality, as the quality here is associated with only one CP. The EUs may use the Internet for various other purposes, including accessing other contents, which the same ISPs may deliver at other qualities. But, if the CP in question is popular enough, then the quality of her content would influence the EUs' decision to choose the ISP, which we consider in Section~\ref{EUjoin}. That said, the ISPs choose the marginal access fees strategically, so the equilibrium choices are coupled, that is, only certain combination of marginal access fees, qualities, etc. constitute the equilibria (Section~\ref{section:discussion}). Thus, the choices are only implicitly correlated. } In addition, $q_{NoN}$ is the quality of the content on  ISP NoN. The parameter $z=1$ when CP offers her content with premium quality, and $z=0$ otherwise. 
In Section~\ref{future} we outline the changes when ISP NoN incurs an additional cost for serving CP's content at a premium quality. 

\subsection{The Content Provider (CP)}
The CP can potentially offer different quality levels on different ISPs. Her strategy constitutes  choosing a quality of $q_N\in\{0,\tilde{q}_f\}$ on  ISP N, and a quality of $q_{NoN}\in\{0,\tilde{q}_f,\tilde{q}_p\}$ on  ISP NoN, with $\Delta q:=q_{NoN}-q_N$\footnote{All results  hold even when the CP selects qualities from continuous sets, i.e. $q_N\in [0,\tilde{q}_f]$ and $q_{NoN}\in[0,\tilde{q}_p]$. 
  Refer to the Appendix~\ref{section:general} for proofs.}. Here, $\tilde{q}_p$ is the premium quality, which serves the content with a higher throughput, providing for additional features or higher resolution, $\tilde{q}_f$ has been defined in Section~\ref{section:ISPmodel}.   The CP's advertising profit is proportional to the number of EUs and the content quality she delivers to these, $q_N, q_{NoN}$ \footnote{  Increasing the quality of the content improves the quality of delivery (throughput, resolution, enjoyment), and therefore the effectiveness, of the advertisements, eg, when they are video or sound, as provided by YouTube and Spotify. This increases advertising revenue for the CP.} (refer to the first two terms of \eqref{equ:payoffCP_new}).
She also pays (or receives if $\tilde{p}<0$) a side-payment to ISP NoN based on the marginal side-payment, $\tilde{p}$,  determined by NoN. Thus, the profit of the CP is,
\be\label{equ:payoffCP_new}
\small
\pi_{CP}(q_N,q_{NoN},z)=n_{N} \kappa_{ad} q_N+n_{NoN}\kappa_{ad} q_{NoN}-z \tilde{p} q_{NoN},
\ee
\normalsize
where $\kappa_{ad}$ represents the CP's sensitivity to, or marginal benefit with, the quality of the advertisement, $\kappa_{ad}=\kappa_{ad,rev}-\kappa_{ad,cost}$, 
and $z$ is as defined in Section~\ref{section:ISPmodel}.  

It may appear from \eqref{equ:payoffCP_new} that the CP would lose nothing by choosing at least a free quality on both ISPs. However, this is not the case.  As we explain later, $n_N$ and $n_{NoN}$ are dependent on $q_N$ and $q_{NoN}$, and increasing one of them (e.g. $n_N$), decreases the other one (e.g. $n_{NoN}$).  Therefore, the CP may stop offering her content on ISP N to increase the number of EUs on  ISP NoN on which they can receive a better quality. This may lead to higher advertisement revenues. Thus, the CP can control the number of subscribers with each ISP by controlling the quality of the content on each ISP appropriately.

\subsection{End-Users (EUs)}
\label{EUjoin}
The strategy of an EU is to choose one of the ISPs to buy Internet access from. We assume that  ISP N and NoN  are respectively located at 0 and 1, and EUs are distributed uniformly along the unit interval $[0,1]$. Here distance need not be physical distance, but represents  the preference of EUs: the closer an EU is to an ISP, the more this EU prefers this ISP.

The EU located at $x\in[0,1]$ incurs a \emph{transport cost} of $t_N x$  (respectively, $t_{NoN}(1-x)$) when joining  ISP N (respectively,  ISP NoN), where $t_N$ (respectively, $t_{NoN}$) is the marginal transport cost for ISP N (respectively, NoN). The transport cost represents the reluctancy of an EU to opt for an ISP owing possibly to 1) pre-existing contracts with the competitor, e.g. when ISPs bundle Internet access with other services like cable, phone and provide an overall lucrative package-deal   2) initial set-up costs\footnote{If  EUs require different devices to access the Internet through different ISPs, e.g.,  depending on whether they provide internet through cable or DSL, they incur high initial set-up costs upon switch.}  3) overall impressions about intangibles such as customer-service, environment-friendliness, etc. 

We consider a common valuation $v^*$ for connecting to the Internet for EUs regardless of the content of the CP. This common valuation also models the valuation of EUs for CPs other than the CP considered in this paper, i.e. the valuation for connecting to the Internet regardless of the status of the  CP considered. The overall valuation of an EU located at $x\in[0,1]$ for connecting to the Internet via  ISP N (respectively,  ISP NoN) is considered to be $v^*+\kappa_u q_N-t_Nx$ (respectively, $v^*+\kappa_u q_{NoN}-t_{NoN}(1-x)$). Thus, the utility of an  EU who connects to the ISP $j\in\{N,NoN\}$ located at distance $x_j$ of the ISP, and is receiving the content with quality $q_j$, is:
\begin{equation}\label{equation:CP_2}
	u_{EU,j}(x_j)=v^*+\kappa_u q_j-t_jx_j-p_j,\quad j\in\{N,NoN\}.
\end{equation}

This model is generally known as the Hotelling model, which has been used in various fields including Internet market (eg, \cite{bourreau2015hotel} with $t_N=t_{NoN}$). As is common in Hotelling models, we  assume that the market is fully covered, i.e.,  each EU chooses exactly one ISP to buy Internet access from.  Equivalently,   $v^*$ is assumed to be sufficiently large so that the utility of EUs for connecting to the Internet is positive regardless of the choice of the ISP.

Preference for an ISP can be equivalently regarded as the reluctance for the other ISP. The lower $t_N$ and $t_{NoN}$ are, the easier EUs can switch between ISPs, and thus the lower is the ``inertia'' of EUs. We use the terms inertia and marginal transport costs interchangeably.  Thus, if transport cost for an ISP is high, more EUs   are  ``locked in'' with the other  ISP.

We consider the ratio of $t_N$ and $t_{NoN}$ as the relative bias of EUs for ISPs. The higher $\frac{t_N}{t_N+t_{NoN}}$ (respectively, $\frac{t_{NoN}}{t_N+t_{NoN}}$), the higher the bias of EUs for connecting to the Internet via ISP NoN (respectively, ISP N). We define the
 market power of  N and NoN as $\frac{t_{NoN}}{t_N+t_{NoN}}$ and  $\frac{t_N}{t_N+t_{NoN}}$, respectively.

A schematic of the market is presented in Figure~\ref{figure:market}.

\begin{figure}[t]
	\centering
	\includegraphics[width=0.5\textwidth]{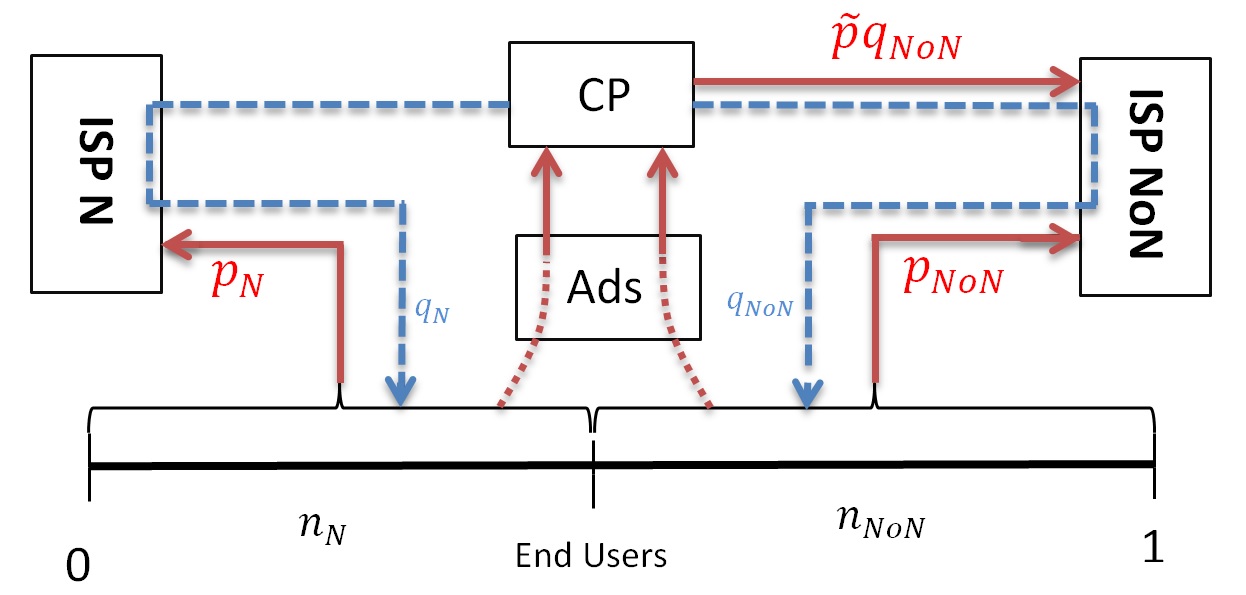}
	\caption{\small{A schematic of the market - red (solid) lines are the flow of the money and  blue (dashed) lines are the flow of the content.}}  \label{figure:market}\vspace{-7mm}
\end{figure}
\vspace{-3mm}

\subsection{Formulation}
We consider a complete-information game in which all players (the CP and ISPs) know all the assumptions, payoff expressions and the values of all the parameters, e.g., $\kappa_u, \kappa_{ad}, c, t_N, t_{NoN}, \tilde{q}_f, \tilde{q}_p$.
 The ISPs are the leaders of the game, and the CP and EUs are followers. The game proceeds in the following sequence:
\begin{enumerate}
	\item ISPs N and NoN determine Internet access fees for EUs ($p_N$ and $p_{NoN}$).
	\item  ISP NoN announces the marginal side-payment  ($\tilde{p}$) with knowledge of $p_N, p_{NoN}$.
	\item The CP decides on the quality of the content ($q_N$ and $q_{NoN}$) for EUs of each ISP, with knowledge of $p_N, p_{NoN}, \tilde{p}.$.
	\item EUs decide which ISP to join with the knowledge of $q_N, q_{NoN}, p_N, p_{NoN}$ (the EUs do not need to know $\tilde{p}$ for this decision).
\end{enumerate}
We assumed that the access fees are selected before the marginal side-payment as  the former usually remain  constant for a longer time horizon while the latter change more frequently depending on the demand and the network conditions.

In the sequential game framework,  we seek a \emph{Subgame Perfect Nash Equilibrium} (SPNE) using backward induction Section 3.5, p. 92, \cite{fudenberg1991tirole}.

\begin{definition}
	A strategy $(p^{eq}_N, p^{eq}_{NoN}, \tilde{p}^{eq}, q^{eq}_N, q^{eq}_{NoN})$ is a \emph{Subgame Perfect Nash Equilibrium (SPNE)} if and only if it constitutes a Nash Equilibrium (NE) of every subgame of the game, i.e., the CP or an ISP can not increase her payoff by  unilaterally varying one or more  of her choices. 
\end{definition}

\begin{remark}
\label{rfirst}
 $z^{eq}$ is a function of $q^{eq}_{NoN}$: it is $1$  if $q^{eq}_{NoN} = \tilde{q}_p$,  and $0$, otherwise. 
\end{remark}


\section{The Sub-Game Perfect Nash Equilibrium}\label{section:theory}

In Sections~\ref{section:stage4} to \ref{section:stage1}, we characterize the equilibrium strategies of each stage in a reverse order starting from Stage 4 and proceeding backward (backward induction). For each stage, we assume that each decision maker is aware of the strategies chosen by other decision makers in previous stages. 
 We enumerate all the possible SPNE strategies  and present the insights from the expressions in  Section~\ref{section:summaryof resutls}. In Section~\ref{determining}, we provide a computation strategy for determining which, if any,  of these possible SPNE strategies is (are) SPNE(s) at any given values of the parameters.  This strategy yields that there need not  exist an SPNE for certain values of parameters (Section~\ref{existence}).

 We start with a generic observation. From \eqref{equ:payoffISPsGeneral_new},  if $n_N>0$ then $p_N<c$ yields a negative payoff for  ISP N; while $p_N = c$ increases the payoff to $0.$   If $n_N=0$, the value of $p_N$ is of no importance. Therefore, without loss of generality we can consider $p_N\geq c$ under SPNE. Similarly, under SPNE,  $p_{NoN}\geq c$, if $z=0$. However, if $z=1$,   even with $p_{NoN}<c$, the payoff of ISP NoN may be positive.

\subsection{Stage 4: EUs decide which ISP to join}\label{section:stage4}

We characterize the division of EUs between ISPs i.e. $n_N$ and $n_{NoN}$, using the knowledge of the choices of the  ISPs and the CP in Stages 1, 2 and 3. From \eqref{equation:CP_2}, the location of the EU that is indifferent between joining either of the ISPs, $x_n$, is:


\begin{equation}\label{equ:xn}
	\ba
u_{EU,N}(x_n) & =  u_{EU,NoN}(x_{NoN}) \\
 \Rightarrow x_n &=\frac{t_{NoN}+\kappa_u(q_N-q_{NoN})+p_{NoN}-p_{N}}{t_{NoN}+t_N}. \ea
\end{equation}
\normalsize
EUs can determine $x_n$ in Stage $4$ since they know the  qualities $q_N, q_{NoN}$ and the access fees $p_N, p_{NoN}$ by then. And, EUs located at $[0,x_n)$ and $(x_n,1]$  join ISP, N and NoN, respectively (Recall that we assumed full market coverage):  

\be \label{equ:EUs_linear}
\ba
n_N &=
\left\{
\begin{array}{ll}
	0,  & \mbox{if } x_n < 0, \\
	\frac{t_{NoN}+\kappa_u(q_N-q_{NoN})+p_{NoN}-p_{N}}{t_{NoN}+t_N}, & \mbox{if } 0\leq x_n \leq 1,\\
	1, & \mbox{if } x_n>1,
\end{array}
\right. \\
n_{{NoN}}&=1-n_{N}.
\ea
\ee


\begin{table*}[t]
	\footnotesize
	\centering
	\begin{tabular}{c|c|c|c|c|}
		\cline{2-4}
		&  \multicolumn{3}{ |c| }{Conditions} \\ \cline{2-5}
		& $x_N\leq 0$ & $0< x_N< 1$ & $x_N\geq 1$ & \\
		& $\Big{(}q_{NoN}-q_N\geq  \frac{\Delta p + t_{NoN}}{\kappa_u}\Big{)}$ & $\Big{(}\frac{\Delta p-t_N}{\kappa_u}< q_{NoN}-q_N< \frac{\Delta p + t_{NoN}}{\kappa_u}\Big{)}$ & $\Big{(}q_{NoN}-q_N\leq \frac{\Delta p-t_N}{\kappa_u}\Big{)}$ & \\
		& All EUs join Non-neutral & EUs divide between both ISPs &  All EUs join Neutral & Union ($\bigcup$)\\
		\cline{1-5}
		\multicolumn{1}{ |c|  }{$z=0$}
		& $F^L_0$ & $F^I_0$ & $F^U_0$  & $F_0$   \\ \cline{1-5}
		\multicolumn{1}{ |c|  }{{$z=1$} }
		&  $F^L_1$ & $F^I_1$ & $F^U_1$  & $F_1$  \\ \cline{1-5}
		\multicolumn{1}{ |c|  }{{Union ($\bigcup$)} }
		&  $F^L$ & $F^I$ & $F^U$  & $\mathcal{F}$  \\ \cline{1-5}
	\end{tabular}
	\caption{Notations for different subsets of the feasible set. Expressions in parentheses are equivalent form of the conditions, e.g. $x_N\leq 0 \iff q_{NoN}-q_N\geq \frac{\Delta p+t_{NoN}}{\kappa_u}$.}\label{table:subsets}
\end{table*}

\subsection{Stage 3: The CP decides the qualities to offer over each ISP ($q_N$ and $q_{NoN}$)}\label{section:CPdecides}

 We characterize the equilibrium strategies, $q_N^{eq}$, $q_{NoN}^{eq}$,  using the knowledge of access fees $\vec{p}=(p_N,p_{NoN})$ and $\tilde{p}$ from stages 1 and 2. 


Let $F_1$ and $F_0$ be the set of strategies by which $z=1$ (i.e., $q_{NoN}> \tilde{q}_f$)  and $z=0$, respectively.

\be \label{equ:summarize_CP_candidate_f_new}
\ba
F_0=\{(0,0),(0,\tilde{q}_f),(\tilde{q}_f,0), (\tilde{q}_f,\tilde{q}_f)\},  
F_1=\{(0,\tilde{q}_p),(\tilde{q}_f,\tilde{q}_p)\}.
\ea
\ee
 $F_0$ and $F_1$ is further divided into three subsets, $F^L_i$, $F^I_i$, and $F^U_i$, for $i\in\{0,1\}$, depending on whether $x_N\leq 0$, $0<x_N<1$, or $x_N\geq 1$ (using \eqref{equ:xn}). Since $x_N$ is a function of $q_N$ and $q_{NoN}$, these conditions on $x_N$ lead to constraints on $q_N$ and $q_{NoN}$.
In Table~\ref{table:subsets}, we present the division of the feasible set into the above-mentioned subsets and the constraints on $q_N$ and $q_{NoN}$ for each subset. Note that $F^L_0\cup F^L_1=F^L$, $F^I_0\cup F^I_1=F^I$, and $F^U_0\cup F^U_1=F^U$.


\begin{remark}
\label{r3}
By offering $(\tilde{q}_f,\tilde{q}_f)$ ($(0, 0)$, respectively)   the CP earns $\kappa_{ad}\tilde{q}_f$ ($0$, respectively),  from  \eqref{equ:payoffCP_new}. Thus the CP does not choose  $(0, 0)$ in the SPNE.
\end{remark}

  In Section~\ref{section:Stage3_tie_break_assu}
 we present tie-breaking assumptions, motivated by natural practices; using these  we characterize the equilibrium strategies in Theorems~\ref{lemma:CP_z=0_new} and  \ref{theorem:p_tilde_new} in Section~\ref{section:Stage3_results}. 
\subsubsection{Tie- Breaking Assumptions}\label{section:Stage3_tie_break_assu}


First,  $(q^{eq}_N,q^{eq}_{NoN})\in F^L$ (respectively, $(q^{eq}_N,q^{eq}_{NoN})\in F^U$) yields that $n^{eq}_N=0$ (respectively, $n^{eq}_{NoN}=0$). Hence, the quality that the CP offers on ISP N (respectively,  ISP NoN) is of no importance. Thus:

\begin{assumption}\label{assumption:tie_n=0}
	If $(q^{eq}_N,q^{eq}_{NoN})\in F^L$ (respectively, $(q^{eq}_N,q^{eq}_{NoN})\in F^U$), then without loss of generality, $q^{eq}_{N}=0$ (respectively, $q^{eq}_{NoN}=0$).
\end{assumption}

\begin{remark}
\label{rback}
 Assumption~\ref{assumption:tie_n=0} rules out $(0,\tilde{q}_f)\in F^U_0$, $(\tilde{q}_f,0)\in F^L_0$, $(\tilde{q}_f,\tilde{q}_f)\in F^U_0\cup F^L_0$, $(0,\tilde{q}_p)\in F^U_0$, $(\tilde{q}_f,\tilde{q}_f)\in F^U_1\cup F^L_1$.
 \end{remark}

Next, in practice, it is natural to expect that if $z=1$ and $z=0$ both yield maximum payoffs  for the CP, then the CP chooses the higher (premium) quality,  $z=1$:

\begin{assumption}\label{assumption:tie4}
If the optimum solutions exist in $F_0$ and $F_1$, then the CP chooses the ones in  $F_1$.
\end{assumption}

The following  tie-breaking assumptions  are based on the natural assumption that the CP would prefer to diversify her content over different ISPs as long as this does not reduce her payoff. Thus, if the  global maximum payoff is attained by both the outcomes in which (1)   only one ISP is operating and (2) both ISPs are operating, then the CP chooses the strategies leading to the latter:

\begin{assumption}\label{assumption:tie}
	If there exists global optimum solutions in $F^I$, then they are preferred by the CP over  global optimum solutions in $F^L$ and $F^U$.
\end{assumption}

\begin{assumption}\label{assumption:tie_diversify}
	Let both the following strategies yield the same payoff for the CP: (i) $(q'_N,q'_{NoN})$ with  $\min(q'_N, q'_{NoN}) = 0$, and (ii)  $(q''_N,q''_{NoN})$ such that $\min(q''_N, q''_{NoN}) >0$. Then she chooses (ii), i.e., the one with positive quality on both ISPs.
\end{assumption}


 \begin{remark}
 \label{r5}
 Following Remarks~\ref{r3}, \ref{rback},  the following candidates for  $(q^{eq}_N,q^{eq}_{NoN})$
  remain: 
\be  \label{equ:summarize_CP_candidate_new}
\ba
&(0,\tilde{q}_f)\in F^I_0\cup F^L_0\ ,\ (\tilde{q}_f,0) \in F^I_0\cup F^U_0\ ,\ (\tilde{q}_f,\tilde{q}_f) \in F^I_0\ ,\\
& (0,\tilde{q}_{p}) \in F^I_1\cup F^L_1\ ,\  (\tilde{q}_f,\tilde{q}_{p}) \in F^I_1
\ea
\ee
\end{remark}

In the following tie-breaking assumption, we assume that the CP chooses the strategy that yields a higher social welfare for EUs, provided this does not reduce her payoff:

\begin{assumption}\label{assumption:tie_lowerp}
	If the payoff of the CP when only  ISP N is operating equals that when only ISP NoN is operating, then she prefers the strategy by which the ISP that offers the lower access fee, i.e., $p_i, \quad i\in\{N,NoN\}$, is operating.
\end{assumption}

Note that the above-mentioned assumptions override each other in the order specified. For example, if two strategies one in $F^L_1$ and the other in $F^I_0$ are both global maximum, then Assumption~\ref{assumption:tie4} suggests that the CP chooses the strategy in $F^L_1$, and Assumption~\ref{assumption:tie} suggests that the CP chooses the strategy in $F^I_0$. Since Assumption~\ref{assumption:tie4} comes before Assumption~\ref{assumption:tie}, the CP chooses the strategy in $F^L_0$.

Next, using these tie-breaking assumptions, we characterize the equilibrium strategies chosen by the CP.

\subsubsection{Main Results}\label{section:Stage3_results}
First, recall that $\Delta p=p_{NoN}-p_N$. We define  thresholds (1) $\tilde{p}_{t,1}, \tilde{p}_{t,2}, \tilde{p}_{t,3}$ and (2) $\Delta p_t$ that respectively appear in the results for side-payments and  access fees:

\begin{definition}\label{def:pt1,pt2}
	\begin{itemize}
		\item $\tilde{p}_{t,1}=\kappa_{ad}(1-\frac{\tilde{q}_{f}}{\tilde{q}_p}),$
		\item $\tilde{p}_{t,2}=\kappa_{ad} (n_{NoN}-\frac{\tilde{q}_f}{\tilde{q}_p})$, where $n_{NoN}=\frac{t_N+\kappa_u \tilde{q}_p-\Delta p}{t_N+t_{NoN}}$,
		\item  $ \tilde{p}_{t,3}= \kappa_{ad}n_{NoN}(1-\frac{\tilde{q}_f}{\tilde{q}_{p}})$, where $n_{NoN}=\frac{t_N+\kappa_u (\tilde{q}_{p}-\tilde{q}_f)-\Delta p}{t_N+t_{NoN}}$.
		\item  $\Delta p_{t}=\kappa_{u}(2\tilde{q}_{p}-\tilde{q}_f)-t_{NoN}$.
	\end{itemize}
\end{definition}



In  Theorem \ref{lemma:CP_z=0_new}, we characterize the equilibrium strategies of the CP by which $z^{eq}=0$, for different regions of $\Delta p$.



\begin{theorem}\label{lemma:CP_z=0_new}
	If $(q^{eq}_N,q^{eq}_{NoN})\in F_0$, then: \\
	1. if $-t_{NoN}< \Delta p< t_N$, then $(q^{eq}_N,q^{eq}_{NoN})=(\tilde{q}_f,\tilde{q}_{f})\in F^I_0$.\\
	2. if $\Delta p\geq t_N$, $(q^{eq}_N,q^{eq}_{NoN})=(\tilde{q}_f,0)\in F^U_0$.\\
	3. if $\Delta p\leq -t_{NoN}$,  $(q^{eq}_N,q^{eq}_{NoN})=(0,\tilde{q}_f)\in F^L_0$.\\
	Also,  each choice fetches the CP a payoff of $\kappa_{ad} \tilde{q}_f$.
\end{theorem}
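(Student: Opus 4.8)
The plan is to analyze the CP's optimization problem restricted to the finite set $F_0=\{(0,0),(0,\tilde{q}_f),(\tilde{q}_f,0),(\tilde{q}_f,\tilde{q}_f)\}$, since when $z=0$ the side-payment term vanishes and the CP's payoff \eqref{equ:payoffCP_new} reduces to $\pi_{CP}=\kappa_{ad}(n_N q_N + n_{NoN} q_{NoN})$. First I would substitute the Stage-4 demand expressions \eqref{equ:EUs_linear} into this objective and evaluate the payoff at each of the four candidate points, being careful to use the correct branch of $n_N$ (the $0$, interior, or $1$ branch) dictated by where each $(q_N,q_{NoN})$ pair lands relative to the thresholds in Table~\ref{table:subsets}. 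Because both $q_N$ and $q_{NoN}$ take values in $\{0,\tilde{q}_f\}$, the quality difference $q_{NoN}-q_N$ can only equal $-\tilde{q}_f$, $0$, or $\tilde{q}_f$, so for each point I can read off directly whether $x_n\le 0$, $0<x_n<1$, or $x_n\ge 1$ as a function of where $\Delta p$ sits relative to $t_N$ and $-t_{NoN}$.

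Next I would split into the three stated regimes of $\Delta p$ and compare payoffs. In the interior regime $-t_{NoN}<\Delta p<t_N$, I would argue that the symmetric point $(\tilde{q}_f,\tilde{q}_f)$ gives the CP the full market at quality $\tilde{q}_f$, yielding payoff $\kappa_{ad}\tilde{q}_f(n_N+n_{NoN})=\kappa_{ad}\tilde{q}_f$, which dominates the mixed points $(\tilde{q}_f,0)$ and $(0,\tilde{q}_f)$ where the CP earns advertising revenue only on the fraction of EUs attached to the ISP carrying positive quality — a strict loss unless that fraction is the whole market, which the interior condition precludes. The point $(0,0)$ yields zero and is dominated. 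For $\Delta p\ge t_N$, the neutral ISP is so much cheaper that offering $(\tilde{q}_f,0)$ drives all EUs to the neutral side ($x_n\ge 1$, so $n_N=1$), again delivering $\kappa_{ad}\tilde{q}_f$ on the full market; I would check that no other point in $F_0$ beats this. Symmetrically, for $\Delta p\le -t_{NoN}$ the point $(0,\tilde{q}_f)$ captures the whole market on the non-neutral side. In every case the optimal value equals $\kappa_{ad}\tilde{q}_f$, which gives the final claimed utility.

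The main subtlety — and the step I expect to require the most care — is handling the ties at the region boundaries and confirming that the specific representatives named in the theorem are the ones selected by the tie-breaking rules. At $\Delta p=t_N$, for instance, both $(\tilde{q}_f,\tilde{q}_f)$ and $(\tilde{q}_f,0)$ may achieve $\kappa_{ad}\tilde{q}_f$, and I must invoke the appropriate assumption to justify picking the stated $F^U_0$ strategy; likewise the endpoint $\Delta p=-t_{NoN}$ selects the $F^L_0$ strategy. I would also use Assumption~\ref{assumption:tie_n=0} to justify setting the ``irrelevant'' quality to zero whenever one ISP has no EUs (so that $(\tilde{q}_f,0)$ rather than $(\tilde{q}_f,\tilde{q}_f)$ is the canonical representative when $n_{NoN}=0$, and similarly on the other side). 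Because the inequalities in the theorem are strict in the interior but weak at the two extremes, the boundary bookkeeping is exactly where the ordering of the tie-breaking assumptions, spelled out after Assumption~\ref{assumption:tie_lowerp}, does the real work; the payoff comparisons themselves are elementary linear evaluations over four points.
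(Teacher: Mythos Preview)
Your overall approach matches the paper's: enumerate the four points of $F_0$, evaluate the payoff on the correct branch of $n_N$, and compare across the three $\Delta p$ regimes. The payoff table you would produce is exactly the one the paper writes down in \eqref{equ:payoffCPinCandidates_new}, and the conclusion that the optimal value is $\kappa_{ad}\tilde q_f$ in every regime is correct.

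However, your handling of the ties is mis-framed. You describe the tie-breaking as ``boundary bookkeeping'' at $\Delta p=t_N$ and $\Delta p=-t_{NoN}$, with the relevant competitor being $(\tilde q_f,\tilde q_f)$. That is not where the real ties live. First, in the interior regime $-t_{NoN}<\Delta p<t_N$, the asymmetric strategies need not land in $F^I_0$: whenever $\Delta p\ge t_N-\kappa_u\tilde q_f$ the pair $(\tilde q_f,0)$ is in $F^U_0$ and delivers the full $\kappa_{ad}\tilde q_f$, tying with $(\tilde q_f,\tilde q_f)\in F^I_0$; symmetrically for $(0,\tilde q_f)\in F^L_0$. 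This tie is broken by Assumption~\ref{assumption:tie} (preference for $F^I$), not by Assumption~\ref{assumption:tie_n=0}. Second, in the extreme regime $\Delta p\ge t_N$, if additionally $\Delta p\le \kappa_u\tilde q_f-t_{NoN}$ (possible when $t_N+t_{NoN}\le\kappa_u\tilde q_f$), then $(0,\tilde q_f)\in F^L_0$ also achieves $\kappa_{ad}\tilde q_f$, tying with $(\tilde q_f,0)\in F^U_0$ over an entire interval of $\Delta p$, not just at an endpoint. That tie is the one requiring Assumption~\ref{assumption:tie_lowerp} (the CP sides with the ISP quoting the lower access fee, here ISP~N since $\Delta p>0$); the symmetric statement applies for $\Delta p\le -t_{NoN}$. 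Your claim that the interior strategy ``dominates'' the mixed points is therefore too strong---it only strictly dominates when the mixed point sits in $F^I_0$, and otherwise merely ties.
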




Using Theorem \ref{lemma:CP_z=0_new} and Remark~\ref{r5},  in Theorem~\ref{theorem:p_tilde_new}, we characterize the equilibrium strategies of the CP.
We prove that if SPNE exists, the results are threshold-type: 1) when the marginal side-payment, i.e. $\tilde{p}$, is less than  a threshold, the CP chooses the premium quality on ISP NoN, i.e. $z^{eq}=1$, and 2) when $\tilde{p}$ exceeds the threshold, the CP chooses free quality on NoN, i.e.,  $z^{eq}=0$. 
 We also characterize the value of this threshold for different regions of $\Delta p$. Intuitively,     as $\Delta p$ increases, $n_{NoN}$  decreases. This affects the payoff of the CP, and subsequently the value of the side-payment that  NoN charges the CP. Thus, the value of the threshold on the marginal side-payment depends on $\Delta p$.

\begin{theorem}\label{theorem:p_tilde_new}
For each case in which SPNE exists: 
\begin{enumerate}
		\item If $\Delta p\leq  \kappa_u \tilde{q}_{p}-t_{NoN}$:
		\begin{itemize}
			\item if  $\tilde{p}\leq \tilde{p}_{t,1}$, then $z^{eq}=1$, and $(q^{eq}_N,q^{eq}_{NoN})=(0,\tilde{q}_{p})\in F^L_1$.
			\item if  $\tilde{p}> \tilde{p}_{t,1}$, then  $z^{eq}=0$, and  $q^{eq}_N$ and $q^{eq}_{NoN}$ are determined by Theorem~\ref{lemma:CP_z=0_new}.
		\end{itemize}
		\item If  $\kappa_u \tilde{q}_{p}-t_{NoN} <  \Delta p < t_N+\kappa_u \tilde{q}_{p}$, and $\tilde{q}_f\leq\frac{t_N+t_{NoN}}{\kappa_u}$:
		\begin{enumerate}
			\item if $\kappa_u \tilde{q}_{p}-t_{NoN}< \Delta p < t_N+\kappa_u(\tilde{q}_{p}-\tilde{q}_f)$, and:
			\begin{enumerate}
				\item if $\Delta p\geq \Delta p _{t}$:
				\begin{itemize}
					\item if  $\tilde{p}\leq \tilde{p}_{t,3}$, then $z^{eq}=1$ and  $(q^{eq}_N,q^{eq}_{NoN})=(\tilde{q}_f,\tilde{q}_{p})\in F^I_1$.
					\item if  $\tilde{p}> \tilde{p}_{t,3}$, then  $z^{eq}=0$, and  $q^{eq}_N$ and $q^{eq}_{NoN}$ are determined by Theorem~\ref{lemma:CP_z=0_new}.
				\end{itemize}
				\item if $\Delta p< \Delta p _{t}$:
				\begin{itemize}
					\item if  $\tilde{p}\leq \tilde{p}_{t,2}$, then $z^{eq}=1$ and $(q^{eq}_N,q^{eq}_{NoN})=(0,\tilde{q}_{p})\in F^I_1$.
					\item if  $\tilde{p}> \tilde{p}_{t,2}$, then  $z^{eq}=0$, and  $q^{eq}_N$ and $q^{eq}_{NoN}$ are determined by Theorem~\ref{lemma:CP_z=0_new}.
				\end{itemize}
				
			\end{enumerate}
			\item if $t_N+\kappa_u(\tilde{q}_{p}-\tilde{q}_f) \leq \Delta p < t_N+\kappa_u \tilde{q}_{p}$:
			\begin{enumerate}
				\item if  $\tilde{p}\leq \tilde{p}_{t,2}$, then $z^{eq}=1$, and $(q^{eq}_N,q^{eq}_{NoN})=(0,\tilde{q}_{p})\in F^I_1$.
				\item if  $\tilde{p}> \tilde{p}_{t,2}$, then  $z^{eq}=0$, and  $q^{eq}_N$ and $q^{eq}_{NoN}$ are determined by Theorem~\ref{lemma:CP_z=0_new}.
			\end{enumerate}
		\end{enumerate}
		\item If $ \kappa_u \tilde{q}_{p}-t_{NoN} < \Delta p < t_N+\kappa_u \tilde{q}_{p}$, and $\tilde{q}_f>\frac{t_N+t_{NoN}}{\kappa_u}$:
		\begin{enumerate}
			\item if  $\tilde{p}\leq \tilde{p}_{t,2}$, then $z^{eq}=1$, and $(q^{eq}_N,q^{eq}_{NoN})=(0,\tilde{q}_{p})\in F^I_1$.
			\item if  $\tilde{p}> \tilde{p}_{t,2}$, then  $z^{eq}=0$, and  $q^{eq}_N$ and $q^{eq}_{NoN}$ are determined by Theorem~\ref{lemma:CP_z=0_new}.
		\end{enumerate}
		\item If $\Delta p\geq  t_N+\kappa_u \tilde{q}_{p}$, then $z^{eq}=0$, and  $q^{eq}_N$ and $q^{eq}_{NoN}$ are determined by Theorem~\ref{lemma:CP_z=0_new}.
	\end{enumerate}
\end{theorem}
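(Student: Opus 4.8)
The plan is to collapse the entire statement into a single comparison. By Theorem~\ref{lemma:CP_z=0_new}, whenever the CP restricts herself to $z=0$ (strategies in $F_0$) the best attainable payoff is exactly $\kappa_{ad}\tilde q_f$, with the maximizer pinned down by $\Delta p$. Hence the only work left is to compute the best payoff the CP can secure over the two premium strategies $F_1=\{(0,\tilde q_p),(\tilde q_f,\tilde q_p)\}$ and, region by region in $\Delta p$, to decide whether that best premium payoff is at least $\kappa_{ad}\tilde q_f$. Whenever it is, Assumption~\ref{assumption:tie4} breaks the tie toward $z^{eq}=1$; otherwise we fall back on Theorem~\ref{lemma:CP_z=0_new} and $z^{eq}=0$. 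The thresholds $\tilde p_{t,1},\tilde p_{t,2},\tilde p_{t,3}$ of Definition~\ref{def:pt1,pt2} will appear precisely as the values of $\tilde p$ that render the relevant premium payoff equal to $\kappa_{ad}\tilde q_f$, since $\pi_{CP}$ in \eqref{equ:payoffCP_new} is affine and decreasing in $\tilde p$ on $F_1$.

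First I would locate each premium strategy in the subsets $F^L_1,F^I_1,F^U_1$ as a function of $\Delta p$, using the equivalent conditions of Table~\ref{table:subsets}. For $(0,\tilde q_p)$ the quality gap is $\tilde q_p$, so it lies in $F^L_1$ iff $\Delta p\le\kappa_u\tilde q_p-t_{NoN}$, in $F^I_1$ iff $\kappa_u\tilde q_p-t_{NoN}<\Delta p<\kappa_u\tilde q_p+t_N$, and in $F^U_1$ otherwise; for $(\tilde q_f,\tilde q_p)$ the gap is $\tilde q_p-\tilde q_f$, giving the shifted interior window $\kappa_u(\tilde q_p-\tilde q_f)-t_{NoN}<\Delta p<\kappa_u(\tilde q_p-\tilde q_f)+t_N$. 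The four regions of the theorem are exactly the intervals cut out by the breakpoints $\kappa_u\tilde q_p-t_{NoN}$, $t_N+\kappa_u(\tilde q_p-\tilde q_f)$, and $t_N+\kappa_u\tilde q_p$; and whether $t_N+\kappa_u(\tilde q_p-\tilde q_f)$ lies above or below $\kappa_u\tilde q_p-t_{NoN}$ is governed by the sign of $\kappa_u\tilde q_f-(t_N+t_{NoN})$, which is precisely the dichotomy $\tilde q_f\lessgtr\frac{t_N+t_{NoN}}{\kappa_u}$ that separates regions~2 and~3.

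Next I would evaluate the payoffs, setting $n_{NoN}=1$ in $F^L_1$, $n_{NoN}=0$ in $F^U_1$, and using the interior fraction of \eqref{equ:EUs_linear} in $F^I_1$. Two structural facts drive the conclusions. In $F^U_1$ all EUs sit on the neutral ISP, so the premium contributes nothing to advertising revenue while the side payment is still paid; by Assumption~\ref{assumption:tie_n=0} the CP sets $q_{NoN}=0$, forcing $z^{eq}=0$, which is what makes region~4 and the $F^U_1$ tails of regions~2 and~3 revert to Theorem~\ref{lemma:CP_z=0_new}. In region~1 a one-line computation shows any interior value of $(\tilde q_f,\tilde q_p)$ is dominated by the $F^L_1$ value of $(0,\tilde q_p)$ because the difference is $n_N\kappa_{ad}(\tilde q_f-\tilde q_p)<0$; hence the relevant premium payoff is $(\kappa_{ad}-\tilde p)\tilde q_p$ and $z^{eq}=1\iff\tilde p\le\tilde p_{t,1}$. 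The crux is region~2(a), where both premium strategies are interior: here the payoff difference between $(\tilde q_f,\tilde q_p)$ and $(0,\tilde q_p)$ reduces (their $n_{NoN}$ differ only by $\kappa_u\tilde q_f/(t_N+t_{NoN})$) to $\kappa_{ad}\tilde q_f\,\frac{\Delta p-\Delta p_t}{t_N+t_{NoN}}$, so $(\tilde q_f,\tilde q_p)$ wins iff $\Delta p\ge\Delta p_t$ (case 2(a)i, tie resolved by Assumption~\ref{assumption:tie_diversify}) and $(0,\tilde q_p)$ wins otherwise (case 2(a)ii). Setting the winning interior payoff equal to $\kappa_{ad}\tilde q_f$ and solving the linear inequality in $\tilde p$ yields $\tilde p_{t,3}$ and $\tilde p_{t,2}$ respectively; the same $(0,\tilde q_p)$ computation then covers 2(b) and region~3, where $(\tilde q_f,\tilde q_p)$ has already moved into $F^U_1$.

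The main obstacle is the bookkeeping: tracking, across every sub-region and both candidates, which subset each strategy occupies, and verifying that the theorem's stated boundaries are exactly the crossover points, especially the flip in the ordering of the breakpoints at $\tilde q_f=(t_N+t_{NoN})/\kappa_u$ that splits regions~2 and~3, and the consistency of the tie-breaking at $\Delta p=\Delta p_t$. A secondary subtlety I would flag explicitly is the case $\tilde p<0$ in $F^U_1$: there an unused premium would nominally \emph{increase} the CP's payoff, so the reduction to $z^{eq}=0$ in region~4 rests on Assumption~\ref{assumption:tie_n=0} (forcing $q_{NoN}=0$) rather than on a strict payoff comparison, and the argument should state this reliance (equivalently, that the upstream ISP never posts such a $\tilde p$ when $n_{NoN}=0$).
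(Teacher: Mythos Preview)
Your proposal is correct and follows essentially the same route as the paper: enumerate the two $F_1$ candidates, locate them in $F^L_1/F^I_1/F^U_1$ via the Table~\ref{table:subsets} inequalities, compare their payoffs to the $z=0$ optimum $\kappa_{ad}\tilde q_f$ from Theorem~\ref{lemma:CP_z=0_new}, and read off the thresholds $\tilde p_{t,i}$ as the indifference points. Your interior comparison $\pi_{CP}(\tilde q_f,\tilde q_p,1)-\pi_{CP}(0,\tilde q_p,1)=\kappa_{ad}\tilde q_f(\Delta p-\Delta p_t)/(t_N+t_{NoN})$ is exactly Lemma~\ref{lemma:thresh_on_delta_p_z1_new}, and the three threshold derivations are Lemmas~\ref{lemma:thresh_on_delta_p_z10_new_1}--\ref{lemma:thresh_on_delta_p_z10_new_3}; the paper packages these as separate lemmas and then assembles the case split, whereas you do both in one pass, but the logic is identical. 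Your flag about $\tilde p<0$ in $F^U_1$ is a fair point that the paper does not address explicitly: it simply discards $F^U_1$ candidates up front via Assumption~\ref{assumption:tie_n=0}, treating this as ``without loss of generality'' even though, as you note, for $\tilde p<0$ the payoff is strictly higher with $q_{NoN}=\tilde q_p$ and $n_{NoN}=0$, so the reduction genuinely rests on the assumption (or on the upstream observation that ISP~NoN would not post such a $\tilde p$).
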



Note that the thresholds $\tilde{p}_{t,1}$, $\tilde{p}_{t,2}$, and $\tilde{p}_{t,3}$ are decreasing with respect to $\frac{\tilde{q}_f}{\tilde{q}_p}$. Thus, the Theorem confirms the intuition that as $\frac{\tilde{q}_{p}}{\tilde{q}_f}$ increases, the  threshold  on $\tilde{p}$ after which the CP chooses the free quality over the premium one, is higher.  Also, with high $\tilde{q}_{p}$ and low $t_{NoN}$, the CP prefers the strategy by which the neutral ISP is driven out of the market.


\subsection{Stage 2: ISP NoN determines the marginal side-payment, $\tilde{p}$:} \label{section:stage2}

The equilibrium marginal side payment,  $\tilde{p}^{eq}$, must maximize $\pi_{NoN}(p_{NoN},\tilde{p})$  in \eqref{equ:payoffISPsGeneral_new}, given $p_{NoN}$ and $p_N$. We determine it accordingly. Note that if $z=0$,  then the payoff of ISP NoN, $\pi_{NoN,z=0}(p_{NoN},\tilde{p})$,  is independent of $\tilde{p}$ (from \eqref{equ:payoffISPsGeneral_new}).  Thus, we only need to characterize $\tilde{p}^{eq}$ by which $z^{eq}=1$, 
   i.e., 
    $(q^{eq}_N,q^{eq}_{NoN})\in F_1$, which we do in  Theorem~\ref{theorem:NE_stage2_new_suff}. But, first, we introduce a tie-breaking assumption (Assumption~\ref{assumption:ISP_p_tilde}) for NoN,  motivated by natural practices.
	



 Due to additional scrutiny associated with a non-neutral regime (eg, additional monitoring by regulators), whenever  $z^{eq}=0$ and $z^{eq}=1$ fetch equal payoff for NoN, we assume that she opts for neutrality,  i.e., $\tilde{p}$ such that $z^{eq}=0$.

\begin{assumption} \label{assumption:ISP_p_tilde}
 Consider $\tilde{p}_1$  such that $(q^{eq}_N,q^{eq}_{NoN})\in F_1$, i.e., $z^{eq}=1$, and $\tilde{p}_2$  such that $(q^{eq}_N,q^{eq}_{NoN})\in F_0$, i.e., $z^{eq}=0$. If they yield the same payoff for ISP NoN, she chooses $\tilde{p}_2$. 
\end{assumption}




 We define a threshold,  $\tilde{p}_t$ which,  encapsulates $\tilde{p}_{t,1}$, $\tilde{p}_{t,2}, \tilde{p}_{t,3}$ (defined in Definition~\ref{def:pt1,pt2}), and by Theorem~\ref{theorem:p_tilde_new} becomes the maximum $\tilde{p}$  for the CP to choose $z^{eq}=1$:  

\begin{definition}\label{def:pt}
	We define  $\tilde{p}_t=\tilde{p}_{t,1}$ if conditions of item 1 of Theorem~\ref{theorem:p_tilde_new} are met,  $\tilde{p}_t=\tilde{p}_{t,2}$ if the conditions of items 2-a-ii, 2-b, and 3 of Theorem~\ref{theorem:p_tilde_new} are met, and $\tilde{p}_t=\tilde{p}_{t,3}$ if the conditions of item 2-a-i of  Theorem~\ref{theorem:p_tilde_new} are met. 
\end{definition}


\begin{theorem}\label{theorem:NE_stage2_new_suff}
	\begin{enumerate}
\item If $z^{eq}=1$, then $\tilde{p}^{eq}=\tilde{p}_{t}$.
\item $z^{eq}=1$ if and only if
	$\pi_{NoN}(p_{NoN},\tilde{p}_{t})>\pi_{NoN,z=0}(p_{NoN},\tilde{p})$ and $\Delta p<t_N+\kappa_u \tilde{q}_{p}$. 
\end{enumerate}
\end{theorem}
 Thus, $\Delta p$ being less than a threshold and the existence of  $\tilde{p}$ by which ISP NoN earns more than  when $z=0$,  are necessary and sufficient conditions for $z^{eq}=1$. Without the former, $n_{NoN} = 0$, and trivially the CP does not offer her content on NoN. The latter follows because NoN prefers neutrality, unless non-neutrality fetches her a higher payoff (Assumption~\ref{assumption:ISP_p_tilde}), and she can always enforce neutrality  by choosing an extremely large  $\tilde{p}$ by which $z=0$. Also, NoN chooses for $\tilde{p}$, the maximum value,   $\tilde{p}_t$,  by which   $z^{eq}=1$. 





\begin{remark}
\label{r0}
Once  the SPNE access fees $p^{eq}_N, p^{eq}_{NoN}$ are known, whenever SPNE exists,  1) the SPNE qualities of content,  $q^{eq}_N, q^{eq}_{NoN}$, SPNE marginal side-payment $\tilde{p}^{eq}$,  may now be uniquely determined from Theorems~\ref{lemma:CP_z=0_new}, \ref{theorem:p_tilde_new} and \ref{theorem:NE_stage2_new_suff},   2) subsequently \eqref{equ:EUs_linear} give the SPNE EU subscriptions $n^{eq}_N, n^{eq}_{NoN}$. Remark~\ref{rfirst} gives $z^{eq}$ from $q^{eq}_{NoN}$.   
\end{remark}

\subsection{Stage 1: ISPs determine the access fees  $p_N$ and $p_{NoN}$ for the EUs:} \label{section:stage1}

We characterize the SPNE access fees, $p^{eq}_N, p^{eq}_{NoN}$, considering the following cases separately: (i) low inertia, involving various upper bounds on  $t_N, t_{NoN}$
(ii) high inertia involving the lower bound  $t_N+t_{NoN} > \kappa_u \tilde{q}_p$.

Recall that $p^{eq}_N\geq c$, and if $z^{eq}=0$, $p^{eq}_{NoN}\geq c$ (Section~\ref{section:theory}, second paragraph). If $0 < x_n < 1$, i.e. $(q^{eq}_N,q^{eq}_{NoN})\in F^I$, from \eqref{equ:EUs_linear},   the payoffs of  ISPs are:

\be \label{equ:UN_new}
\footnotesize
\pi_N(p_N)=(p_N-c)\frac{t_{NoN}+\kappa_u (q_N-q_{NoN})+p_{NoN}-p_N}{t_N+t_{NoN}},
\ee

\be\label{equ:UNoN_new}
\footnotesize
\ba
\pi_{NoN}(p_{NoN},\tilde{p})&=(p_{NoN}-c)\frac{t_N+\kappa_u (q_{NoN}-q_N)+p_N-p_{NoN}}{t_N+t_{NoN}}\\
&\qquad \qquad +zq_{NoN}\tilde{p}.
\ea
\ee
\normalsize

We start with by characterizing the SPNE access fees in the neutral region, i.e., when $z^{eq} = 0.$
\begin{theorem}\label{theorem:neutralnotexists_q>}
\label{theorem:neutralz=0_q<}
\begin{enumerate}
\item The only possible SPNE access fee by which  $(q^{eq}_N,q^{eq}_{NoN})\in F_0$, i.e. $z^{eq}=0$ is 	$p^{eq}_N=c+\frac{1}{3}(2t_{NoN}+t_N)$ and $p^{eq}_{NoN}=c+\frac{1}{3}(2t_N+t_{NoN})$.  A necessary condition for this access fee to be a SPNE strategy is $\pi_{NoN, z=0}(p^{eq}_{NoN},\tilde{p}^{eq})\geq \pi_{NoN}(p^{eq}_{NoN},\tilde{p}_t) $.
\item 	If $t_N+t_{NoN}\leq \kappa_u \tilde{q}_p$, there is no SPNE by which $(q^{eq}_N,q^{eq}_{NoN})\in F_0$, i.e. $z^{eq}=0$.
\end{enumerate}
\end{theorem}

\subsubsection{Low inertia}
 Considering various upper bounds on the inertias $t_N, t_{NoN}$, we 1) prove that  there is no SPNE by which  $z^{eq}=0$ (Theorem~\ref{theorem:neutralnotexists_q>}) 2)  characterize the possible SPNE access fees by which $z^{eq}=1$ (Theorem~\ref{theorem:NE_stage1_new_q>}). In the latter case, we show that  (a) if the weighted sum of  inertias is very  small, then a unique SPNE exists (b) otherwise,  a unique SPNE exists only under certain additional conditions. Numerical analysis under a wide range of parameters  reveal that   the latter conditions are always satisfied.

\begin{theorem}\label{theorem:NE_stage1_new_q>}
 The SPNE access fees, $p^{eq}_N$ and $p^{eq}_{NoN}$ by which $(q^{eq}_N,q^{eq}_{NoN})\in F_1$, i.e.,  $z^{eq}=1$, are:
	\begin{enumerate}
		\item $p^{eq}_{NoN}=c+\kappa_u \tilde{q}_{p}-t_{NoN}$ and $p^{eq}_N=c$ if and only if $ t_N+2t_{NoN} \leq \tilde{q}_{p}(\kappa_u+\kappa_{ad})$ and $t_{NoN} < \kappa_u \tilde{q}_p+\kappa_{ad}(\tilde{q}_p-\tilde{q}_f).$
		\item  $p^{eq}_{NoN}=c+\frac{t_{NoN}+2t_N+\tilde{q}_{p}(\kappa_u -2\kappa_{ad})}{3}$ and $p^{eq}_{N}=c+\frac{2t_{NoN}+t_N-\tilde{q}_{p}(\kappa_u +\kappa_{ad})}{3}$ when $t_N+t_{NoN}\leq \kappa_u \tilde{q}_p$, $t_N+2t_{NoN} > \tilde{q}_{p}(\kappa_u+\kappa_{ad})$, and $\pi_N(p^{eq}_N)\geq p^d_t-c$, where
		$p^d_{t}=\frac{\kappa_{ad}\tilde{q}_f (t_N+t_{NoN})}{p^{eq}_{NoN}-c+\kappa_{ad} \tilde{q}_p}+p^{eq}_{NoN}-t_{NoN}-\kappa_u \tilde{q}_p$.
	\end{enumerate}
\end{theorem}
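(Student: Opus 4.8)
The plan is to start from Theorem~\ref{theorem:neutralnotexists_q>}, which under the small-inertia hypothesis $t_N+t_{NoN}\le\kappa_u\tilde q_p$ rules out every NE with $z^{eq}=0$; hence any NE we seek lies in $F_1$. Because the premium quality is available only on ISP NoN and the inertia is small, the quality gap $\kappa_u\tilde q_p$ dominates the transport terms, so the only admissible outcomes are $F_1^L$ (ISP N driven out, $n_N=0$) and $F_1^I$ (both active); I would first dismiss $F_1^U$ since it would force $q_{NoN}-q_N\le(\Delta p-t_N)/\kappa_u$, incompatible with premium-on-NoN under the hypothesis. The two items of the theorem correspond exactly to these two regimes, and I expect the dividing value $\tilde q_p=\frac{t_N+2t_{NoN}}{\kappa_u+\kappa_{ad}}$ to emerge as the point where ISP N's unconstrained best response reaches the floor $p_N=c$.

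For item~1 (regime $F_1^L$) I would substitute $n_{NoN}=1$, $q_{NoN}=\tilde q_p$ and the Stage-2 value $\tilde p=\tilde p_{t,1}$ (Theorem~\ref{theorem:p_tilde_new}, item~1) into \eqref{equ:payoff_ISP_NoN_new}, giving $\pi_{NoN}=(p_{NoN}-c)+\kappa_{ad}(\tilde q_p-\tilde q_f)$, strictly increasing in $p_{NoN}$. Thus ISP NoN pushes $p_{NoN}$ to the upper edge of $F_1^L$, namely $\Delta p=\kappa_u\tilde q_p-t_{NoN}$; with $p_N=c$ this yields the claimed $p^{eq}_{NoN}=c+\kappa_u\tilde q_p-t_{NoN}$. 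Since $n_N=0$ on this edge, ISP N earns $0$ for every $p_N\ge c$ and cannot gain by deviating, and a short computation shows its market-capture deviation (below) yields a negative payoff automatically. The only nontrivial check is that ISP NoN does not prefer to retreat into $F_1^I$: computing ISP NoN's unconstrained $F_1^I$ best response at $p_N=c$ and comparing it with the $F_1^L$ edge shows the edge is optimal precisely when $\tilde q_p\ge\frac{t_N+2t_{NoN}}{\kappa_u+\kappa_{ad}}$, giving the stated ``if and only if.''

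For item~2 (regime $F_1^I$) I would take $\tilde p=\tilde p_{t,2}$ (Definition~\ref{def:pt}) and substitute $\tilde q_p\tilde p_{t,2}=\kappa_{ad}(n_{NoN}\tilde q_p-\tilde q_f)$ into \eqref{equ:UN_new}--\eqref{equ:UNoN_new}, which collapses ISP NoN's payoff to $n_{NoN}[(p_{NoN}-c)+\kappa_{ad}\tilde q_p]-\kappa_{ad}\tilde q_f$. Both payoffs are then strictly concave quadratics in the own price, so I would solve the two first-order conditions simultaneously; the resulting linear system produces exactly the displayed $p^{eq}_{NoN}$ and $p^{eq}_N$, and I would then verify that the induced $\Delta p$ indeed falls in the $F_1^I$ range. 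The feasibility requirement $p^{eq}_N\ge c$ reduces to $\tilde q_p\le\frac{t_N+2t_{NoN}}{\kappa_u+\kappa_{ad}}$, exactly complementing item~1.

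The main obstacle is the remaining deviation check for item~2, which is precisely what $\pi_N(p^{eq}_N)\ge p^d_t-c$ encodes, because this deviation changes the \emph{regime} and cascades through Stages 2--4. By lowering $p_N$ (raising $\Delta p$) ISP N can try to seize the whole market, but ISP NoN answers in Stage~2: it keeps $z^{eq}=1$ and fights only while its reduced payoff $n_{NoN}[(p^{eq}_{NoN}-c)+\kappa_{ad}\tilde q_p]-\kappa_{ad}\tilde q_f$ remains positive, i.e. while $n_{NoN}>\frac{\kappa_{ad}\tilde q_f}{p^{eq}_{NoN}-c+\kappa_{ad}\tilde q_p}$; once ISP N undercuts enough to violate this, ISP NoN concedes ($z=0$), the CP reverts to free quality with all EUs on ISP N, and ISP N captures the market at the threshold price earning $p^d_t-c$. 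Substituting $n_{NoN}=\frac{t_N+\kappa_u\tilde q_p-\Delta p}{t_N+t_{NoN}}$ at the indifference point recovers $p^d_t$, and imposing $\pi_N(p^{eq}_N)\ge p^d_t-c$ rules the deviation out. Establishing that this single market-capture move is the binding deviation --- dominating all intermediate $F_1^I\!\to\!F_1^U$ and $z=1\!\to\!z=0$ transitions --- is the delicate step; the converse (any $z^{eq}=1$ NE must be one of the two candidates) then follows, since these exhaust the feasible stationary points once the regime and the $p_N\ge c$ constraints are imposed.
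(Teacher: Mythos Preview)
Your proposal is correct and follows essentially the same route as the paper: the paper partitions $\Delta p$ into regions $A$ ($\Delta p\le\kappa_u\tilde q_p-t_{NoN}$, your $F_1^L$) and $B$ ($\kappa_u\tilde q_p-t_{NoN}<\Delta p<t_N+\kappa_u\tilde q_p$, your $F_1^I$), obtains item~1 as the upper edge of $A$ with the threshold $\tilde q_p\ge\frac{t_N+2t_{NoN}}{\kappa_u+\kappa_{ad}}$ coming exactly from ISP~NoN's deviation into $B$, and obtains item~2 from the two first-order conditions in $B$ with the $p_N\ge c$ constraint reproducing the same threshold. The binding deviation you single out for item~2 --- ISP~N lowering $p_N$ until ISP~NoN's $z{=}1$ payoff $(p^{eq}_{NoN}-c+\kappa_{ad}\tilde q_p)n_{NoN}-\kappa_{ad}\tilde q_f$ hits zero, whereupon $z^{eq}=0$ and $n_N=1$ --- is precisely the paper's Case~B-2-ii-B and yields $p^d_t$ verbatim; the paper handles your ``delicate step'' by an exhaustive check of the remaining deviations (ISP~NoN into regions $A$, $B$, $C$ and ISP~N into $A$, $C$), each of which is shown to be unconditionally unprofitable via direct payoff comparison or a complete-the-square identity, so your claim that this is the only binding deviation is confirmed.
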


\begin{remark}
\label{rlast}
The inertias are  1) very small 2) small to moderate in the cases represented in
1) Theorem~\ref{theorem:NE_stage1_new_q>}-1 and 2) Theorem~\ref{theorem:NE_stage1_new_q>}-2, respectively.
If $t_N+t_{NoN}\leq \kappa_u \tilde{q}_p$, then $t_{NoN} < \kappa_u \tilde{q}_p+\kappa_{ad}(\tilde{q}_p-\tilde{q}_f)$, since $\tilde{q}_p > \tilde{q}_f.$ Thus, Theorem~\ref{theorem:NE_stage1_new_q>} covers the case that $t_N+t_{NoN}\leq \kappa_u \tilde{q}_p$. 
\end{remark}

\subsubsection{High inertia}

We focus on the case that inertias are lower bounded: $t_N+t_{NoN} >  \kappa_u \tilde{q}_p.$
In Theorem~\ref{theorem:NE_stage1_new_q<}, we characterize all possible SPNE access fees by which $z^{eq}=1$. In Theorem~\ref{theorem:neutralz=0_q<}, we characterize the only possible SPNE access fee by which $z^{eq}=0$. We list as ``possible'' SPNE those strategies that we could not rule out as SPNE by applying the lack of existence of profitable unilateral deviation criteria on the specific interaction-dynamic in question; thus, if SPNE exists, it would be one or more of the ``possible'' ones, the existence can not however be concluded at this point. But, subsequently, in Theorem~\ref{theorem:bigt}, we sharpen the result in Theorem~\ref{theorem:NE_stage1_new_q<}, by providing the unique SPNE when at least one of  $t_N, t_{NoN}$ is very large. 


\begin{theorem}\label{theorem:NE_stage1_new_q<}
 Let $t_N+t_{NoN} >  \kappa_u \tilde{q}_p$. The only possible  SPNE access fees by which $(q^{eq}_N,q^{eq}_{NoN})\in F_1$, i.e., $z^{eq}=1$, are:
\begin{enumerate}
\item \label{i1} If $\Delta p\leq \kappa_u \tilde{q}_p-t_{NoN}$, then $p^{eq}_{NoN}=c+\kappa_u \tilde{q}_{p}-t_{NoN}, p^{eq}_N=c$.
\item \label{i2} If  (i)  $\kappa_u \tilde{q}_{p}-t_{NoN}<\Delta p^{eq}<\kappa_u (2\tilde{q}_{p}-\tilde{q}_f)-t_{NoN}$ or (ii) $t_N+\kappa_u(\tilde{q}_{p}-\tilde{q}_f)<\Delta p^{eq}<t_N+\kappa_u \tilde{q}_{p}$, then $p^{eq}_{NoN}=c+\frac{t_{NoN}+2t_N+\tilde{q}_{p}(\kappa_u -2\kappa_{ad})}{3}, p^{eq}_{N}=c+\frac{2t_{NoN}+t_N-\tilde{q}_{p}(\kappa_u +\kappa_{ad})}{3}.$ 
\item \label{i3} If   $\kappa_u (2\tilde{q}_{p}-\tilde{q}_f)-t_{NoN}< \Delta p^{eq}<t_N+\kappa_u (\tilde{q}_{p}-\tilde{q}_f)$, then $p^{eq}_{NoN}=c+\frac{t_{NoN}+2t_N+  (\tilde{q}_{p}-\tilde{q}_f)(\kappa_u -2\kappa_{ad})}{3}, p^{eq}_{N}=c+\frac{2t_{NoN}+t_N-(\tilde{q}_{p}-\tilde{q}_f)(\kappa_u +\kappa_{ad})}{3}$. 
\item \label{i4} $p^{eq}_{NoN}=c, p^{eq}_N=c-\kappa_u(2\tilde{q}_{p}-\tilde{q}_f)+t_{NoN}.$ 
\end{enumerate}
A necessary condition for 2, 3, 4 to be SPNE is $\pi_{NoN}(p^{eq}_{NoN},\tilde{p}_{t})>\pi_{NoN,z=0}(\tilde{p}^{eq}_{NoN},\tilde{p}).$ Additional necessary conditions  for these to be SPNE are:  $\tilde{q}_{p}\leq \frac{2t_{NoN}+t_N}{\kappa_u+\kappa_{ad}}$ (for part-\ref{i2}),   $\tilde{q}_{p}-\tilde{q}_f\leq \frac{2t_{NoN}+t_N}{\kappa_u+\kappa_{ad}}$ (for part-\ref{i3}), $2\tilde{q}_{p}-\tilde{q}_f\leq \frac{t_{NoN}}{\kappa_{u}}$ (for part-\ref{i4}).
\end{theorem}

(\ref{i1}) follows from Theorem~\ref{theorem:NE_stage1_new_q>}-1 because the region   $ t_N+2t_{NoN} \leq \tilde{q}_{p}(\kappa_u+\kappa_{ad})$ and $t_{NoN} < \kappa_u \tilde{q}_p+\kappa_{ad}(\tilde{q}_p-\tilde{q}_f)$ in general has a non-empty intersection with that represented by $t_N+t_{NoN} >  \kappa_u \tilde{q}_p$.






We now consider the case that at least one of $t_N, t_{NoN}$  is large enough. It may be shown in this case that a) access fees 1), 2), and 4) listed in Theorem~\ref{theorem:NE_stage1_new_q<} are not SPNE,  b) 3) is an SPNE, which therefore becomes the unique SPNE.

\begin{theorem}\label{theorem:bigt}
	When either $t_N$ or $t_{NoN}$ is large enough,  the only SPNE access fee by which $z^{eq}=1$ is $p^{eq}_{NoN}=c+\frac{t_{NoN}+2t_N+  (\tilde{q}_{p}-\tilde{q}_f)(\kappa_u -2\kappa_{ad})}{3}$ and $p^{eq}_{N}=c+\frac{2t_{NoN}+t_N-(\tilde{q}_{p}-\tilde{q}_f)(\kappa_u +\kappa_{ad})}{3}$.
\end{theorem}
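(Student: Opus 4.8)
The plan is to take the four candidate price profiles from Theorem~\ref{theorem:NE_stage1_new_q<} as given (call them S1--S4, in the listed order) and to show that, once $t_N$ or $t_{NoN}$ is large enough, none of S1, S2, S4 can be an NE with $z^{eq}=1$, so that the third profile S3 is the only survivor. Throughout I stay in the regime $\tilde{q}_p<\frac{t_N+t_{NoN}}{\kappa_u}$ (which in particular forces $\tilde{q}_f<\frac{t_N+t_{NoN}}{\kappa_u}$, so item~2 of Theorem~\ref{theorem:p_tilde_new} governs the CP's response), and for each discarded profile I fix ISP N's price at its prescribed value and exhibit a profitable unilateral deviation for ISP NoN using \eqref{equ:UNoN_new} together with the region-dependent threshold $\tilde{p}=\tilde{p}_t$ of Definition~\ref{def:pt}.

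Profiles S1 and S4 are both corner outcomes, and I would rule them out by computing the one-sided derivative of $\pi_{NoN}$ at the corner. For S1 (ISP N driven out, $\Delta p=\kappa_u\tilde{q}_p-t_{NoN}$, $n_{NoN}=1$), raising $p_{NoN}$ pushes $\Delta p$ above $\kappa_u\tilde{q}_p-t_{NoN}$ but below $\Delta p_t$, i.e.\ into region~2-a-ii where the CP offers $(0,\tilde{q}_p)$ and $\tilde{p}=\tilde{p}_{t,2}$; writing $u=p_{NoN}-c$ and $n_{NoN}=\frac{t_N+\kappa_u\tilde{q}_p-u}{t_N+t_{NoN}}$, the right derivative of $(u+\kappa_{ad}\tilde{q}_p)n_{NoN}-\kappa_{ad}\tilde{q}_f$ at the corner equals $\frac{t_N+2t_{NoN}-(\kappa_u+\kappa_{ad})\tilde{q}_p}{t_N+t_{NoN}}$, which is positive once $t_N$ or $t_{NoN}$ is large, giving a profitable deviation. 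For S4 ($p_{NoN}=c$ in region~2-a-i, where the CP offers $(\tilde{q}_f,\tilde{q}_p)$ and $\tilde{p}_{t,3}\tilde{q}_p=\kappa_{ad}n_{NoN}(\tilde{q}_p-\tilde{q}_f)$), ISP NoN's objective is $\big(u+\kappa_{ad}(\tilde{q}_p-\tilde{q}_f)\big)n_{NoN}$ with $n_{NoN}=\frac{t_N+t_{NoN}-\kappa_u\tilde{q}_p-u}{t_N+t_{NoN}}$; its derivative at $u=0$ is proportional to $t_N+t_{NoN}-\kappa_u\tilde{q}_p-\kappa_{ad}(\tilde{q}_p-\tilde{q}_f)$, again positive for large transport costs, so ISP NoN strictly prefers $p_{NoN}>c$ and S4 is not an NE.

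Profile S2 I would discard by a $\Delta p$-consistency test rather than a deviation. Substituting the S2 prices gives $\Delta p=\frac{(t_N-t_{NoN})+\tilde{q}_p(2\kappa_u-\kappa_{ad})}{3}$, whose leading term is $\tfrac{t_N}{3}$ as $t_N\to\infty$ and $-\tfrac{t_{NoN}}{3}$ as $t_{NoN}\to\infty$. In either limit this value escapes both $\Delta p$-windows that define S2 (the upper window lies above $t_N+\kappa_u(\tilde{q}_p-\tilde{q}_f)$ and the lower stays within a bounded distance of $\kappa_u\tilde{q}_p-t_{NoN}$) and lands in the interval $\big(\kappa_u(2\tilde{q}_p-\tilde{q}_f)-t_{NoN},\,t_N+\kappa_u(\tilde{q}_p-\tilde{q}_f)\big)$ that defines S3, where the CP offers $(\tilde{q}_f,\tilde{q}_p)$, not $(0,\tilde{q}_p)$. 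Hence the S2 prices are incompatible with the region that generates them and cannot be an NE. The same computation shows S3 is internally consistent: its prices give $\Delta p=\frac{(t_N-t_{NoN})+(\tilde{q}_p-\tilde{q}_f)(2\kappa_u-\kappa_{ad})}{3}$, which for large $t_N$ or $t_{NoN}$ lies strictly inside S3's interval, while its feasibility condition $\tilde{q}_p-\tilde{q}_f\le\frac{2t_{NoN}+t_N}{\kappa_u+\kappa_{ad}}$ holds because the right-hand side diverges. Thus S3 is the only candidate that can survive, which is the claim.

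The main obstacle is bookkeeping the region transitions in the deviation arguments: each profitable deviation for S1 and S4 must be shown to remain in the region whose CP-response and side-payment threshold $\tilde{p}_t$ I have substituted, and the coupling of $\tilde{p}_{t,2}$ and $\tilde{p}_{t,3}$ to $n_{NoN}$ is precisely what makes those one-sided derivatives non-trivial. Making ``large enough'' quantitative --- producing an explicit $T$ with $\max\{t_N,t_{NoN}\}>T$ sufficing simultaneously for the sign conditions $t_N+2t_{NoN}>(\kappa_u+\kappa_{ad})\tilde{q}_p$ and $t_N+t_{NoN}>\kappa_u\tilde{q}_p+\kappa_{ad}(\tilde{q}_p-\tilde{q}_f)$ and for the two interval inclusions --- is then routine. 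The S2 step, by contrast, is pure interval arithmetic and is the easy part.
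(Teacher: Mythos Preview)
Your elimination of S1, S2, and S4 is essentially the paper's Part~1, with only cosmetic differences: for S4 you compute the same one-sided derivative the paper does; for S2 you use the same interval-arithmetic argument; for S1 your right-derivative argument at the corner $\Delta p=\kappa_u\tilde{q}_p-t_{NoN}$ is a clean variant of the paper's two explicit deviations (to $p_{NoN}'=c$ when $t_{NoN}$ is large, and to $p_{NoN}'=\tfrac12 t_N$ in region~C when $t_N$ is large). So far so good.

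The gap is that you stop after ruling out S1, S2, S4 and checking that S3's prices land in the right $\Delta p$-window with $p_N^{eq}\ge c$. That only shows S3 is the sole \emph{candidate}; it does not show S3 \emph{is} an NE. The theorem, as the paper reads it, asserts existence as well as uniqueness, and its proof devotes a substantial Part~2 to this. Two things are missing from your outline. First, condition~(iii) of Theorem~\ref{theorem:NE_stage1_new_q<}-3, namely $\pi_{NoN}^{eq}>\pi_{NoN,z=0}(p_{NoN}^{eq},\tilde p)$, must be verified; the paper does this by computing both sides explicitly and comparing term by term. Second, and more laborious, you must rule out \emph{cross-region} unilateral deviations from S3: the first-order conditions that produce S3 only guarantee local optimality within region~C, but either ISP could jump to a price that moves $\Delta p$ into region~A, $B_1$, $B_2$, or~D, where the CP's response (and hence the deviator's payoff) follows a different formula. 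The paper checks each of these eight cases (four regions $\times$ two ISPs) separately, using asymptotic comparisons of the form $\pi^{eq}\approx\alpha\,t$ versus $\pi'\approx\beta\,t$ with $\alpha>\beta$ as the relevant transport cost grows. None of this is conceptually deep, but it is not covered by the internal-consistency check you describe, and without it the claim that S3 is an NE is unproved.
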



\begin{remark}
 If either $t_N$ or $t_{NoN}$ is large, then $q_N, q_{NoN}$ have negligible  effect  in EU subscription, by  \eqref{equation:CP_2}. Thus, this scenario resembles that when both ISPs are neutral, i.e. the benchmark case, in which   the unique SPNE  access fees are as in Theorem~\ref{theorem:bigt} with $\tilde{q}_p=\tilde{q}_f$ (from Theorem~\ref{lemma:NEz=0}).
\end{remark}


\begin{remark}
Following Remark~\ref{r0}, $\tilde{p}^{eq}, q^{eq}_N, q^{eq}_{NoN}, z^{eq}, n^{eq}_N$, and  $n^{eq}_{NoN}$
are unique  functions of  $p^{eq}_N, p^{eq}_{NoN}$, which leads to mathematical conditions,  through \eqref{equ:xn}, that ensure that
$q^{eq}_N, q^{eq}_{NoN} $ belong in the appropriate  $F^L_i, F^I_i, F^U_i$, where $i \in \{0, 1\}$ of Table~\ref{table:subsets}, as per Theorem~\ref{theorem:p_tilde_new}. These mathematical conditions provide the conditions in Theorems in Section~\ref{section:stage1} for the corresponding $p^{eq}_N, p^{eq}_{NoN}$ to constitute the SPNE access fees.
\end{remark}

Finally note that the candidate SPNE access fees  in different theorems (defined for different regions of $t_N$ and $t_{NoN}$) can be identical, e.g. {Theorem~\ref{theorem:NE_stage1_new_q>}-2} and {Theorem~\ref{theorem:NE_stage1_new_q<}-2} leading to identical values of other decisions, $q^{eq}_N, q^{eq}_{NoN}, \tilde{p}^{eq}, z^{eq}, n^{eq}_N, n^{eq}_{NoN}$ per Remark~\ref{r0}. Thus, different ranges of the inertias may correspond to the same SPNE.



 \subsection{The overall SPNE outcomes of the game}\label{section:discussion}\label{section:summaryof resutls}
We list the possible SPNEs, referred to as the SPNE candidates, comprising of the decisions in all the stages by the CP and ISPs,  listed in the previous subsections,    and discuss the implications of each being an SPNE.




\textbf{SPNE candidate (a):}  $p^{eq}_{NoN}=c+\kappa_u \tilde{q}_{p}-t_{NoN}$, $p^{eq}_N=c$, $z^{eq}=1$, 
$\tilde{p}^{eq}=\tilde{p}_{t,1}=\kappa_{ad}(1-\frac{\tilde{q}_f}{\tilde{q}_p})$, $(q^{eq}_N,q^{eq}_{NoN})=(0,\tilde{q}_p)\in F^L_1$, $n^{eq}_N=0, n^{eq}_{NoN}=1$. 

\begin{remark}
\label{r2}
The access fees correspond to  Theorem~\ref{theorem:NE_stage1_new_q>}-1. 
 Item 1 of Theorem~\ref{theorem:p_tilde_new} 
yields that  $(q^{eq}_N,q^{eq}_{NoN})=(0,\tilde{q}_p)\in F^L_1$. Thus,  $n^{eq}_N=0, n^{eq}_{NoN}=1$. Also, by Theorem~\ref{theorem:NE_stage2_new_suff}, 
$\tilde{p}^{eq}=\tilde{p}_{t,1}=\kappa_{ad}(1-\frac{\tilde{q}_f}{\tilde{q}_p})$. From Theorems~\ref{theorem:neutralnotexists_q>}, \ref{theorem:NE_stage1_new_q>}, Remark~\ref{r0},  
   candidate (a) is  the unique SPNE  if $t_N$ and $t_{NoN}$ are very small, and hence, EUs are not locked-in with the ISPs - specifically, if $t_N+2t_{NoN} \leq \tilde{q}_{p}(\kappa_u+\kappa_{ad})$ and $t_{NoN} < \kappa_u \tilde{q}_p+\kappa_{ad}(\tilde{q}_p-\tilde{q}_f)$. 
\end{remark}

  For this candidate, the CP offers the content with premium quality on ISP NoN, and does not offer her content on ISP N. 
 In general, the EUs can receive a better quality of content (the premium quality) only  on  NoN, and that yields a higher advertisement revenue for the CP. For small inertias, referring to \eqref{equation:CP_2},  the qualities offered by the CP on the two ISPs play important roles in the utilities of the EUs. Thus, by offering her content only on NoN, and that too with premium quality, the CP can motivate all EUs  to subscribe to NoN, and enhance her advertisement revenue. This also drives  N  out of the market, even when he offers the minimum possible access fee equaling the service cost $c$.

The  access fee chosen by NoN, $p^{eq}_{NoN}$,  increases with (i) increasing the sensitivity of EUs to the quality ($\kappa_u$), (ii) increasing the value of the premium quality ($\tilde{q}_p$), and (iii) decreasing NoN's marginal transport cost, $t_{NoN}$ (recall that NoN's  market power is inversely proportional to $t_{NoN}$).

  The CP pays NoN a positive side-payment, $\tilde{p}^{eq}\tilde{q}_p$,  which increases with (i) the CP's sensitivity to the quality of the advertisement, i.e. $\kappa_{ad}$, and (ii) the difference between the premium and free qualities, i.e. $\tilde{q}_p-\tilde{q}_f$, which represents the additional value the premium quality creates for the CP.


\textbf{SPNE Candidate  (b):} $p^{eq}_{NoN}=c+\frac{t_{NoN}+2t_N+\tilde{q}_{p}(\kappa_u -2\kappa_{ad})}{3}$, $p^{eq}_{N}=c+\frac{2t_{NoN}+t_N-\tilde{q}_{p}(\kappa_u +\kappa_{ad})}{3}$, $z^{eq}=1$, $\tilde{p}^{eq}=\tilde{p}_{t,2}=\kappa_{ad}(n^{eq}_{NoN}-\frac{\tilde{q}_f}{\tilde{q}_p})$, $(q^{eq}_N,q^{eq}_{NoN})=(0,\tilde{q}_p)\in F^I_1$, $n^{eq}_N=\frac{t_N+2t_{NoN}-\tilde{q}_p(\kappa_u+\kappa_{ad})}{3(t_N+t_{NoN})}, n^{eq}_{NoN}=\frac{2t_N+t_{NoN}+\tilde{q}_p(\kappa_u+\kappa_{ad})}{3(t_N+t_{NoN})}.$ 

\begin{remark}
The access fees correspond to Theorem~\ref{theorem:NE_stage1_new_q>}-2 and Theorem~\ref{theorem:NE_stage1_new_q<}-2. These access fees were identified as candidate SPNEs  such that $\Delta p$ satisfies item  3 of Theorem~\ref{theorem:p_tilde_new}. Thus, $(q^{eq}_N,q^{eq}_{NoN})=(0,\tilde{q}_p)\in F^I_1$. By Theorem~\ref{theorem:NE_stage2_new_suff}, $\tilde{p}^{eq}=\tilde{p}_{t,2}=\kappa_{ad}(n_{NoN}-\frac{\tilde{q}_f}{\tilde{q}_p})$. From $\Delta p=p^{eq}_{NoN}-p^{eq}_N$, and \eqref{equ:EUs_linear}, the expressions for $n^{eq}_N$ and $n^{eq}_{NoN}$ follow. From Theorems~\ref{theorem:neutralnotexists_q>}, \ref{theorem:NE_stage1_new_q>}-2, Remark~\ref{r0},
when inertias are small to moderate, i.e.,  $t_N+t_{NoN}\leq \kappa_u \tilde{q}_p$, $t_N+2t_{NoN} > \tilde{q}_{p}(\kappa_u+\kappa_{ad})$, the only possible SPNE is (b) (though it need not be the SPNE).
\end{remark}

Similar to (a), with candidate (b), the CP does not offer her content over  ISP N. But, unlike in (a), the EUs do not all migrate to NoN, owing to the different selections of the access fees by the ISPs. Thus, the CP can not always force N out of the market by her selection of qualities over the two SPs, particularly when the inertias are not very small and  non-negligible fractions of EUs may still be locked-in with the ISPs in effect, referring to \eqref{equation:CP_2} (like the cases in Theorem~\ref{theorem:NE_stage1_new_q>}-2, Theorem~\ref{theorem:NE_stage1_new_q<}-2). But,  since $n^{eq}_{NoN}<1$, NoN's side-payment to the CP in this case  is lower than that in  (a), which
 compensates for her loss in revenue in EU subscription. 

 $p^{eq}_{NoN}$ (respectively, $p^{eq}_N$) increases with a rate $\frac{2}{3}$rd the rate of the growth of $t_N$ (respectively, $t_{NoN}$). Intuitively,  the higher $t_N$ (while $t_{NoN}$ fixed) is, NoN can retain more EUs despite charging them high access fees (the ``lock-in'' effect), thus the higher is her market power, and  the higher is $p^{eq}_{NoN}$. But, counter-intuitively, $p_N$ (respectively, $p_{NoN}$) also increases with a rate $\frac{1}{3}$rd of the rate of growth of $t_N$ (respectively, $t_{NoN}$). This happens because of competition between ISPs: with the increase of $t_{NoN}$, more EUs  prefer  N to NoN. Thus N can select a higher access fee for EUs. This allows her competitor, NoN, to increases her access fee, but with a rate lower than the rate by which N increases her access fee. 

 Next,  ISP N's  access fee, $p^{eq}_N$ is a decreasing function of the premium quality, $\tilde{q}_p$, the sensitivity of EUs and the CP to the quality, $\kappa_u$ and $\kappa_{ad}$. On the other hand, the relationship between NoN's access fee, $p^{eq}_{NoN}$ and these parameters  is more complicated.  $p^{eq}_{NoN}$ increases (respectively, decreases) with respect to the sensitivity of EUs (respectively, CP) to the quality. Thus, the more the CP is sensitive to the quality,  NoN can provide cheaper access fees, and recoup the loss by charging the CP more. 
  It is for this reason that we see that if the sensitivity of EUs to the quality does not dominate the sensitivity of the CP ($\kappa_u < 2\kappa_{ad}$), then $p^{eq}_{NoN}$ decreases with respect to $\tilde{q}_p$. If not, then  $p^{eq}_{NoN}$ increases with respect to $\tilde{q}_p$. 

 Next, $n_{NoN}$  increases with the premium quality, $\tilde{q}_p$, and the sensitivity of the CP and EUs to the quality,  $\kappa_u$ and $\kappa_{ad}$. The more complex relationship between $n_{NoN}$ (and thus $n_N$) and $t_N$ and $t_{NoN}$ will be discussed in Section~\ref{section:numerical_more_results}.


\textbf{SPNE Candidate  (c):}  $p^{eq}_{NoN}=c+\frac{t_{NoN}+2t_N+  (\tilde{q}_{p}-\tilde{q}_f)(\kappa_u -2\kappa_{ad})}{3}$, $p^{eq}_{N}=c+\frac{2t_{NoN}+t_N-(\tilde{q}_{p}-\tilde{q}_f)(\kappa_u +\kappa_{ad})}{3}$, $z^{eq}=1$, $\tilde{p}^{eq}=\tilde{p}_{t,3}=\kappa_{ad}n^{eq}_{NoN}(1-\frac{\tilde{q}_f}{\tilde{q}_p})$, $(q^{eq}_N,q^{eq}_{NoN})=(\tilde{q}_f,\tilde{q}_p)\in F^I_1$, $n^{eq}_N=\frac{t_N+2t_{NoN}-(\tilde{q}_p-\tilde{q}_f)(\kappa_u+\kappa_{ad})}{3(t_N+t_{NoN})}, n^{eq}_{NoN}=\frac{2t_N+t_{NoN}+(\tilde{q}_p-\tilde{q}_f)(\kappa_u+\kappa_{ad})}{3(t_N+t_{NoN})}.$  

 Candidate (c) represents the case that EUs are split between the two ISPs, and the CP  offers her content  with free quality on N and with premium quality on NoN.
  The dependence of  $p^{eq}_N, p^{eq}_{NoN}, n_N, n_{NoN}$ on various parameters are similar to that in (b), with the difference that these depend on the difference in the qualities, i.e. $\tilde{q}_p-\tilde{q}_f$, in the current candidate instead of only $\tilde{q}_p$ as in (b). This difference is expected as  the CP offers quality $\tilde{q}_f$ ($0$, respectively)  on N in the current candidate (candidate (b) respectively). Also, the side-payment is guaranteed to be positive in this case.



   Recall that by  Theorem~\ref{theorem:bigt}, when either of $t_N$ or $t_{NoN}$ is large, then (c) is the only candidate  by which $z^{eq}=1$.   We can now provide the intuition for this. An ISP's subscription revenue depends on both the number of EUs and the access fee she charges. When either of $t_N$ or $t_{NoN}$ is large, then the access fees are  large in both candidates (b) and (c), and hence ISPs  prefer these to the candidates by which they discount the access fee heavily to attract EUs ((a) and (d)). Also, large $t_{NoN}$ or $t_N$ decreases the effect of quality of the content on the decision of EUs.  Thus, the CP cannot control the number of EUs with each ISP by strategically controlling her quality. Therefore she simply chooses  maximum possible quality on both ISPs instead of choosing strategic qualities to control the equilibrium. Thus, (c) arises rather than (b).


\textbf{SPNE Candidate  (d):} $p^{eq}_{NoN}=c$, $p^{eq}_N=c-\kappa_u(2\tilde{q}_{p}-\tilde{q}_f)+t_{NoN}$, $z^{eq}=1$, $\tilde{p}^{eq}=\tilde{p}_{t,3}=\kappa_{ad}n^{eq}_{NoN}(1-\frac{\tilde{q}_f}{\tilde{q}_p})$, $(q^{eq}_N,q^{eq}_{NoN})=(\tilde{q}_f,\tilde{q}_p)\in F^I_1$,  $n^{eq}_N=\frac{\kappa_u \tilde{q}_p}{t_N+t_{NoN}}, n^{eq}_{NoN}=\frac{t_N+t_{NoN}-\kappa_u \tilde{q}_p}{t_N+t_{NoN}}$. 

   To ensure that $p^{eq}_N\geq c$, this candidate can be an SPNE only when $t_{NoN}$ is large, i.e., when EUs are reluctant to join NoN, which explains why NoN  selects a low access fee  of $c.$  


The side-payment is similar to the one in candidate (c).

Next,  $p^{eq}_N$  decreases with $\tilde{q}_p$ and $\kappa_u$, at a rate that is twice  the rate of increase of the utility of EUs from $\kappa_u$ and $\tilde{q}_p$ when connecting to  NoN. Thus, the rate of increase in the utility of EUs for  N is higher than that for  NoN; hence, $n^{eq}_N$ increases with $\tilde{q}_p$ and $\kappa_u.$ 
 Next,  $p^{eq}_N$ increases with $t_{NoN}$. Thus,  $n^{eq}_N$ decreases with respect to $t_{NoN}$\footnote{Note that the utility of EUs connecting to ISP NoN also decreases with $t_{NoN}$ \eqref{equation:CP_2}. However, the rate of decrease in the utility of EUs connecting to  NoN ($t_{NoN}$ is multiplied to a coefficient smaller than one) is lower than the rate of increase of the access fee offered by N  (multiplied by one). Thus, overall, the number of EUs with N (respectively, NoN)  decreases (respectively, increasing) with $t_{NoN}$.}. Finally, note that the access fees are independent of $t_N$, but the utility of EUs connecting to  N  decreases with $t_N$ \eqref{equation:CP_2}. Thus,  the fraction of EUs with $N$, i.e. $n^{eq}_{N}$,  decreases also with respect to  $t_N$.

\begin{remark}
The access fees in the SPNE candidates (c) and (d) correspond to  Theorem \ref{theorem:NE_stage1_new_q<}-3 and \ref{theorem:NE_stage1_new_q<}-4.
 From their constructions, $\Delta p$ satisfies  2-a-i of Theorem~\ref{theorem:p_tilde_new}. 
  Thus,   $(q^{eq}_N,q^{eq}_{NoN})=(\tilde{q}_f,\tilde{q}_p)\in F^I_1$. Also, by Theorem~\ref{theorem:NE_stage2_new_suff}, 
 $\tilde{p}^{eq}=\tilde{p}_{t,3}=\kappa_{ad}n^{eq}_{NoN}(1-\frac{\tilde{q}_f}{\tilde{q}_p})$.
From $\Delta p=p^{eq}_{NoN}-p^{eq}_N$, and \eqref{equ:EUs_linear}, 
the expressions for $n^{eq}_N$ and $n^{eq}_{NoN}$ follow.
\end{remark}

\textbf{SPNE Candidate (e):}  $p^{eq}_{NoN}=c+\frac{1}{3}(2t_N+t_{NoN})$, $p^{eq}_N=c+\frac{1}{3}(2t_{NoN}+t_N)$, $z^{eq}=0$, $(q^{eq}_N,q^{eq}_{NoN})=(\tilde{q}_f,\tilde{q}_f)\in F^L_0$, $n^{eq}_N=\frac{2t_{NoN}+t_N}{3(t_{NoN}+t_N)}, n^{eq}_{NoN}=\frac{2t_N+t_{NoN}}{3(t_N+t_{NoN})}$, and since $z^{eq}=0$, $\tilde{p}^{eq}$ is of no importance. 

\begin{remark} The access fees are obtained from Theorem~\ref{theorem:neutralz=0_q<}.
 From Theorem~\ref{lemma:CP_z=0_new}-1, $(q^{eq}_N,q^{eq}_{NoN})=(\tilde{q}_f,\tilde{q}_f)\in F^I_0$. From $\Delta p=p^{eq}_{NoN}-p^{eq}_N$, and \eqref{equ:EUs_linear}, the expressions for $n^{eq}_N$ and $n^{eq}_{NoN}$ follow.
\end{remark}

  Candidate (e) is the only possible SPNE  by which $z^{eq}=0$, and  is similar to the benchmark one (Section~\ref{section:summary_benchmark}).  Candidate (c) reduces to (e) if $\tilde{q}_p=\tilde{q}_f$. For (e),  the asymmetries between the ISPs only arise from those of   $t_N$ and $t_{NoN}$. Thus, EUs are divided between ISPs depending on $t_N$ and $t_{NoN}$, and the  access fees are a function of only  $t_N$ and $t_{NoN}$. 
 In contrast, in a non-neutral regime, i.e., when $z^{eq}=1$, the  access fee quoted by ISP NoN  depends on the value of the quality she provides, the premium quality $\tilde{q}_p$.  This is because, in the latter,  NoN differentiates her product from  ISP N by providing premium  quality to the EUs, and can accordingly charge the EUs; while in a neutral regime, both ISPs offer identical quality.

We conclude this subsection with a general observation on all the SPNE candidates.   Intuitively, we expect that high sensitivity of EUs and the CP to the quality, i.e.  large $\kappa_u$ and $\kappa_{ad}$, respectively, yields high payoff for ISP NoN,  since this ISP can provide a premium quality and charge the  EUs and CPs accordingly to increase her payoff. 
Results reveal that in all candidates  NoN charges the CP in proportion to $\kappa_{ad}$.  NoN's subscription revenue from the EUs, $n_{NoN}^{eq} (p_{NoN}^{eq}-c)$ is non-decreasing in $\kappa_u$: 1)  increase in $\kappa_u$ in candidates (a), (b), (c), 2) is $0$ in (d) and 3) does not change with $\kappa_u$ in (e). 

\subsection{Determining the SPNE(s), if any, given parameter values}
\label{determining}
If SPNE exists, it would of the form of candidates (a)-(e) in Section~\ref{section:summaryof resutls}. Given specific values of parameters, the necessary conditions listed in Theorems \ref{theorem:NE_stage1_new_q>}, \ref{theorem:NE_stage1_new_q<}, and \ref{theorem:neutralz=0_q<} can be checked early on to rule out some candidates. Subsequently, one can determine which, if any of the surviving ones is the SPNE, given the parameter values, by checking for  profitable unilateral deviation from each. 
  For $z=1$, the quality choices and the marginal side-payment that maximize the payoffs of the CP, and ISP NoN, respectively,     are uniquely known, from Theorems~\ref{theorem:p_tilde_new} and \ref{theorem:NE_stage2_new_suff}, once the access  fees chosen by the ISPs in Stage $1$ are known. For $z=0,$ the CP offers quality  $\tilde{q}_f$ on each ISP, following the argument in Remark~\ref{r5}, and the marginal side-payment  does not matter. Thus,  we only need to  check for any profitable unilateral deviation  for each  ISP in Stage $1$. Considering each  region of $\Delta p$, characterized in Theorem \ref{theorem:p_tilde_new}, one can identify  potential profitable deviations (using first order condition for some regions, and the monotonicity of  payoffs of ISPs  in other regions). Thus, the search for the best deviation reduces to comparing the payoffs of a finite number of candidate deviations with the payoff of the candidate SPNE.

   Since the CP and ISPs know the values of all the parameters, e.g., $\kappa_u, \kappa_{ad}, c, t_N, t_{NoN}, \tilde{q}_f, \tilde{q}_p$, each can determine if an SPNE exists, and compute the SPNE strategy, should one exist,  for any given parameter values, using the above computation approach,  without any coordination with other players.

\subsection{Does SPNE always exist?}
 \label{existence}The computation approach in Section~\ref{determining} yields that there is no SPNE for instance if $\tilde{q}_f=1$, $\tilde{q}_p=1.5$, $c=1$, $\kappa_u=1$, $\kappa_{ad}=0.5$, $t_N=3$, and $t_{NoN}=2$. In this case, for each  of the candidates listed in Section~\ref{section:summaryof resutls}, there exists a profitable deviation for either ISP N or NoN.  

\section{Benchmark: Both ISPs are Neutral}\label{section:summary_benchmark}\label{section:proofsBenchamrk}

We seek  the SPNE when both ISPs are neutral as a benchmark case. 
This is equivalent to restricting the CP to choose $z^{eq}=0$. 
Recall that in Theorem~\ref{lemma:CP_z=0_new}, we characterize the equilibrium strategies within $F_0$ without  considering the strategies in $F_1$.
Next, note that Stage 2 is  now irrelevant  since with $z^{eq}=0$, the effect of $\tilde{p}$ would be eliminated in all analyses. One can show:  

 \begin{theorem}\label{lemma:NEz=0}
 	Outcome (e) in Section~\ref{section:summaryof resutls} is the unique SPNE when the CP is constrained to choose $z^{eq}=0$.  
 \end{theorem}



If $t_{NoN}\& t_N\rightarrow 0$, outcome (e) continues to be the SPNE in this benchmark case, but not otherwise (refer to Remark~\ref{r2}). In this case,  the access fees for the benchmark case,   $p^{eq}_{NoN,B}\&p^{eq}_{N,B}\rightarrow c$. Thus,  in the absence of inertias, since there is no differentiation between the quality offered by the ISPs in the neutral regime, price competition drives the access fees to the marginal cost. This implies that by removing the inertias ($t_N$ and $t_{NoN}$), the model would be similar to a Bertrand competition \cite{MWG}. Thus, considering the inertias brings a realistic dimension to the model.

\section{Numerical Results}\label{section:numericalresults}

In Section~\ref{section:simulation_findNE},  using the computation strategy in Section~\ref{determining}, we find the SPNE strategies for various parameters. 
  In Section~\ref{section:numerical_more_results},  we provide  insights about $n^{eq}_{NoN}$, $\tilde{p}^{eq}$, and the payoff of ISPs, emerging from the numerical results.	 We assess the benefits of non-neutrality by comparing   with the benchmark case in Section~\ref{section:compare_results}.  In Section~\ref{section:regulation}, we provide regulatory comments based on the results. We refer to SPNE as NE in the figures to preserve brevity of text.

\subsection{SPNE Strategies}\label{section:simulation_findNE}

\begin{figure}[t]
	\centering
	\includegraphics[width=0.35\textwidth]{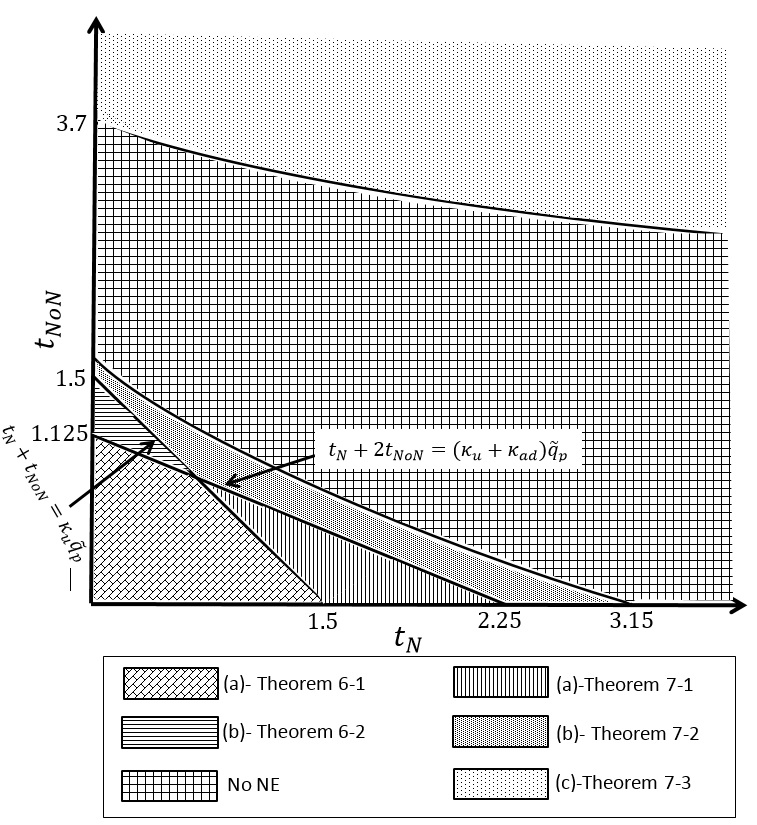}
	\caption{SPNE strategies when $\kappa_u=1$ and $\kappa_{ad}=0.5$. The region corresponding to Theorem~\ref{theorem:NE_stage1_new_q>}-1 indicates that of Theorem~\ref{theorem:NE_stage1_new_q>}-1 other than that of Theorem~\ref{theorem:NE_stage1_new_q<}-1. }
	\label{figure:NE_general_ku>kad}
\end{figure}

\begin{figure}[t]
	\centering
	\includegraphics[width=0.35\textwidth]{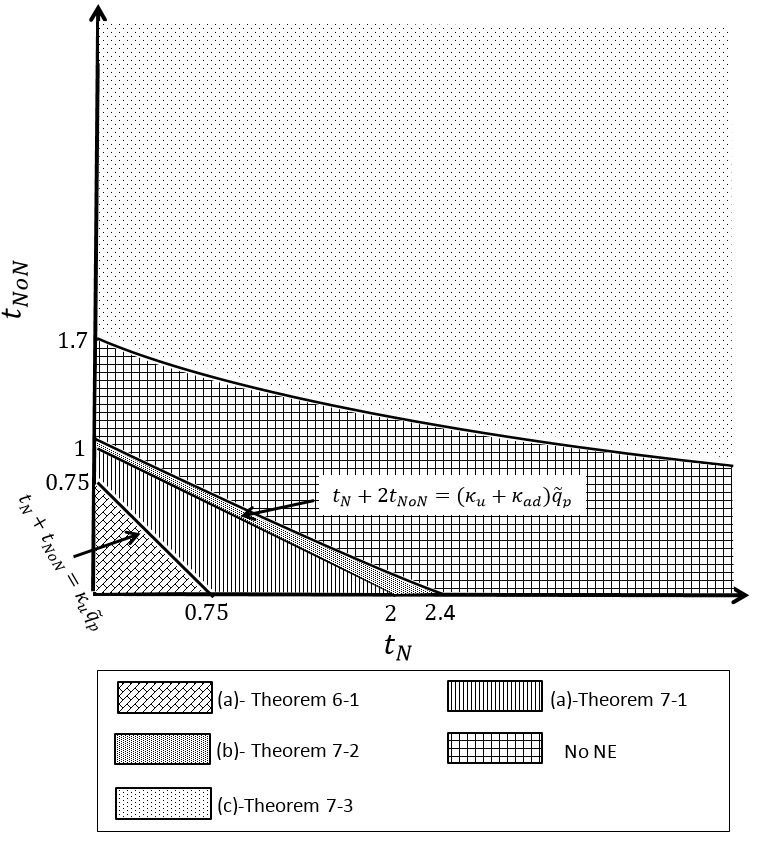}
	\caption{SPNE strategies when $\kappa_u=0.5$ and $\kappa_{ad}=1$. The region corresponding to Theorem~\ref{theorem:NE_stage1_new_q>}-1 indicates that of Theorem~\ref{theorem:NE_stage1_new_q>}-1 other than that of Theorem~\ref{theorem:NE_stage1_new_q<}-1.}
	\label{figure:NE_general_ku<kad}
\end{figure}

%
%
Recall that if SPNE exists, it would of the form of candidates (a)-(e) in Section~\ref{section:summaryof resutls}.
We present the SPNE strategies for different regions of $t_N$ and $t_{NoN}$ and $\tilde{q}_f=1, \tilde{q}_p=1.5, c=1$ and two illustrative examples:  $\kappa_u=1, \kappa_{ad}=0.5$ (Figure~\ref{figure:NE_general_ku>kad}) and  $\kappa_u=0.5, \kappa_{ad}=1$ (Figure~\ref{figure:NE_general_ku<kad}). 
In these, 1) candidate (a) is the SPNE for small $t_N, t_{NoN}$, 2) in the intermediate regions of $t_N, t_{NoN}$ there is no SPNE, 3) as $t_N, t_{NoN}$ further increase, depending on specific regions of these parameters, candidates (b) or (c) become the SPNE.  The SPNE is unique, if it were to exist (in Figures there exists at most one SPNE in each region). Candidates (d) and (e) are never SPNE.
Extensive numerical computations  reveal that the pattern of SPNE strategies for different values of parameters is similar to one of the two patterns presented in Figures  \ref{figure:NE_general_ku>kad} and \ref{figure:NE_general_ku<kad}. Thus, henceforth we do not investigate (d) and (e).


We note that for the parameter values chosen for the figures, $ t_N+2t_{NoN} \leq \tilde{q}_{p}(\kappa_u+\kappa_{ad})$ implies that $t_{NoN} < \kappa_u \tilde{q}_p+\kappa_{ad}(\tilde{q}_p-\tilde{q}_f)$\footnote{This is because the maximum value of $t_{NoN}$ in the first region, $\tilde{q}_{p}(\kappa_u+\kappa_{ad})/2$, is lower than $\kappa_u \tilde{q}_p+\kappa_{ad}(\tilde{q}_p-\tilde{q}_f)$ as $\kappa_{ad} \tilde{q}_f < \tilde{q}_{p}(\kappa_u+\kappa_{ad})/2$.}. Thus, consistent with Remark~\ref{r2}, the numerical results show that candidate (a) is the unique SPNE for $t_N+2t_{NoN} \leq \tilde{q}_{p}(\kappa_u+\kappa_{ad})$. Following Remark~\ref{rlast}, the region corresponding to Theorem~\ref{theorem:NE_stage1_new_q<}-1, is the intersection of that corresponding to Theorem~\ref{theorem:NE_stage1_new_q>}-1 with $\{t_N+t_{NoN} >  \tilde{q}_{p}\kappa_u\}.$  In the figures we mark this region separately from the rest of that corresponding to Theorem~\ref{theorem:NE_stage1_new_q>}-1. 

\subsection{Dependencies of $n^{eq}_{NoN}, n^{eq}_N$, $\tilde{p}^{eq}$, and Payoffs on $t_N$ and $t_{NoN}$}\label{section:numerical_more_results}
We consider $\tilde{q}_f=1, \tilde{q}_p=1.5, c=1, \kappa_u = 0.5, \kappa_{ad} = 1$ throughout in Section~\ref{section:numerical_more_results}.

\subsubsection{ $n^{eq}_{NoN}, n^{eq}_N$} Numerical computations reveal that $n^{eq}_{NoN}$ is non-increasing  (hence, $n^{eq}_{N}$ will be non-decreasing, since $n^{eq}_N=1-n^{eq}_{NoN}$) with respect to both $t_N, t_{NoN}$, e.g.,  Figure \ref{figure:nNoN_double}. For candidate  (a), $n^{eq}_{NoN}=1$. For candidates (b) and (c), different factors determine how $n^{eq}_{NoN}, n^{eq}_N$ depend on $t_N, t_{NoN}$; the overall relation turns out to be monotonic. 

%
%

\begin{figure}
	\begin{subfigure}{.25\textwidth}
		\centering
		\includegraphics[width=\linewidth]{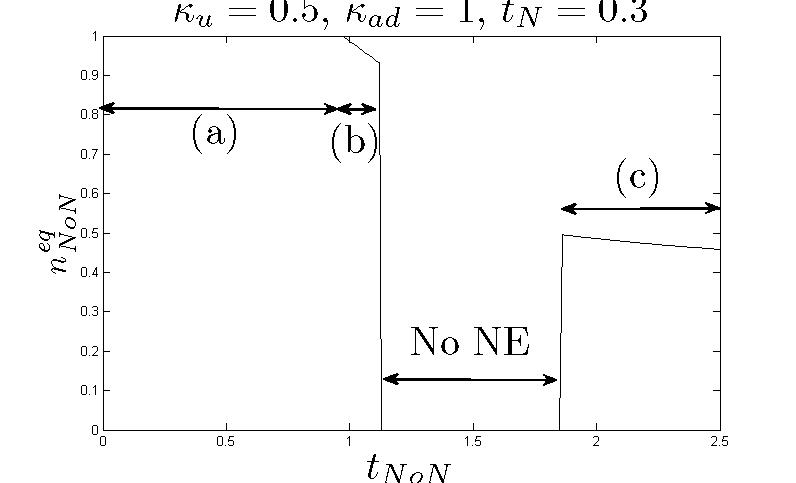}
		\label{fig:sfig1}
	\end{subfigure}%
	\begin{subfigure}{.25\textwidth}
		\centering
		\includegraphics[width=\linewidth]{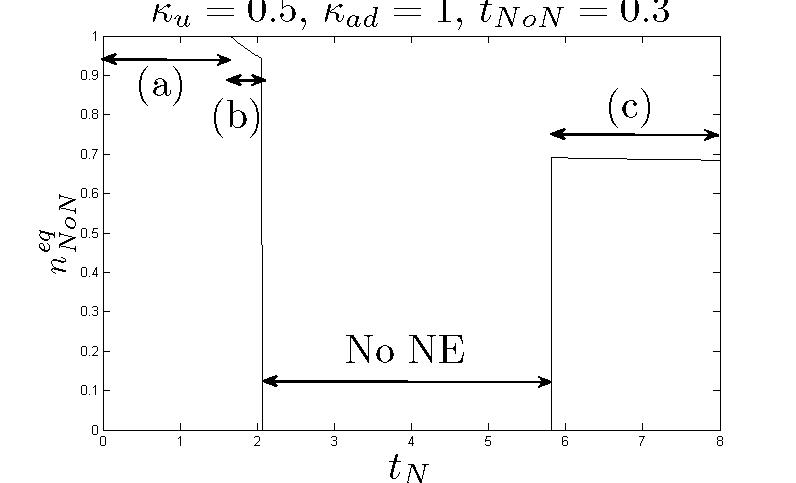}
		\label{fig:sfig2}
	\end{subfigure}
	\caption{$n^{eq}_{NoN}$ with respect to $t_N$ and $t_{NoN}$}\label{figure:nNoN_double}
\end{figure}

\subsubsection{$\tilde{p}^{eq}$}  $\tilde{p}^{eq}$ does not depend on $n^{eq}_{NoN}$ in candidate (a), it is linearly increasing with $n^{eq}_{NoN}$ in both candidates (b) and (c). This, given how $n^{eq}_{NoN}$ is non-increasing   in both $t_N, t_{NoN}$, the nature of the dependence of $\tilde{p}^{eq}$ on both $t_N, t_{NoN}$ in Figures \ref{figure:tildep_kadlessku} may be anticipated from Figure \ref{figure:nNoN_double}. 
  Expressions for candidates (a) and (c) guarantee that $\tilde{p}^{eq}$ is positive for these; numerical results for  a large set of parameters reveal the same for candidate (b) whenever it is SPNE, though the expression is not definitive in this regard. Note that for all candidates, (a) to (e), $\Delta p^{eq}<t_N+\kappa_u$. Thus,  Theorem~\ref{theorem:NE_stage2_new_suff} informs us that  $z^{eq}=1$ provided $\tilde{p}^{eq} > 0$, which holds as per the numerical computations. Thus, $z^{eq}$ turns out to be $1$ for all the SPNEs that occur, namely (a), (b), (c).
%

\begin{figure}
	\begin{subfigure}{.25\textwidth}
		\centering
		\includegraphics[width=\linewidth]{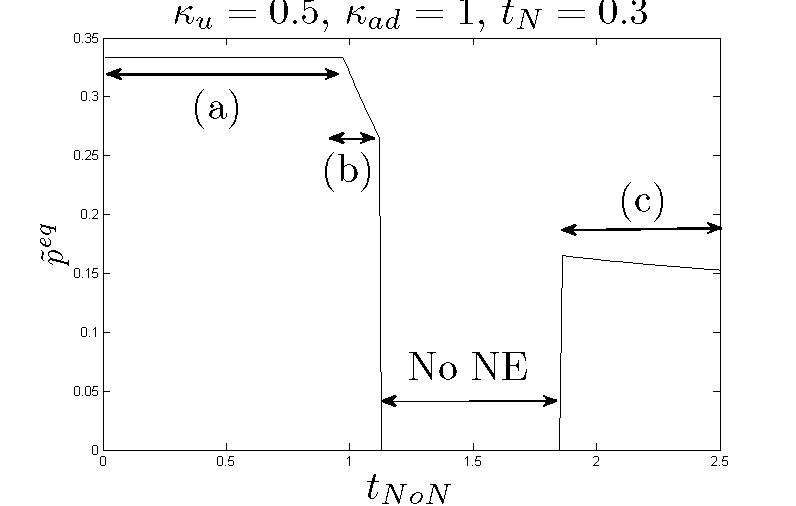}
		\label{figure:tildep_kadlessku_tNoN}
	\end{subfigure}%
	\begin{subfigure}{.25\textwidth}
		\centering
		\includegraphics[width=\linewidth]{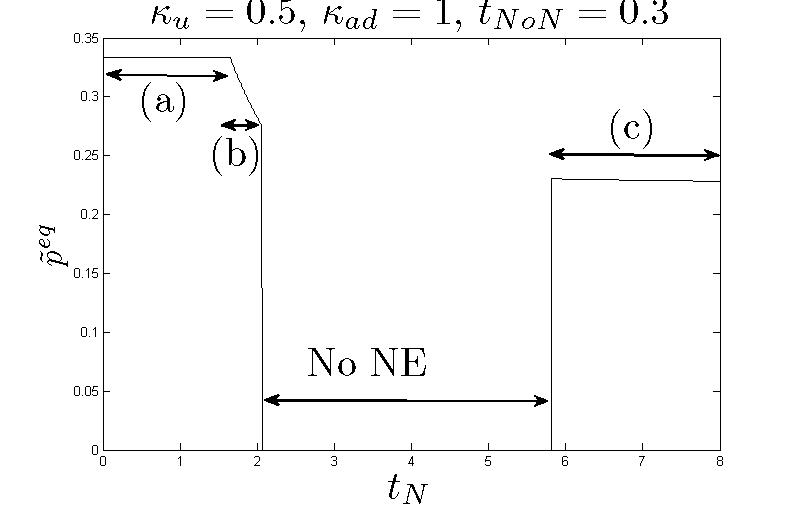}
		\label{figure:tildep_kadlessku_tN}
	\end{subfigure}
	\caption{$\tilde{p}^{eq}$ with respect to $t_N$ and $t_{NoN}$}\label{figure:tildep_kadlessku}
\end{figure}

\subsubsection{Payoffs of ISPs} When the market of power ISP NoN i.e., $\frac{t_{N}}{t_N+t_{NoN}}$, is small, then ISP  NoN's payoff is lower than that of ISP N (Figure~\ref{figure:payoffs_double}-left). Intuitively, we expect  the payoff of an ISP to be  decreasing (increasing, respectively) with respect to the marginal transport cost of that ISP (the competitor ISP). However, this intuition is belied for some candidates and parameter ranges (Figure \ref{figure:payoffs_double}); we explain in the next section, as to why. Expressions for candidate (a) tell us $\pi^{eq}_{NoN} = \kappa_u \tilde{q}_{p}-t_{NoN}, \pi^{eq}_{N} = 0$, which is consistent with Figure \ref{figure:payoffs_double}. Generally,
an ISP's payoff increases with (i) the number of EUs choosing her and  (ii) the  access fee she charges  the EUs. The expressions show that  both ISPs' access fees increase in both $t_N, t_{NoN}$ for candidates (b) and (c).  Numerical computations show that $n^{eq}_{NoN}, n^{eq}_N$ are respectively non-increasing and non-decreasing in $t_N, t_{NoN}$.  Thus, for (b) and (c), N's payoff is increasing with respect to both $t_N, t_{NoN}$, and  
 depending on which of these factors dominates, NoN's payoff  increases, remains constant or decreases with respect to $t_N, t_{NoN}$.


 \begin{figure}
 	\begin{subfigure}{.25\textwidth}
 		\centering
 		\includegraphics[width=\linewidth]{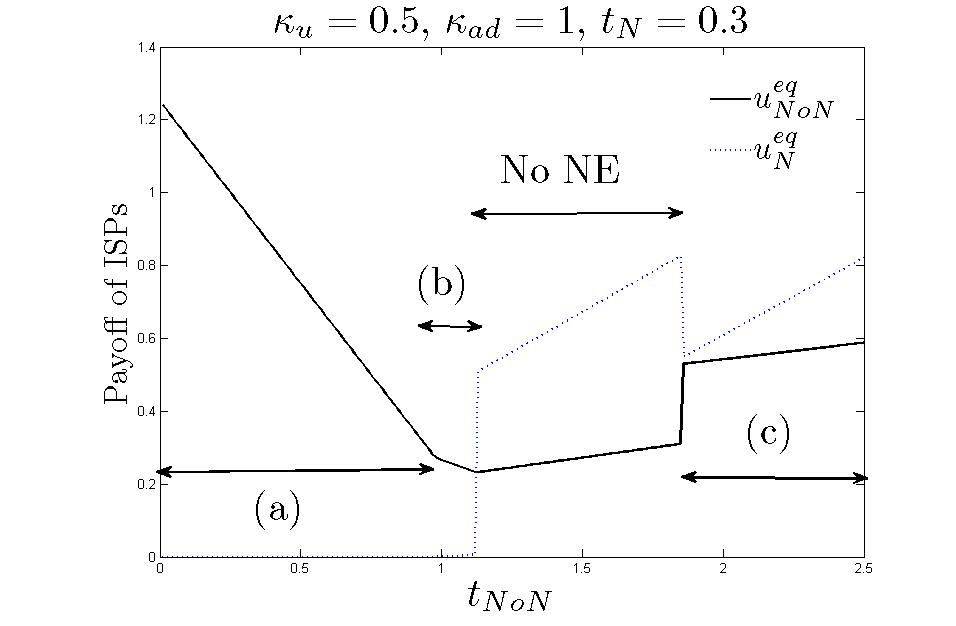}
 	\end{subfigure}%
 	\begin{subfigure}{.25\textwidth}
 		\centering
 		\includegraphics[width=\linewidth]{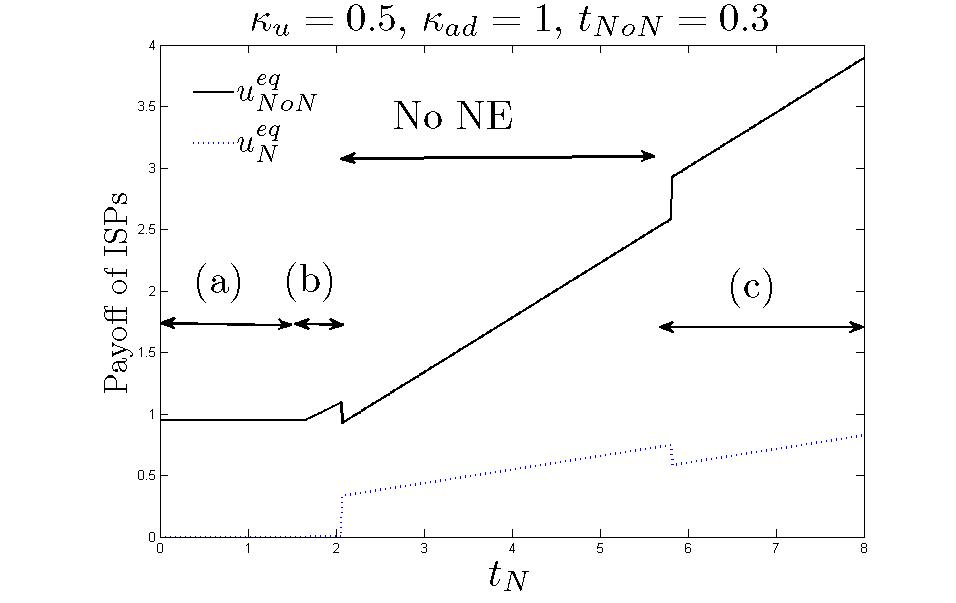}
 	\end{subfigure}
 	\caption{Payoffs of ISPs with respect to $t_N$ and $t_{NoN}$. If  there is no SPNE strategy,  we plot the payoffs in the benchmark case (Section~\ref{section:summary_benchmark}).}\label{figure:payoffs_double}
 \end{figure}





\subsection{Profits of Entities Due to Non-neutrality}\label{section:compare_results}

We compare  the results of the model and the benchmark case in which both ISPs are neutral. We compare   access fees,  payoff of ISPs, the welfare of EUs, and  the payoff of the CP in Sections   \ref{section:comp_access}, \ref{section:comp_CP}, \ref{section:comp_payoffs} and \ref{section:EUW} respectively.

\subsubsection{Access Fees}\label{section:comp_access}
From expressions for candidates (a), (b), (c), (e) and Theorem~\ref{lemma:NEz=0}, ISP N charges lower access fees in (a), (b), (c) as compared to the benchmark case (Theorem~\ref{lemma:NEz=0}). This is to compete with NoN who offers better quality than N on (a), (b), (c)  (per CP's quality choice), but the same free quality  in the benchmark case.

Next,   ISP NoN's access fee in the benchmark case exceeds that in  candidate (a) by $\frac{1}{3}(5t_{NoN}+t_N)-\kappa_u\tilde{q}_p$. This can be negative or positive, and decreases with  $\kappa_u$ and $\tilde{q}_p$, and increases with  $t_{NoN}$ and $t_N$. Thus, if (i) EUs are not sensitive to the quality, i.e. small $\kappa_u$, (ii) the premium quality is not high, i.e. small $\tilde{q}_p$, (iii) EUs cannot switch between ISPs easily, i.e. $t_N, t_{NoN}$ are large,  then NoN provides a cheaper access  for EUs compared to when she is forced to be neutral.

NoN's access fee in the benchmark case exceeds that in  candidate (b) (respectively, (c)) by
$\frac{1}{3}\tilde{q}_p(2\kappa_{ad}-\kappa_u)$ (respectively, $\frac{1}{3}(\tilde{q}_p-\tilde{q}_f)(2\kappa_{ad}-\kappa_u)$). Thus, if $2\kappa_{ad}>\kappa_u$, i.e., the sensitivity of the CP is high enough, then the discount in (b) (respectively, (c)) is positive and increases with the premium quality (respectively, the difference  between the premium and free qualities). On the other hand, if the sensitivity of the CP is low, then the discount is negative. The reason is that  in the former case,  NoN can charge higher marginal side-payments to the CP, and accordingly subsidize  the EUs even though they receive a premium quality. But, in the latter case, the CP would not offer premium quality unless the marginal side-payment is low; hence  NoN needs to charge  the premium quality to the EUs directly, leading to higher  access fees.

\subsubsection{Payoff of the CP}\label{section:comp_CP}  It follows from  \eqref{equ:payoffCP_new} that  the CP gets a payoff of $\kappa_{ad}\tilde{q}_f$ for each candidate SPNE (a)-(e)   and also in the benchmark scenario in Section~\ref{section:summary_benchmark}. This is because  ISP NoN  decides her strategy in Stages 1 and 2, followed by the CP in Stage 3. Thus, CP is the follower, and NoN is a leader in this leader-follower game. Thus, knowing the parameters of the game and the tie-breaking assumption 2 of the CP, NoN can extract all the profits of the CP and render the CP indifferent between offering premium and free qualities on NoN.

\subsubsection{Payoffs of ISPs}\label{section:comp_payoffs}
We plot the differences between the SPNE payoffs of ISPs N and NoN (Section~\ref{section:discussion}) with the corresponding ones in the benchmark case (Section~\ref{section:summary_benchmark}): Figure~\ref{figure:diff_payoffs_NoN_double} for  $\kappa_u=0.5, \kappa_{ad}=1, \tilde{q}_f=1, \tilde{q}_p=1.5, t_{NoN}=0.3$ (intuition remains the same for other values of the parameters).

\emph{Results reveal that  ISP N will lose payoff in all of the non-neutral SPNE strategies}, i.e., those that yield $z^{eq}=1$ (Figure \ref{figure:diff_payoffs_NoN_double}-right). This is because  for candidate  (a),  N would be driven out of the market, while for  (b) and (c),  she has to lower her access fee compared to the benchmark case (Section~\ref{section:comp_access}). 

\emph{Results also reveal that for a wide range of parameters, ISP NoN receives a higher payoff under a non-neutral scenario} (Figure \ref{figure:diff_payoffs_NoN_double}-left). To see why, recall that NoN extracts the additional profit of the CP in a non-neutral scenario. Also,   that for some parameters ($\kappa_u>2\kappa_{ad}$), NoN  charges higher access fee (Section~\ref{section:comp_access}). Even when NoN subsidizes the  access for EUs ($2\kappa_{ad}>\kappa_u$), the deficit is recovered through the side-payment charged to the CP (high $\kappa_{ad}$ yields a high side-payment). Besides,  NoN can potentially attract more EUs by providing cheaper access or  a premium quality (or both). 

However, NoN loses payoff by switching to non-neutrality for some parameter values -  for candidate (a), and when (i) EUs are not sensitive to the quality, i.e., low $\kappa_u$, (ii)  the CP is not sensitive to the quality her EUs receive, i.e., low $\kappa_{ad}$,  (iii) NoN does not offer enough differentiation in the quality, i.e. low $\tilde{q}_p-\tilde{q}_f$,  (iv) NoN's market power,  $\frac{t_{N}}{t_N+t_{NoN}}$,   is low. 
For example, when  $\kappa_u=\kappa_{ad}=0.85, \tilde{q}_f=1, \tilde{q}_p=1.03, t_N=0.05, t_{NoN}=0.8$, then $\pi^{eq}_{NoN}<\pi^{eq}_{NoN,B}$. 
This counter-intuitive phenomenon  results from competition between the ISPs. Recall that N lowers her access fee compared to the benchmark case (Section~\ref{section:comp_access}). To successfully compete for the EUs, then, NoN  also has to decrease her  access fee (exacerbated by her low market power,  and low sensitivity of EUs to the quality), while not generating enough revenue through the side-payments received from the CP (for candidates (a), (b), (c), marginal side-payment is low when $\kappa_{ad}, \tilde{q}_p-\tilde{q}_f$ are low)\footnote{ISP NoN still extracts the additional profit she creates for EUs.}. All these reduce NoN's payoff. A non-neutral regime is unlikely to emerge in these cases. 
Recall that  ISPs N and NoN first decide on the access fees, and then NoN decides on the marginal side-payment in a subsequent stage. We had chosen this order because the access fees are expected to change less frequently than   the marginal side-payment. But, if the order of these two stages were swapped, then by virtue of becoming the only leader of the game,  NoN would never lose payoff by switching to non-neutrality. 

\begin{figure}
 	\begin{subfigure}{.25\textwidth}
 		\centering
 		\includegraphics[width=\linewidth]{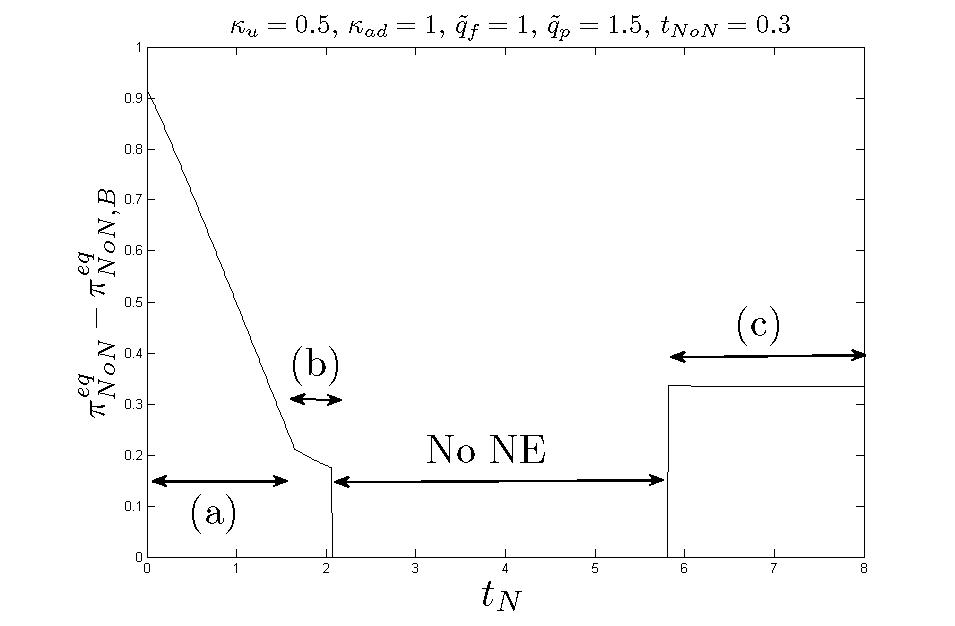}
 	\end{subfigure}%
 	\begin{subfigure}{.25\textwidth}
 		\centering
 		\includegraphics[width=\linewidth]{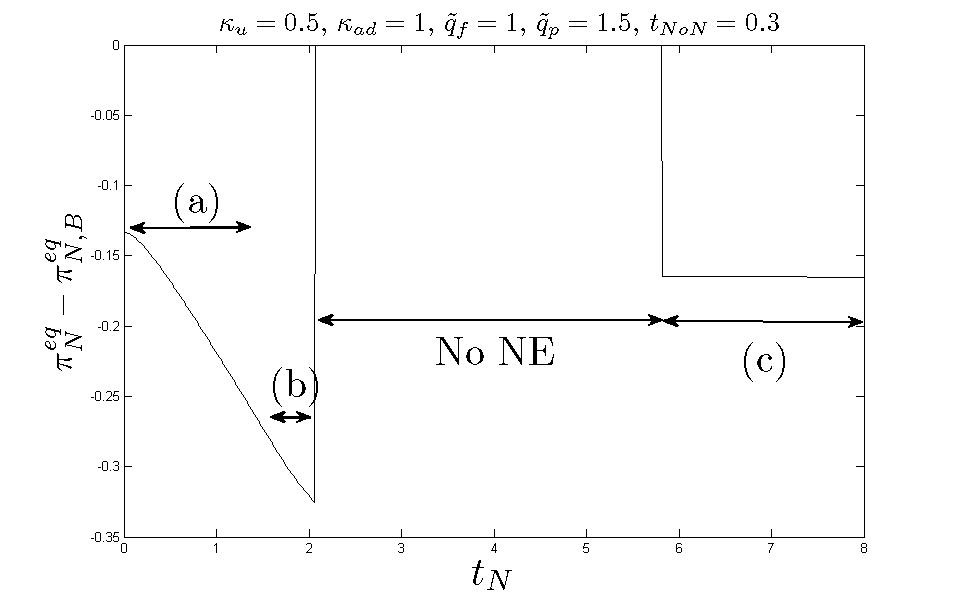}
 	\end{subfigure}
 	\caption{The difference between the payoffs of ISPs with respect to $t_N$ and $t_{NoN}$}\label{figure:diff_payoffs_NoN_double}
 \end{figure}

\subsubsection{EUs'
 Welfare}\label{section:EUW}
 \begin{figure}
 	\begin{subfigure}{.25\textwidth}
 		\centering
 		\includegraphics[width=\linewidth]{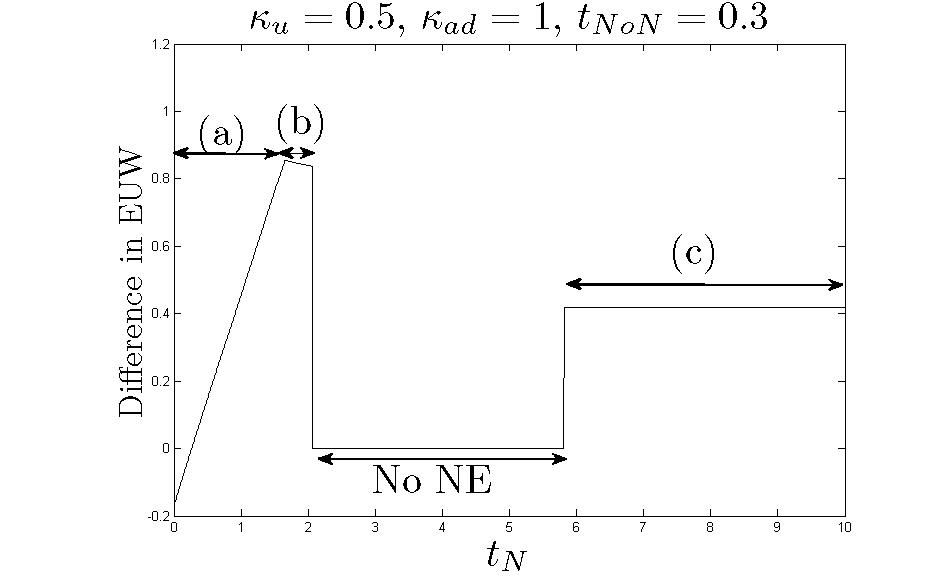}
 	\end{subfigure}%
 	\begin{subfigure}{.25\textwidth}
 		\centering
 		\includegraphics[width=\linewidth]{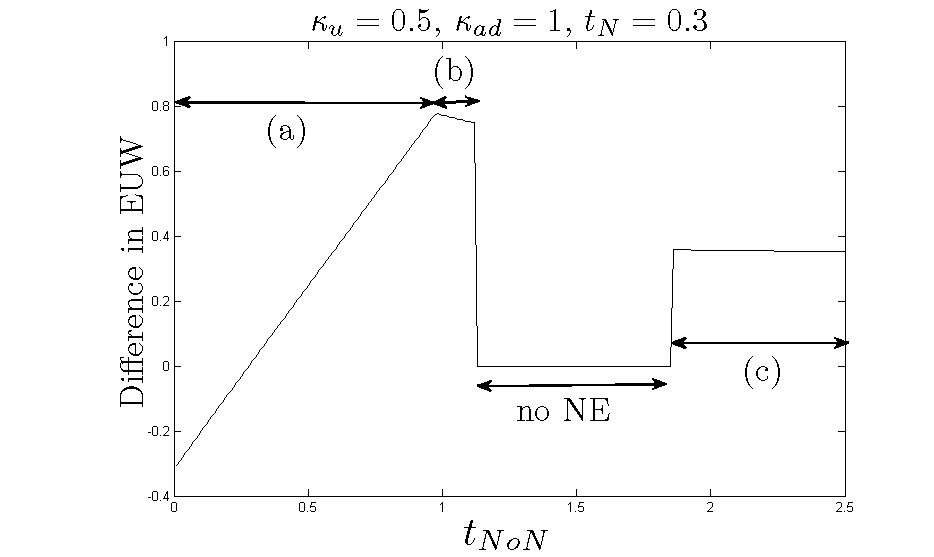}
 	\end{subfigure}
 	\caption{Difference in EUW between the non-neutral and benchmark scenarios with respect to $t_N$ and $t_{NoN}$}\label{figure:EUW_double}
 \end{figure}
 \begin{figure}
 	\begin{subfigure}{.25\textwidth}
 		\centering
 		\includegraphics[width=\linewidth]{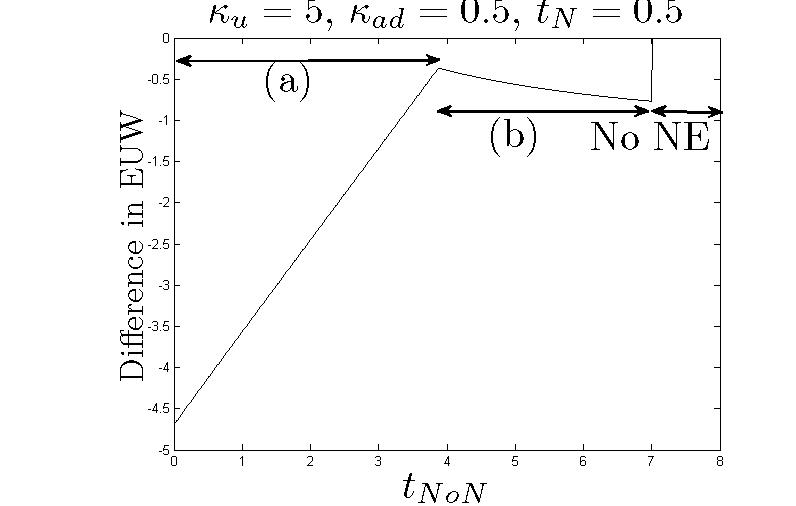}
 	\end{subfigure}%
 	\begin{subfigure}{.25\textwidth}
 		\centering
 		\includegraphics[width=\linewidth]{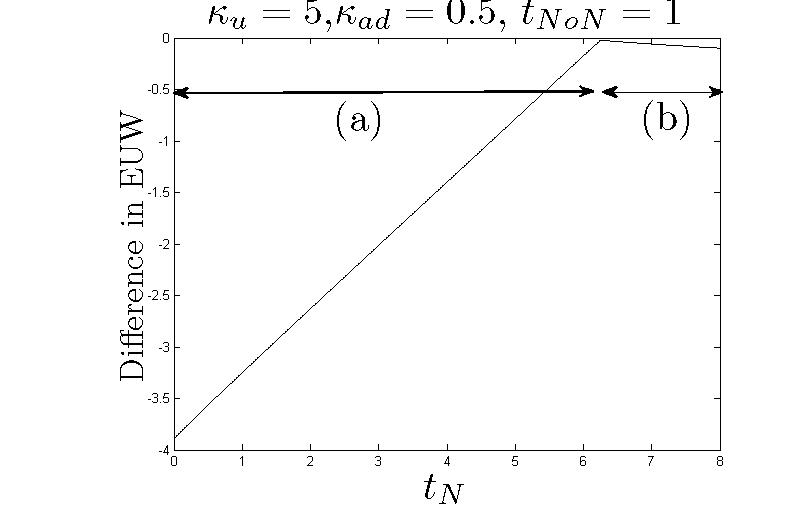}
 	\end{subfigure}
 	\caption{Difference in EUW between the non-neutral and benchmark scenarios with respect to $t_N$ and $t_{NoN}$}\label{figure:EUW_double_kularge}
 \end{figure}
%
%
 %
  Using, \eqref{equation:CP_2},  we define the total welfare of EUs, EUW, as:
{\footnotesize
	\be\label{equ:SW}
	\ba
&\int_0^{x_n} u_{EU,N}(x) dx + \int_{x_n}^1 u_{EU,NoN}(x) dx = \\
	& v^*+ (\kappa_u q_N-p_N)n_N-\frac{t_N}{2}n^2_N\\
	&\qquad \qquad +(\kappa_u q_{NoN}-p_{NoN})n_{NoN}	-\frac{t_{NoN}}{2}n^2_{NoN}
	\ea
	\ee}
\normalsize
 We plot the difference in the EUW of the non-neutral  (Section~\ref{section:discussion}) with the benchmark case (Section~\ref{section:summary_benchmark})  for $\tilde{q}_f=1, \tilde{q}_p=1.5$ (Figures \ref{figure:EUW_double}, \ref{figure:EUW_double_kularge}). Other parameters of the figures are $\kappa_u=0.5$, $\kappa_{ad}=1$, and $t_{NoN}=0.3$ for Figure~\ref{figure:EUW_double}-left, $\kappa_u=0.5$, $\kappa_{ad}=1$, and $t_{N}=0.3$ for Figure~\ref{figure:EUW_double}-right,  $\kappa_u=5$, $\kappa_{ad}=0.5$, and $t_{N}=0.5$ for Figure~\ref{figure:EUW_double_kularge}-left, and $\kappa_u=5$, $\kappa_{ad}=0.5$, and $t_{NoN}=1$ for Figure~\ref{figure:EUW_double_kularge}-right.
We observe that EUW is higher in a non-neutral setting if (i) the market power of ISP NoN $t_N/(t_N + t_{NoN})$ is low, (ii) $\kappa_{ad}$  is high, or (iii) $\kappa_u$ is low. However, when $t_N, t_{NoN}$  are small, and (ii) and (iii) do not hold,   the benchmark case yields a higher EUW (Figure \ref{figure:EUW_double_kularge}). This may be explained by the fact that  these are also the parameter values for which NoN provides  more expensive access in the  non-neutral setting for candidates (a), (b), (c)   (Section~\ref{section:comp_access}). Clearly, many factors influence the difference, and in opposing directions. But that the difference  may either be positive or negative may be expected since at least one of the ISPs, ISP NoN, provides both more expensive and cheaper access to the EUs, for different values of the parameters,  in the benchmark case compared to the non-neutral one. 


\subsection{Does the Market Need to be Regulated?}\label{section:regulation}


ISP N is driven out of the market for candidate SPNE (a), and loses payoff in other SPNEs (namely (b), (c)) as compared to a fully neutral market, i.e., the benchmark scenario (Section~\ref{section:comp_payoffs}).
 A regulator like FCC may provide incentives to the neutral ISPs (eg, monetary subsidies or tax deductions) to retain her in the market. Thereby, the market retains a mixture of  neutral and non-neutral ISPs. Our research has shown that in such markets the non-neutral ISP gets a higher payoff and EUs a higher overall welfare for a wide range of parameters, as compared to in a fully neutral scenario. On the other hand, if all  ISPs are non-neutral, then few powerful CPs may acquire inordinate control over the market by offering premium quality on all ISPs, or  one non-neutral ISP may become a monopoly driving the rest out of the market.  



\section{Generalization and Future Research}\label{section:implicationAssum}\label{future}

If the EUs are non-uniformly distributed in $[0, 1]$, we start with a cumulative distribution function $F(x)$ which gives the fraction of EUs in $(0, x)$.  In \eqref{equ:EUs_linear}, for $x_n \in (0, 1)$,  $n_N$ will now be $F(x_n)$, where $x_n$ is given by \eqref{equ:xn}. The results that do not depend on the specific expression for $n_N$ extend (eg, those in Sections~\ref{section:CPdecides}, \ref{section:stage2}), but others must be derived using specific expressions for $F(\cdot)$ (eg, those in Sections~\ref{section:stage1}, \ref{section:summaryof resutls}, \ref{section:summary_benchmark}). 


We may consider that  NoN incurs an additional marginal cost $\kappa$   each time she offers CP's content with premium quality, since NoN could presumably have utilized the associated bandwidth elsewhere.    
 Then   $\pi_{NoN}(p_{NoN}, \tilde{p})$ in \eqref{equ:payoffISPsGeneral_new} must be decremented by $z\kappa\tilde{q}_p$, and  (a)-(e) of Section~\ref{section:summaryof resutls} would still constitute the possible SPNEs, as payoffs for all non-neutral choices would be reduced by a fixed constant $\kappa\tilde{q}_p$ compared to that for the neutral choices. For small $\kappa$,  all results remain the same.  As $\kappa$ increases, for larger ranges of parameters, 1) the SPNE candidate (e) becomes the actual SPNE  even outside the Benchmark case,  and 2)   NoN  loses payoff by switching to a non-neutral regime. When $\kappa$ becomes large, (e) becomes the unique SPNE. As to specific results, Theorems~\ref{lemma:CP_z=0_new}, \ref{theorem:p_tilde_new}, \ref{theorem:neutralz=0_q<}-1, \ref{theorem:NE_stage2_new_suff} hold as is.  
  The proofs in  Theorems~\ref{lemma:CP_z=0_new}, \ref{theorem:p_tilde_new}, \ref{theorem:NE_stage2_new_suff}, \ref{theorem:neutralz=0_q<}-1 do not use expressions for NoN's payoff. Theorems~\ref{theorem:neutralnotexists_q>}-2, \ref{theorem:NE_stage1_new_q>} and \ref{theorem:bigt} hold only for small $\kappa.$ But, the SPNEs in Theorem~\ref{theorem:NE_stage1_new_q>} and \ref{theorem:NE_stage1_new_q<} remain possible SPNEs regardless of the value of $\kappa$, and as before Theorem~\ref{theorem:NE_stage1_new_q<} provides all possible SPNEs. In each part, additional necessary condition  $\pi_{NoN}(\tilde{p}^{eq}_{NoN},\tilde{p}_t)>\pi_{NoN,z=0}(\tilde{p}^{eq}_{NoN},\tilde{p})$ applies (which is naturally satisfied for small $\kappa$). These follow from a perusal of the corresponding proofs.  
 Theorem~\ref{lemma:NEz=0} holds because the proof only considers that $z = 0.$ In Appendix~\ref{appendix:fixedcost}, we outline the proof 
 that (e) is the unique SPNE when $\kappa$ is large.

 In Section~\ref{section:comp_CP} we observed that the CP receives a payoff of $\kappa_{ad} \tilde{q}_f$ regardless of her choice of quality on the ISPs. In practice, however, monopolistic CPs (eg, Google)  earn large advertising revenues, which would presumably increase if they are allowed to control the qualities of their transmissions on the ISPs. Our model can incorporate the same with a minor modification, that in which, there exists an upper-bound $\alpha > 0$ on the marginal side-payment that NoN can not exceed. This upper-bound will be decided through prior negotiations between the CP and ISPs, and its value will depend on the popularity of the CP in the market. If $\alpha$ is small  (which is likely to happen when the CP is popular among EUs), the CP will pay a low side-payment to the ISP, and thereby retain most of her advertising revenue. Theorems~\ref{lemma:CP_z=0_new}, \ref{theorem:p_tilde_new} hold as is. A modified threshold $\tilde{p}_t^{mod} = \min(\tilde{p}_t, \alpha)$ will need to be defined.    Theorem~\ref{theorem:NE_stage2_new_suff}-1 holds with $\tilde{p}_t^{mod}$ replacing $\tilde{p}_t$. The SPNEs in Theorem~\ref{theorem:NE_stage1_new_q>} and \ref{theorem:NE_stage1_new_q<} remain possible SPNEs, but a few  other possible SPNEs arise. We outline their derivation, as also the proof of a modified version of Theorem~\ref{theorem:NE_stage2_new_suff},  in Appendix~\ref{appendix:upperbound}.


Assessing the predicaments of the stake-holders in a potentially non-neutral Internet market with $m$ ISPs and $n$ CPs remain open for $m > 1, n> 2.$ These intensify the competition between the ISPs and introduce competition between the CPs (eg, 1) Netflix, Hulu and Amazon 2) Facebook, Snapchat Twitter, compete respectively  for delivering movies and social media exchanges). The competition between the CPs  would for example decrease their market power and profit margins; these may have opposite effects on the acceptance of high marginal side-payments. Internet may also be fragmented if each CP offers her content exclusively or with a better quality on a different ISP to decrease competition. 

 Formulation and solution of a repeated interaction game  in which the CP and the ISPs determine their strategies based on prior choices and corresponding payoffs remain open, eg, if N is driven out of the market, corresponding to the parameter choices in which SPNE candidate (a) emerges as the SPNE, should she upgrade to acquire the capability to deliver CP's content with premium quality and reappear at a later time? How do ISPs strategize, eg, invoking predatory pricing,  in presence of repeated interactions?

The ISPs can influence some of the factors that determine the marginal transport costs, $t_N, t_{NoN}$,  but only over a long time-horizon. Generalizing the formulation and the analysis to consider these as decision variables remain open.


\bibliographystyle{IEEEtran}
\bibliography{bmc_article}

\begin{thebibliography}{10}
\providecommand{\url}[1]{#1}
\csname url@samestyle\endcsname
\providecommand{\newblock}{\relax}
\providecommand{\bibinfo}[2]{#2}
\providecommand{\BIBentrySTDinterwordspacing}{\spaceskip=0pt\relax}
\providecommand{\BIBentryALTinterwordstretchfactor}{4}
\providecommand{\BIBentryALTinterwordspacing}{\spaceskip=\fontdimen2\font plus
\BIBentryALTinterwordstretchfactor\fontdimen3\font minus
  \fontdimen4\font\relax}
\providecommand{\BIBforeignlanguage}[2]{{%
\expandafter\ifx\csname l@#1\endcsname\relax
\typeout{** WARNING: IEEEtran.bst: No hyphenation pattern has been}%
\typeout{** loaded for the language `#1'. Using the pattern for}%
\typeout{** the default language instead.}%
\else
\language=\csname l@#1\endcsname
\fi
#2}}
\providecommand{\BIBdecl}{\relax}
\BIBdecl

\bibitem{CISS}
M.~H. Lotfi, S.~Sarkar, and G.~Kesidis, ``Migration to a non-neutral internet:
  Economics modeling and analysis of impact,'' in \emph{2016 Annual Conference
  on Information Science and Systems (CISS)}.\hskip 1em plus 0.5em minus
  0.4em\relax IEEE, 2016, pp. 511--516.

\bibitem{NetEcon}
M.~H. Lotfi, G.~Kesidis, and S.~Sarkar, ``Network non\mbox{-}neutrality on the
  {I}nternet: Content provision under a subscription revenue model,''
  \emph{SIGMETRICS Perform. Eval. Rev.}, vol.~42, no.~3, pp. 44--44, Dec. 2014.

\bibitem{progressive}
J.~Kr{\"a}mer, L.~Wiewiorra, and C.~Weinhardt, ``Net neutrality: A progress
  report,'' \emph{Telecommunications Policy}, vol.~37, no.~9, pp. 794--813,
  2013.

\bibitem{NetNeutrality}
E.~Wyatt, ``Rebuffing {F.C.C.} in net neutrality case, court allows streaming
  deals,'' \emph{NY Times., Web.}, 14 Jan. 2014.

\bibitem{FCCutility}
R.~R. Ruiz and S.~Lohr, ``{F.C.C.} approves net neutrality rules, classifying
  broadband internet service as a utility,'' \emph{NY Times., Web.}, 26 Feb.
  2015.

\bibitem{sponsoring_journal}
M.~H. Lotfi, S.~Sarkar, K.~Sundaresan, and M.~A. Khojastepour, ``The economics
  of quality sponsored data in non-neutral networks,'' \emph{IEEE/ACM
  Transactions on Networking}, vol.~25, no.~4, pp. 2068--2081, Aug 2017.

\bibitem{EU_net_neutrality}
J.~Vincent, ``European parliament rejects amendments protecting net
  neutrality,'' \emph{The Verge, Web.}, 27 Oct. 2015.

\bibitem{NYtimes2}
E.~Wyatt and N.~Cohen, ``Comcast and {N}etflix reach deal on service,''
  \emph{NY Times., Web.}, 23 Feb. 2014.

\bibitem{netflix_deal}
Z.~M. Seward, ``The inside story of how {N}etflix came to pay {C}omcast for
  internet traffic,'' \emph{Quartz, Web.}, 27 Aug. 2014.

\bibitem{Kesidis13}
\BIBentryALTinterwordspacing
F.~Kocak, G.~Kesidis, T.-M. Pham, and S.~Fdida, ``The effect of caching on a
  model of content and access provider revenues in information-centric
  networks,'' in \emph{Proc. ASE/IEEE EconCom}, Washington, DC, Sept. 2013.
  [Online]. Available: \url{http://arxiv.org/abs/1304.1942}
\BIBentrySTDinterwordspacing

\bibitem{Kesidis14}
G.~Kesidis, ``A simple two-sided market model with side-payments and {ISP}
  service classes,'' in \emph{Proc. IEEE INFOCOM Workshop on Smart Data
  Pricing}, Toronto, May 2014.

\bibitem{fudenberg1991tirole}
D.~Fudenberg and T.~Jean, ``Tirole: Game theory,'' \emph{MIT Press}, vol. 726,
  p. 764, 1991.

\bibitem{survey}
F.~Schuett, ``Network neutrality: A survey of the economic literature,''
  \emph{Review of Network Economics}, vol.~9, no.~2, 2010.

\bibitem{economides}
N.~Economides and J.~Tag, ``Network neutrality on the {I}nternet: A two-sided
  market analysis,'' \emph{Information Economics and Policy}, vol.~24, no.~2,
  pp. 91--104, 2012.

\bibitem{asu}
P.~Njoroge, A.~Ozdaglar, N.~E. Stier-Moses, and G.~Y. Weintraub, ``Investment
  in two-sided markets and the net neutrality debate,'' \emph{Review of Network
  Economics}, vol.~12, no.~4, pp. 355--402, 2010.

\bibitem{ma2013public}
R.~T. Ma and V.~Misra, ``The public option: a nonregulatory alternative to
  network neutrality,'' \emph{IEEE/ACM Transactions on Networking (TON)},
  vol.~21, no.~6, pp. 1866--1879, 2013.

\bibitem{maille2016content}
P.~Maill{\'e} and G.~Schwartz, ``Content providers volunteering to pay network
  providers: Better than neutrality?'' \emph{arXiv preprint arXiv:1602.07615},
  2016.

\bibitem{Kramer}
J.~Kr{\"a}mer and L.~Wiewiorra, ``Network neutrality and congestion sensitive
  content providers: Implications for content variety, broadband investment,
  and regulation,'' \emph{Information Systems Research}, vol.~23, no.~4, pp.
  1303--1321, 2012.

\bibitem{Kim_hotelling}
J.~Pil~Choi and B.-C. Kim, ``Net neutrality and investment incentives,''
  \emph{The RAND Journal of Economics}, vol.~41, no.~3, pp. 446--471, 2010.

\bibitem{cheng_queue}
H.~K. Cheng, S.~Bandyopadhyay, and H.~Guo, ``The debate on net neutrality: A
  policy perspective,'' \emph{Information systems research}, vol.~22, no.~1,
  pp. 60--82, 2011.

\bibitem{walrand2009}
J.~Musacchio, G.~Schwartz, and J.~Walrand, ``A two-sided market analysis of
  provider investment incentives with an application to the net-neutrality
  issue,'' \emph{Review of Network Economics}, vol.~8, no.~1, 2009.

\bibitem{altman2011_monopoly}
E.~Altman, A.~Legout, and Y.~Xu, ``Network non-neutrality debate: An economic
  analysis,'' in \emph{NETWORKING 2011}.\hskip 1em plus 0.5em minus 0.4em\relax
  Springer, 2011, pp. 68--81.

\bibitem{altman2013_competition}
M.~K. Hanawal and E.~Altman, ``Network non-neutrality through preferential
  signaling,'' in \emph{Modeling \& Optimization in Mobile, Ad Hoc \& Wireless
  Networks (WiOpt), 2013 11th International Symposium on}.\hskip 1em plus 0.5em
  minus 0.4em\relax IEEE, 2013, pp. 232--239.

\bibitem{choi2015net}
J.~P. Choi, D.-S. Jeon, and B.-C. Kim, ``Net neutrality, business models, and
  {I}nternet interconnection,'' \emph{American Economic Journal:
  Microeconomics}, vol.~7, no.~3, pp. 104--141, 2015.

\bibitem{bourreau2015hotel}
M.~Bourreau, F.~Kourandi, and T.~Valletti, ``Net neutrality with competing
  {I}nternet platforms,'' \emph{The Journal of Industrial Economics}, vol.~63,
  no.~1, pp. 30--73, 2015.

\bibitem{Katz}
B.~E. Hermalin and M.~L. Katz, ``The economics of product-line restrictions
  with an application to the network neutrality debate,'' \emph{Information
  Economics and Policy}, vol.~19, no.~2, pp. 215--248, 2007.

\bibitem{PartI}
M.~H. Lotfi, S.~Sarkar, and G.~Kesidis, ``Is non-neutrality profitable for the
  stakeholders of the {I}nternet market?'' \emph{ArXiv}.

\bibitem{MWG}
A.~Mas-Colell, M.~D. Whinston, J.~R. Green \emph{et~al.}, \emph{Microeconomic
  theory}.\hskip 1em plus 0.5em minus 0.4em\relax Oxford university press New
  York, 1995, vol.~1.

\end{thebibliography}

\appendices

\section{Proofs of Section~\ref{section:CPdecides}- Stage 3}\label{section:Stage3_proofs}

We start from Remark~\ref{r5} that identifies the candidates  for  $(q^{eq}_N,q^{eq}_{NoN})$ that had not been ruled out until the Remark.

Next, if $0<n_{NoN}<1$, then the expression for the payoff in \eqref{equ:payoffCP_new}, would be (using~\eqref{equ:EUs_linear}):
\be\label{equ:payoffCP_new_2}
\footnotesize
\ba
&\pi_{CP}(q_N,q_{NoN},z)=\frac{t_{NoN}+\kappa_u(q_N-q_{NoN})+p_{NoN}-p_N}{t_N+t_{NoN}} \kappa_{ad} q_N\\
&+\frac{t_N+\kappa_u(q_{NoN}-q_N)+p_N-p_{NoN}}{t_N+t_{NoN}}\kappa_{ad} q_{NoN}-z \tilde{p} q_{NoN}
\ea
\ee
\normalsize

The following lemmas are used in proving the main results of this section:

\begin{lemma} \label{lemma:thresh_on_delta_p_z1_new}
	Let $(\tilde{q}_f,\tilde{q}_{p})$ and $(0,\tilde{q}_{p})$ belong to the set $F^I$, i.e. for them $0< x_N< 1$. Then $\pi_G(\tilde{q}_f, \tilde{q}_{p},z=1) \geq \pi_G(0, \tilde{q}_{p},z=1)$ if and only if $\Delta p\geq \Delta p_{t}$, where $\Delta p_{t}=\kappa_{u}(2\tilde{q}_{p}-\tilde{q}_f)-t_{NoN}$.
\end{lemma}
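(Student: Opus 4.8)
The plan is to reduce the claimed equivalence to a single linear comparison in $\Delta p$ by evaluating the CP's objective at the two strategies and subtracting. Since both $(\tilde{q}_f,\tilde{q}_p)$ and $(0,\tilde{q}_p)$ are assumed to lie in $F^I$, for each of them $0<x_N<1$, hence $0<n_N,n_{NoN}<1$, and the CP's objective $\pi_G$ is given by the interior expression \eqref{equ:payoffCP_new_2} with $z=1$. First I would substitute each strategy into \eqref{equ:payoffCP_new_2}, writing $n_N$ and $n_{NoN}$ through \eqref{equ:EUs_linear} with $p_{NoN}-p_N=\Delta p$.

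The key simplification is that both strategies share the same non-neutral quality $q_{NoN}=\tilde{q}_p$, so the side-payment term $-\tilde{p}\tilde{q}_p$ is identical and cancels in the difference $\pi_G(\tilde{q}_f,\tilde{q}_p,1)-\pi_G(0,\tilde{q}_p,1)$. What remains is the advertising revenue. The strategy $(0,\tilde{q}_p)$ contributes only through $n_{NoN}\kappa_{ad}\tilde{q}_p$, since its neutral-side term vanishes because $q_N=0$, whereas $(\tilde{q}_f,\tilde{q}_p)$ contributes both $n_N\kappa_{ad}\tilde{q}_f$ and $n_{NoN}\kappa_{ad}\tilde{q}_p$, with the two $n_{NoN}$ values differing only because raising $q_N$ from $0$ to $\tilde{q}_f$ shifts $x_N$. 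A direct computation of $n_{NoN}$ in the two cases gives a difference of exactly $-\kappa_u\tilde{q}_f/(t_N+t_{NoN})$, so the whole difference collapses to
$$\pi_G(\tilde{q}_f,\tilde{q}_p,1)-\pi_G(0,\tilde{q}_p,1)=\frac{\kappa_{ad}\tilde{q}_f}{t_N+t_{NoN}}\Big(\Delta p-\big(\kappa_u(2\tilde{q}_p-\tilde{q}_f)-t_{NoN}\big)\Big)=\frac{\kappa_{ad}\tilde{q}_f}{t_N+t_{NoN}}\,(\Delta p-\Delta p_t).$$

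Finally I would observe that the prefactor $\kappa_{ad}\tilde{q}_f/(t_N+t_{NoN})$ is strictly positive (since $\kappa_{ad}>0$, $\tilde{q}_f>0$, and $t_N+t_{NoN}>0$), so the sign of the difference equals the sign of $\Delta p-\Delta p_t$. Hence $\pi_G(\tilde{q}_f,\tilde{q}_p,1)\geq \pi_G(0,\tilde{q}_p,1)$ holds if and only if $\Delta p\geq\Delta p_t$, which is the assertion. I do not anticipate a genuine obstacle here; the computation is elementary once the interior payoff formula is used. The only points requiring care are (i) confirming that both strategies genuinely fall in $F^I$, so that \eqref{equ:payoffCP_new_2} is the correct expression for both (guaranteed by the hypothesis), and (ii) tracking that the cancellation of the side-payment term and the clean factorization into $(\Delta p-\Delta p_t)$ both rely on the two strategies sharing the same value $q_{NoN}=\tilde{q}_p$.
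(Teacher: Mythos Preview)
Your proposal is correct and follows essentially the same approach as the paper: both substitute the two strategies into the interior payoff formula \eqref{equ:payoffCP_new_2}, take the difference (cancelling the common side-payment term), and reduce the comparison to the linear condition $\Delta p\ge \Delta p_t$. Your explicit factorization of the difference as $\frac{\kappa_{ad}\tilde{q}_f}{t_N+t_{NoN}}(\Delta p-\Delta p_t)$ and the remark on the positivity of the prefactor make the equivalence slightly more transparent, but the argument is the same.
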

\begin{proof}
	The proof is done by comparing the payoffs (note that in both cases $0< x_N<1$). We use \eqref{equ:payoffCP_new_2} to write the expression of $\pi_G({q}_N,{q}_{NoN},z)$:
	$$
	\ba
	&\pi_G(\tilde{q}_f, \tilde{q}_{p},z=1) \geq \pi_G(0, \tilde{q}_{p},z=1) \iff \\
	& \big{(}t_{NoN}-\kappa_{u}(\tilde{q}_{p}-\tilde{q}_f)+p_{NoN}-p_{N}\big{)} \kappa_{ad}\tilde{q}_f+\\
	&\qquad \qquad \qquad 	\big{(} t_{N}+\kappa_{u}(\tilde{q}_{p}-\tilde{q}_f)+p_{N}-p_{NoN}\big{)}\kappa_{ad} \tilde{q}_{p} \\
	&\geq \big{(} t_{N}+\kappa_{u}\tilde{q}_{p}+p_{N}-p_{NoN}\big{)} \kappa_{ad}  \tilde{q}_{p}\\
	&\qquad \iff t_{NoN}-\kappa_{u}(2\tilde{q}_{p}-\tilde{q}_f)+p_{NoN}-p_{N} \geq 0 \\
	&\qquad\iff \Delta p \geq \kappa_{u}(2\tilde{q}_{p}-\tilde{q}_f)-t_{NoN}=\Delta p_{t}
	\ea
	$$
	The result follows.
\end{proof}

\begin{lemma} \label{lemma:thresh_on_delta_p_z10_new_1}
	Let $(0,\tilde{q}_p)\in F^L_1$, i.e. by which $n_{NoN}=1$. Then, $\pi_{CP}(0, \tilde{q}_p,z=1) \geq \kappa_{ad} \tilde{q}_f$ if and only if $\tilde{p}\leq \tilde{p}_{t,1}$, where $ \tilde{p}_{t,1}=\kappa_{ad}(1-\frac{\tilde{q}_{f}}{\tilde{q}_p})$.
\end{lemma}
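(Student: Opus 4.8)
The plan is to prove this by direct substitution into the CP's payoff formula, exploiting that membership in $F^L_1$ pins down all the quantities appearing in \eqref{equ:payoffCP_new}. Recall that $F^L_1$ is characterized by two facts: the superscript $L$ means $x_N\leq 0$, which by \eqref{equ:EUs_linear} forces $n_N=0$ and $n_{NoN}=1$, and the subscript $1$ means $z=1$ together with the premium choice $q_{NoN}=\tilde{q}_p$. So the first step is simply to evaluate \eqref{equ:payoffCP_new} at $(q_N,q_{NoN})=(0,\tilde{q}_p)$ with these values, which kills the first advertising term (both $n_N=0$ and $q_N=0$) and yields
\[
\pi_{CP}(0,\tilde{q}_p,z=1)=\kappa_{ad}\tilde{q}_p-\tilde{p}\,\tilde{q}_p .
\]

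Next I would set up the inequality to be analyzed, $\pi_{CP}(0,\tilde{q}_p,z=1)\geq \kappa_{ad}\tilde{q}_f$, substitute the expression just obtained, and rearrange to isolate the side-payment term:
\[
\kappa_{ad}\tilde{q}_p-\tilde{p}\,\tilde{q}_p\geq \kappa_{ad}\tilde{q}_f
\iff \tilde{p}\,\tilde{q}_p\leq \kappa_{ad}(\tilde{q}_p-\tilde{q}_f).
\]
Since $\tilde{q}_p>0$, dividing through preserves the direction of the inequality, giving exactly $\tilde{p}\leq \kappa_{ad}\bigl(1-\tfrac{\tilde{q}_f}{\tilde{q}_p}\bigr)=\tilde{p}_{t,1}$. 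Because every manipulation is a reversible equivalence, this establishes both directions of the ``if and only if'' at once.

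This lemma is essentially a one-line computation rather than a result with a genuine obstacle; the only thing that requires a word of care is the positivity of $\tilde{q}_p$ used when dividing, which is safe since $\tilde{q}_p$ is the (strictly positive) premium quality level. The comparison value $\kappa_{ad}\tilde{q}_f$ on the right-hand side is the payoff the CP can always secure under a free-quality strategy, as noted in Theorem~\ref{lemma:CP_z=0_new}, so the lemma is really identifying the exact threshold on $\tilde{p}$ below which committing to premium quality while driving all EUs to ISP NoN beats the guaranteed free-quality payoff. I would therefore keep the write-up to the chain of equivalences above and flag the role of $\tilde{q}_p>0$.
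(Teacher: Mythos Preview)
Your proposal is correct and is essentially identical to the paper's own proof: both substitute $n_N=0$, $n_{NoN}=1$, $z=1$ into \eqref{equ:payoffCP_new} to get $\pi_{CP}=\kappa_{ad}\tilde{q}_p-\tilde{p}\,\tilde{q}_p$ and then rearrange the inequality $\kappa_{ad}\tilde{q}_p-\tilde{p}\,\tilde{q}_p\geq\kappa_{ad}\tilde{q}_f$ to obtain $\tilde{p}\leq\tilde{p}_{t,1}$. Your write-up is slightly more careful in flagging the use of $\tilde{q}_p>0$ when dividing, which the paper leaves implicit.
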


\begin{proof}
	We use \eqref{equ:payoffCP_new} to write the expression of the payoff of the CP:
	$$
	\small
	\ba
	\pi_{CP}(0, \tilde{q}_p,z=1) \geq \kappa_{ad} \tilde{q}_f &\iff \kappa_{ad}\tilde{q}_p-\tilde{p}\tilde{q}_p\geq \kappa_{ad}\tilde{q}_f \\
	&\iff \tilde{p}\leq \kappa_{ad}(1-\frac{\tilde{q}_{f}}{\tilde{q}_p})=\tilde{p}_{t,1}
	\ea
	$$
	\normalsize
\end{proof}

\begin{lemma} \label{lemma:thresh_on_delta_p_z10_new_2}
	Let $(0,\tilde{q}_p)\in F^I_1$, i.e. by which $0< n_{NoN}< 1$. Then, $\pi_{CP}(0, \tilde{q}_p,z=1) \geq \kappa_{ad} \tilde{q}_f$ if and only if $\tilde{p}\leq \tilde{p}_{t,2}$, where $\tilde{p}_{t,2}=\kappa_{ad} (n_{NoN}-\frac{\tilde{q}_f}{\tilde{q}_p})$ and $n_{NoN}=\frac{t_N+\kappa_u \tilde{q}_p-\Delta p}{t_N+t_{NoN}}$.
\end{lemma}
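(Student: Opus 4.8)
The plan is to mirror the argument of Lemma~\ref{lemma:thresh_on_delta_p_z10_new_1} almost verbatim, the only difference being that now $0<n_{NoN}<1$, so the payoff of the CP must be evaluated using the interior-division expression \eqref{equ:payoffCP_new_2} rather than the corner expression \eqref{equ:payoffCP_new} with $n_{NoN}=1$. Since $(0,\tilde{q}_p)\in F^I_1$, I substitute $q_N=0$, $q_{NoN}=\tilde{q}_p$, and $z=1$ into \eqref{equ:payoffCP_new_2}. The first term vanishes because $q_N=0$, so the payoff reduces to
\be
\pi_{CP}(0,\tilde{q}_p,z=1)=\frac{t_N+\kappa_u\tilde{q}_p+p_N-p_{NoN}}{t_N+t_{NoN}}\,\kappa_{ad}\tilde{q}_p-\tilde{p}\tilde{q}_p.
\ee
Recognizing that the fraction is exactly $n_{NoN}=\frac{t_N+\kappa_u\tilde{q}_p-\Delta p}{t_N+t_{NoN}}$ (recall $\Delta p=p_{NoN}-p_N$), this is simply $\pi_{CP}(0,\tilde{q}_p,z=1)=\kappa_{ad}\,n_{NoN}\,\tilde{q}_p-\tilde{p}\tilde{q}_p$.

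The remaining step is the threshold comparison. I would write out the inequality and solve for $\tilde p$:
\be
\ba
\kappa_{ad}\,n_{NoN}\,\tilde{q}_p-\tilde{p}\tilde{q}_p\geq\kappa_{ad}\tilde{q}_f
&\iff \tilde{p}\tilde{q}_p\leq\kappa_{ad}(n_{NoN}\tilde{q}_p-\tilde{q}_f)\\
&\iff \tilde{p}\leq\kappa_{ad}\Big(n_{NoN}-\frac{\tilde{q}_f}{\tilde{q}_p}\Big)=\tilde{p}_{t,2},
\ea
\ee
where dividing by $\tilde{q}_p>0$ preserves the direction of the inequality. This yields precisely the claimed threshold $\tilde{p}_{t,2}$ of Definition~\ref{def:pt1,pt2}, and the ``if and only if'' is immediate since every manipulation is an equivalence.

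I do not anticipate a genuine obstacle here; the content is a one-line algebraic equivalence once the correct payoff expression is selected. The only point requiring care — and the natural place for an error — is making sure to use the interior payoff \eqref{equ:payoffCP_new_2} with the $q_N=0$ term dropped, rather than reusing the $n_{NoN}=1$ specialization from the previous lemma; getting the coefficient $n_{NoN}$ (instead of $1$) in front of $\kappa_{ad}\tilde{q}_p$ is exactly what distinguishes $\tilde{p}_{t,2}$ from $\tilde{p}_{t,1}$. One should also note in passing that the comparison baseline $\kappa_{ad}\tilde{q}_f$ is the guaranteed payoff the CP obtains from the $z=0$ strategies characterized in Theorem~\ref{lemma:CP_z=0_new}, which justifies comparing against this constant.
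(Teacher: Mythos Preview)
Your proposal is correct and follows essentially the same approach as the paper: compute $\pi_{CP}(0,\tilde{q}_p,z=1)=n_{NoN}\kappa_{ad}\tilde{q}_p-\tilde{p}\tilde{q}_p$ and rearrange the inequality $\pi_{CP}\geq\kappa_{ad}\tilde{q}_f$ to isolate $\tilde{p}$. The only cosmetic difference is that the paper invokes \eqref{equ:payoffCP_new} with the symbol $n_{NoN}$ directly, whereas you start from the expanded interior form \eqref{equ:payoffCP_new_2} and then identify the fraction as $n_{NoN}$; the algebra is identical.
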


\begin{proof}
	We compare the payoff with $\kappa_{ad}\tilde{q}_f$. We use \eqref{equ:payoffCP_new} to write the expression of the payoff of the CP:
	$$
	\small
	\ba
	\pi_{CP}(0, \tilde{q}_p,z=1) \geq \kappa_{ad} \tilde{q}_f &\iff n_{NoN}\kappa_{ad}\tilde{q}_p -\tilde{p}\tilde{q}_p\geq \kappa_{ad} \tilde{q}_f \\
	& \iff \tilde{p}\leq \kappa_{ad} (n_{NoN}-\frac{\tilde{q}_f}{\tilde{q}_p})=\tilde{p}_{t,2}
	\ea
	\normalsize
	$$
	where, by \eqref{equ:EUs_linear}, $n_{NoN}=\frac{t_N+\kappa_u \tilde{q}_{p}-\Delta p}{t_N+t_{NoN}}$. The result follows.
\end{proof}

\begin{lemma} \label{lemma:thresh_on_delta_p_z10_new_3}
	Let $(\tilde{q}_f,\tilde{q}_{p})\in F^I_1$, i.e. by which $0< n_{NoN}< 1$. Then, $\pi_G(\tilde{q}_f, \tilde{q}_{p},z=1) \geq \kappa_{ad} \tilde{q}_f$ if and only if $\tilde{p}\leq \tilde{p}_{t,3}$, where $ \tilde{p}_{t,3}= \kappa_{ad}n_{NoN}(1-\frac{\tilde{q}_f}{\tilde{q}_{p}})$ and $n_{NoN}=\frac{t_N+\kappa_u (\tilde{q}_{p}-\tilde{q}_f)-\Delta p}{t_N+t_{NoN}}$.
\end{lemma}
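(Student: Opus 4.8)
The plan is to mirror the computations of Lemmas~\ref{lemma:thresh_on_delta_p_z10_new_1} and~\ref{lemma:thresh_on_delta_p_z10_new_2}, the only new feature being that the CP now also offers the positive free quality $\tilde{q}_f$ on the neutral ISP, so an extra advertising term survives on that side. First I would evaluate the CP's payoff \eqref{equ:payoffCP_new} at $(q_N,q_{NoN})=(\tilde{q}_f,\tilde{q}_p)$ with $z=1$, which gives
$$\pi_{CP}(\tilde{q}_f,\tilde{q}_p,1)=n_N\kappa_{ad}\tilde{q}_f+n_{NoN}\kappa_{ad}\tilde{q}_p-\tilde{p}\,\tilde{q}_p.$$
Since by hypothesis $(\tilde{q}_f,\tilde{q}_p)\in F^I_1$, i.e. $0<x_N<1$, I would substitute $n_N$ and $n_{NoN}$ from \eqref{equ:EUs_linear} with $q_N=\tilde{q}_f$ and $q_{NoN}=\tilde{q}_p$; this directly yields $n_{NoN}=\frac{t_N+\kappa_u(\tilde{q}_p-\tilde{q}_f)-\Delta p}{t_N+t_{NoN}}$, establishing the expression for $n_{NoN}$ claimed in the statement.

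Next I would invoke full market coverage, $n_N=1-n_{NoN}$, to collapse the two advertising terms. Writing $n_N\kappa_{ad}\tilde{q}_f=(1-n_{NoN})\kappa_{ad}\tilde{q}_f$ produces
$$\pi_{CP}(\tilde{q}_f,\tilde{q}_p,1)=\kappa_{ad}\tilde{q}_f+n_{NoN}\kappa_{ad}(\tilde{q}_p-\tilde{q}_f)-\tilde{p}\,\tilde{q}_p.$$
The emergence of the baseline term $\kappa_{ad}\tilde{q}_f$ is precisely what makes the comparison with $\kappa_{ad}\tilde{q}_f$ clean: the inequality $\pi_{CP}(\tilde{q}_f,\tilde{q}_p,1)\ge\kappa_{ad}\tilde{q}_f$ reduces to $n_{NoN}\kappa_{ad}(\tilde{q}_p-\tilde{q}_f)\ge\tilde{p}\,\tilde{q}_p$.

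Finally, since $\tilde{q}_p>0$, dividing through by $\tilde{q}_p$ preserves the inequality and yields $\tilde{p}\le \kappa_{ad}n_{NoN}\bigl(1-\tfrac{\tilde{q}_f}{\tilde{q}_p}\bigr)=\tilde{p}_{t,3}$, which is the desired equivalence. I do not anticipate a genuine obstacle, as the manipulation is routine once the payoff is written out. The only points requiring care are (i) correctly reading off $n_N$ versus $n_{NoN}$ from \eqref{equ:EUs_linear} given that here $q_N=\tilde{q}_f\neq 0$ (in contrast to Lemma~\ref{lemma:thresh_on_delta_p_z10_new_2}, where $q_N=0$), so that the gap $\kappa_u(\tilde{q}_p-\tilde{q}_f)$ rather than $\kappa_u\tilde{q}_p$ enters $n_{NoN}$, and (ii) noting that $\tilde{q}_p>0$ guarantees the final division does not flip the inequality.
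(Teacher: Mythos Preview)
Your proposal is correct and follows essentially the same approach as the paper's own proof: write out $\pi_{CP}(\tilde{q}_f,\tilde{q}_p,1)$, use $n_N=1-n_{NoN}$ to isolate the baseline term $\kappa_{ad}\tilde{q}_f$, and then divide through by $\tilde{q}_p$ to obtain the threshold $\tilde{p}_{t,3}$. The paper's argument is identical in substance, only slightly terser.
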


\begin{proof}
	We compare the payoff with $\kappa_{ad}\tilde{q}_f$. We use \eqref{equ:payoffCP_new} to write the expression of the payoff of the CP:
	$$
	\ba
	\pi_G(\tilde{q}_f, \tilde{q}_{p},z=1) \geq \kappa_{ad} \tilde{q}_f &\iff (1-n_{NoN}) \kappa_{ad} \tilde{q}_f\\
	&\qquad +n_{NoN}\kappa_{ad} \tilde{q}_{p}- \tilde{p} \tilde{q}_{p} \geq \kappa_{ad} \tilde{q}_f\\
	&\iff \tilde{p}\leq \kappa_{ad}n_{NoN}(1-\frac{\tilde{q}_f}{\tilde{q}_{p}})=\tilde{p}_{t,3}
	\ea
	$$
	where, by \eqref{equ:EUs_linear}, $n_{NoN}=\frac{t_N+\kappa_u (\tilde{q}_{p}-\tilde{q}_f)-\Delta p}{t_N+t_{NoN}}$. The result follows.
\end{proof}

\begin{remark}
	The values of $\Delta p_t$, $\tilde{p}_{t,1}$, $\tilde{p}_{t,2}$, and $\tilde{p}_{t,3}$ characterized in the above lemmas are used in Definition \ref{def:pt1,pt2}.
\end{remark}

We should distinguish between the solutions that maximize \eqref{equ:payoffCP_new}, i.e. $(q^*_N,q^*_{NoN})$ which is not unique, and the strategy that is chosen by the CP in the equilibrium, which is a unique choice among the optimum solutions. Thus, we denote the equilibrium strategy of the CP by $(q^{eq}_N,q^{eq}_{NoN})$, which subsequently yields the equilibrium fraction of EUs with each ISP, i.e. $x^{eq}_N$, $N^{eq}_N$, and $N^{eq}_{NoN}$.

Now, by comparing the payoffs of the candidate solutions and using tie-breaking assumptions,  we prove one of the main results of this section, Theorem~\ref{lemma:CP_z=0_new}:


\begin{proof}
	Note that an equilibrium strategy, i.e. $(q^{eq}_N,q^{eq}_{NoN})$, should be a global maxima of ~\eqref{equ:payoffCP_new}. Suppose $(q^*_N,q^*_{NoN})\in F_0$. First, in Part A, we separate the cases that $(q^*_N,q^*_{NoN})$ is in $F^L_0$, $F^I_0$, or $F^U_0$, characterize the  candidate optimum strategy, i.e. $(q^*_N,q^*_{NoN})$, chosen by the CP in each of these subsets\footnote{Note that $F^L_0\cup F^I_0\cup F^U_0=F_0$.}, and identify the necessary condition on $\Delta p$ for each of these candidate optimums to be in a particular subset. In Part B, we summarize the candidate optimum strategies. Finally, in Part C, by comparing the payoffs of the candidate strategies in different regions of $\Delta p$ and using the tie-breaking assumptions, we characterize the equilibrium strategies.
	
	\textbf{Part A:} First, consider $F^I_0$. If $(q^*_N,q^*_{NoN})\in F^I_0$, i.e. $z^*=0$, then $(q^*_N,q^*_{NoN})$,  by \eqref{equ:summarize_CP_candidate_new}, is $(a)$  $(0,\tilde{q}_{f})$, or $(b)$ $(\tilde{q}_f,0)$, or $(c)$ $(\tilde{q}_f,\tilde{q}_{f})$. Note that the necessary and sufficient condition for each of these candidate outcomes to be in $F^I_0$ is $\frac{\Delta p-t_N}{\kappa_u}< \Delta q^*< \frac{\Delta p + t_{NoN}}{\kappa_u}$ (Table~\ref{table:subsets}). First consider case (a).  Note that $\Delta q^*=\tilde{q}_f$. Thus, the necessary and sufficient condition for $(a)$ to be in $F^I_0$ becomes $\frac{\Delta p-t_N}{\kappa_u}< \tilde{q}_{f}< \frac{\Delta p + t_{NoN}}{\kappa_u}$, which yields $\kappa_u \tilde{q}_{f}-t_{NoN}   < \Delta p <
	\kappa_u \tilde{q}_{f}+t_N $. Similarly, For  cases (b), the necessary and sufficient condition is $
	-\kappa_u \tilde{q}_f-t_{NoN} < \Delta p< -\kappa_u \tilde{q}_f +t_N$ , and for (c) is $-t_{NoN}< \Delta p < t_{N}$.

	Now, consider $F^L_0$. If $(q^*_N,q^*_{NoN})\in F^L_0$, then   $(q^*_N,q^*_{NoN})$,  by \eqref{equ:summarize_CP_candidate_new}, is (d)  $(0,\tilde{q}_{f})\in F^L_0$. Note that, using the condition in Table~\ref{table:subsets}, the necessary and sufficient condition for $(0,\tilde{q}_{f})\in F^L_0$ is  $\Delta p \leq \kappa_u \tilde{q}_{f}-t_{NoN}$.
	
	Finally, consider $F^U_0$. If $(q^*_N,q^*_{NoN})\in F^U_0$, then $(q^*_N,q^*_{NoN})$,  by \eqref{equ:summarize_CP_candidate_new}, is (e) $(\tilde{q}_{f},0)\in F^U_0$. Using the condition in Table~\ref{table:subsets}, the necessary and sufficient condition for $(\tilde{q}_{f},0)\in F^U_0$ is  $\Delta p \geq t_N -\kappa_u \tilde{q}_f$.

	\textbf{Part B:} Note that, as mentioned before, the strategy that is chosen by the CP in the equilibrium is a unique choice among the possible optimum solutions. Thus, if $(q^{eq}_N,q^{eq}_{NoN})\in F_0$, then $(q^{eq}_N,q^{eq}_{NoN})$ is of the form of \emph{one} of the followings (the necessary condition for each to be optimum is also listed):
	\begin{description}
		\item [(a)]  $(0,\tilde{q}_{f})\in F^I_0$, if this is overall optimum then $\kappa_u \tilde{q}_{f}-t_{NoN}  < \Delta p < \kappa_u \tilde{q}_{f}+t_N $  (the necessary condition).
		\item [(b)]  $(\tilde{q}_f,0)\in F^I_0$, the necessary condition: $
		-\kappa_u \tilde{q}_f-t_{NoN} < \Delta p < -\kappa_u \tilde{q}_f +t_N$.
		\item [(c)]  $(\tilde{q}_f,\tilde{q}_{f})\in F^I_0$, the necessary condition: $-t_{NoN}< \Delta p < t_{N}$.
		\item [(d)]  $(0,\tilde{q}_{f})\in F^L_0$, the necessary condition: $\Delta p \leq \kappa_u \tilde{q}_{f}-t_{NoN}$.
		\item [(e)] $(\tilde{q}_{f},0)\in F^U_0$, the necessary condition:  $\Delta p \geq -\kappa_u \tilde{q}_f+t_N$.
	\end{description}
	
	\textbf{Part C:} Now, we compare the payoffs of the CP at each candidate solutions, and use tie-breaking assumptions whenever needed to get the equilibrium strategies of the CP. The payoff of the CP, for each candidate solution, is as follows (by \eqref{equ:payoffCP_new}):
	\be \label{equ:payoffCPinCandidates_new}
	\ba
	\pi_{CP,(a)}&= n_{NoN}\kappa_{ad}\tilde{q}_{f} \quad \text{  \& } 0< n_{NoN}< 1\\
	\pi_{CP,(b)}&= n_{N}\kappa_{ad}\tilde{q_{f}} \quad \text{ \& } 0< n_{N}< 1\\
	\pi_{CP,(c)}&= \kappa_{ad}\tilde{q_{f}} \\
	\pi_{CP,(d)}&= \kappa_{ad}\tilde{q_{f}} \\
	\pi_{CP,(e)}&= \kappa_{ad}\tilde{q_{f}}
	\ea
	\ee
	
	Next, we characterize the equilibrium strategies in different intervals of $\Delta p$. First consider $-t_{NoN}< \Delta p< t_N$. Note that in this case, $\Delta p$ satisfies the necessary condition of (c) being a candidate strategy, and also the necessary and sufficient condition of (c) being in $F_0^I$.  In addition, $\pi_{CP,(c)}>\pi_{CP,(a)}$ and  $\pi_{CP,(c)}>\pi_{CP,(b)}$. Thus, (a) and (b)  cannot be overall optimum solutions. Moreover, $\pi_{CP,(c)}=\pi_{CP,(d)}$ and  $\pi_{CP,(c)}=\pi_{CP,(e)}$. Using tie-breaking assumption \ref{assumption:tie} yields that the CP prefers (c) to (d) and (e). Thus,  $(\tilde{q}_f,\tilde{q}_f)\in F^I_0$ is chosen as the equilibrium strategy in this interval, and case 1 of the lemma follows.
	
	Now, consider $\Delta p\geq t_N$. Note that in this case, $\Delta p$ satisfies the necessary condition of (e) being a candidate strategy, and also the necessary and sufficient condition of (e) to be in $F_0^U$. In addition, this condition rules out (b) and (c). However, for certain intervals of $\Delta p \geq t_N$, the necessary condition of candidate strategies (a) and (d) can be satisfied. We now compare the payoff of (e) to (a) and (d). First note that $\pi_{CP,(e)}>\pi_{CP,(a)}$. Thus candidate strategy (a) can be discarded. Also, $\pi_{CP,(e)}=\pi_{CP,(d)}$. Since $\Delta p=p_{NoN}-p_N\geq t_N>0$,\footnote{Note that $p_N$ and $p_{NoN}$ are Internet access fees.} and by using tie-breaking assumption~\ref{assumption:tie_lowerp}, candidate strategy (e), i.e. $(\tilde{q}_f,0)\in F^U_0$ is chosen as the equilibrium strategy in this interval by the CP. Thus, case 2 of the lemma follows.
	
	Finally, consider $\Delta p\leq -t_{NoN}$. Note that in this case, $\Delta p$ satisfies the necessary condition of  (d) to be a candidate strategy, and also the necessary and sufficient condition of (d) to be in $F_0^L$. In addition, this condition rules out (a) and (c). However, for certain intervals of $\Delta p \leq -t_{NoN}$, the necessary condition of candidate strategies (b) and (e) can be satisfied. We now compare the payoff of (d) to (b) and (e). First note that $\pi_{CP,(d)}>\pi_{CP,(b)}$. Thus candidate strategy (b) can be discarded. Also, $\pi_{CP,(d)}=\pi_{CP,(e)}$. Since $\Delta p=p_{NoN}-p_N\leq -t_{NoN}<0$, and by using tie-breaking assumption~\ref{assumption:tie_lowerp}, candidate strategy (d), i.e. $(0,\tilde{q}_f)\in F^L_0$ is chosen as the equilibrium strategy in this interval by the CP. Thus, case 3 of the lemma follows.
	
	Note that by \eqref{equ:payoffCPinCandidates_new}, $\pi_{CP,(a)}=\pi_{CP,(b)}=\pi_{CP,(c)}=\kappa_{ad}\tilde{q}_f$ and these are all the candidate solutions.Thus, the utility of the CP by each candidate equilibrium strategy would be $\kappa_{ad} \tilde{q}_f$. The result follows.
\end{proof}

Now, we focus on characterizing the candidate strategies and the necessary condition for each of them when $z^{eq}=1$.

\begin{theorem}\label{lemma:summarycandidatesz=1_new}
	If $(q^{eq}_N,q^{eq}_{NoN})\in F_1$, then $(q^{eq}_N,q^{eq}_{NoN})$ is of the form of one of the followings:
	\begin{description}
		\item [(a)]  $(0,\tilde{q}_{p})$, the necessary condition: $\kappa_u \tilde{q}_{p}-t_{NoN}  < \Delta p < \kappa_u \tilde{q}_{p}+t_N$. In addition, $(0,\tilde{q}_{p})\in F^I_1$ if and only if  $\kappa_u \tilde{q}_{p}-t_{NoN}  < \Delta p < \kappa_u \tilde{q}_{p}+t_N$.
		\item [(b)]  $(\tilde{q}_f,\tilde{q}_{p})$, the necessary condition: $\kappa_u (\tilde{q}_{p}-\tilde{q}_f)-t_{NoN}< \Delta p < \kappa_u (\tilde{q}_{p}-\tilde{q}_f)+t_{N}$. In addition, $(\tilde{q}_f,\tilde{q}_{p})\in F^I_1$ iff  $\kappa_u (\tilde{q}_{p}-\tilde{q}_f)-t_{NoN}< \Delta p < \kappa_u (\tilde{q}_{p}-\tilde{q}_f)+t_{N}$.
		\item [(c)]  $(0,\tilde{q}_{p})$, the necessary condition: $\Delta p \leq \kappa_u \tilde{q}_{p}-t_{NoN}$. In addition, $(0,\tilde{q}_{p})\in F^L_1$  iff  $\Delta p \leq \kappa_u \tilde{q}_{p}-t_{NoN}$.
	\end{description}
\end{theorem}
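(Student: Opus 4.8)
The plan is to treat this as a bookkeeping argument built on the candidate reduction already carried out in \eqref{equ:summarize_CP_candidate_new} together with the subset definitions in Table~\ref{table:subsets}. Since $z^{eq}=1$ forces $q^{eq}_{NoN}=\tilde{q}_p$, the only feasible quality pairs are $(0,\tilde{q}_p)$ and $(\tilde{q}_f,\tilde{q}_p)$, i.e. the two elements of $F_1$ in \eqref{equ:summarize_CP_candidate_f_new}. First I would argue that, after applying the tie-breaking rule of Assumption~\ref{assumption:tie_n=0}, each of these pairs can survive as an equilibrium candidate only in a restricted collection of subsets. For $(\tilde{q}_f,\tilde{q}_p)$: if it fell in $F^U$ then $n_{NoN}=0$ and Assumption~\ref{assumption:tie_n=0} would replace $q_{NoN}$ by $0$, contradicting $z=1$; if it fell in $F^L$ then $n_N=0$ and the same assumption would replace $q_N$ by $0$, collapsing it to $(0,\tilde{q}_p)$. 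Hence $(\tilde{q}_f,\tilde{q}_p)$ is a genuine candidate only in $F^I_1$, giving case (b). The same $F^U$ argument rules out $(0,\tilde{q}_p)\in F^U_1$, leaving $(0,\tilde{q}_p)$ as a candidate only in $F^I_1$ (case (a)) or $F^L_1$ (case (c)). This recovers exactly the three enumerated cases, consistent with \eqref{equ:summarize_CP_candidate_new}.

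The second step is purely algebraic: for each case I would translate the membership condition supplied in Table~\ref{table:subsets} --- stated there as an equivalence in terms of $q_{NoN}-q_N$ --- into a condition on $\Delta p$ by substituting the relevant quality gap. For case (a), with $q_{NoN}-q_N=\tilde{q}_p$, the interior condition $\frac{\Delta p-t_N}{\kappa_u}<\tilde{q}_p<\frac{\Delta p+t_{NoN}}{\kappa_u}$ rearranges to $\kappa_u\tilde{q}_p-t_{NoN}<\Delta p<\kappa_u\tilde{q}_p+t_N$. For case (b), with $q_{NoN}-q_N=\tilde{q}_p-\tilde{q}_f$, the same interior condition rearranges to $\kappa_u(\tilde{q}_p-\tilde{q}_f)-t_{NoN}<\Delta p<\kappa_u(\tilde{q}_p-\tilde{q}_f)+t_N$. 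For case (c), the $F^L$ condition $\tilde{q}_p\ge\frac{\Delta p+t_{NoN}}{\kappa_u}$ rearranges to $\Delta p\le\kappa_u\tilde{q}_p-t_{NoN}$. Because Table~\ref{table:subsets} lists these as two-sided equivalences, each rearrangement is automatically an ``if and only if,'' which delivers the stated iff characterizations of $F^I_1$ and $F^L_1$ membership.

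Finally, I would observe that the ``necessary condition'' attached to each case is nothing more than the corresponding subset-membership condition: a strategy can be the equilibrium within case (a) or (b) only if it actually lies in $F^I_1$, and within case (c) only if it lies in $F^L_1$, so the displayed interval on $\Delta p$ is forced. I expect no genuinely difficult step here; the only point requiring care is the first one, namely confirming that the candidate list is exhaustive and that the $F^U$ and redundant $F^L$ possibilities are eliminated by Assumption~\ref{assumption:tie_n=0} rather than silently dropped. Once that reduction is justified, the remainder is the routine inequality manipulation above.
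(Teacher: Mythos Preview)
Your proposal is correct and follows essentially the same approach as the paper: the paper's proof simply cites \eqref{equ:summarize_CP_candidate_new} (which already encodes the Assumption~\ref{assumption:tie_n=0} eliminations you spell out) to conclude that $F^U_1$ is empty and that $F^I_1$, $F^L_1$ contain exactly the three listed candidates, and then performs the identical substitution of $\Delta q$ into the Table~\ref{table:subsets} equivalences. The only difference is that you re-derive the candidate reduction directly from Assumption~\ref{assumption:tie_n=0} rather than citing \eqref{equ:summarize_CP_candidate_new}, which is fine and arguably clearer.
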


\begin{proof}
	Suppose $(q^*_N,q^*_{NoN})\in F_1$. We separate the cases that $(q^*_N,q^*_{NoN})$ is in $F^L_1$, $F^I_1$, or $F^U_1$, characterize the candidate  optimum  solutions chosen by the CP in each of these subsets, and identify the necessary condition on $\Delta p$ for each of these candidate optimum strategies to be in a particular subset.
	
	
	Note that by~\eqref{equ:summarize_CP_candidate_new}, no optimum strategy is chosen in the set $F^U_1$. Thus, we characterize the optimum strategies chosen in $F^I_1$ and $F^L_1$ by the CP.
	
	
	Now, consider $F^I_1$. By~\eqref{equ:summarize_CP_candidate_new}, if $(q^*_N,q^*_{NoN})\in F^I_1$, then $(q^*_N,q^*_{NoN})$ is $(a)$  $(0,\tilde{q}_{p})$ or $(b)$ $(\tilde{q}_f,\tilde{q}_{p})$. The necessary condition for each of them to be optimum is to be in $F^I_1$. In addition, the necessary and sufficient condition for each of these candidate outcomes to be in $F^I_1$ is   $\frac{\Delta p-t_N}{\kappa_u}< \Delta q^*< \frac{\Delta p + t_{NoN}}{\kappa_u}$ (by Table~\ref{table:subsets}). Thus, for case (a), the necessary and sufficient condition is $\kappa_u \tilde{q}_{p}-t_{NoN}  < \Delta p <
	\kappa_u \tilde{q}_{p}+t_N $ (note that $\Delta q^*=\tilde{q}_{p}$), and for case (b) is  $\kappa_u ( \tilde{q}_{p}-\tilde{q}_f)-t_{NoN} <\Delta p < \kappa_u  ( \tilde{q}_{p}-\tilde{q}_f) +t_N$. These yields candidate strategies (a) and (b) and their conditions in the lemma.
	
	Consider $F^L_1$. By~\eqref{equ:summarize_CP_candidate_new},  if $(q^*_N,q^*_{NoN})\in F^L_1$, then $(q^*_N,q^*_{NoN})$ is $(c)$  $(0,\tilde{q}_{p})$. Note that the necessary and sufficient condition for  $(0,\tilde{q}_{p})\in F_1^L$ is $\Delta p \leq \kappa_u \tilde{q}_{p}-t_{NoN}$ (by the condition in Table~\ref{table:subsets} and $\Delta q=\tilde{q}_{p}$). The lemma follows.
\end{proof}

The payoff of the CP in each candidate solution of Theorem~\ref{lemma:summarycandidatesz=1_new} is as follows (using \eqref{equ:payoffCP_new}):

\be \label{equ:payoffCPinCandidatesz=1_new}
\small
\ba
\pi_{CP,(a)}&= n_{NoN}\kappa_{ad}\tilde{q}_{p}-\tilde{p}\tilde{q}_{p} \quad \text{  \& } 0< n_{NoN}< 1\\
\pi_{CP,(b)}&= (1-n'_{NoN})\kappa_{ad}\tilde{q_{f}} + n'_{NoN}\kappa_{ad}\tilde{q}_{p}-\tilde{p}\tilde{q}_{p}\ \text{ \& } 0< n'_{N}< 1\\
\pi_{CP,(c)}&=\kappa_{ad}\tilde{q}_{p}-\tilde{p}\tilde{q}_{p}
\ea
\ee
\normalsize

Thus, the payoffs are a function of $\tilde{p}$ and $\Delta p$. Now, to get the second main result of this section, we compare the payoff of the candidate answers with the payoff of the candidate strategies when $z=0$ considering different values of $\tilde{p}$ and $\Delta p$, and pick the maximum as the equilibrium strategy of the CP. Thus Theorem~\ref{theorem:p_tilde_new} is proved as follows:

\begin{figure*}[t]
	\centering
	\setlength\fboxsep{0pt}
	\setlength\fboxrule{0.25pt}
	\fbox{
		\begin{tikzpicture}{ht}
		\draw [ultra thick] [<->](0,0) -- (13,0);
		\draw [thick] (2,-.1) node[below]{\footnotesize{$-t_{NoN}+\kappa_u (\tilde{q}_p-\tilde{q}_f)$}} -- (2,.1) ;
		\draw [thick] (9.5,-.1) node[below]{\footnotesize{$\kappa_u \tilde{q}_{p}-t_{NoN}$}} -- (9.5,.1) ;
		\draw [thick] (6.5,-.1) node[below]{\footnotesize{$t_N+\kappa_u (\tilde{q}_{p}-\tilde{q}_f)$}} -- (6.5,.1) ;
		\draw [thick] (12.5,-.1) node[below]{\footnotesize{$\kappa_u \tilde{q}_{p}+t_N$}} -- (12.5,.1) ;
		\draw [thick,dotted] [<->] (2,.2) -- node[above]{\scriptsize{$(\tilde{q}_f,\tilde{q}_{p}) \in F^I_1$}}(6.5,0.2);
		\draw [thick,dotted] [<->] (9.5,0.2) -- node[above]{\scriptsize{$(0,\tilde{q}_{p}) \in F^I_1$}}(12.5,0.2);
		\draw [thick,dotted] [<->] (0,-0.8) -- node[below]{\scriptsize{$(0,\tilde{q}_{p}) \in F^L_1$}}(9.5,-0.8);
		\draw [thick] (13,-.1) node[above]{\footnotesize{$\Delta p$}}  (13,.2) ;
		\end{tikzpicture}}
	\caption{A schematic view of the ordering of different candidate equilibrium stratwgies characterized in Theorem~\ref{lemma:summarycandidatesz=1_new} with respect to  $\Delta p$ when $\tilde{q}_f>\frac{t_N+t_NoN}{\kappa_u}$ and $z=1$.}\label{figure:schematic_z=1_q<>}
\end{figure*}

\begin{proof}
	Now, for different regions of $\Delta p$, we compare the payoffs of the candidate equilibrium strategies characterized in Theorem~\ref{lemma:summarycandidatesz=1_new}  to each other and also to the equilibrium strategies in Theorem~\ref{lemma:CP_z=0_new},  and use tie-breaking assumptions (whenever needed) to characterize the equilibrium strategies of the CP.
	
	First consider  $\Delta p\leq \kappa_u \tilde{q}_{p}-t_{NoN}$. In this case, $\Delta p$ satisfies the necessary condition of candidate strategy (c) in Theorem~\ref{lemma:summarycandidatesz=1_new}. In addition, note that by \eqref{equ:payoffCPinCandidatesz=1_new}, $\pi_{CP,(c)}>\pi_{CP,(a)}$ and  $\pi_{CP,(c)}>\pi_{CP,(b)}$ (by $\tilde{q}_{p}>\tilde{q}_f$). Thus, for this region, (c) is chosen if and only if this strategy yields a higher  or equal (by tie-breaking assumption \ref{assumption:tie4}) payoff than the payoff when $z^{eq}=0$, that is $\kappa_u \tilde{q}_f$ (by Theorem~\ref{lemma:CP_z=0_new}). Thus, using Lemma~\ref{lemma:thresh_on_delta_p_z10_new_1}, $z^{eq}=1$, and
	$(q^{eq}_N,q^{eq}_{NoN})=(0,\tilde{q}_{p})\in F^L_1$ if  $\tilde{p}\leq \tilde{p}_{t,1}$. Otherwise $z^{eq}=0$, since the payoff of (c) and subsequently (a) and (b) are smaller than the payoff when $z^{eq}=0$. Thus, in this case, the equilibrium strategy can be found using Theorem~\ref{lemma:CP_z=0_new}. This is item 1 of the theorem.

For $\Delta p\geq t_N+\kappa_u \tilde{q}_{p}$,  the necessary condition of none of the candidate strategies in Theorem~\ref{lemma:summarycandidatesz=1_new} can be satisfied. Therefore, $z^{eq}=0$. This is item 4 of the theorem.
	
Now, for the rest of the proof, we consider $\kappa_u \tilde{q}_{p}-t_{NoN}<\Delta p<t_N+\kappa_u \tilde{q}_{p}$.  In this case, the necessary condition of candidate strategy (c) of Theorem~\ref{lemma:summarycandidatesz=1_new} cannot be satisfied.  Therefore, we can eliminate (c). On the other hand, the necessary and sufficient condition of (a) of Theorem~\ref{lemma:summarycandidatesz=1_new} can be met. Now, consider two different  cases, $\tilde{q}_f\leq \frac{t_N+t_{NoN}}{\kappa_u}$ and $\tilde{q}_f> \frac{t_N+t_{NoN}}{\kappa_u}$:

	\begin{itemize}
		\item $\tilde{q}_f\leq \frac{t_N+t_{NoN}}{\kappa_u}$.  This yields that $\kappa_u \tilde{q}_{p}-t_{NoN}\leq t_N+\kappa_u (\tilde{q}_{p}-\tilde{q}_f)$. For this case, consider two sub-cases:
		\begin{itemize}
			\item $\kappa_u \tilde{q}_{p}-t_{NoN}< \Delta p < t_N+\kappa_u (\tilde{q}_{p}-\tilde{q}_f)$. In this case, $\Delta p$ satisfies the necessary and sufficient condition of (b) in Theorem~\ref{lemma:summarycandidatesz=1_new}.
			Now, we should compare $\pi_{G,(a)}$ and $\pi_{G,(b)}$. In Lemma~\ref{lemma:thresh_on_delta_p_z1_new}, we compare the payoff of the two solutions. In addition, by tie breaking assumption~\ref{assumption:tie_diversify}, when the payoffs are equal the CP chooses (b) over (a). Thus, if $\Delta p\geq \Delta p _{t}$, (b), i.e. $(\tilde{q}_f,\tilde{q}_{p})$ would be chosen versus (a). Otherwise (a), i.e. $(0,\tilde{q}_{p})$ would be chosen. Now, we compare the payoff of the one chosen with the payoff of the case $z=0$, i.e. $\kappa_{ad}\tilde{q}_f$:
			\begin{itemize}
				\item If  $\Delta p\geq \Delta p _{t}$, then by Lemma~\ref{lemma:thresh_on_delta_p_z10_new_3} and tie-breaking assumption~\ref{assumption:tie4}, $z^{eq}=1$ and $(q^{eq}_N,q^{eq}_{NoN})=(\tilde{q}_f,\tilde{q}_{p})\in F^I_1$ if $\tilde{p}\leq \tilde{p}_{t,3}$ (by comparing the payoff of strategy (b) by the payoff when $z=0$, i.e. $\kappa_{ad}\tilde{q}_f$). Otherwise $z^{eq}=0$, and the equilibrium strategy can be found using Theorem~\ref{lemma:CP_z=0_new}. This is item 2-a-i of the theorem.
				\item If  $\Delta p< \Delta p_{t}$, then by Lemma~\ref{lemma:thresh_on_delta_p_z10_new_2} and tie-breaking assumption~\ref{assumption:tie4}, $z^{eq}=1$ and$(q^{eq}_N,q^{eq}_{NoN})=(0,\tilde{q}_{p})\in F^I_1$ if $\tilde{p}\leq \tilde{p}_{t,2}$ (by comparing the payoff of strategy (a) by the payoff when $z=0$, i.e. $\kappa_{ad}\tilde{q}_f$). Otherwise $z^{eq}=0$, and the equilibrium strategy can be found using Theorem~\ref{lemma:CP_z=0_new}. This is item 2-a-ii of the theorem.
			\end{itemize}

			\item $t_N+\kappa_u (\tilde{q}_{p}-\tilde{q}_f)\leq \Delta p< t_N+\kappa_u \tilde{q}_{p}$:   In this range, the necessary condition of (b)  of Theorem~\ref{lemma:summarycandidatesz=1_new} cannot be satisfied. Thus, the only candidate solution  by which $z=1$, whose necessary and sufficient conditions can be satisfied, is  $(a)$ (as stated before).  Therefore, we should compare the payoff of (a) with that of when $z^{eq}=0$, i.e. $\kappa_{ad}\tilde{q}_f$. Using Lemma~\ref{lemma:thresh_on_delta_p_z10_new_2} and Assumption~\ref{assumption:tie4}, if $\tilde{p}\leq\tilde{p}_{t,2}$ then $z^{eq}=1$ and $(q^{eq}_N,q^{eq}_{NoN})=(0,\tilde{q}_{p})\in F^I_1$. Otherwise $z^{eq}=0$, and the equilibrium strategy can be found using Theorem~\ref{lemma:CP_z=0_new}. This is item 2-b of the theorem.
		\end{itemize}
		\item  $\tilde{q}_f> \frac{t_N+t_{NoN}}{\kappa_u}$:  In this case, $\kappa_u \tilde{q}_{p}-t_{NoN}> t_N+\kappa_u (\tilde{q}_{p}-\tilde{q}_f)$. Thus, the necessary condition of (b) cannot be satisfied. Therefore, we can eliminate (c) (eliminated before) and (b). On the other hand, the necessary and sufficient condition of (a) of Theorem~\ref{lemma:summarycandidatesz=1_new} can be met.  Therefore, we should compare the payoff of (a) with that of when $z^{eq}=0$, i.e. $\kappa_{ad}\tilde{q}_f$. Using Lemma~\ref{lemma:thresh_on_delta_p_z10_new_2} and Assumption~\ref{assumption:tie4},  if $\tilde{p}\leq\tilde{p}_{t,2}$ then  $z^{eq}=1$ and $(q^{eq}_N,q^{eq}_{NoN})=(0,\tilde{q}_{p})\in F^I_1$. Otherwise $z^{eq}=0$, since the payoff of (a) is smaller than the payoff when $z^{eq}=0$. Thus, in this case, the equilibrium strategy can be found using Theorem~\ref{lemma:CP_z=0_new}. This is item 3 of the theorem. 		
	\end{itemize}
	The result follows.
\end{proof}

The following lemma simplify item 2-a of Theorem~\ref{theorem:p_tilde_new}, and is useful in the next stages:

\begin{lemma}\label{lemma:deltap_t}
	Consider  $\kappa_u \tilde{q}_{p}-t_{NoN}< \Delta p < t_N+\kappa_u(\tilde{q}_{p}-\tilde{q}_f)$. If $\tilde{q}_{p}\geq \frac{t_N+t_{NoN}}{\kappa_u}$, then $\Delta p<\Delta p_t$. If $\tilde{q}_{p}< \frac{t_N+t_{NoN}}{\kappa_u}$, then $\kappa_u \tilde{q}_{p}-t_{NoN} <  \Delta p_t < t_N+\kappa_u(\tilde{q}_{p}-\tilde{q}_f)$, where $\Delta p_{t}=\kappa_{u}(2\tilde{q}_{p}-\tilde{q}_f)-t_{NoN}$ characterized in Lemma~\ref{lemma:thresh_on_delta_p_z1_new}.
\end{lemma}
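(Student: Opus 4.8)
The plan is to reduce the whole statement to two elementary algebraic comparisons. Since $\Delta p_t=\kappa_u(2\tilde{q}_p-\tilde{q}_f)-t_{NoN}$ is given explicitly and the admissible interval $\kappa_u\tilde{q}_p-t_{NoN}<\Delta p<t_N+\kappa_u(\tilde{q}_p-\tilde{q}_f)$ has explicit endpoints, I would simply substitute the definition of $\Delta p_t$ and compute its signed distance from each endpoint. A direct cancellation gives
\[
\Delta p_t-\bigl(\kappa_u\tilde{q}_p-t_{NoN}\bigr)=\kappa_u(\tilde{q}_p-\tilde{q}_f),
\]
and
\[
\Delta p_t-\bigl(t_N+\kappa_u(\tilde{q}_p-\tilde{q}_f)\bigr)=\kappa_u\tilde{q}_p-(t_N+t_{NoN}).
\]
These two identities carry essentially all the content of the lemma; the rest is reading off signs according to the hypothesis on $\tilde{q}_p$.

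For the first claim, assume $\tilde{q}_p\geq\frac{t_N+t_{NoN}}{\kappa_u}$, i.e. $\kappa_u\tilde{q}_p\geq t_N+t_{NoN}$. Then the second identity is nonnegative, so $t_N+\kappa_u(\tilde{q}_p-\tilde{q}_f)\leq\Delta p_t$; combining this with the standing hypothesis $\Delta p<t_N+\kappa_u(\tilde{q}_p-\tilde{q}_f)$ yields $\Delta p<\Delta p_t$, as required. For the second claim, assume $\tilde{q}_p<\frac{t_N+t_{NoN}}{\kappa_u}$, i.e. $\kappa_u\tilde{q}_p<t_N+t_{NoN}$. Now the second identity is strictly negative, which gives the upper bound $\Delta p_t<t_N+\kappa_u(\tilde{q}_p-\tilde{q}_f)$; and the first identity, being equal to $\kappa_u(\tilde{q}_p-\tilde{q}_f)>0$, gives the lower bound $\kappa_u\tilde{q}_p-t_{NoN}<\Delta p_t$. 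Together these bracket $\Delta p_t$ inside the stated interval.

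There is no real analytic obstacle here; the result is pure arithmetic, and the only point deserving care is the justification of $\kappa_u(\tilde{q}_p-\tilde{q}_f)>0$. For this I would invoke $\tilde{q}_p>\tilde{q}_f>0$, which holds by definition because $\tilde{q}_p$ is the premium quality and $\tilde{q}_f$ the smaller free-quality threshold, and $\kappa_u>0$. I would also remark, for consistency, that the admissible interval is nonvacuous precisely when $\kappa_u\tilde{q}_f<t_N+t_{NoN}$ (its endpoints differ by $t_N+t_{NoN}-\kappa_u\tilde{q}_f$), which matches the condition $\tilde{q}_f\leq\frac{t_N+t_{NoN}}{\kappa_u}$ under which this lemma is applied within Theorem~\ref{theorem:p_tilde_new}.
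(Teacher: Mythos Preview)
Your proof is correct and follows essentially the same approach as the paper: both compute the two differences $\Delta p_t-(\kappa_u\tilde{q}_p-t_{NoN})=\kappa_u(\tilde{q}_p-\tilde{q}_f)$ and $\Delta p_t-\bigl(t_N+\kappa_u(\tilde{q}_p-\tilde{q}_f)\bigr)=\kappa_u\tilde{q}_p-(t_N+t_{NoN})$ and read off the signs under each hypothesis on $\tilde{q}_p$. Your additional remark about the interval being nonvacuous when $\kappa_u\tilde{q}_f<t_N+t_{NoN}$ is a nice consistency check that the paper omits.
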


\begin{proof}
	First, consider $\tilde{q}_{p}\geq \frac{t_N+t_{NoN}}{\kappa_u}$. Note that:
	$$
	\ba
	\Delta p_t-(t_N+\kappa_u(\tilde{q}_{p}-\tilde{q}_f))&=\kappa_u \tilde{q}_{p}-t_N-t_{NoN}\geq 0
	\ea
	$$
	Thus for every $\Delta p$ such that $\Delta p < t_N+\kappa_u(\tilde{q}_{p}-\tilde{q}_f)$, $\Delta p_t>\Delta p$. This establish the first part of the lemma.
	
	Now, consider $\tilde{q}_{p}< \frac{t_N+t_{NoN}}{\kappa_u}$. In this case:
	$$
	\Delta p_t-(t_N+\kappa_u(\tilde{q}_{p}-\tilde{q}_f))=\kappa_u \tilde{q}_{p}-t_N-t_{NoN}< 0
	$$
	and
	$$
	\Delta p_t-(\kappa_u \tilde{q}_{p}-t_{NoN})=\kappa_u (\tilde{q}_{p}-\tilde{q}_f)>0 \qquad \text{ (since $\tilde{q}_p>\tilde{q}_f$)}
	$$
	Thus, $\kappa_u \tilde{q}_{p}-t_{NoN} <  \Delta p_t < t_N+\kappa_u(\tilde{q}_{p}-\tilde{q}_f)$. The result follows.
\end{proof}

Theorem~\ref{theorem:p_tilde_new} and Lemma~\ref{lemma:deltap_t} yields the following corollary:

\begin{corollary}\label{corollary:qNoN>}
	Let $\tilde{q}_p\geq \frac{t_N+t_{NoN}}{\kappa_u}$. Then the structure of the optimum answers of the CP (results in Theorem~\ref{theorem:p_tilde_new}) for the case that $\tilde{q}_f\leq \frac{t_N+t_{NoN}}{\kappa_u}$  is the same as the results when $\tilde{q}_f>\frac{t_N+t_{NoN}}{\kappa_u}$.
\end{corollary}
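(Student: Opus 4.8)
The plan is to show that, under the hypothesis $\tilde{q}_p\geq \frac{t_N+t_{NoN}}{\kappa_u}$, the only structural distinction that Theorem~\ref{theorem:p_tilde_new} draws between the regime $\tilde{q}_f\leq \frac{t_N+t_{NoN}}{\kappa_u}$ (item~2) and the regime $\tilde{q}_f>\frac{t_N+t_{NoN}}{\kappa_u}$ (item~3) collapses. First I would observe that items~1 and~4 of Theorem~\ref{theorem:p_tilde_new}, which govern $\Delta p\leq \kappa_u\tilde{q}_p-t_{NoN}$ and $\Delta p\geq t_N+\kappa_u\tilde{q}_p$ respectively, make no reference to where $\tilde{q}_f$ sits relative to $\frac{t_N+t_{NoN}}{\kappa_u}$; hence they are identical in both regimes and require no further argument. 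The entire content of the corollary therefore concerns the intermediate band $\kappa_u\tilde{q}_p-t_{NoN}<\Delta p<t_N+\kappa_u\tilde{q}_p$, where item~2 splits into the sub-cases 2-a-i, 2-a-ii, and 2-b, while item~3 treats the whole band uniformly with the single threshold $\tilde{p}_{t,2}$ and optimum $(0,\tilde{q}_p)\in F^I_1$.

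The key step is to eliminate sub-case 2-a-i. Sub-case 2-a applies on the sub-band $\kappa_u\tilde{q}_p-t_{NoN}<\Delta p<t_N+\kappa_u(\tilde{q}_p-\tilde{q}_f)$, and within it the branch 2-a-i is selected precisely when $\Delta p\geq \Delta p_t$. By Lemma~\ref{lemma:deltap_t}, the hypothesis $\tilde{q}_p\geq\frac{t_N+t_{NoN}}{\kappa_u}$ forces $\Delta p<\Delta p_t$ for every $\Delta p$ in exactly this sub-band. Consequently the condition $\Delta p\geq\Delta p_t$ is never met and branch 2-a-i is vacuous, so on the whole of sub-band 2-a only branch 2-a-ii survives: the CP uses threshold $\tilde{p}_{t,2}$ and plays $(q^{eq}_N,q^{eq}_{NoN})=(0,\tilde{q}_p)\in F^I_1$ when $\tilde{p}\leq\tilde{p}_{t,2}$, and otherwise falls back on the strategies of Theorem~\ref{lemma:CP_z=0_new}.

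Finally I would combine the surviving branch of 2-a with sub-case 2-b, which already uses the threshold $\tilde{p}_{t,2}$ and the optimum $(0,\tilde{q}_p)\in F^I_1$. Together they cover the entire intermediate band $\kappa_u\tilde{q}_p-t_{NoN}<\Delta p<t_N+\kappa_u\tilde{q}_p$ with one and the same rule: $z^{eq}=1$ with $(q^{eq}_N,q^{eq}_{NoN})=(0,\tilde{q}_p)$ iff $\tilde{p}\leq\tilde{p}_{t,2}$, and $z^{eq}=0$ determined by Theorem~\ref{lemma:CP_z=0_new} otherwise. This is verbatim the prescription of item~3. Hence, across all values of $\Delta p$, the optimum strategies coincide in the two regimes, which is exactly the assertion of the corollary.

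I do not anticipate a genuine obstacle here: the argument is essentially the bookkeeping observation that Lemma~\ref{lemma:deltap_t} annihilates one branch of the case analysis. The only point demanding a little care is confirming that the junction at $\Delta p=t_N+\kappa_u(\tilde{q}_p-\tilde{q}_f)$ (the boundary between 2-a and 2-b) is consistent; since both adjacent branches yield the same $(0,\tilde{q}_p)$ outcome under threshold $\tilde{p}_{t,2}$, the merge is seamless and no gap or overlap arises.
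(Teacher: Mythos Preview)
Your proposal is correct and follows essentially the same approach as the paper: both observe that items~1 and~4 are independent of $\tilde{q}_f$, invoke Lemma~\ref{lemma:deltap_t} to eliminate branch 2-a-i under the hypothesis $\tilde{q}_p\geq\frac{t_N+t_{NoN}}{\kappa_u}$, and then note that the surviving branches 2-a-ii and 2-b coincide with item~3. Your write-up is in fact more detailed than the paper's, which dispatches the argument in three sentences; your added remark about the seamless junction at $\Delta p=t_N+\kappa_u(\tilde{q}_p-\tilde{q}_f)$ is a nice sanity check but not strictly needed.
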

\begin{proof}
Note that items 1 and 4 of Theorem~\ref{theorem:NE_stage2_new_suff} are the same for both cases, regardless of $\tilde{q}_f$.  In addition by Lemma~\ref{lemma:deltap_t}, when $\tilde{q}_{p}\geq \frac{t_N+t_{NoN}}{\kappa_u}$, then $\Delta p<\Delta p_t$. Thus, 2-a-i in Theorem~\ref{theorem:p_tilde_new} would not happen. Note that 2-a-ii and 2-b yields is similar to 3. Thus, the two structures are similar, and the corollary follows.
\end{proof}

\section{Proof of  Theorem~\ref{theorem:NE_stage2_new_suff} - Stage 2}\label{appendix:side}

\begin{proof}
%
Consider a SPNE in which $z^{eq} = 1$. If $\tilde{p} >  \tilde{p}_t$, $z^{eq}=0$, by Theorem~\ref{theorem:p_tilde_new}. So $\tilde{p} \leq \tilde{p}_t$. The payoff of ISP NoN is  equal to $(p_{NoN}-c)n_{NoN}+\tilde{p}\tilde{q}_f$, by \eqref{equ:payoffISPsGeneral_new},  and is a strictly increasing function of $\tilde{p}$ (note that $p_{NoN}$ is fixed, and by \eqref{equ:EUs_linear}, $n_{NoN}$ is independent of $\tilde{p}$). Thus, for NoN to maximize her payoff, $\tilde{p}$ must assume the maximum value in the range that  $\tilde{p} \leq \tilde{p}_t.$ Thus, $\tilde{p} = \tilde{p}_t.$ Thus Theorem~\ref{theorem:NE_stage2_new_suff}-1 follows.


Now, we prove Theorem~\ref{theorem:NE_stage2_new_suff}-2.  Again, consider a SPNE in which $z^{eq} = 1$. By Theorem~\ref{theorem:p_tilde_new}, when $\Delta p\geq  t_N+\kappa_u \tilde{q}_{p}$, $z^{eq}=0$. Thus, 	$\Delta p<t_N+\kappa_u \tilde{q}_{p}$. By Theorem~\ref{theorem:NE_stage2_new_suff}-1, $\tilde{p} = \tilde{p}_t.$ Thus, $\pi_{NoN}(p_{NoN},\tilde{p}_{t}) = \pi_{NoN}(p_{NoN},\tilde{p})$. Thus, if $\pi_{NoN}(p_{NoN},\tilde{p}_{t}) \leq \pi_{NoN,z=0}(p_{NoN},\tilde{p})$, then 	$\pi_{NoN}(p_{NoN},\tilde{p}) \leq \pi_{NoN,z=0}(p_{NoN},\tilde{p}).$ Note that by Theorem~\ref{theorem:p_tilde_new}, NoN can ensure $z^{eq}=0$, by choosing $\tilde{p} > \tilde{p}_{t}.$ Thus, by tie-breaking assumption~\ref{assumption:ISP_p_tilde}, $z^{eq}=0$, which is a contradiction. Thus, $\pi_{NoN}(p_{NoN},\tilde{p}_{t}) > \pi_{NoN,z=0}(p_{NoN},\tilde{p}).$

Next, consider that $\pi_{NoN}(p_{NoN},\tilde{p}_{t}) > \pi_{NoN,z=0}(p_{NoN},\tilde{p})$ and $\Delta p<t_N+\kappa_u \tilde{q}_{p}.$ By Theorem~\ref{theorem:NE_stage2_new_suff}-1, if $z^{eq}=1$, $\pi_{NoN}(p_{NoN},\tilde{p}_{t}) = \pi_{NoN}(p_{NoN},\tilde{p})$. By Theorem~\ref{theorem:p_tilde_new}, NoN can ensure $z^{eq}=1$, by choosing $\tilde{p} = \tilde{p}_{t}.$ This fetches NoN a payoff, $\pi_{NoN}(p_{NoN},\tilde{p})$, which exceeds the payoff at $z=0$,  $\pi_{NoN,z=0}(p_{NoN},\tilde{p}).$
Thus, $z^{eq}=1.$ Theorem~\ref{theorem:NE_stage2_new_suff}-2 follows.

\end{proof}

\section{Proof of Theorem~\ref{theorem:neutralz=0_q<}}\label{appendix:theorem:neutralz=0_q<}

\begin{proof}
	We first prove the first part of the Theorem, considering  different regions of $\Delta p$ in Theorem~\ref{lemma:CP_z=0_new}. 
	
	
	
\textbf{Case 1:$-t_{NoN}<\Delta p<t_N$ } 
 By Theorem \ref{lemma:CP_z=0_new}-1, $(q^{eq}_N,q^{eq}_{NoN})=(\tilde{q}_f,\tilde{q}_f)\in F^I_0$. Note that in this region, $0<x_N<1$, and an SPNE strategy for ISPs  should satisfy the first order optimality conditions. Thus, using \eqref{equ:UN_new} and \eqref{equ:UNoN_new}:

\small	
	$$
	\small
	\pi_N(p_N)=(p_N-c)\frac{t_{NoN}+p_{NoN}-p_N}{t_N+t_{NoN}}
	$$
	
	$$
	\ba
	\pi_{NoN}(p_{NoN},\tilde{p})&=(p_{NoN}-c)\frac{t_N+p_N-p_{NoN}}{t_N+t_{NoN}}
	\ea
	$$
	\normalsize
	Solving the first order optimality condition yields:

\small
	\be\label{equ:ISPcandidateMultihome_z=0}
	\ba
	p^{eq}_N&= c+\frac{1}{3}(2t_{NoN}+t_N)\\
	p^{eq}_{NoN}&=c+\frac{1}{3}(2t_N+t_{NoN})
	\ea
	\ee
	\normalsize
	which is unique. Note that $p^{eq}_N\geq c$  and $p^{eq}_{NoN}\geq c$. First, note that $-t_{NoN}< \Delta p^{eq}=p^{eq}_{NoN}-p^{eq}_N=\frac{t_N-t_{NoN}}{3}< t_N$.
	
	The necessary condition for this strategy to be an SPNE is $\pi_{NoN,z=0}(p^{eq}_{NoN}, \tilde{p}^{eq})\geq \pi_{NoN}(p^{eq}_{NoN},\tilde{p}_t)$ (by Theorem~\ref{theorem:NE_stage2_new_suff}). The candidate strategies and this necessary  condition is listed in the statement of the theorem.
	
	\textbf{Case 2: $\Delta p\geq t_{N}$ } 
 We consider two cases $\Delta p=t_N$ and $\Delta p>t_N$ in Cases 2-i and 2-ii, respectively. \\
	\textbf{Case 2-i:$\Delta p = t_{N}$ } 
Using Theorem~\ref{lemma:CP_z=0_new}-2, CP chooses $(q_N^{eq},q_{NoN}^{eq})=(\tilde{q}_f,0)\in F_0^U$ to maximize her payoff. Thus, $n_{NoN}^{eq}=0$ and $\pi_{NoN,z=0}(p_{NoN}^{eq}, \tilde{p})=0$, i.e. the payoff of  ISP NoN is zero.  Consider $\epsilon>0$ such that $p'_{NoN}=p_{NoN}^{eq}-\epsilon> c$. In this case, $p'_{NoN}-p_N^{eq}<t_{N}$. Thus, by  Theorem~\ref{lemma:CP_z=0_new},  CP chooses her qualities in $ F^I_0 \cup F^L_0$. Thus, $n_{NoN}'>0$, and $\pi_{NoN,z=0}(p'_{NoN}, \tilde{p})>0$. Thus, $p'_{NoN}$ is a profitable deviation from $p_{NoN}^{eq}$ for  ISP NoN. Therefore,  $p_{NoN}^{eq}$ and $p_N^{eq}$ such that $\Delta p=t_{N}$ cannot be SPNE.\\
\textbf{Case 2-ii: $\Delta p> t_{N}$ } 
 By  Theorem~\ref{lemma:CP_z=0_new}-2,  $n^{eq}_N=1$.   Consider a unilateral deviation by neutral ISP such that $p'_N=p^{eq}_N+\epsilon$ in which $\epsilon>0$ such that $p^{eq}_{NoN}-p'_N>t_N$. Note that the values of $q^{eq}_N$ and $q^{eq}_{NoN}$ is the same as before, since still $\Delta p'=p^{eq}_{NoN}-p'_N>t_N$. Thus, again $n^{eq}_N=1$, and by \eqref{equ:payoffISPsGeneral_new}, the payoff of neutral ISP is an increasing function of $p_N$. Thus, $p'_N$ is a profitable unilateral deviation. This contradicts the assumption that $p^{eq}_N$ and $p^{eq}_{NoN}$ is SPNE. 

\textbf{Case 3:$\Delta p \leq -t_{NoN}$ }  By  Theorem~\ref{lemma:CP_z=0_new}-3, $n_{NoN}^{eq} = 1, n_{N}^{eq} = 0.$ 
We consider two cases $\Delta p < -t_{NoN}$ and $\Delta p=-t_{NoN}$ in Cases 3-i and 3-ii, respectively. \\
\textbf{Case 3-i: $\Delta p < -t_{NoN}$ }If $z^{eq}=0$, NoN's payoff is $p^{eq}_{NoN}-c$,  by~\eqref{equ:payoffISPsGeneral_new}. This is strictly increasing in $p^{eq}_{NoN}.$ Thus, there is no SPNE in this region. \\
\textbf{Case 3-ii: $\Delta p = -t_{NoN}$ } Since $n_{N}^{eq} = 0,$ N gets a payoff of $0.$ Recall that $p^{eq}_{NoN}\geq c$ since $z^{eq}=0$ (Section~\ref{section:theory}, second paragraph). Thus, $p^{eq}_{N} = p^{eq}_{NoN} + t_{NoN} > c.$
Now, if we decrease $p_{N}$ by a small $\epsilon > 0$ from  $p^{eq}_{N}$, then $-t_{NoN} < \Delta p < t_N$. Then, by Theorem \ref{lemma:CP_z=0_new}-1, CP chooses $(q_N,q_{NoN})=(\tilde{q}_f,\tilde{q}_f)\in F^I_0$ to optimize her payoff,   and $n_{N}$ becomes positive. Also, $p_N > c.$ Thus, N gets a positive payoff, by~\eqref{equ:payoffISPsGeneral_new}. This contradicts the assumption that $p^{eq}_N$ and $p^{eq}_{NoN}$ is SPNE. The first part of the Theorem follows.

We now consider the special case that $t_N+t_{NoN}\leq \kappa_u \tilde{q}_p. $ First, let $p^{eq}_N, p^{eq}_{NoN}$ be such that $\Delta p^{eq}>\kappa_u\tilde{q}_p-t_{NoN}$. It follows from both these that    
$\Delta p >t_N$. We know from Case 2-ii of the general case that there is no SPNE such that $z^{eq}=1$ in this case.
 Now, consider $\Delta p\leq \kappa_u \tilde{q}_p-t_{NoN}$. In this region, if $z^{eq}=0$ NoN's payoff is at most $p^{eq}_{NoN}-c$ (by~\eqref{equ:payoffISPsGeneral_new}). On the other hand, by Theorem \ref{theorem:p_tilde_new}, by choosing $\tilde{p}'=\tilde{p}_{t,1}$,  NoN can ensure that the CP chooses $z^{eq}=1$. In this case, by~\eqref{equ:payoffISPsGeneral_new}, NoN's payoff is $p'_{NoN}-c+\tilde{p}_{t,1}\tilde{q}_{NoN}=p^{eq}_{NoN}-c+\kappa_{ad}(\tilde{q}_{p}-\tilde{q}_f)>p^{eq}_{NoN}-c$. Thus, $\pi_{NoN}(p^{eq}_{NoN},\tilde{p}_{t,1})>\pi_{NoN,z=0}(p^{eq}_{NoN},\tilde{p})$, and by Theorem~\ref{theorem:NE_stage2_new_suff}, $z^{eq}=1$. Thus, in this case, there is no SPNE by which $z^{eq}=0$. The second part follows.


\end{proof}
\section{Proof of Theorem~\ref{theorem:NE_stage1_new_q>}}
\label{appendix:firstproof}
 Before proving the theorem, we state and prove two lemmas,  which are used in the proof of the theorem:

\begin{lemma}\label{lemma:appendix_caseA}
	If $p_{NoN}=c+\kappa_u \tilde{q}_{p}-t_{NoN}$ and $p_N=c$, then $z^{eq}=1$.
\end{lemma}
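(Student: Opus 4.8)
The plan is to apply the Stage-2 characterization of Theorem~\ref{theorem:NE_stage2_new_suff} after first locating the prescribed prices in the correct region of Theorem~\ref{theorem:p_tilde_new}. Substituting the given values yields $\Delta p = p_{NoN}-p_N = \kappa_u \tilde{q}_p - t_{NoN}$, which satisfies $\Delta p \leq \kappa_u\tilde{q}_p - t_{NoN}$. Hence we are in item~1 of Theorem~\ref{theorem:p_tilde_new}, where the relevant threshold is $\tilde{p}_t = \tilde{p}_{t,1}$ (Definition~\ref{def:pt}) and where choosing $\tilde{p} = \tilde{p}_{t,1}$ induces $z^{eq}=1$ with $(q^{eq}_N,q^{eq}_{NoN})=(0,\tilde{q}_p)\in F^L_1$, so that $n_{NoN}=1$. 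By Theorem~\ref{theorem:NE_stage2_new_suff} it then suffices to verify the two conditions (i) $\Delta p < t_N + \kappa_u \tilde{q}_p$ and (ii) $\pi_{NoN}(p_{NoN},\tilde{p}_{t,1}) > \pi_{NoN,z=0}(p_{NoN},\tilde{p})$.

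Condition (i) is immediate, since $\Delta p = \kappa_u\tilde{q}_p - t_{NoN} < \kappa_u \tilde{q}_p < t_N + \kappa_u \tilde{q}_p$ as $t_N,t_{NoN}>0$. For condition (ii) I would compute both payoffs from \eqref{equ:payoff_ISP_NoN_new}. Because $n_{NoN}=1$ and $\tilde{p}_{t,1}\tilde{q}_p = \kappa_{ad}(\tilde{q}_p - \tilde{q}_f)$, the $z=1$ payoff is $\pi_{NoN}(p_{NoN},\tilde{p}_{t,1}) = (p_{NoN}-c) + \kappa_{ad}(\tilde{q}_p - \tilde{q}_f)$. When $z^{eq}=0$ the side-payment term vanishes, so $\pi_{NoN,z=0} = (p_{NoN}-c)\,n^0_{NoN} \leq p_{NoN}-c$, using $n^0_{NoN}\leq 1$ together with $p_{NoN}-c = \kappa_u\tilde{q}_p - t_{NoN} \geq t_N \geq 0$ (the latter from the standing small-inertia hypothesis $t_N+t_{NoN}\leq\kappa_u\tilde{q}_p$ inherited from Theorem~\ref{theorem:NE_stage1_new_q>}). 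Since $\tilde{q}_p > \tilde{q}_f$ forces $\kappa_{ad}(\tilde{q}_p-\tilde{q}_f)>0$, condition (ii) holds strictly, and Theorem~\ref{theorem:NE_stage2_new_suff} gives $z^{eq}=1$.

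The main obstacle is establishing the strict inequality (ii); everything else is routine region-bookkeeping. The key point is the crude but sufficient bound $\pi_{NoN,z=0}\leq p_{NoN}-c$, which hinges on $p_{NoN}\geq c$ — guaranteed here precisely by the small-inertia assumption. This mirrors the technique already used in the proof of Theorem~\ref{theorem:neutralnotexists_q>}. As an alternative route to the same bound, I could instead invoke Theorem~\ref{lemma:CP_z=0_new} directly: under $t_N+t_{NoN}\leq\kappa_u\tilde{q}_p$ we have $\Delta p = \kappa_u\tilde{q}_p-t_{NoN}\geq t_N$, which by case~2 of that theorem forces $(q^{eq}_N,q^{eq}_{NoN})=(\tilde{q}_f,0)\in F^U_0$ and hence $n^0_{NoN}=0$, making $\pi_{NoN,z=0}=0$ outright and rendering (ii) transparent.
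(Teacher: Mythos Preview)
Your argument is correct and follows essentially the same route as the paper: compute $\Delta p=\kappa_u\tilde{q}_p-t_{NoN}$ to place the prices in item~1 of Theorem~\ref{theorem:p_tilde_new} (so $\tilde{p}_t=\tilde{p}_{t,1}$), then invoke Theorem~\ref{theorem:NE_stage2_new_suff} and compare $\pi_{NoN}(p_{NoN},\tilde{p}_{t,1})=(p_{NoN}-c)+\kappa_{ad}(\tilde{q}_p-\tilde{q}_f)$ with the crude bound $\pi_{NoN,z=0}\le p_{NoN}-c$. The paper's proof is identical in structure; your version is simply more explicit about two points the paper leaves implicit, namely the trivial check $\Delta p<t_N+\kappa_u\tilde{q}_p$ and the fact that the bound $\pi_{NoN,z=0}\le p_{NoN}-c$ relies on $p_{NoN}\ge c$, which you correctly trace to the ambient small-inertia hypothesis of Theorem~\ref{theorem:NE_stage1_new_q>}.
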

\begin{proof}
	Note that in this case, $\Delta p=\kappa_u \tilde{q}_{p}-t_{NoN}$. Thus, $\tilde{p}_t=\tilde{p}_{t,1}$. Therefore, using Theorem~\ref{theorem:NE_stage2_new_suff}, it is sufficient to prove that $\pi_{NoN}(p_{NoN},\tilde{p}_{t,1})>\pi_{NoN,z=0}(p_{NoN},\tilde{p})$, where $\pi_{NoN,z=0}(\tilde{p}_{NoN},\tilde{p})$ is the payoff of ISP NoN when $z^{eq}=0$. Note that $\pi_{NoN,z=0}(p_{NoN},\tilde{p})\leq p_{NoN}-c=\kappa_u \tilde{q}_p-t_{NoN}$ and $\pi_{NoN}(p_{NoN},\tilde{p}_{t,1})=\kappa_u \tilde{q}_p-t_{NoN}+\kappa_{ad}(\tilde{q}_p-\tilde{q}_f)$  (since by Theorem~\ref{theorem:p_tilde_new}, $n_{NoN}=1$, and by \eqref{equ:payoffISPsGeneral_new}). In addition, note that,  $\tilde{q}_p>\tilde{q}_f$. Thus, this condition holds, and the result follows.
\end{proof}

\begin{lemma}\label{lemma:appendix_caseB}
	If 	$p_{NoN}=c+\frac{t_{NoN}+2t_N+  \tilde{q}_{p}(\kappa_u -2\kappa_{ad})}{3}$, $p_{N}=c+\frac{2t_{NoN}+t_N-\tilde{q}_{p}(\kappa_u +\kappa_{ad})}{3}$,  $\tilde{q}_{p}<\frac{t_N+2t_{NoN}}{\kappa_u +\kappa_{ad}}$, and $\kappa_u \tilde{q}_p\geq t_N+t_{NoN}$, then $z^{eq}=1$.
\end{lemma}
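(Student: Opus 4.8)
The plan is to reduce everything to the necessary-and-sufficient criterion of Theorem~\ref{theorem:NE_stage2_new_suff}: $z^{eq}=1$ holds precisely when $\Delta p < t_N + \kappa_u\tilde{q}_p$ and $\pi_{NoN}(p_{NoN},\tilde{p}_t) > \pi_{NoN,z=0}(p_{NoN},\tilde{p})$. So I first locate $\Delta p$ among the regions of Theorem~\ref{theorem:p_tilde_new} (which fixes $\tilde{p}_t$ through Definition~\ref{def:pt}), then evaluate both sides of the payoff comparison. The structure mirrors the proof of Lemma~\ref{lemma:appendix_caseA}, except that here the relevant threshold turns out to be the interior one $\tilde{p}_{t,2}$ rather than $\tilde{p}_{t,1}$, which makes the payoff estimate the genuine work.

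First I would compute, directly from the two prices, $\Delta p = p_{NoN} - p_N = \tfrac{1}{3}\big(t_N - t_{NoN} + \tilde{q}_p(2\kappa_u - \kappa_{ad})\big)$. Using the hypothesis $\tilde{q}_p(\kappa_u+\kappa_{ad}) < t_N + 2t_{NoN}$ I would show $\Delta p > \kappa_u\tilde{q}_p - t_{NoN}$, and using $\kappa_u\tilde{q}_p \geq t_N + t_{NoN}$ together with an obvious sign estimate I would show $\Delta p < t_N + \kappa_u\tilde{q}_p$; this already dispatches the first condition of Theorem~\ref{theorem:NE_stage2_new_suff} and places $\Delta p$ in the middle band $\kappa_u\tilde{q}_p - t_{NoN} < \Delta p < t_N + \kappa_u\tilde{q}_p$. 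Since $\tilde{q}_p \geq \frac{t_N+t_{NoN}}{\kappa_u}$, Corollary~\ref{corollary:qNoN>} (via Lemma~\ref{lemma:deltap_t}, which rules out case~2-a-i) guarantees that throughout this band the relevant threshold is $\tilde{p}_t = \tilde{p}_{t,2}$ and that the CP's premium response is $(q^{eq}_N,q^{eq}_{NoN}) = (0,\tilde{q}_p) \in F^I_1$, with $n_{NoN} = \frac{t_N + \kappa_u\tilde{q}_p - \Delta p}{t_N + t_{NoN}} = \frac{2t_N + t_{NoN} + \tilde{q}_p(\kappa_u+\kappa_{ad})}{3(t_N+t_{NoN})}$.

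Next I would evaluate the $z=0$ side. Since $\kappa_u\tilde{q}_p \geq t_N + t_{NoN}$ gives $\kappa_u\tilde{q}_p - t_{NoN} \geq t_N$, and we already have $\Delta p > \kappa_u\tilde{q}_p - t_{NoN}$, it follows that $\Delta p > t_N$; hence by Theorem~\ref{lemma:CP_z=0_new} the $z=0$ best response sends all EUs to the neutral ISP ($n^{z=0}_{NoN}=0$), so $\pi_{NoN,z=0}(p_{NoN},\tilde{p}) = (p_{NoN}-c)\cdot 0 = 0$. It remains to prove $\pi_{NoN}(p_{NoN},\tilde{p}_{t,2}) > 0$. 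Substituting $\tilde{p}_{t,2} = \kappa_{ad}(n_{NoN} - \tilde{q}_f/\tilde{q}_p)$ into \eqref{equ:payoffISPsGeneral_new} and simplifying, the payoff collapses to the clean form $\pi_{NoN}(p_{NoN},\tilde{p}_{t,2}) = (t_N+t_{NoN})\,n_{NoN}^2 - \kappa_{ad}\tilde{q}_f$, because $(p_{NoN}-c) + \kappa_{ad}\tilde{q}_p$ equals exactly $(t_N+t_{NoN})\,n_{NoN}$.

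The hard part will be the strict inequality $(t_N+t_{NoN})n_{NoN}^2 > \kappa_{ad}\tilde{q}_f$, since neither factor is individually pinned down by the hypotheses. My plan is to introduce $u := \tilde{q}_p(\kappa_u+\kappa_{ad}) - (t_N+t_{NoN})$, which the two parameter hypotheses sandwich as $0 \le u < t_{NoN} \le t_N + t_{NoN} =: s$, and to extract two consequences: (i) $\kappa_{ad}\tilde{q}_f < \kappa_{ad}\tilde{q}_p \le \tilde{q}_p(\kappa_u+\kappa_{ad}) - \kappa_u\tilde{q}_p \le u$, using $\tilde q_f<\tilde q_p$ and $\kappa_u\tilde{q}_p \ge s$; and (ii) writing $A := 3s\,n_{NoN} = t_N + 2s + u$, the quantity $(t_N+t_{NoN})n_{NoN}^2 - u = \frac{A^2 - 9su}{9s}$ has numerator bounded below by $(2s+u)^2 - 9su = (4s-u)(s-u) > 0$ for $0\le u < s$. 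Combining (i) and (ii) gives $(t_N+t_{NoN})n_{NoN}^2 > u > \kappa_{ad}\tilde{q}_f$, the desired inequality, so $\pi_{NoN}(p_{NoN},\tilde{p}_{t,2}) > 0 = \pi_{NoN,z=0}$. With both conditions of Theorem~\ref{theorem:NE_stage2_new_suff} verified, $z^{eq}=1$ follows.
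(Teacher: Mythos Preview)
Your proof is correct, and its overall architecture matches the paper's: invoke Theorem~\ref{theorem:NE_stage2_new_suff}, place $\Delta p$ in the interior band (so $\tilde p_t=\tilde p_{t,2}$ via Corollary~\ref{corollary:qNoN>}), observe that $\Delta p>t_N$ forces $n_{NoN}^{z=0}=0$ and hence $\pi_{NoN,z=0}=0$, and finish by showing $\pi_{NoN}(p_{NoN},\tilde p_{t,2})>0$. You are in fact more careful than the paper in explicitly checking the band $\kappa_u\tilde q_p-t_{NoN}<\Delta p<t_N+\kappa_u\tilde q_p$; the paper only asserts this conditionally inside the lemma and verifies it separately in the proof of Theorem~\ref{theorem:NE_stage1_new_q>}.

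The genuine difference is in the positivity step. You rewrite $\pi_{NoN}(p_{NoN},\tilde p_{t,2})=(t_N+t_{NoN})n_{NoN}^2-\kappa_{ad}\tilde q_f$ and then run an ad hoc substitution/factoring argument with $u=\tilde q_p(\kappa_u+\kappa_{ad})-(t_N+t_{NoN})$ to get $(t_N+t_{NoN})n_{NoN}^2>u>\kappa_{ad}\tilde q_f$. The paper instead compares $\pi_{NoN}(p_{NoN},\tilde p_{t,2})$ to the region-$A$ payoff $p_N-c+\kappa_u\tilde q_p-t_{NoN}+\kappa_{ad}(\tilde q_p-\tilde q_f)$ and shows the difference is the perfect square $\tfrac{1}{9(t_N+t_{NoN})}\big(\tilde q_p(\kappa_u+\kappa_{ad})-t_N-2t_{NoN}\big)^2\ge 0$; positivity of the intermediate quantity is then immediate term-by-term (each of $p_N-c$, $\kappa_u\tilde q_p-t_{NoN}$, $\kappa_{ad}(\tilde q_p-\tilde q_f)$ is nonnegative, with the last strictly positive). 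The paper's route is shorter and has the bonus that exactly the same perfect-square identity is recycled in Case~B-2-i-A of Theorem~\ref{theorem:NE_stage1_new_q>} to rule out ISP~NoN's deviation to region~$A$; your estimate is self-contained but does not feed into the later argument.
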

\begin{proof}
	Note that if  $\kappa_u \tilde{q}_{p}-t_{NoN}<\Delta p<t_N+\kappa_u \tilde{q}_{p}$, by definition of $\tilde{p}_t$ (Definition~\ref{def:pt}), $\tilde{p}_t=\tilde{p}_{t,2}$. Thus, by Theorem \ref{theorem:NE_stage2_new_suff}, it is enough to prove that $\pi_{NoN}(p_{NoN},\tilde{p}_{t,2})>\pi_{NoN,z=0}(\tilde{p}_{NoN},\tilde{p})$, where $\pi_{NoN,z=0}(\tilde{p}_{NoN},\tilde{p})$ is the payoff of ISP NoN when $z^{eq}=0$.

	First, we prove that $\pi_{NoN}(p_{NoN},\tilde{p}_{t,2})>{p}_N-c+ \kappa_u\tilde{q}_{p}-t_{NoN}+\kappa_{ad}(\tilde{q}_{p}-\tilde{q}_f)$:
	$$
	\footnotesize
	\ba
	&	\pi_{NoN}(p_{NoN},\tilde{p}_{t,2})\geq  {p}_N-c+ \kappa_u\tilde{q}_{p}-t_{NoN}+\kappa_{ad}(\tilde{q}_{p}-\tilde{q}_f)\\
	&\iff \frac{\big{(}t_{NoN}+2t_N+  \tilde{q}_{p}(\kappa_u +\kappa_{ad})\big{)}^2}{9(t_N+t_{NoN})} \geq  \frac{t_N-t_{NoN}+2\tilde{q}_{p}(\kappa_u +\kappa_{ad})}{3}\\
	& \iff (\tilde{q}_{p}(\kappa_u+\kappa_{ad})-t_N-2t_{NoN})^2\geq 0
	\ea
	$$
	\normalsize
	In addition, note that $ {p}_N-c+ \kappa_u\tilde{q}_{p}-t_{NoN}+\kappa_{ad}(\tilde{q}_{p}-\tilde{q}_f)>0$, since $p_N\geq c$ (under the condition $\tilde{q}_{p}<\frac{t_N+2t_{NoN}}{\kappa_u +\kappa_{ad}}$), $\kappa_u\tilde{q}_{p}-t_{NoN}\geq t_N>0$ (by the assumption of the lemma), and $\tilde{q}_p>\tilde{q}_f$. Thus, $\pi_{NoN}(p_{NoN},\tilde{p}_{t,2})>0$.
	
	Now, consider $\pi_{NoN,z=0}(\tilde{p}_{NoN},\tilde{p})$. Note that by the assumption of the lemma $\kappa_u\tilde{q}_p\geq t_N+t_{NoN}$. Thus, $\Delta p>t_N$, and by item 2 of Theorem~\ref{lemma:CP_z=0_new}, if $z^{eq}=0$, $n_{NoN}=0$. Thus, by \eqref{equ:payoffISPsGeneral_new}, $\pi_{NoN,z=0}(\tilde{p}_{NoN},\tilde{p})=0$. Therefore, $\pi_{NoN}(p_{NoN},\tilde{p}_{t,2})>\pi_{NoN,z=0}(\tilde{p}_{NoN},\tilde{p})$, and the result follows.
\end{proof}

Now, we prove Theorem~\ref{theorem:NE_stage1_new_q>}:

\begin{proof}
	We use the optimum strategies of the CP characterized in Theorem~\ref{theorem:p_tilde_new} to characterize Nash equilibria. Note that for the case that $\kappa_u\tilde{q}_p\geq t_N+t_{NoN}$, by Corollary~\ref{corollary:qNoN>}, the structure of the equilibrium strategies chosen by the CP is similar to the case that $\kappa_u \tilde{q}_f>t_N+t_{NoN}$. Thus, in this case, items 1, 3, and 4 of Theorem~\ref{theorem:p_tilde_new} characterizes the NE strategies chosen by the CP. Thus, henceforth we assume $\kappa_u \tilde{q}_p\geq t_N+t_{NoN}$, and  use these items to prove the theorem.
	
	We denote $\Delta p\leq \kappa_u \tilde{q}_{p}-t_{NoN}$ by region A, $\kappa_u \tilde{q}_{p}-t_{NoN}<\Delta p<t_N+\kappa_u \tilde{q}_{p}$ by region B, and $\Delta p\geq t_N+\kappa_u \tilde{q}_{p}$ by region C. Using Theorem~\ref{theorem:p_tilde_new}, if $z^{eq}=1$, then $\Delta p<t_N+\kappa_u \tilde{q}_{p}$. Thus, to characterize NE strategies by which $z^{eq}=1$, we should characterize any possible NE strategies in regions A and B. In Case A, we prove that the only possible NE in region A is $p^{eq}_{NoN}=c+\kappa_u \tilde{q}_{p}-t_{NoN}$ and $p^{eq}_N=c$. In addition, we prove that these strategies are NE if $\tilde{q}_{p}\geq \frac{t_N+2t_{NoN}}{\kappa_u+\kappa_{ad}}$. If not, then there is no NE in region A. In Case B, we prove that the only possible NE in region B is $p^{eq}_{NoN}=c+\frac{t_{NoN}+2t_N+\tilde{q}_{p}(\kappa_u -2\kappa_{ad})}{3}$ and $p^{eq}_{N}=c+\frac{2t_{NoN}+t_N-\tilde{q}_{p}(\kappa_u +\kappa_{ad})}{3}$. In addition, we prove that these strategies can be NE strategies if $\tilde{q}_{p}\geq \frac{t_N+2t_{NoN}}{\kappa_u+\kappa_{ad}}$. If not, then there is no NE in region B.
	
	\textbf{Case A:} We characterize the NE strategies $p^{eq}_N$ and $p^{eq}_{NoN}$ such that $\Delta p^{eq}=p^{eq}_{NoN}-p^{eq}_N\leq \kappa_u \tilde{q}_{p}-t_{NoN}$. First, in Case A-1, we prove that if $z^{eq}=1$ the only possible NE in this region is $p^{eq}_{NoN}=c+\kappa_u \tilde{q}_{p}-t_{NoN}$ and $p^{eq}_N=c$, and with these strategies, $z^{eq}$ is indeed equal to 1. In Case A-2,  we characterize the necessary and sufficient conditions by which there is no unilateral profitable deviation for ISPs. This provides the necessary and sufficient condition for these strategies to be NE.\\
\textbf{Case A-1:} First, we consider $\Delta p\leq \kappa_u \tilde{q}_{p}-t_{NoN}$. In this case, we show that the payoff of ISP NoN is an increasing function of $\Delta p$. Then, we characterize the NE as  $p^{eq}_{NoN}=c+\kappa_u \tilde{q}_{p}-t_{NoN}$ and $p^{eq}_N=c$, using the fact that when choosing an NE, no player can increase her payoff by unilaterally changing her strategy.

Note that by Theorem~\ref{theorem:p_tilde_new}, for region A, $(q^{eq}_N,q^{eq}_{NoN})=(0,\tilde{q}_{p})\in F^L_1$ if and only if $\tilde{p}\leq \tilde{p}_{t,1}=\kappa_{ad}(1-\frac{\tilde{q}_f}{\tilde{q}_{p}})$. In addition, by Theorem~\ref{theorem:NE_stage2_new_suff}, if $z^{eq}=1$ then $\tilde{p}^{eq}=\tilde{p}_{t,1}=\kappa_{ad}(1-\frac{\tilde{q}_f}{\tilde{q}_{p}})$ (Definition~\ref{def:pt1,pt2}).  Thus, in  this region, if $z^{eq}=1$, the payoff of ISP NoN is equal to $p_{NoN}-c+\tilde{q}_{p}\tilde{p}_{t,1}$ (by \eqref{equ:payoffISPsGeneral_new}) since $n_{NoN}=1$. Therefore, the payoff is an increasing function of $p_{NoN}$. In addition, note that in region A, $n_N=0$ and regardless of $p_N$, the neutral ISP receives a payoff of zero (by \eqref{equ:payoffISPsGeneral_new}). Thus, $p^{eq}_{NoN}$, i.e. the equilibrium Internet access fee, should be such that the neutral ISP cannot get a positive payoff  by increasing or decreasing $p_N$, and changing the region of $\Delta p$ to $B_1$, $B_2$, or $C$. Using this condition, we find the equilibrium strategy.

First consider a unilateral deviation by ISP N. Note that increasing $p_N$ decreases $\Delta p$, and cannot change the region of $\Delta p$.\footnote{Recall that in this region, $n_N=0$, and ISP N fetch a payoff of zero.} Thus, a deviation of this kind would not be profitable. We claim that by decreasing $p_N$ to $p'_N$ such that $p_{NoN}-p'_N>\kappa_u \tilde{q}_{p}-t_{NoN}$, the ISP N can fetch a positive payoff as long as $p'_N>c$ (the claim is proved in the next paragraph). Therefore, in the equilibrium, $p^{eq}_{NoN}$ is such that even with $p'_N=c$ (the minimum plausible price), $\Delta p \leq \kappa_u \tilde{q}_{p}-t_{NoN}$. Thus, $p^{eq}_{NoN}\leq c+ \kappa_u \tilde{q}_{p}-t_{NoN}$.\footnote{Otherwise, there exists a $p'_N>c$ by which  $\Delta p > \kappa_u \tilde{q}_{p}-t_{NoN}$.} Given that the payoff of ISP NoN is an increasing function of $p_{NoN}$, we get  $p^{eq}_{NoN}=c+\kappa_u \tilde{q}_{p}-t_{NoN}$. In addition, we claim that $p^{eq}_N=c$. If not, then $p^{eq}_N>c$. In this case, $\Delta p=p^{eq}_N-p^{eq}_{NoN}<\kappa_u \tilde{q}_{p}-t_{NoN}$.  We argued that the payoff of ISP NoN is an increasing function of $p_{NoN}$. Thus, by increasing $p_{NoN}$ such that $\Delta p=\kappa_u \tilde{q}_{p}-t_{NoN}$, ISP NoN can increase her payoff, which is a contradiction with $p^{eq}_N$ and $p^{eq}_{NoN}$ being NE strategies.


To prove the claim, note that if  $p_{NoN}-p'_N>\kappa_u \tilde{q}_{p}-t_{NoN}$, then either (i) $z^{eq}=0$ or (ii) $z^{eq}=1$. For case (i), since  $\kappa_u \tilde{q}_{p}-t_{NoN}>-t_{NoN}$, when $\Delta p>\kappa_u \tilde{q}_{p}-t_{NoN}$, then $(q^{eq}_N,q^{eq}_{NoN})$ is of the form of items 1 or  2 of Theorem~\ref{lemma:CP_z=0_new}.  Thus, $n_N>0$. Therefore ISP N can fetch a positive payoff as long as $p_N>c$ (by \eqref{equ:payoffISPsGeneral_new}). Now consider case (ii), i.e. $z^{eq}=1$. In this case, if $z^{eq}=1$, then by using item 2 of Thoerem~\ref{theorem:p_tilde_new}, $n_N>0$ (since solutions that yield $z^{eq}=1$ are in $F^I$.). Thus, ISP N can fetch a positive payoff as long as $p_N>c$ (by \eqref{equ:payoffISPsGeneral_new}). This completes the proof of the claim that by decreasing $p_N$ to $p'_N$ such that $p_{NoN}-p'_N>\kappa_u \tilde{q}_{p}-t_{NoN}$, the ISP N can fetch a positive payoff as long as $p'_N>c$.

Therefore, the NE strategies are $p^{eq}_{NoN}=c+\kappa_u \tilde{q}_{p}-t_{NoN}$ and $p^{eq}_N=c$, and the payoff of the ISP NoN at this price by \eqref{equ:payoffISPsGeneral_new} and $\tilde{p}_{t,1}=\kappa_{ad}(1-\frac{\tilde{q}_f}{\tilde{q}_{p}})$ is equal to  (note that $n_{NoN}=1$), and
\be \label{equ:payoff_NoN_eq_1_new}
\pi^{eq}_{NoN}=\kappa_u \tilde{q}_{p}-t_{NoN}+\tilde{q}_{p} \tilde{p}_{t,1}=\kappa_u \tilde{q}_p-t_{NoN}+\kappa_{ad}(\tilde{q}_p-\tilde{q}_f)
\ee
which is strictly positive since $t_{NoN} < \kappa_u \tilde{q}_p-t_{NoN}+\kappa_{ad}(\tilde{q}_p-\tilde{q}_f)$. 
The first item of the theorem follows.

	\textbf{Case A-2:} Now, in order to prove that $p^{eq}_N$ and $p^{eq}_{NoN}$ are indeed NE strategies, we show that there is no unilateral profitable deviation for ISPs. First, in Case (A-2-i) we rule out the possibility of a unilateral profitable deviation for ISP N. Then, in Case (A-2-ii) we rule out a possibility of a downward unilateral  profitable deviation, i.e.  $p_{NoN}<p^{eq}_{NoN}$, for ISP NoN. Finally, in Case (A-3-iii), we consider a deviation of the form  $p_{NoN}>p^{eq}_{NoN}$ for ISP NoN, and prove that the necessary and sufficient condition for this deviation to be not profitable is  $\tilde{q}_{p}\geq \frac{t_N+2t_{NoN}}{\kappa_u+\kappa_{ad}}$.
	
	\textbf{Case A-2-i: } The construction of strategies $p^{eq}_N$ and $p^{eq}_{NoN}$ yields that there is no profitable deviation for ISP N. To prove this formally, note that the only deviation for ISP N that might be profitable is $p_N>c$. With this deviation, $\Delta p$ would be still in region A, in which $n_N=0$, and the payoff of ISP N is zero. Thus, such a deviation is not profitable.
	
	\textbf{Case A-2-ii: } Now, consider a deviation by ISP NoN such that  $p_{NoN}<p^{eq}_{NoN}$. In this case, $\Delta p$ is in region A, and the payoff of ISP NoN is equal to $p_{NoN}-c+\tilde{q}_{p}\tilde{p}_{t,1}$ (by \eqref{equ:payoffISPsGeneral_new} and $n_{NoN}=1$). Thus, the payoff of ISP NoN is strictly increasing in region $A$. Therefore, $p^{eq}_{NoN}$ dominates all prices $p_{NoN}<p^{eq}_{NoN}$. Thus, this kind of deviation is not profitable for ISP NoN.
	
	\textbf{Case A-2-iii: } In this case, we consider a deviation such that  $p_{NoN}>p^{eq}_{NoN}$. Thus, $\Delta p>\kappa_u \tilde{q}_{p}-t_{NoN}$. Therefore, $\Delta p$ is either in Region B or C. First, in Case A-2-iii-a we rule out the possibility of a profitable unilateral deviation in region C. Then, in Case A-2-iii-b, we rule out the possibility of a profitable unilateral deviation in region B if  $z^{eq}=0$. Finally, in Case A-2-iii-c, we prove that a deviation to region B if $z^{eq}=1$ is not profitable if and only if  $\tilde{q}_{p}\geq \frac{t_N+2t_{NoN}}{\kappa_u+\kappa_{ad}}$.
	
	\textbf{Case A-2-iii-a: } Using item 4 of  Theorem~\ref{theorem:p_tilde_new}, if $\Delta p$ in region C, i.e. $\Delta p\geq t_N+\kappa_u \tilde{q}_{p}$, then $z^{eq}=0$.   In this case,  $(q^{eq}_N,q^{eq}_{NoN})$ is of the form of part 2 of Theorem~\ref{lemma:CP_z=0_new} (note that $\kappa_u \tilde{q}_p\geq t_N+t_{NoN}$). Thus, $n_{NoN}=0$. Therefore, the ISP NoN receives a payoff of zero, and a deviation of this kind in not profitable for this ISP (since the equilibrium payoff is positive.).
	
	\textbf{Case A-2-iii-b: } Consider a deviation to Region B by ISP NoN by which $z^{eq}=0$.  then by item 2 of Theorem~\ref{lemma:CP_z=0_new}, $n_{NoN}=0$. Therefore, the ISP NoN receives a payoff of zero, and a deviation of this kind in not profitable for this ISP.
	
	\textbf{Case A-2-iii-c: } Now, consider  Consider a deviation to Region B by ISP NoN by which   $z^{eq}=1$. In this case, by item 3 of Theorem~\ref{theorem:p_tilde_new}, $(0,\tilde{q}_{p})\in F^I_1$, and by Theorem~\ref{theorem:NE_stage2_new_suff} and Lemma~\ref{lemma:thresh_on_delta_p_z10_new_2}, $\tilde{p}^{eq}=\tilde{p}_{t,2}=\kappa_{ad} (n_{NoN}-\frac{\tilde{q}_f}{\tilde{q}_{p}})$ and $n_{NoN}=\frac{t_N+\kappa_u \tilde{q}_{p}-\Delta p}{t_N+t_{NoN}}$. Therefore, using \eqref{equ:payoffISPsGeneral_new}:
	\be\label{equ:payoff_deviation_A}
	\ba
	\pi_{NoN}(\tilde{p}'_{NoN},\tilde{p}_{t,2})&=(p'_{NoN}-c)n_{NoN}+ \kappa_{ad} (n_{NoN}\tilde{q}_{p}-\tilde{q}_f)\\
	& =(p'_{NoN}-c+\kappa_{ad}\tilde{q}_{p})n_{NoN}-\kappa_{ad}\tilde{q}_f
	\ea
	\ee
	in which $n_{NoN}=\frac{t_N+\kappa_u \tilde{q}_{p}-p'_{NoN}+c}{t_N+t_{NoN}}$. The maximum $\pi_{NoN}(\tilde{p}'_{NoN},\tilde{p}_{t,2})$ can be found by applying the first order condition on the payoff, which gives us:
	\be \label{equ:equ:p'}
	p^*_{NoN}=c+\frac{1}{2}(t_N+\tilde{q}_{p}(\kappa_u -\kappa_{ad}))
	\ee
	
	This deviation is a profitable deviation in region B if (i) $\pi_{NoN}(\tilde{p}^*_{NoN},\tilde{p}_{t,2})>\pi^{eq}_{NoN}$ and (ii)  $\kappa_u \tilde{q}_{p}-t_{NoN}<p^*_{NoN}-c<t_N+\kappa_u \tilde{q}_{p}$. We also claim (claim is proved in the next two paragraphs) that if any deviation to region B is profitable, then (i) $\pi_{NoN}(\tilde{p}^*_{NoN},\tilde{p}_{t,2})>\pi^{eq}_{NoN}$ and (ii)  $\kappa_u \tilde{q}_{p}-t_{NoN}<p^*_{NoN}-c<t_N+\kappa_u \tilde{q}_{p}$. Thus, a deviation to this region is profitable if and only if (i) $\pi_{NoN}(\tilde{p}^*_{NoN},\tilde{p}_{t,2})>\pi^{eq}_{NoN}$ and (ii)  $\kappa_u \tilde{q}_{p}-t_{NoN}<p^*_{NoN}-c<t_N+\kappa_u \tilde{q}_{p}$.

	Now, we prove the claim that  (i) $\pi_{NoN}(\tilde{p}^*_{NoN},\tilde{p}_{t,2})>\pi^{eq}_{NoN}$ and (ii)  $\kappa_u \tilde{q}_{p}-t_{NoN}<p^*_{NoN}-c<t_N+\kappa_u \tilde{q}_{p}$.
	are necessary condition for a profitable deviation. First, we prove that (ii) is a necessary condition. Suppose (ii) is not true. We claim that no  $p'_{NoN}$ such that $\kappa_u \tilde{q}_{p}-t_{NoN}<p'_{NoN}-c<t_N+\kappa_u \tilde{q}_{p}$ can be a profitable deviation. To prove this, note that by concavity of \eqref{equ:payoff_deviation_A}, if $p^*_{NoN}$ is not such that   $\kappa_u \tilde{q}_{p}-t_{NoN}<p^*_{NoN}-c<t_N+\kappa_u \tilde{q}_{p}$, then all $p'_{NoN}$ such that $\kappa_u \tilde{q}_{p}-t_{NoN}<p'_{NoN}-c<t_N+\kappa_u \tilde{q}_{p}$ yields a strictly lower payoff than the maximum of payoffs at the  boundary points. Note that  with the upper boundary point, $\Delta p=p'_{NoN}-c=t_N+\kappa_u \tilde{q}_p$. In this case, by item 4 of Theorem~\ref{theorem:p_tilde_new}, $z^{eq}=0$, and by item 2 of Theorem~\ref{lemma:CP_z=0_new}, $n_{{NoN}}=0$. Thus, the payoff of ISP NoN is zero (by \eqref{equ:payoffISPsGeneral_new}). On the other hand, in the lower boundary point, i.e. $p'_{NoN}=\kappa_u \tilde{q}_{p}-t_{NoN}+c$ is equal to $p^{eq}_{NoN}$. Thus, the  maximum payoff at the boundary points is equal to the equilibrium payoff. Therefore,  if $p^*_{NoN}$ is not such that
	$\kappa_u \tilde{q}_{p}-t_{NoN}<p^*_{NoN}-c<t_N+\kappa_u \tilde{q}_{p}$, then all $p'_{NoN}$ such that $\kappa_u \tilde{q}_{p}-t_{NoN}<p'_{NoN}-c<t_N+\kappa_u \tilde{q}_{p}$, yields a payoff which is strictly less than the equilibrium payoff. The proof of (ii) being a necessary condition is complete.
	
	Now, we prove that (i) is a necessary condition. Suppose (i) is not true and  $\pi_{NoN}(\tilde{p}^*_{NoN},\tilde{p}_{t,2})\leq \pi^{eq}_{NoN}$. Then, either (ii) is true or not. If (ii) is not true, in the previous paragraph, we prove that  no  $p'_{NoN}$ if Region B can be a profitable deviation, which yields the result. Now, consider the case that (ii) holds. In this case, by concavity of the payoff, $p^*_{NoN}$ yields the highest payoff among $p_{NoN}$'s in Region B. Thus, $\pi_{NoN}(\tilde{p}^*_{NoN},\tilde{p}_{t,2})\leq \pi^{eq}_{NoN}$ yields that a deviation to Region B cannot be profitable. This completes the proof of the claim.
	
	Thus, a deviation to region B is profitable if and only if (i) $\pi_{NoN}(\tilde{p}^*_{NoN},\tilde{p}_{t,2})>\pi^{eq}_{NoN}$ and (ii)  $\kappa_u \tilde{q}_{p}-t_{NoN}<p^*_{NoN}-c<t_N+\kappa_u \tilde{q}_{p}$. First we check (i) and then (ii). Using \eqref{equ:payoff_deviation_A}, \eqref{equ:equ:p'}, and the expressions of $n_{NoN}$, we find the payoff of ISP NoN after deviation and compare it to the value of \eqref{equ:payoff_NoN_eq_1_new}. We claim that (i) is always true unless $\tilde{q}_{p}=\frac{t_N+2t_{NoN}}{\kappa_u+\kappa_{ad}}$. Note that:
	$$
	\ba
	&\pi_{NoN}(\tilde{p}^*_{NoN},\tilde{p}_{t,2})\geq \pi^{eq}_{NoN}\\
	&\iff \frac{(t_N+\tilde{q}_{p}(\kappa_{ad}+\kappa_u))^2}{4(t_N+t_{NoN})} \geq  \tilde{q}_{p}(\kappa_u+\kappa_{ad})-t_{NoN} \\
	&\iff \big{(}(\kappa_u+\kappa_{ad})\tilde{q}_{p}-t_N-2t_{NoN}\big{)}^2\geq 0
	\ea
	$$
	Thus, (i) is true if and only if $\tilde{q}_{p}\neq \frac{t_N+2t_{NoN}}{\kappa_u+\kappa_{ad}}$.

	Now, we check (ii). Note that $p^*_{NoN}-c<t_N+\kappa_u \tilde{q}_{p}$ since:
	$$
	\ba
	p^*_{NoN}-c<t_N+\kappa_u \tilde{q}_{p} \iff \tilde{q}_{p}(\kappa_u+\kappa_{ad})>-t_N
	\ea
	$$
	is always true. Now, we should check the lowerbound, i.e. $\kappa_u \tilde{q}_{p}-t_{NoN}<p^*_{NoN}-c$:
	$$
	\kappa_u \tilde{q}_{p}-t_{NoN}<p^*_{NoN}-c\iff \tilde{q}_{p}(\kappa_u+\kappa_{ad})<t_N+2t_{NoN}
	$$
	which is true if and only if $\tilde{q}_{p}<\frac{t_N+2t_{NoN}}{\kappa_u+\kappa_{ad}}$.
	
	Now, using the conditions for (i) and (ii) to be true, we can say that (i) and (ii) are true if and only if  $\kappa_u \tilde{q}_{p}-t_{NoN}<p^*_{NoN}-c$. Thus, there is no profitable deviation to region B if and only if $\tilde{q}_{p}\geq \frac{t_N+2t_{NoN}}{\kappa_u+\kappa_{ad}}$.
	
	
	This completes the proof of item 1 of theorem that $p^{eq}_{NoN}=c+\kappa_u \tilde{q}_{p}-t_{NoN}$ and $p^{eq}_N=c$ are NE strategies  if and only if $\tilde{q}_{p}\geq \frac{t_N+2t_{NoN}}{\kappa_u+\kappa_{ad}}$.
	
	\textbf{Case B:} Now, we characterize any possible NE strategies in region B, i.e. $\kappa_u \tilde{q}_{p}-t_{NoN}<\Delta p<t_N+\kappa_u \tilde{q}_{p}$, by which $z^{eq}=1$. First, in case B-1 we prove that if $z^{eq}=1$, the only possible NE in this region is  $p^{eq}_{NoN}=c+\frac{t_{NoN}+2t_N+  \tilde{q}_{p}(\kappa_u -2\kappa_{ad})}{3}$ and $p^{eq}_{N}=c+\frac{2t_{NoN}+t_N-\tilde{q}_{p}(\kappa_u +\kappa_{ad})}{3}$. We also prove that the necessary condition for these strategies to be a NE is $\tilde{q}_{p}< \frac{t_N+2t_{NoN}}{\kappa_u+\kappa_{ad}}$, and verify that these strategies yield $z^{eq}=1$.
	In case B-2, we characterize the necessary and sufficient condition by which
	these is no unilateral profitable deviation for ISPs.
	
	\textbf{Case B-1:} Note that in this region,  by item 3 of Theorem~\ref{theorem:p_tilde_new}, if $z^{eq}=1$, then $(q^{eq}_N,q^{eq}_{NoN})=(0,\tilde{q}_{p})\in F^I_1$. In addition, by Theorem~\ref{theorem:NE_stage2_new_suff}, $\tilde{p}^{eq}=\tilde{p}_{t,2}=\kappa_{ad} (n_{NoN}-\frac{\tilde{q}_f}{\tilde{q}_{p}})$ and $n_{NoN}=\frac{t_N+\kappa_u \tilde{q}_{p}-\Delta p}{t_N+t_{NoN}}$ (by \eqref{equ:EUs_linear}). Thus, by \eqref{equ:payoffISPsGeneral_new}, the payoff of ISP NoN in this region is $\pi_{NoN,B}(p_{NoN},\tilde{p}_{t,2})=(p_{NoN}-c)n_{NoN}+\tilde{p}_{t,2} \tilde{q}_{p}$, and the payoff of ISP N is  $\pi_{N,B}=(p_{N}-c)(1-n_{NoN})$. Note that $\tilde{p}_{t,2} \tilde{q}_{p}=\kappa_{ad}(\tilde{q}_{p} n_{NoN}-\tilde{q}_f)$. Thus, using the expression of $n_{NoN}$, the payoffs are:
	\be \label{equ:help:theorem_B}
	\small
	\pi_{NoN,B}=(p_{NoN}-c+\kappa_{ad}\tilde{q}_{p})(\frac{t_N+\kappa_u \tilde{q}_{p}+p_N-p_{NoN}}{t_N+t_{NoN}})-\kappa_{ad}\tilde{q}_f
	\ee
	\be
	\pi_{N,B}=(p_N-c)(\frac{t_{NoN}-\kappa_u \tilde{q}_{p}+p_{NoN}-p_N}{t_N+t_{NoN}})
	\ee
	Note that any NE inside this region should satisfy the first order optimality condition (note that the payoffs are concave). Thus,
	\be
	\ba
	\frac{d \pi_N}{d p_{N}}=0& \Rightarrow t_{NoN}-\kappa_u \tilde{q}_{p} + p_{NoN}-2 p_N+c=0\\
	\frac{d \pi_{NoN,B}}{d p_{NoN}}=0& \Rightarrow t_N+\tilde{q}_{p}(\kappa_u-\kappa_{ad})+p_N-2 p_{NoN}+c=0
	\ea
	\ee
	\normalsize
	
	Solving the equation yields:
	\be \label{equ:equ:B>_eq_pNoN_new}
	p^{eq}_{NoN}=c+\frac{t_{NoN}+2t_N+  \tilde{q}_{p}(\kappa_u -2\kappa_{ad})}{3}
	\ee
	\be \label{equ:equ:B>_eq_pN_new}
	p^{eq}_{N}=c+\frac{2t_{NoN}+t_N-\tilde{q}_{p}(\kappa_u +\kappa_{ad})}{3}
	\ee

	The equilibrium payoffs for ISP are:
	\be \label{equ:equ:B>_eq_piNoN_new}
	\pi^{eq}_{NoN}=\frac{\big{(}t_{NoN}+2t_N+  \tilde{q}_{p}(\kappa_u +\kappa_{ad})\big{)}^2}{9(t_N+t_{NoN})}-\kappa_{ad}\tilde{q}_f
	\ee
	\be \label{equ:equ:B>_eq_piN_new}
	\pi^{eq}_{N}=\frac{\big{(}2t_{NoN}+t_N-\tilde{q}_{p}(\kappa_u +\kappa_{ad})\big{)}^2}{9(t_N+t_{NoN})}
	\ee

	Now, we check the necessary conditions for these strategies to be NE. First, note that if $\tilde{q}_{p}>\frac{2t_{NoN}+t_N}{\kappa_u+\kappa_{ad}}$, then $p^{eq}_N<c$, and $p^{eq}_N$ cannot be an NE. Thus, the first necessary condition for these strategies to be NE is $\tilde{q}_{p}\leq \frac{2t_{NoN}+t_N}{\kappa_u+\kappa_{ad}}$. The next necessary condition is that $\Delta p^{eq}=p^{eq}_{NoN}-p^{eq}_N$ to be in region B, i.e.  $\kappa_u \tilde{q}_{p}-t_{NoN}<\Delta p^{eq}<t_N+\kappa_u \tilde{q}_{p}$. We claim that the upperbound always holds. To prove this consider:
	$$
	\ba
	\Delta p^{eq}<t_N+\kappa_u \tilde{q}_{p} &\iff 2t_N+t_{NoN}+\tilde{q}_{p}(\kappa_u +\kappa_{ad})>0
	\ea
	$$
	which is always true. Now, we check the lower bound:
	$$
	\ba
	\kappa_u \tilde{q}_{p}-t_{NoN}<\Delta p^{eq} &\iff \kappa_u \tilde{q}_{p}-t_{NoN}\\
	&\qquad <\frac{1}{3}(t_N-t_{NoN}+\tilde{q}_{p}(2\kappa_u-\kappa_{ad}))\\
	& \iff \tilde{q}_{p}<\frac{t_N+2t_{NoN}}{\kappa_u +\kappa_{ad}}
	\ea
	$$
	Thus, this necessary condition together with the previous necessary condition yields that if $p^{eq}_N$ and $p^{eq}_{NoN}$, NE strategies, then $\tilde{q}_{p}<\frac{t_N+2t_{NoN}}{\kappa_u +\kappa_{ad}}$.
	
	In addition, note that by Lemma~\ref{lemma:appendix_caseB}, $p^{eq}_N$ and $p^{eq}_{NoN}$ indeed yields $z^{eq}=1$.
	
	Thus, if  $\tilde{q}_{p}<\frac{t_N+2t_{NoN}}{\kappa_u +\kappa_{ad}}$, then the  NE  strategies \emph{can} be $p^{eq}_N$ and $p^{eq}_{NoN}$. To prove that these strategies are NE, we should rule out the possibility of a unilateral profitable deviation by both ISPs which we proceed to do in the next case.
	
	\textbf{Case B-2:} In this case, we consider the possibility of a unilateral deviation by ISPs. First, in Case B-2-i, we rule out the possibility of a profitable deviation by the non-neutral ISP, and then in Case B-2-ii, we provide necessary and sufficient condition for a unilateral deviation to be not profitable for the neural ISP.
	
	\textbf{Case B-2-i: } A deviation by the non-neutral ISP can be to regions A, C, and other prices in region B. In the following cases, we prove that a deviation by ISP NoN to each of these regions si not profitable:
	
	\textbf{Case B-2-i-A:} Consider  $p^{eq}_N$ fixed and decreasing $p_{NoN}$ such that $\Delta p$  in regions A, i.e. $\Delta p\leq \kappa_u \tilde{q}_{p}-t_{NoN}$. Note that in A the payoff of the ISP NoN is an increasing function of $p_{NoN}$ (as discussed in Case A). Thus, all other prices are dominated by $p'_{NoN}={p}^{eq}_N+\kappa_u \tilde{q}_{p}-t_{NoN}$. The payoff in this case is $\pi'_{NoN}={p}^{eq}_N+\kappa_u \tilde{q}_{p}-t_{NoN}-c+z\tilde{q}_{p}\tilde{p}_{t,1}$ (by \eqref{equ:payoffISPsGeneral_new}), and $\tilde{p}_{t,1}=\kappa_{ad}(1-\frac{\tilde{q}_f}{\tilde{q}_p})$ (by definition~\ref{def:pt1,pt2}). We claim that this deviation is not profitable for ISP NoN, since:
	$$
	\footnotesize
	\ba
	&	\pi_{NoN}(p_{NoN},\tilde{p}_{t,2})\geq  {p}^{eq}_N-c+ \kappa_u\tilde{q}_{p}-t_{NoN}+\kappa_{ad}(\tilde{q}_{p}-\tilde{q}_f)\\
	&\iff \frac{\big{(}t_{NoN}+2t_N+  \tilde{q}_{p}(\kappa_u +\kappa_{ad})\big{)}^2}{9(t_N+t_{NoN})} \geq  \frac{t_N-t_{NoN}+2\tilde{q}_{p}(\kappa_u +\kappa_{ad})}{3}\\
	& \iff (\tilde{q}_{p}(\kappa_u+\kappa_{ad})-t_N-2t_{NoN})^2\geq 0
	\ea
	$$
	\normalsize
	which is true always. Thus, no deviation is profitable for ISP NoN.
	
	\textbf{Case B-2-i-B:} Now, consider a deviation by ISP NoN inside region B. By optimality of the solution inside B, if $z^{eq}=1$, since $p_N={p}^{eq}_N$ is fixed, all other $p_{NoN}$ such that $\Delta p$ in region B is dominated in payoff by $p_{NoN}={p}^{eq}_{NoN}$. If $p_{NoN}$ is such that $z^{eq}=0$, then $n_{NoN}=0$ (by item 2 of Theorem~\ref{lemma:CP_z=0_new} and $\kappa_u \tilde{q}_{p}-t_{NoN}\geq t_N$). Thus, the payoff of ISP NoN is zero and this deviation is also not profitable.
	
	\textbf{Case B-2-i-C:} In this case, consider a deviation to region C, i.e. $\Delta p\geq t_N+\kappa_u \tilde{q}_p$. Fixing $p^{eq}_N$ and increasing $p_{NoN}$ such that $\Delta p$  in regions C yields a payoff of zero to ISP NoN (since by item 4 of Theorem~\ref{theorem:p_tilde_new}, $z^{eq}=0$ in this region, and by Theorem~\ref{lemma:CP_z=0_new}, $n^{eq}_{NoN}=0$.). Thus, this deviation is also not profitable.
	
	\textbf{Case B-2-ii:} Now, consider a unilateral deviation by the non-neutral ISP. Similar to the case B-2-i, this deviation can be to regions A, C, and other prices in region B:
	
	\textbf{Case B-2-ii-A: } In this case, we consider the possibility of a deviation by ISP N to region A, i.e. $\Delta p\leq \kappa_u \tilde{q}_p-t_{NoN}$. Note that in region A, $\pi_{NoN}(p_{NoN},\tilde{p}_{t,1})>\pi_{NoN,z=0}(p_{NoN},\tilde{p})$, where $\pi_{NoN,z=0}(p_{NoN},\tilde{p})$ is the payoff of ISP NoN when $z^{eq}=0$. To prove this note that by $\tilde{q}_p\tilde{p}_{t,1}=\kappa_{ad}(\tilde{q}_p-\tilde{q}_f)>0$, we can write:
	$$
	\ba
	\pi_{NoN}(p_{NoN},\tilde{p}_{t,1})&=p_{NoN}-c+\tilde{q}_p\tilde{p}_{t,1}\\
	&>p_{NoN}-c>\pi_{NoN,z=0}(p_{NoN},\tilde{p})
	\ea
	$$
	Thus, in region A, $\pi_{NoN}(p_{NoN},\tilde{p}_{t,1})>\pi_{NoN,z=0}(p_{NoN},\tilde{p})$, and by Theorem~\ref{theorem:NE_stage2_new_suff}, $z^{eq}=1$. Thus, using Theorem~\ref{theorem:p_tilde_new}, ib this region $n_{NoN}=1$. Therefore, $n_N=0$, and by \eqref{equ:payoffISPsGeneral_new}, the payoff of ISP N is zero. Thus, a deviation to this region is not profitable.
	
	\textbf{Case B-2-ii-B:} Now, consider a deviation inside region B by ISP N. If $z^{eq}=1$, by optimality of the solution inside B (since $p_N={p}^{eq}_N$ is fixed) all other $p_{N}$ such that $\Delta p$ in region B is dominated in payoff by $p_{N}={p}^{eq}_{N}$.

	Now, consider the case that $p_{N}$ is such that $z^{eq}=0$. In this case, $n_{NoN}=0$ (by item 2 of Theorem~\ref{lemma:CP_z=0_new} and $\kappa_u \tilde{q}_{p}-t_{NoN}\geq t_N$), and such a deviation might be profitable.
	
	
	In order to have $z^{eq}=0$, by Theorem~\ref{theorem:NE_stage2_new_suff}, $\pi_{NoN}(p^{eq}_{NoN},\tilde{p}_{t,2})\leq \pi_{NoN,z=0}(p^{eq}_{NoN},\tilde{p})$, where $\pi_{NoN,z=0}(p^{eq}_{NoN},\tilde{p})$ is the payoff when $z^{eq}=0$. Note that by the assumption of the theorem ($\kappa_u\tilde{q}_p\geq t_N+t_{NoN}$), and in this region $\Delta p>\kappa_u \tilde{q}_p-t_{NoN}\geq t_N$. Thus, by Theorem~\ref{lemma:CP_z=0_new}, if $z^{eq}=0$, then $n_{NoN}=0$. Therefore, by \eqref{equ:payoffISPsGeneral_new}, $\pi_{NoN,z=0}(p^{eq}_{NoN},\tilde{p})=0$. Using \eqref{equ:help:theorem_B}, we can find $\pi_{NoN}(p^{eq}_{NoN},\tilde{p}_{t,2})$, and compare the payoffs:
	\small
	$$
	\ba
	&\pi_{NoN}(p^{eq}_{NoN},\tilde{p}_{t,2})\leq \pi_{NoN,z=0}(p^{eq}_{NoN},\tilde{p})\iff\\
	&(p^{eq}_{NoN}-c+\kappa_{ad}\tilde{q}_{p})(\frac{t_N+\kappa_u \tilde{q}_{p}+p'_N-p^{eq}_{NoN}}{t_N+t_{NoN}})-\kappa_{ad}\tilde{q}_f\leq 0\\
	& \iff p'_N\leq \frac{\kappa_{ad}\tilde{q}_f (t_N+t_{NoN})}{p^{eq}_{NoN}-c+\kappa_{ad} \tilde{q}_p}+p^{eq}_{NoN}-t_{NoN}-\kappa_u \tilde{q}_p=p^d_{t}
	\ea
	$$
	\normalsize
	
	Therefore, a deviation is only profitable if $p'_N\leq p^d_t$. If this condition holds, we need to check whether this deviation is indeed profitable. Note that in region B, if $z^{eq}=0$, (as explained before) by Theorem~\ref{lemma:CP_z=0_new}, $n_N=1$. Thus, by \eqref{equ:payoffISPsGeneral_new}, the payoff of ISP N is an increasing function of $p_N$, and is equal to $p'_N-c$. Thus, $p'_N=p^d_t$ yields the maximum payoff after deviation. Therefore, such a  deviation is not profitable if and only if $p^d_t-c\leq \pi_N(p^{eq}_N)$.

	\textbf{Case B-2-ii-C:} Now, consider a deviation by ISP N to region C, i.e. $\Delta p\geq \kappa_u \tilde{q}_p+t_N$. Note that in region C, $z^{eq}=0$, and by item 2 of Theorem~\ref{lemma:CP_z=0_new}, $n_N=1$. Thus, the payoff of ISP N \eqref{equ:payoffISPsGeneral_new} is an increasing function of $p_N$. Thus, $p'_N=p^{eq}_{NoN}-\kappa_u \tilde{q}_p-t_N$ (by definition of region C) yields the highest payoff after deviation. Note that by \eqref{equ:equ:B>_eq_pNoN_new}, $p^{eq}_{NoN}=c+\frac{t_{NoN}+2t_N+  \tilde{q}_{p}(\kappa_u -2\kappa_{ad})}{3}
	$. Therefore, $p'_N=c+\frac{t_{NoN}-t_N-2\tilde{q}_p (\kappa_u+\kappa_{ad}) }{3}$. In addition, note that by the assumption of the theorem, $\kappa_u\tilde{q}_p\geq t_N+t_{NoN}$. Thus, $p'_N<c$, and by \eqref{equ:payoffISPsGeneral_new}, the payoff of neutral ISP is negative. Thus, this deviation is  not profitable.
	
	Therefore, we only need to check the condition in Case B-2-ii-B for ruling out profitable deviations.  This is item 2 of the theorem. The theorem follows.
\end{proof}

\section{Proof of Theorem~\ref{theorem:NE_stage1_new_q<}}\label{appendix:theorem:NE_stage1_new_q<}
We first state and prove a Lemma that we use to prove Theorem~\ref{theorem:NE_stage1_new_q<}.


We now prove Theorem~\ref{theorem:NE_stage1_new_q<}:
\begin{proof}
Part~(\ref{i1}) follows from Theorem~\ref{theorem:NE_stage1_new_q>}-1 because the region   $ t_N+2t_{NoN} \leq \tilde{q}_{p}(\kappa_u+\kappa_{ad})$ and $t_{NoN} < \kappa_u \tilde{q}_p+\kappa_{ad}(\tilde{q}_p-\tilde{q}_f)$ in general has a non-empty intersection with that represented by $t_N+t_{NoN} >  \kappa_u \tilde{q}_p$. Clearly the access fees in part~(\ref{i1}) satisfy  $\Delta p\leq \kappa_u \tilde{q}_{p}-t_{NoN}$, and  were obtained in the proof of Theorem~\ref{theorem:NE_stage1_new_q>}-1 (in Appendix ) by considering   $\Delta p\leq \kappa_u \tilde{q}_{p}-t_{NoN}$, which we denote as  region A.

  Moving from region A, we denote $\kappa_u \tilde{q}_{p}-t_{NoN}<\Delta p<\kappa_u (2\tilde{q}_{p}-\tilde{q}_f)-t_{NoN}$ by region $B_1$, $\kappa_u (2\tilde{q}_{p}-\tilde{q}_f)-t_{NoN}\leq \Delta p<t_N+\kappa_u(\tilde{q}_{p}-\tilde{q}_f)$ by region C, $t_N+\kappa_u(\tilde{q}_{p}-\tilde{q}_f)\leq \Delta p<t_N+\kappa_u \tilde{q}_{p}$ by set $B_2$, and $\Delta p\geq t_N+\kappa_u \tilde{q}_{p}$ by D. Using Theorem~\ref{theorem:p_tilde_new}, if $z^{eq}=1$, then $\Delta p<t_N+\kappa_u \tilde{q}_{p}$. Thus, we characterize any possible SPNE strategies by which $z^{eq}=1$, in regions $B_1$, $C$, and $B_2$:


Note that by Lemma~\ref{lemma:deltap_t}, since $\tilde{q}_{p}< \frac{t_N+t_{NoN}}{\kappa_u}$,  $\kappa_u \tilde{q}_{p}-t_{NoN} <  \Delta p_t < t_N+\kappa_u(\tilde{q}_{p}-\tilde{q}_f)$, where $\Delta p_{t}=\kappa_{u}(2\tilde{q}_{p}-\tilde{q}_f)-t_{NoN}$. Also, $\tilde{q}_f<\tilde{q}_{p}<\frac{t_N+t_{NoN}}{\kappa_u}$. Thus, using this result, we characterize  the regions characterized in item 2  of Theorem~\ref{theorem:p_tilde_new}.

\textbf{Cases $B_1$ and $B_2$:} Now, consider regions $B_1$ and $B_2$, i.e. $\kappa_u \tilde{q}_{p}-t_{NoN}<\Delta p<\kappa_u (2\tilde{q}_{p}-\tilde{q}_f)-t_{NoN}$ and $t_N+\kappa_u(\tilde{q}_{p}-\tilde{q}_f)\leq \Delta p<t_N+\kappa_u \tilde{q}_{p}$, respectively.

Note that in these regions,  by items 2-a-ii and 2-b of Theorem~\ref{theorem:p_tilde_new}, if $z^{eq}=1$, then $(q^{eq}_N,q^{eq}_{NoN})=(0,\tilde{q}_{p})\in F^I_1$. In addition, by Theorem~\ref{theorem:NE_stage2_new_suff}, $\tilde{p}^{eq}=\tilde{p}_{t,2}=\kappa_{ad} (n_{NoN}-\frac{\tilde{q}_f}{\tilde{q}_{p}})$ and $n_{NoN}=\frac{t_N+\kappa_u \tilde{q}_{p}-\Delta p}{t_N+t_{NoN}}$ (Definition~\ref{def:pt1,pt2}). Thus, by \eqref{equ:payoffISPsGeneral_new}, the payoff of ISP NoN in this region is $\pi_{NoN,B}(p_{NoN},\tilde{p}_{t,2})=(p_{NoN}-c)n_{NoN}+\tilde{p}_{t,2} \tilde{q}_{p}$, and the payoff of ISP N is  $\pi_{N,B}=(p_{N}-c)(1-n_{NoN})$. Note that $\tilde{p}_{t,2} \tilde{q}_{p}=\kappa_{ad}(\tilde{q}_{p} n_{NoN}-\tilde{q}_f)$. Thus, using the expression of $n_{NoN}$, the payoffs are:

\footnotesize
\be\label{equ:equ:Theorem7_help_2}
\ba
\pi_{NoN,B}(p_{NoN},\tilde{p}_{t,2})&=(p_{NoN}-c+\kappa_{ad}\tilde{q}_{p})(\frac{t_N+\kappa_u \tilde{q}_{p}+p_N-p_{NoN}}{t_N+t_{NoN}})\\
&\qquad \qquad \qquad -\kappa_{ad}\tilde{q}_f
\ea
\ee
\be
\pi_{N,B}(p_N)=(p_N-c)(\frac{t_{NoN}-\kappa_u \tilde{q}_{p}+p_{NoN}-p_N}{t_N+t_{NoN}})
\ee
\normalsize

First, we rule out any SPNE such that $\Delta p^{eq}=t_N+\kappa_u(\tilde{q}_{p}-\tilde{q}_f)$. Suppose that $\Delta p^{eq}=p^{eq}_{NoN}-p^{eq}_{N}=t_N+\kappa_u(\tilde{q}_{p}-\tilde{q}_f)$. Consider a deviation by ISP N such that $p'_N=p^{eq}_{N}+\epsilon>c$ for $\epsilon>0$ such that $\Delta p'=p^{eq}_{NoN}-p'_N$ to be in region C. Note that by item 2-a-i of Theorem~\ref{theorem:p_tilde_new}, in region C, $(q^{eq}_N,q^{eq}_{NoN})=(\tilde{q}_f,\tilde{q}_p)\in F_1^I$. Thus, the payoff of this ISP with this deviation is (by \eqref{equ:UN_new}):

\small
\be \nonumber
\pi_{N}(p'_N)=(p^{eq}_N+\epsilon-c)(\frac{t_{NoN}-\kappa_u (\tilde{q}_{p}-\tilde{q}_f)+p^{eq}_{NoN}-p^{eq}_N-\epsilon}{t_N+t_{NoN}})
\ee
\normalsize

Note that $\lim_{\epsilon\downarrow 0} \pi_N(p'_N)>\pi_{N,B}(p^{eq}_N)$. Thus, for $\epsilon>0$ small enough, this deviation is profitable. Thus, the strategies by which $\Delta p^{eq}=t_N+\kappa_u(\tilde{q}_{p}-\tilde{q}_f)$ cannot be SPNE.

Now, we characterize any SPNE in    $\kappa_u \tilde{q}_{p}-t_{NoN}<\Delta p<\kappa_u (2\tilde{q}_{p}-\tilde{q}_f)-t_{NoN}$ and $t_N+\kappa_u(\tilde{q}_{p}-\tilde{q}_f)<\Delta p<t_N+\kappa_u \tilde{q}_{p}$. Any SPNE inside this region should satisfy the first order optimality condition (note that the payoffs are concave). Thus,

\small
\be
\ba
 \frac{d \pi_N}{d p_{N}}=0& \Rightarrow t_{NoN}-\kappa_u \tilde{q}_{p} + p_{NoN}-2 p_N+c=0\\
\frac{d \pi_{NoN,B}}{d p_{NoN}}=0& \Rightarrow t_N+\tilde{q}_{p}(\kappa_u-\kappa_{ad})+p_N-2 p_{NoN}+c=0
\ea
\ee
\normalsize

Solving the equation yields:

\small
\be 
p^{eq}_{NoN}=c+\frac{t_{NoN}+2t_N+  \tilde{q}_{p}(\kappa_u -2\kappa_{ad})}{3}
\ee
\be 
p^{eq}_{N}=c+\frac{2t_{NoN}+t_N-\tilde{q}_{p}(\kappa_u +\kappa_{ad})}{3}
\ee
\normalsize


If $\tilde{q}_{p}>\frac{2t_{NoN}+t_N}{\kappa_u+\kappa_{ad}}$, then $p^{eq}_N<c$, and $p^{eq}_N$ cannot be an SPNE. Thus, the first necessary condition for these strategies to be SPNE is  $\tilde{q}_{p}\leq \frac{2t_{NoN}+t_N}{\kappa_u+\kappa_{ad}}$.
 In addition, by Theorem~\ref{theorem:NE_stage2_new_suff},\footnote{Note that in Regions $B_1$ and $B_2$, $\Delta p<t_N+\kappa_u \tilde{q}_p$.} $\pi_{NoN}(p_{NoN}^{eq}, \tilde{p}_{t,2})>\pi_{NoN,z=0}(\tilde{p}^{eq}_{NoN},\tilde{p})$ (for these strategies to yield $z^{eq}=1$).  The second item of the theorem follows.

\textbf{Case C: } Now, consider region C, i.e. $\Delta p_t=\kappa_u (2\tilde{q}_{p}-\tilde{q}_f)-t_{NoN}\leq \Delta p<t_N+\kappa_u (\tilde{q}_{p}-\tilde{q}_f)$. Note that in this regions,  by items 2-a-i of Theorem~\ref{theorem:p_tilde_new}, if $z^{eq}=1$, then $(q^{eq}_N,q^{eq}_{NoN})=(\tilde{q}_f,\tilde{q}_{p})\in F^I_1$. In addition, by Theorem~\ref{theorem:NE_stage2_new_suff} and Definition \ref{def:pt}, $\tilde{p}^{eq}=\tilde{p}_{t,3}=\kappa_{ad}n_{NoN} (1-\frac{\tilde{q}_f}{\tilde{q}_{p}})$ and $n_{NoN}=\frac{t_N+\kappa_u (\tilde{q}_{p}-\tilde{q}_f)-\Delta p}{t_N+t_{NoN}}$ (Definition~\ref{def:pt1,pt2}). Thus, by \eqref{equ:payoffISPsGeneral_new}, the payoff of ISP NoN in this region is $\pi_{NoN,C}(p_{NoN},\tilde{p}_{t,3})=(p_{NoN}-c)n_{NoN}+\tilde{p}_{t,3} \tilde{q}_{p}$, and the payoff of ISP N is  $\pi_{N,B}=(p_{N}-c)(1-n_{NoN})$. 
Note that $\tilde{p}_{t,3} \tilde{q}_{p}=\kappa_{ad}n_{NoN}(\tilde{q}_{p} -\tilde{q}_f)$. Thus, using the expression of $n_{NoN}$, the payoffs are:

\footnotesize
\be\label{equ:equ:Theorem7_help}
\ba
\pi_{NoN,C}&(p_{NoN},\tilde{p}_{t,3})=\\
&(p_{NoN}-c+\kappa_{ad}(\tilde{q}_{p}-\tilde{q}_f))(\frac{t_N+\kappa_u (\tilde{q}_{p}-\tilde{q}_f)+p_N-p_{NoN}}{t_N+t_{NoN}})
\ea
\ee
\be\label{equ:equ:Theorem7_help_5}
\pi_{N,C}(p_N)=(p_N-c)(\frac{t_{NoN}-\kappa_u (\tilde{q}_{p}-\tilde{q}_f)+p_{NoN}-p_N}{t_N+t_{NoN}})
\ee
\normalsize

First, in Part C-1, we characterize any SPNE in region  $\kappa_u (2\tilde{q}_{p}-\tilde{q}_f)-t_{NoN}< \Delta p<t_N+\kappa_u (\tilde{q}_{p}-\tilde{q}_f)$. Later, in Part C-2, we consider the case that $\Delta p^{eq}=\kappa_u(2\tilde{q}_{p}-\tilde{q}_f)-t_{NoN}$.

\textbf{Part C-1:}
 Note that any SPNE in region  $\kappa_u (2\tilde{q}_{p}-\tilde{q}_f)-t_{NoN}< \Delta p<t_N+\kappa_u (\tilde{q}_{p}-\tilde{q}_f)$ should satisfy the first order optimality condition (note that the payoffs are concave). Thus,

 \small
\be
\ba \label{equ:equ:Theorem7_help_4}
 \frac{d \pi_{N,C}}{d p_{N}}=0& \Rightarrow t_{NoN}-\kappa_u (\tilde{q}_{p}-\tilde{q}_f) + p_{NoN}-2 p_N+c=0\\
\frac{d \pi_{NoN,C}}{d p_{NoN}}=0& \Rightarrow t_N+(\tilde{q}_{p}-\tilde{q}_f)(\kappa_u-\kappa_{ad})+p_N-2 p_{NoN}+c=0
\ea
\ee
\normalsize

Solving the equation yields:

\small
\be 
p^{eq}_{NoN}=c+\frac{t_{NoN}+2t_N+  (\tilde{q}_{p}-\tilde{q}_f)(\kappa_u -2\kappa_{ad})}{3}
\ee
\be 
p^{eq}_{N}=c+\frac{2t_{NoN}+t_N-(\tilde{q}_{p}-\tilde{q}_f)(\kappa_u +\kappa_{ad})}{3}
\ee
\normalsize

First, note that if $\tilde{q}_{p}-\tilde{q}_f>\frac{2t_{NoN}+t_N}{\kappa_u+\kappa_{ad}}$, then $p^{eq}_N<c$, and $p^{eq}_N$ cannot be an SPNE. Thus, the necessary condition for these strategies to be SPNE is $\tilde{q}_{p}-\tilde{q}_f\leq \frac{2t_{NoN}+t_N}{\kappa_u+\kappa_{ad}}$.
In addition, by Theorem~\ref{theorem:NE_stage2_new_suff},   $\pi^{eq}_{NoN}(\tilde{p}^{eq}_{NoN},\tilde{p}_{t,3})>\pi_{NoN,z=0}(\tilde{p}^{eq}_{NoN},\tilde{p})$ (in order for these strategies to yield $z^{eq}=1$).
 The third item of the theorem follows.


\textbf{Part C-2:} Now, consider $p^{eq}_N$ and $p^{eq}_{NoN}$ such that $\Delta p^{eq}=p^{eq}_{NoN}-p^{eq}_{N}=\kappa_u(2\tilde{q}_{p}-\tilde{q}_f)-t_{NoN}$. These strategies are not SPNE if ISP NoN can strictly increase her payoff by decreasing her price such that $\Delta p$ in region $B_1$. Note that using \eqref{equ:equ:Theorem7_help} and the expression for $
\Delta p^{eq}$, NoN's payoff  is:

\vspace{-3mm}
\small
\be\label{equ:equ:Theorem7_help_3}
\pi_{NoN}(p^{eq}_{NoN},\tilde{p}_{t,3})=(p_{NoN}-c+\kappa_{ad}(\tilde{q}_{p}-\tilde{q}_f))(\frac{t_N-\kappa_u \tilde{q}_{p}+t_{NoN}}{t_N+t_{NoN}})
\ee
\normalsize

By choosing $p'_{NoN}=p^{eq}_{NoN}-\epsilon$ such that $\epsilon\downarrow 0$, ISP NoN can get a limit payoff of (since $\Delta p=\Delta p^{eq}$ when $\epsilon\rightarrow 0$, and it is in  region $B_1$, and using \eqref{equ:equ:Theorem7_help_2}):

\footnotesize
 $$
 \ba
\pi'_{NoN}&=\lim_{\epsilon\downarrow 0} \pi^{eq}_{NoN}(p_{NoN}-\epsilon,\tilde{p}_{t,3})\\
&=(p_{NoN}^{eq}-c+\kappa_{ad}\tilde{q}_{p})(\frac{t_N-\kappa_u (\tilde{q}_{p}-\tilde{q}_f)+t_{NoN}}{t_N+t_{NoN}})-\kappa_{ad}\tilde{q}_f
\ea
$$
\normalsize

Thus, $p^{eq}_N$ and $p^{eq}_{NoN}$ such that $\Delta p^{eq}=p^{eq}_{NoN}-p^{eq}_{N}=\kappa_u(2\tilde{q}_{p}-\tilde{q}_f)-t_{NoN}$ are not SPNE if:
\footnotesize
$$
\ba
\pi'_{NoN}&>\pi_{NoN}(p^{eq}_{NoN},\tilde{p}_{t,3})\\
&\iff (p^{eq}_{NoN}-c+\kappa_{ad}\tilde{q}_{p})\frac{\kappa_u \tilde{q}_f}{t_N+t_{NoN}}-\frac{\kappa_{ad}\kappa_u\tilde{q}_f\tilde{q}_{p}}{t_N+t_{NoN}}>0\\
&\iff p^{eq}_{NoN}>c
\ea
$$
\normalsize
Thus, the necessary condition for these strategy to be SPNE is $p^{eq}_{NoN}\leq c$.
 Note that from \eqref{equ:equ:Theorem7_help} and  \eqref{equ:equ:Theorem7_help_5}, since $\Delta p$ is fixed, the payoffs of ISP NoN and N are an increasing function of $p_{NoN}$ and $p_{N}$, respectively. Thus, $p^{eq}_{NoN}=c$, and $p^{eq}_N=c-\kappa_u(2\tilde{q}_{p}-\tilde{q}_f)+t_{NoN}$. Note that a necessary condition for $p^{eq}_N$ to be an SPNE is that $p^{eq}_N\geq c$. Thus,   $2\tilde{q}_{p}-\tilde{q}_f\leq \frac{t_{NoN}}{\kappa_{u}}$ is a necessary condition.  In addition,  $\pi_{NoN}(\tilde{p}^{eq}_{NoN},\tilde{p}_{t,3})>\pi_{NoN,z=0}(\tilde{p}^{eq}_{NoN},\tilde{p})$ (using Theorem~\ref{theorem:NE_stage2_new_suff}, in order for these strategies to yield $z^{eq}=1$).
  The fourth item of the theorem follows.
\end{proof}

\section{Proof of Theorem~\ref{theorem:bigt}}\label{appendix:theorem:bigt}
\begin{proof}
	We use Theorem~\ref{theorem:NE_stage1_new_q<} to prove the result. First, in Part 1, we prove that when one of $t_N$ or $t_{NoN}$ is large, then  strategies 1), 2), and 4) listed in Theorem~\ref{theorem:NE_stage1_new_q<} are not NE. In Part 2, we prove that when one of $t_N$ or $t_{NoN}$ is high, then strategy 3) of Theorem~\ref{theorem:NE_stage1_new_q<} is an NE. This completes the proof of the theorem.
	
	\textbf{Part 1:} We prove that strategies 1), 2), and 4) listed in Theorem~\ref{theorem:NE_stage1_new_q<} are not NE in Parts 1-i, 1-ii, and 1-iii, respectively.
	
\textbf{Part 1-i: } In this part, we prove that, item 1 of Theorem~\ref{theorem:NE_stage1_new_q<}, i.e. $p^{eq}_{NoN}=c+\kappa_u \tilde{q}_p-t_{NoN}$ and $p^{eq}_N=c$ is not an NE. We do so in Parts 1-i-a and 1-i-b, by introducing a unilateral profitable deviation for ISP NoN for the cases that $t_{NoN}$ is large and  $t_{N}$ is large, respectively. Note that in this case, by item 1 of Theorem~\ref{theorem:p_tilde_new}, $(q^{eq}_N,q^{eq}_{NoN})\in (0,\tilde{q}_f)\in F^L_1$. Thus, $n_{NoN}=1$, and the payoff of ISP NoN  is (by \eqref{equ:payoffISPsGeneral_new}, Theorem~\ref{theorem:NE_stage2_new_suff}, and Definition \ref{def:pt1,pt2}):
\be \label{equ:theorem:larget1}
\pi_{NoN}=\kappa_u \tilde{q}_p-t_{NoN}+\kappa_{ad}(\tilde{q}_p-\tilde{q}_f)
\ee
\textbf{Part 1-i-a:} If $t_{NoN}$ is large, then \eqref{equ:theorem:larget1} would be less than zero. A deviation to price $p'_{NoN}=c$ yields a payoff of at least zero for the ISP NoN (by \eqref{equ:payoffISPsGeneral_new}). Thus, this is a profitable deviation. \\
\textbf{Part 1-i-b: } Now, consider $t_N$ to be large, and a deviation by ISP NoN such that $p'_{NoN}=\frac{1}{2}t_N$ (Thus, $\Delta p=p'_{NoN}-p^{eq}_N=\frac{1}{2}t_N-c$). Note that in this case, $\Delta p_t=\kappa_u(2\tilde{q}_p-\tilde{q}_f)-t_{NoN}<\Delta p<t_N+\kappa_u(\tilde{q}_p-\tilde{q}_f)$. Thus, by item 2-a-i of
Theorem~\ref{theorem:p_tilde_new}, $(q^{eq}_n,q^{eq}_{NoN})=(\tilde{q}_f,\tilde{q}_p)\in F^I_1$.
Thus, by \eqref{equ:payoffISPsGeneral_new}, the payoff of ISP NoN after deviation is  at least\footnote{Note that the payoff of NoN is equal to the maximum of the payoff when $\tilde{p}^{eq}=\tilde{p}_t$ and when $\tilde{p}^{eq}>\tilde{p}_t$, i.e. when $z^{eq}=0$.} (by the definition of $\tilde{p}_{t,3}$ in Definition~\ref{def:pt1,pt2} and Theorem~\ref{theorem:NE_stage2_new_suff}):
\be \label{equ:thoerem:larget2}
\pi'_{NoN}=\frac{1}{2}t_Nn_{NoN}+\kappa_{ad}n_{NoN}(\tilde{q}_p-\tilde{q}_f)
\ee
, where $n_{NoN}=\frac{\frac{1}{2}t_N+\kappa_u(\tilde{q}_p-\tilde{q}_f)+c}{t_N+t_{NoN}}$. Thus, for $t_N$ large, $n_{NoN}\rightarrow \frac{1}{2}$. Thus, comparing \eqref{equ:thoerem:larget2} with \eqref{equ:theorem:larget1} yields:
$$
\pi'_{NoN}=\frac{1}{4}t_N+\frac{1}{2}\kappa_{ad}(\tilde{q}_p-\tilde{q}_f)>\pi_{NoN} \qquad \text{since $t_N$ is large}
$$
Thus, this deviation is  profitable .

\textbf{Part 1-ii:} In this part, we prove that item 2 of Theorem~\ref{theorem:NE_stage1_new_q<}, i.e. $p^{eq}_{NoN}=c+\frac{t_{NoN}+2t_N+\tilde{q}_{p}(\kappa_u -2\kappa_{ad})}{3}$ and $p^{eq}_{N}=c+\frac{2t_{NoN}+t_N-\tilde{q}_{p}(\kappa_u +\kappa_{ad})}{3}$ is not an NE. We do so by proving that $\Delta p^{eq}$ does not satisfy   $\kappa_u \tilde{q}_{p}-t_{NoN}<\Delta p^{eq}<\kappa_u (2\tilde{q}_{p}-\tilde{q}_f)-t_{NoN}$ and $t_N+\kappa_u(\tilde{q}_{p}-\tilde{q}_f)<\Delta p^{eq}<t_N+\kappa_u \tilde{q}_{p}$,
in the cases that $t_{NoN}$ or $t_{N}$ is large.

First, note that:
\begin{equation}
\Delta p^{eq}=p^{eq}_{NoN}-p^{eq}_N=\frac{1}{3}(t_N-t_{NoN}+\tilde{q}_p(2\kappa_u -\kappa_{ad}))
\end{equation}
If $\Delta p^{eq}<\kappa_u(2\tilde{q}_p-\tilde{q}_f)-t_{NoN}$, then
$t_N+2t_{NoN}<3\kappa_u (2\tilde{q}_p-\tilde{q}_f)-\tilde{q}_p(2\kappa_u -\kappa_{ad})$, which is not correct when $t_{NoN}$  or $t_N$ is large. Thus, (a) $\Delta p^{eq}\geq \kappa_u(2\tilde{q}_p-\tilde{q}_f)-t_{NoN}$. In addition, if $t_N+\kappa_u(\tilde{q}_p-\tilde{q}_f)<\Delta p^{eq}$, then $2t_N+t_{NoN}<\tilde{q}_p(2\kappa_u -\kappa_{ad})-3\kappa_u(\tilde{q}_p-\tilde{q}_f)$,  which is not correct when $t_{NoN}$  or $t_N$ is large. Thus,   (b) $\Delta p^{eq}\leq t_N+\kappa_u(\tilde{q}_p-\tilde{q}_f)$. Therefore, (a) and (b) yields that $\Delta p^{eq}$ is not in the regions specified.  Thus, item 2 cannot be an NE.

\textbf{Part 1-iii:} In this part, we prove that item 4 of Theorem~\ref{theorem:NE_stage1_new_q<}, i.e. $p^{eq}_{NoN}=c$ and $p^{eq}_N=c-\kappa_u(2\tilde{q}_{p}-\tilde{q}_f)+t_{NoN}$ is not an NE. To do so, we prove that there exists a profitable unilateral deviation for ISP NoN. Note that, in this  case, $\Delta p^{eq}=\Delta p_t$. By item 2-a-i of Theorem~\ref{theorem:p_tilde_new}, when $\Delta p_t\leq \Delta p <t_N+\kappa_u(\tilde{q}_p-\tilde{q}_f)$, then   $(q^{eq}_N,q^{eq}_{NoN})=(\tilde{q}_f,\tilde{q}_p)\in F^I_1$. Thus, the expression of the payoff of ISP NoN is (by  $\tilde{p}_t=\tilde{p}_{t,3}$, Definition \ref{def:pt1,pt2}, Theorem~\ref{theorem:NE_stage2_new_suff}, and \eqref{equ:UNoN_new}):

\footnotesize
$$
\ba
\pi_{NoN,C}&(p_{NoN},\tilde{p}_{t,3})\\
&=(p_{NoN}-c+\kappa_{ad}(\tilde{q}_{p}-\tilde{q}_f))(\frac{t_N+\kappa_u (\tilde{q}_{p}-\tilde{q}_f)+p_N-p_{NoN}}{t_N+t_{NoN}})
\ea
$$
\normalsize

Note that:

\footnotesize
$$
\frac{d \pi_{NoN,C}}{d p_{NoN}}= \frac{t_N+(\tilde{q}_{p}-\tilde{q}_f)(\kappa_u-\kappa_{ad})+p_N-2 p_{NoN}+c}{t_N+t_{NoN}}
$$
\normalsize
Thus,
\footnotesize
$$
\ba
\frac{d \pi_{NoN,C}}{d p_{NoN}}&|_{p^{eq}_N,p^{eq}_{NoN}}\\
&=\frac{t_N+t_{NoN}+(\tilde{q}_{p}-\tilde{q}_f)(\kappa_u-\kappa_{ad})-\kappa_u(2\tilde{q}_p-\tilde{q}_f)}{t_N+t_{NoN}}
\ea
$$
\normalsize
Note that $\frac{d \pi_{NoN,C}}{d p_{NoN}}|_{p^{eq}_N,p^{eq}_{NoN}}>0$, when either $t_N$ or $t_{NoN}$ are large enough. Thus, in this case, the payoff is increasing with respect to $p_{NoN}$\footnote{Note that after this deviation, $\Delta p$ remains in the same region.}. Thus, $p'_{NoN}=p^{eq}_{NoN}+\epsilon$ for $\epsilon>0$ small, is a unilateral profitable deviation.

\textbf{Part 2:} We now prove that when one of $t_N$ or $t_{NoN}$ is large, then strategy 3) of Theorem~\ref{theorem:NE_stage1_new_q<} is an NE. To do so, we check conditions (i), (ii), and (iii) of strategy 3) of Theorem~\ref{theorem:NE_stage1_new_q<}, in Parts 2-i, 2-ii, and 2-iii, respectively. Later, in Part 2-iv, we prove that there is no unilateral profitable deviation for ISPs.  This completes the proof. \\
\textbf{Part 2-i:} In this part, we check the condition, i.e. $\kappa_u (2\tilde{q}_{p}-\tilde{q}_f)-t_{NoN}< \Delta p^{eq}<t_N+\kappa_u (\tilde{q}_{p}-\tilde{q}_f)$. Note that in this case:
\be
\Delta p^{eq}=\frac{1}{3}(t_N-t_{NoN}+(\tilde{q}_p-\tilde{q}_f)(2\kappa_u-\kappa_{ad}))
\ee
Comparing the lower boundary yields that:
\footnotesize
$$
\ba
\kappa_u &(2\tilde{q}_{p}-\tilde{q}_f)-t_{NoN}< \Delta p^{eq}\\
&\Rightarrow 2t_{NoN}+t_N+(\tilde{q}_p-\tilde{q}_f)(2\kappa_u-\kappa_{ad})-3\kappa_u (2\tilde{q}_{p}-\tilde{q}_f)>0
\ea
$$
\normalsize
which is true when one of $t_N$ or $t_{NoN}$ is large. Now, consider the upper boundary:
\footnotesize
$$
\ba
\Delta p^{eq}&<t_N+\kappa_u (\tilde{q}_{p}-\tilde{q}_f)\\
&\Rightarrow 2t_N+t_{NoN}+\kappa_u (\tilde{q}_{p}-\tilde{q}_f) -(\tilde{q}_p-\tilde{q}_f)(2\kappa_u-\kappa_{ad})>0
\ea
$$
\normalsize
which is true when one of $t_N$ or $t_{NoN}$ is large. Thus, condition (i)  of strategy 3) of Theorem~\ref{theorem:NE_stage1_new_q<} is true.\\
\textbf{Part 2-ii:} Condition (ii) of this strategy is $\tilde{q}_p-\tilde{q}_f\leq \frac{2t_{NoN}+t_N}{\kappa_u+\kappa_{ad}}$. This condition holds when one of $t_N$ or $t_{NoN}$ is large. \\
\textbf{Part 2-iii:} Now, we check the third condition, i.e. $\pi^{eq}_{NoN}=\pi_{NoN}(\tilde{p}^{eq}_{NoN},\tilde{p}_{t,3})>\pi_{NoN,z=0}(\tilde{p}^{eq}_{NoN},\tilde{p})$. We use \eqref{equ:payoffISPsGeneral_new} to find $\pi^{eq}_{NoN}=\pi_{NoN}(\tilde{p}^{eq}_{NoN},\tilde{p}_{t,3})$. Note that by using item 2-a-i of Theorem~\ref{theorem:p_tilde_new} (since $z^{eq}=1$), $(q^{eq}_N,q^{eq}_{NoN})=(\tilde{q}_f,\tilde{q}_p)$. Thus, by the definition of
$p^{eq}_{NoN}$, $\Delta p^{eq}$, $\tilde{p}_{t,3}$, and using Definition \ref{def:pt1,pt2}, Theorem~\ref{theorem:NE_stage2_new_suff}:

\footnotesize
\be \label{equ:equ:B>_eq_piNoN_newr}
\pi^{eq}_{NoN}=\frac{\big{(}t_{NoN}+2t_N+  (\tilde{q}_{p}-\tilde{q}_f)(\kappa_u +\kappa_{ad})\big{)}^2}{9(t_N+t_{NoN})}
\ee
\normalsize

	Now, we obtain $\pi_{NoN,z=0}(\tilde{p}^{eq}_{NoN},\tilde{p})$. Consider the case that $\tilde{p}$ is such that $z^{eq}=0$. Note that since  $\kappa_u (2\tilde{q}_{p}-\tilde{q}_f)-t_{NoN}< \Delta p^{eq}<t_N+\kappa_u (\tilde{q}_{p}-\tilde{q}_f)$, then  $-t_{NoN}<\Delta p^{eq}<t_N$ or $\Delta p^{eq}\geq t_N$. Using item 2 of Theorem~\ref{lemma:CP_z=0_new}, if $\Delta p^{eq}\geq t_N$, then $n_{NoN}=0$, and by \eqref{equ:payoffISPsGeneral_new}, $\pi_{NoN,z=0}(\tilde{p}^{eq}_{NoN},\tilde{p})=0$. Thus, $\pi^{eq}_{NoN}>\pi_{NoN,z=0}(\tilde{p}^{eq}_{NoN},\tilde{p})$, and this part follows. Now, consider the case that $-t_{NoN}<\Delta p^{eq}<t_N$. Using item 1 of Theorem~\ref{lemma:CP_z=0_new}, if $-t_{NoN}<\Delta p^{eq}<t_N$, then $(q^{eq}_N,q^{eq}_{NoN})=(\tilde{q}_f,\tilde{q}_f)\in F^I_0$. Since  $(q^{eq}_N,q^{eq}_{NoN})\in F^I_0$, we can use \eqref{equ:UNoN_new}. Thus,   by using  $p^{eq}_{NoN}$, $\Delta p^{eq}$,  and , $\pi_{NoN,z=0}(\tilde{p}^{eq}_{NoN},\tilde{p})$ is:
	
	\small
	\be
	\ba
	&\pi_{NoN,z=0}(\tilde{p}^{eq}_{NoN},\tilde{p})\\
	&=\frac{1}{9(t_N+t_{NoN})}\big{(}2t_N+t_{NoN}+(\tilde{q}_p-\tilde{q}_f)(\kappa_u-2\kappa_{ad})\big{)}\\
	&\qquad \qquad \qquad \times\big{(}2t_N+t_{NoN}-(\tilde{q}_p-\tilde{q}_f)(2\kappa_u -\kappa_{ad})\big{)}
	\ea
	\ee
	\normalsize
	
	 Next, we prove that $t_{NoN}+2t_N+  (\tilde{q}_{p}-\tilde{q}_f)(\kappa_u +\kappa_{ad})>2t_N+t_{NoN}+(\tilde{q}_p-\tilde{q}_f)(\kappa_u-2\kappa_{ad})$ and $t_{NoN}+2t_N+  (\tilde{q}_{p}-\tilde{q}_f)(\kappa_u +\kappa_{ad})>2t_N+t_{NoN}-(\tilde{q}_p-\tilde{q}_f)(2\kappa_u -\kappa_{ad})$. This yields  $\pi^{eq}_{NoN}>\pi_{NoN,z=0}(\tilde{p}^{eq}_{NoN},\tilde{p})$. To prove the inequalities, note that:
	 \footnotesize
	 $$
	 \ba
	 &t_{NoN}+2t_N+  (\tilde{q}_{p}-\tilde{q}_f)(\kappa_u +\kappa_{ad})\\
	 &>2t_N+t_{NoN}+(\tilde{q}_p-\tilde{q}_f)(\kappa_u-2\kappa_{ad})\iff 3\kappa_{ad}(\tilde{q}_p-\tilde{q}_f)>0\\
	 &t_{NoN}+2t_N+  (\tilde{q}_{p}-\tilde{q}_f)(\kappa_u +\kappa_{ad})\\
	 &>2t_N+t_{NoN}-(\tilde{q}_p-\tilde{q}_f)(2\kappa_u -\kappa_{ad})\iff 3\kappa_u(\tilde{q}_p-\tilde{q}_f)>0
	 \ea
	 $$
\normalsize

	 Since $\tilde{q}_p>\tilde{q}_f$, both inequalities hold. This completes the proof of this part. \\
	 \textbf{Part 2-iv:} In this part, we prove that there is no profitable unilateral deviation by ISPs when one of $t_N$ or $t_{NoN}$ is large. To do so, first, in Part 2-iv-NoN, we rule out the possibility of a profitable deviation by the non-neutral ISP. Then, in Part 2-iv-N, we rule out profitable deviations by the neutral ISP.
	
	 Note that,  by \eqref{equ:equ:B>_eq_piNoN_newr}, the equilibrium payoff of ISP NoN, $\pi^{eq}_{NoN}=\pi_{NoN}(\tilde{p}^{eq}_{NoN},\tilde{p}_{t,3})$ is:
	 $$
	  \pi^{eq}_{NoN}=\frac{\big{(}t_{NoN}+2t_N+  (\tilde{q}_{p}-\tilde{q}_f)(\kappa_u +\kappa_{ad})\big{)}^2}{9(t_N+t_{NoN})}
	 $$
	 In addition, using $(q^{eq}_N,q^{eq}_{NoN})=(\tilde{q}_f,\tilde{q}_p)$, $p^{eq}_{N}$, $\Delta p^{eq}$, and \eqref{equ:UN_new}, we can find $\pi^{eq}_{N}=\pi_{N}(\tilde{p}^{eq}_{N})$,:
\be \label{equ:equ:B>_eq_piN_newr}
\pi^{eq}_{N}=\frac{\big{(}2t_{NoN}+t_N-(\tilde{q}_{p}-\tilde{q}_f)(\kappa_u +\kappa_{ad})\big{)}^2}{9(t_N+t_{NoN})}
\ee
	Note that when $t_N$ and $t_{NoN}$ are large, $\pi^{eq}_N$ and $\pi^{eq}_{NoN}$ would be large.
	
	 Consider different regions in Theorem~\ref{theorem:p_tilde_new}.  We denote
	 $\Delta p\leq \kappa_u \tilde{q}_{p}-t_{NoN}$ by region A, $\kappa_u \tilde{q}_{p}-t_{NoN}<\Delta p<\Delta p_t=\kappa_u (2\tilde{q}_{p}-\tilde{q}_f)-t_{NoN}$ by region $B_1$, $\Delta p_t=\kappa_u (2\tilde{q}_{p}-\tilde{q}_f)-t_{NoN}\leq \Delta p<t_N+\kappa_u(\tilde{q}_{p}-\tilde{q}_f)$ by region C, $t_N+\kappa_u(\tilde{q}_{p}-\tilde{q}_f)\leq \Delta p<t_N+\kappa_u \tilde{q}_{p}$ by  $B_2$, and $\Delta p\geq t_N+\kappa_u \tilde{q}_{p}$ by D. Recall that $\Delta p^{eq}=p^{eq}_{NoN}-p^{eq}_N$ is in region C. Note that the payoffs are concave in C, and we found the strategies by solving the first order condition. Thus, there is no unilateral profitable deviation in C.\\
	 \textbf{Part 2-iv-NoN:}  Now, we consider unilateral deviations by ISP NoN. We prove that any deviation to regions A, $B_1$, $B_2$, and D is not profitable in Cases 2-iv-NoN-A, 2-iv-NoN-$B_1$, 2-iv-NoN-$B_2$, and 2-iv-NoN-D, respectively.  This yields that no deviation is profitable for ISP NoN.\\
	 \textbf{Case 2-iv-NoN-A:} First, we  prove  that in Region A, $z^{eq}=1$. Note that in this case, by Definition \ref{def:pt}, $\tilde{p}_t=\tilde{p}_{t,1}$. Thus,  $\pi_{NoN}(p_{NoN},\tilde{p}_{t})=p_{NoN}-c+\tilde{q}_{p}\tilde{p}_{t,1}=p_{NoN}-c+\kappa_{ad}(\tilde{q}_p-\tilde{q}_f)$ (by \eqref{equ:payoffISPsGeneral_new}, Definition~\ref{def:pt1,pt2}, and since $n_{NoN}=1$ by item 1 of Theorem~\ref{theorem:p_tilde_new}). On the other hand, $\pi_{NoN,z=0}(p_{NoN},\tilde{p})=(p_{NoN}-c)n_{NoN}$. Thus, $\pi_{NoN}(p_{NoN},\tilde{p}_t)>\pi_{NoN,z=0}(p_{NoN},\tilde{p})$ (since $\tilde{q}_p>\tilde{q}_f$ and $0\leq n_{NoN}\leq 1$). Thus, by Theorem~\ref{theorem:NE_stage2_new_suff}, in region A, $z^{eq}=1$.

	 Now, consider  $p^{eq}_N$ fixed and decreasing $p_{NoN}$ such that $\Delta p$  in region A, i.e. $\Delta p\leq \kappa_u \tilde{q}_{p}-t_{NoN}$.  Since in Region A, $z^{eq}=1$, and by Theorem~\ref{theorem:NE_stage2_new_suff}, the payoff after deviation is  $\pi_{NoN}'=p_{NoN}-c+\tilde{q}_p\tilde{p}_{t,1}$ (by \eqref{equ:payoffISPsGeneral_new}, Definition~\ref{def:pt1,pt2}, and since $n_{NoN}=1$ by item 1 of Theorem~\ref{theorem:p_tilde_new}).
	  Thus, the payoff of the ISP NoN is an increasing function of $p_{NoN}$. Therefore, all other prices are dominated by $p'_{NoN}={p}^{eq}_N+\kappa_u \tilde{q}_{p}-t_{NoN}$. The payoff in this case is $\pi'_{NoN}={p}^{eq}_N+\kappa_u \tilde{q}_{p}-t_{NoN}-c+\tilde{q}_{p}\tilde{p}_{t,1}$ (by \eqref{equ:payoffISPsGeneral_new}-I). Therefore:
	 \be \label{equ:deviation:NoN}
	 \pi'_{NoN}=\frac{1}{3}(t_N-t_{NoN})+\alpha
	 \ee
	where $\alpha$ is a constant independent of $t_N$ and $t_{NoN}$. Now, in Cases (i), (ii), and (iii),
	 we prove that $\pi^{eq}_{NoN}>\pi'_{NoN}$ when  (i) $t_N$
	  is sufficiently larger than other parameters, (ii) $t_{NoN}$ is sufficiently larger than other parameters, and (iii) $t_N$ and $t_{NoN}$ are of the same order of magnitude and
	  both are sufficiently larger than other parameters, respectively.\\
	  \textbf{Case (i): }  If $t_N$
	  is sufficiently larger than other parameters, then:
	$$
	\pi^{eq}_{NoN}\approx \frac{4 t_N}{9}>\pi'_{NoN}\approx\frac{1}{3}t_N
	$$
	Thus, this deviation is not profitable. \\
	\textbf{Case (ii): } If $t_{NoN}$ is sufficiently larger than other parameters, then:
		$$
		\pi^{eq}_{NoN}\approx \frac{ t_{NoN}}{9}>\pi'_{NoN}\approx-\frac{1}{3}t_{NoN}
		$$
 Thus, this deviation is also not profitable.\\
 \textbf{Case (iii): } If $t_N$ and $t_{NoN}$ are of the same order of magnitude ($t_N\approx t_{NoN}$) and both are sufficiently larger than other parameters, then:
 $$
 \pi^{eq}_{NoN}=\frac{t_N}{2}>\frac{t_N}{3}>\pi'_{NoN}
 $$
Thus, this deviation is not profitable.

Thus, any deviation to region A by ISP NoN is not profitable. This completes the proof of this case. \\
\textbf{Case 2-iv-NoN-$B_1$:} Now, consider a deviation by ISP NoN to region $B_1$, i.e. $\kappa_u \tilde{q}_{p}-t_{NoN}<\Delta p<\Delta p_t=\kappa_u (2\tilde{q}_{p}-\tilde{q}_f)-t_{NoN}$. Note that with this deviation, $p'_{NoN}=\frac{1}{3}(t_N-t_{NoN})+\alpha$, where $\alpha_l<\alpha<\alpha_u$, in which $\alpha_l$ and $\alpha_u$  are bounded. In addition, by \eqref{equ:EUs_linear}, after the deviation, $n'_{NoN}=\frac{t_N+t_{NoN}-\beta}{t_N+t_{NoN}}$, where $\beta>0$ is bounded ($\beta_l<\beta<\beta_u$, and $\beta_l$ and $\beta_u$  bounded ). Therefore, for large $t_N$ and $t_{NoN}$, $n'_{NoN}\rightarrow 1$. Thus, by \eqref{equ:UNoN_new}, the payoff of ISP NoN after deviation is:
$$
\pi'_{NoN}=\frac{1}{3}(t_N-t_{NoN})+\gamma
$$
where $\gamma$ is bounded (Note that $\tilde{p}$ is independent of $t_N$ and $t_{NoN}$). This expression is similar to  \eqref{equ:deviation:NoN}. Thus, we can exactly repeat the arguments in Cases i, ii, and iii to prove that  any deviation to region $B_1$ by ISP NoN is not profitable. This completes the proof of this case. \\
\textbf{Case 2-iv-NoN-$B_2$:} Now, consider a deviation by ISP NoN to region $B_2$, i.e. $t_N+\kappa_u(\tilde{q}_{p}-\tilde{q}_f)\leq \Delta p<t_N+\kappa_u \tilde{q}_{p}$. Note that with this deviation, $\Delta p'=t_N+\alpha$, and $p'_{NoN}=\frac{2t_{NoN}+4t_N}{3}+\gamma$
 where $\kappa_u(\tilde{q}_p-\tilde{q}_f)\leq \alpha\leq \kappa_u \tilde{q}_p$ and thus $\gamma$ is bounded. Thus,  by \eqref{equ:EUs_linear}, after this deviation, $n'_{NoN}=\frac{\beta}{t_N+t_{NoN}}$, where $\beta>0$ is a constant independent of $t_N$ and $t_{NoN}$, and the payoff of ISP NoN after deviation is $\pi'_{NoN}=\frac{2t_{NoN}+4t_N}{3(t_N+t_{NoN})}\beta +\eta$ (by \eqref{equ:payoffISPsGeneral_new} and considering that by Theorem~\ref{theorem:NE_stage2_new_suff}, if $z^{eq}=1$, then $\tilde{p}=\tilde{p}_{t,2},$, and independent of $t_N$ and $t_{NoN}$), where $\eta$ is a constant independent of $t_N$ and $t_{NoN}$. Thus, when one of $t_N$ and $t_{NoN}$ is large, $\pi'_{NoN}\rightarrow constant$. Therefore, $\pi^{eq}_{NoN}>\pi'_{NoN}$. Thus, any deviation to region $B_2$ by ISP NoN is not profitable.\\
\textbf{Case 2-iv-NoN-D:} By item 4 of Theorem~\ref{theorem:p_tilde_new}, in region D, $n_{NoN}=0$. Thus, a deviation to this region, yields a payoff of zero, by \eqref{equ:UNoN_new} and $z^{eq}=0$. Thus, a deviation of this kind is not profitable for ISP NoN. \\
\textbf{Part 2-iv-N:} Now, consider unilateral deviations by the  neutral ISP. Similar to Part 2-iv-N, we prove that any deviation to regions $A$, $B_1$, $B_2$, and $D$ is not profitable. We do so in Cases 2-iv-N-A, 2-iv-N-$B_1$, 2-iv-N-$B_2$, and 2-iv-N-D, respectively.  \\
\textbf{Case 2-iv-N-A:} Consider a deviation by ISP N to region A. In this case, by item 1 of Theorem~\ref{theorem:p_tilde_new}, $n_N=0$. Thus, the payoff of ISP N after deviation is zero (by~\eqref{equ:UN_new}), and this deviation is  not profitable.\\
\textbf{Case 2-iv-N-$B_1$:}  Now, consider a deviation by ISP N to region $B_1$, i.e. $\kappa_u \tilde{q}_{p}-t_{NoN}<\Delta p<\Delta p_t=\kappa_u (2\tilde{q}_{p}-\tilde{q}_f)-t_{NoN}$. Note that with this deviation, $\Delta p=-t_{NoN}+\alpha$, and $p'_N=\frac{4t_{NoN}+2t_N}{3}+\gamma$, where $\kappa_u\tilde{q}_p<\alpha<\kappa_u(2\tilde{q}_p-\tilde{q}_f)$ and thus $\gamma$ is bounded. Thus, by \eqref{equ:EUs_linear}, $n'_{N}=\frac{\beta}{t_N+t_{NoN}}$, where $\beta>0$ is bounded. By \eqref{equ:payoffISPsGeneral_new}. The payoff of ISP N after deviation is $\pi_N=\frac{4t_{NoN}+2t_N}{3(t_N+t_{NoN})}\beta$ (by \eqref{equ:payoffISPsGeneral_new}). Thus, when one of $t_N$ and $t_{NoN}$ is large, $\pi'_{N}\rightarrow constant$. Thus, $\pi^{eq}_N>\pi'_N$. Therefore, any deviation to region $B_1$ by ISP N is not profitable.

\textbf{case 2-iv-N-$B_2$:} Now, consider a deviation by ISP NoN to region $B_2$, i.e. $t_N+\kappa_u(\tilde{q}_{p}-\tilde{q}_f)\leq \Delta p<t_N+\kappa_u \tilde{q}_{p}$. Note that with this deviation, $\Delta p'=t_N+\alpha$, where $\kappa_u(\tilde{q}_p-\tilde{q}_f)\leq \alpha<\kappa_u\tilde{q}_p$. Thus, $p'_{N}=\frac{1}{3}(t_{NoN}-t_N)+\beta$, where $\beta$ is bounded.    In addition, by \eqref{equ:EUs_linear}, after the deviation, $n'_{N}=\frac{t_N+t_{NoN}-\gamma}{t_N+t_{NoN}}$, where $\gamma>0$ is bounded. Therefore, for large $t_N$ or $t_{NoN}$, $n'_{N}\rightarrow 1$. Thus, by \eqref{equ:UN_new}, the payoff of ISP N after deviation is:
\be \label{equ:deviation:N}
\pi'_{N}=\frac{1}{3}(t_{NoN}-t_{N})+\eta
\ee
where $\eta$ is bounded.  Now, in Cases i, ii, and iii,
we prove that $\pi^{eq}_{N}>\pi'_{N}$ when  (i) $t_N$
is sufficiently larger than other parameters, (ii) $t_{NoN}$ is sufficiently larger than other parameters, and (iii) $t_N$ and $t_{NoN}$ are of the same order of magnitude and
both are sufficiently larger than other parameters, respectively.\\
\textbf{Case i: }  If $t_N$
is sufficiently larger than other parameters, then:
$$
\pi^{eq}_{N}\approx \frac{ t_N}{9}>\pi'_{N}\approx-\frac{1}{3}t_N
$$
Thus, this deviation is not profitable. \\
\textbf{Case ii: } If $t_{NoN}$ is sufficiently larger than other parameters, then:
$$
\pi^{eq}_{N}\approx \frac{4 t_{NoN}}{9}>\pi'_{N}\approx\frac{1}{3}t_{NoN}
$$
Thus, this deviation is also not profitable.\\
\textbf{Case iii: } If $t_N$ and $t_{NoN}$ are of the same order of magnitude ($t_N\approx t_{NoN}$) and both are sufficiently larger than other parameters, then:
$$
\pi^{eq}_{N}=\frac{t_{NoN}}{2}>\frac{t_{NoN}}{3}>\pi'_{N}
$$
Thus, this deviation is not profitable.

Thus, any deviation to Region $B_2$ by ISP N is not profitable. This completes the proof of this case.\\
\textbf{Case 2-iv-N-D:}  Now, consider  decreasing $p_{NoN}$ such that $\Delta p$  in region D, i.e. $\Delta p\geq \kappa_u \tilde{q}_{p}+t_{N}$.  Note that by item 4 of Theorem~\ref{theorem:p_tilde_new}, $z^{eq}=0$, and $n_N=1$. Thus, the payoff of ISP N is equal to $p_{N}-c$ (by \eqref{equ:payoffISPsGeneral_new}). Thus, the payoff of the ISP N is an increasing function of $p_{N}$. Therefore, all other prices are dominated by $p'_{N}={p}^{eq}_{NoN}-(\kappa_u \tilde{q}_{p}+t_{N})$. Thus, the payoff in this case is $\pi'_{N}=\frac{1}{3}(t_{NoN}-t_N)+\alpha$, where $\alpha$ is a constant and is independent of $t_N$ and $t_{NoN}$. This expression is similar to \eqref{equ:deviation:N}. Thus, we can exactly repeat the arguments in Cases 2-iv-N-$B_2$-a, 2-iv-N-$B_2$-b, and 2-iv-N-$B_2$-c to prove that  any deviation to region $D$ by ISP NoN is not profitable. This completes the proof of this case. This completes the proof of this case, and the theorem.
\end{proof}

\section{Proof of Theorem~\ref{lemma:NEz=0}}\label{appendix:lemma:NEz=0}
\begin{proof}
 	
 In Section~\ref{appendix:theorem:neutralz=0_q<}, while proving 
  Theorem~\ref{theorem:neutralz=0_q<}-1, considering only $z^{eq}=0$, the only possible SPNE access fee by which  $(q^{eq}_N,q^{eq}_{NoN})\in F_0$, i.e. $z^{eq}=0$ is 	$p^{eq}_N=c+\frac{1}{3}(2t_{NoN}+t_N)$ and $p^{eq}_{NoN}=c+\frac{1}{3}(2t_N+t_{NoN})$. Following Remark~\ref{r0}, one may now obtain $q^{eq}_N, q^{eq}_{NoN}, n^{eq}_N, n^{eq}_{NoN}.$ From Theorem~\ref{lemma:CP_z=0_new}-1, $(q^{eq}_N,q^{eq}_{NoN})=(\tilde{q}_f,\tilde{q}_f)\in F^I_0$. From $\Delta p=p^{eq}_{NoN}-p^{eq}_N$, and \eqref{equ:EUs_linear}, $n^{eq}_N=\frac{2t_{NoN}+t_N}{3(t_{NoN}+t_N)}, n^{eq}_{NoN}=\frac{2t_N+t_{NoN}}{3(t_N+t_{NoN})}$. Finally,  since $z^{eq}=0$, $\tilde{p}^{eq}$ is of no importance. Thus, candidate (e) is the only possible SPNE when $z^{eq} = 0.$

 We next prove that candidate (e) is indeed a SPNE when the CP is restricted to choose $z=0.$ We prove that no unilateral deviation is profitable. First, in Case a, we rule out the possibility of a unilateral deviation when  $-t_{NoN}<\Delta p<t_N$ for both neutral and non-neutral ISPs. Then, we consider  $\Delta p\leq -t_{NoN}$ and $\Delta p\geq t_N$, and in Cases  NoN and N, we rule out the possibility of a unilateral deviation in these regions for ISP N and NoN, respectively.
 
 First, using \eqref{equ:UN_new} and \eqref{equ:UNoN_new}, and item 1 of Theorem~\ref{lemma:CP_z=0_new}, it follows that:
 	\be\label{equ:ISPcandidateMultihome}
 	\ba
 	p^{eq}_N&= c+\frac{1}{3}(2t_{NoN}+t_N)\\
 	p^{eq}_{NoN}&=c+\frac{1}{3}(2t_N+t_{NoN})
 	\ea
 	\ee
 	
 	\textbf{Case a:} First, note that  by concavity of the payoffs (using \eqref{equ:UN_new} and \eqref{equ:UNoN_new}) as long as $-t_{NoN}<\Delta p<t_N$, i.e. $0<x_N<1$, a unilateral deviation by one of the ISPs from $p^{eq}_N$ or $p^{eq}_{NoN}$  decreases this ISP's payoff. Thus, we should consider the deviation by ISPs by which $\Delta p\leq -t_{NoN}$ or $\Delta p\geq t_N$.
 	
 	\textbf{Case NoN:} Now, consider the deviations by the non-neutral ISP. Fix $p_N=p^{eq}_N$, and consider two cases. In Case 2-NoN-i (respectively, Case 2-NoN-ii), we consider deviation by ISP NoN such that $\Delta p\geq t_N$ (respectively, $\Delta p\leq -t_{NoN}$).
 	
 	\textbf{Case NoN-i:} Suppose the non-neutral ISP increases her price from ${p}^{eq}_{NoN}$ to make $\Delta p\geq t_N$. In this case, $n_{NoN}=0$, and the payoff of the ISP is zero (by \eqref{equ:payoffISPsGeneral_new}). Since in the candidate equilibrium strategy this payoff is non-negative\footnote{Note that $p^{eq}_{NoN}>c$ and $0\leq n_{NoN}\leq 1$.}, this deviation is not profitable.

 	\textbf{Case NoN-ii:}  Now, consider the case in which the non-neutral ISP decreases her price to make $\Delta p\leq -t_{NoN}$. In this case, $n_{NoN}=1$ and $\pi_{NoN}(p'_{NoN},z=0)=p'_{NoN}-c$ (by \eqref{equ:payoffISPsGeneral_new}). Thus, the payoff is a strictly  increasing function of $p'_{NoN}$, and is  maximized at $p'_{NoN}=p^{eq}_{N}-t_{NoN}$. We show that $\pi_{NoN}(p'_{NoN},z=0)<\pi_{NoN}(p^{eq}_{NoN},z=0)$ \footnote{note that $p_N=p^{eq}_N$ is fixed.}. Note that $\pi_{NoN}(p'_{NoN},z=0)=\frac{1}{3}(t_N-t_{NoN})$. In addition, using \eqref{equ:ISPcandidateMultihome}, \eqref{equ:payoffISPsGeneral_new}, $0\leq x_{N}\leq 1$, \eqref{equ:EUs_linear}, and the fact that with $p^{eq}_N$ and $p^{eq}_{NoN}$, $q^{eq}_{NoN}-q^{eq}_{N}=0$:
 	$$
 	\pi_{NoN}(p^{eq}_{NoN},z=0)=\frac{1}{9}\frac{(2t_N+t_{NoN})^2}{t_{NoN}+t_N}
 	$$
 	Thus:
 	 $$
 	\ba
 	&\pi_{NoN}(p'_{NoN},z=0)< \pi_{NoN}(p^{eq}_{NoN},z=0)\\ &\qquad \iff 3(t^2_N-t^2_{NoN})<4t^2_N+t^2_{NoN}+4 t_N t_{NoN}\\
 	& \qquad \iff t^2_N+4 t^2_{NoN}+4t_Nt_{NoN}>0
 	\ea
 	$$
 	where the last inequality is always true. Thus, this deviation is not profitable for ISP NoN.

 	These cases prove that no deviation form \eqref{equ:ISPcandidateMultihome} is profitable for ISP NoN.
 	
 	\textbf{Case N:}  Now, consider a deviation by the neutral ISP from \eqref{equ:ISPcandidateMultihome}.
 	Similar argument can be done for the neutral ISP. Fix, $p_{NoN}=p^{eq}_{NoN}$, and consider two cases.  In Case 2-N-i (respectively, Case 2-N-ii), we consider deviation by ISP N such that $\Delta p\leq -t_{NoN}$ (respectively, $\Delta p\geq t_{N}$).
 	
 	\textbf{Case N-i:} Suppose the neutral ISP increases her price from ${p}^{eq}_{N}$ to get $\Delta p\leq -t_{NoN}$. In this case, $n_{N}=0$, and the payoff of this ISP is zero. Since in the candidate equilibrium strategy the payoff is non-negative\footnote{Note that $p^{eq}_{N}>c$ and $0\leq n_{N}\leq 1$.}, this deviation is not profitable.
 	
 	\textbf{Case N-ii:} Now, consider the case in which  the non-neutral ISP decreases her price such that $\Delta p\geq t_{N}$. In this case, $n_{N}=1$ and $\pi_{N}(p'_{N})=p'_{N}-c$. Thus, the payoff is a strictly  increasing function of $p'_{N}$, and is  maximized at $p'_{N}=p^{eq}_{NoN}-t_{N}$. We show that $\pi_{N}(p'_{N})<\pi_{N}(p^{eq}_{N})$. Note that $\pi_{N}(p'_{N})=\frac{1}{3}(t_{NoN}-t_N)$ (by \eqref{equ:payoffISPsGeneral_new}). In addition, using \eqref{equ:ISPcandidateMultihome}, \eqref{equ:payoffISPsGeneral_new}, $0\leq x_{N}\leq 1$, \eqref{equ:EUs_linear}, and the fact that with $p^{eq}_N$ and $p^{eq}_{NoN}$, $q^{eq}_{NoN}-q^{eq}_{N}=0$:
 	$$
 	\pi_{N}(p^{eq}_{N})=\frac{1}{9}\frac{(2t_{NoN}+t_{N})^2}{t_{NoN}+t_N}
 	$$
 	Thus:
 	$$
 	\ba
 	\pi_{N}(p'_{N})&< \pi_{N}(p^{eq}_{N}) \\
 	&\iff 3(t^2_{NoN}-t^2_{N})<4t^2_{NoN}+t^2_{N}+4 t_N t_{NoN}\\
 	& \iff t^2_{NoN}+4 t^2_{N}+4t_Nt_{NoN}>0
 	\ea
 	$$
 	where the last inequality is always true. Thus, this deviation is not profitable for ISP N.
 	
 	Thus, there is no profitable deviation for ISP N. This completes the proof, and the Theorem follows.
 \end{proof}

\section{Continuous Strategy Set for the CP}\label{section:general}

In this section, we consider $q_N\in [0,\tilde{q}_f]$ and $q_{NoN}\in[0,\tilde{q}_p]$. In this case, the CP pays a side payment of $\tilde{p}q_{NoN}$ if she chooses $q_{NoN}\in (\tilde{q}_f,\tilde{q}_p]$. The rest of the model is the same as before. Note that in this case, the optimum strategies in Stage 4 of the game, in which end-users decide on the ISP, is the same as before. \emph{We prove that the optimum decisions made by the CP is similar to the decisions of the CP when she has a discrete set of strategies (explained in the paper)}. This yields that the results of the model would the same as before when the CP chooses her strategy from a continuous set.

Therefore, we focus on characterizing the optimum strategies of the CP when she chooses her strategy from  continuous sets, i.e.  $q_N\in [0,\tilde{q}_f]$ and $q_{NoN}\in[0,\tilde{q}_p]$. The following lemma is useful in defining the maximization and to characterize the optimum answers.

\begin{lemma}\label{lemma:qfree}
	$\pi_{CP}(q_N,\tilde{q}_{f,NoN},z=0)\geq \pi_{CP}(q_N,\tilde{q}_{f,NoN},z=1)$.
\end{lemma}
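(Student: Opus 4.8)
The plan is to reduce the claim to a one-line comparison of the two payoff expressions, exploiting that the indicator $z$ enters the CP's payoff \eqref{equ:payoffCP_new} only through the side-payment term. First I would observe that the fractions $n_N$ and $n_{NoN}$ are determined by \eqref{equ:EUs_linear} from the quality pair $(q_N,q_{NoN})$ alone and carry no dependence on $z$. Consequently, evaluating $\pi_{CP}$ at the common argument $(q_N,\tilde{q}_{f,NoN})$ once with $z=0$ and once with $z=1$ leaves the advertising-revenue terms $n_N\kappa_{ad}q_N$ and $n_{NoN}\kappa_{ad}\tilde{q}_{f,NoN}$ literally identical across the two cases.

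Subtracting the two expressions, every term therefore cancels except the side payment, so I would arrive at
$$\pi_{CP}(q_N,\tilde{q}_{f,NoN},z=0)-\pi_{CP}(q_N,\tilde{q}_{f,NoN},z=1)=\tilde{p}\,\tilde{q}_{f,NoN}.$$
Thus the lemma is equivalent to the single assertion that this quantity is nonnegative, and no genuine optimization is involved.

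The sign of $\tilde{p}\,\tilde{q}_{f,NoN}$ is the only substantive point, and I expect it to be the crux of the argument. Since $\tilde{q}_{f,NoN}=\tilde{q}_f>0$, the inequality amounts to $\tilde{p}\ge 0$ in the relevant regime. I would justify this by recalling that the side payment is, by construction, a charge levied only on quality \emph{strictly} exceeding the free threshold: at $q_{NoN}=\tilde{q}_f$ the CP already obtains the free quality at no cost, so she would never accept paying a positive per-unit fee for something already available for free; equivalently, the side-payment rates that are actually operative here are nonnegative (e.g.\ $\tilde{p}_{t,1}=\kappa_{ad}(1-\tilde{q}_f/\tilde{q}_p)>0$ and $\tilde{p}_{t,3}>0$ from Definition~\ref{def:pt1,pt2}). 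Hence $\tilde{p}\,\tilde{q}_{f,NoN}\ge 0$, and $z=0$ weakly dominates $z=1$ at the free-quality level.

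Finally I would flag the downstream role this plays, since that is the reason the lemma is ``useful in defining the maximization.'' From \eqref{equ:payoffCP_new_2} one checks that $\pi_{CP}$ is convex in $(q_N,q_{NoN})$ on each of the two regions $q_{NoN}\le\tilde{q}_f$ (where $z=0$) and $q_{NoN}>\tilde{q}_f$ (where $z=1$), so its maximum over each rectangle is attained at a vertex. Lemma~\ref{lemma:qfree} then discards the spurious vertices sitting at $q_{NoN}=\tilde{q}_f$ that would carry $z=1$, collapsing the continuous maximization onto the finite vertex sets $F_0$ and $F_1$ already analyzed; this is precisely what is needed to transfer the earlier results verbatim to the continuous strategy set.
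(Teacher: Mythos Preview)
Your core argument is exactly the paper's: subtract the two payoffs, note that $n_N,n_{NoN}$ depend only on $(q_N,q_{NoN})$ and not on $z$, and obtain $\pi_{CP}(\cdot,z=0)-\pi_{CP}(\cdot,z=1)=\tilde{p}\,\tilde{q}_{f,NoN}\ge 0$. The paper's proof is precisely this one-line computation and nothing more.

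One caution on your extra justification: your appeal to the equilibrium thresholds $\tilde{p}_{t,1},\tilde{p}_{t,3}$ to argue $\tilde{p}\ge 0$ is circular here. Lemma~\ref{lemma:qfree} lives in Stage~3, where $\tilde{p}$ is an \emph{arbitrary} parameter handed down from Stage~2; the thresholds $\tilde{p}_{t,i}$ are derived later from the CP's optimization (which itself relies on existence of the maximum, which is what this lemma is being used to establish). The paper does not attempt any such justification---it simply asserts $\tilde{q}_{f,NoN}\tilde{p}\ge 0$, implicitly treating $\tilde{p}\ge 0$ as given for this appendix. So your proposal matches the paper's approach; just drop the equilibrium-threshold rationale and, if anything, state $\tilde{p}\ge 0$ as the standing hypothesis.
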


\begin{remark}
	Note that although we considered $z$ to be a dummy variable, in this lemma and for the purpose of analysis, we treat it as an independent variable.
\end{remark}

\begin{proof}
	The lemma follows by \eqref{equ:payoffCP_new}, and comparing the expressions in these two cases:
	
	\footnotesize
	$$
	\ba
	\pi_{CP}(q_N,\tilde{q}_{f,NoN},z=0)-\pi_{CP}(q_N,\tilde{q}_{f,NoN},z=1)&= \tilde{q}_{f,NoN} \tilde{p} \geq 0
	\ea
	$$
	
	\normalsize
	Note that we used the fact that  from \eqref{equ:EUs_linear}, since the qualities are the same in  both cases, $n_N$ and $n_{NoN}$ are equal for both cases.
\end{proof}

Lemma~\ref{lemma:qfree} provides the ground to formally define the maximization for the CP as:
\begin{equation}\label{equ:CPopt_initial2}
\begin{aligned}
\max_{z,q_N,q_{NoN}}&\pi_{CP}(q_N,q_{NoN},z)=\\
&\max_{z,q_N,q_{NoN}} \l n_{N}\kappa_{ad}q_N+n_{NoN}\kappa_{ad}q_{NoN}-z\tilde{p}q_{NoN}\r\\
&\text{s.t:}\\
&\qquad q_N\leq \tilde{q}_f\\
&\qquad \mbox{if } z=1 \quad   \tilde{q}_{f} < q_{NoN} \leq \tilde{q}_{p}\\
&\qquad \mbox{if } z=0 \quad   q_{NoN} \leq \tilde{q}_{f}
\end{aligned}
\end{equation}

\textit{Existence of the maximum:} Note that the mixed integer programming \eqref{equ:CPopt_initial2} can be written as two convex maximizations, one for $z=0$ and one for $z=1$. In addition, note that for the case $z=1$, the feasible set is not closed (since  $ \tilde{q}_{f} < q_{NoN} \leq \tilde{q}_{p}$). Thus, in this case, we should use the ``supremum" instead of ``maximum". However, using Lemma~\ref{lemma:qfree}, we prove that the maximum of  \eqref{equ:CPopt_initial2} exists, and therefore the term maximum can be used safely. To prove this, consider the closure of the feasible set when $z=1$ formed by adding $\tilde{q}_{f}$ to the set, i.e. $\tilde{F}_1$. Since the feasible set associated with $z=0$ ($F_0$) and the closure of the feasible set associated to $z=1$ ($\tilde{F}_1$) are closed and bounded (compact) and the objective function is continuous for each $z\in \{0,1\}$, using Weierstrass Extreme Value Theorem, we can say that a maximum exists in each of these two  sets and for the overall optimization~\eqref{equ:CPopt_initial2}. If the maxima in $\tilde{F}_1$ is not $\tilde{q}_{f}$, then the maxima is in the original feasible set ($F_1$). Therefore the maximum of \eqref{equ:CPopt_initial2} exists. If not and $\tilde{q}_{f}$ is the maxima in the set $\tilde{F}_1$, then by Lemma~\ref{lemma:qfree}, the maximum in the set $F_0$ dominates the maximum of the set $\tilde{F}_1$. Thus, the  maxima of \eqref{equ:CPopt_initial2} is in $F_0$. Therefore, the maximum of \eqref{equ:CPopt_initial2} exists, and we can use the term maximum safely.

Henceforth, the solution $(\tilde{q}^*_N,\tilde{q}^*_{NoN},z^*)$ of the maximization~\eqref{equ:CPopt_initial2} would be called the optimum strategies of the CP. This solution yields $x^*_N$ and subsequently $n^*_N$ and $n^*_{NoN}$  by \eqref{equ:EUs_linear}.  In addition, we denote the feasible set of \eqref{equ:CPopt_initial2} by $\mathcal{F}$.

\textit{Finding the optimum strategies of the CP:}  To characterize the optimum strategies, we use the partition the feasible set in Table~\ref{table:subsets}, and characterize the candidate optimum strategies, i.e. the strategies that yield a higher payoff than the rest of the feasible solutions, in each subset. The overall optimum, which is chosen by the CP, is the one that yields the highest payoff among candidate strategies.

Note that although the maximum of the overall optimization exist, a maximum may not necessarily exist in each of the subsets. We will show in the next set of lemmas that the optimization in each subset of the feasible set can be reduced to a convex maximization over linear constraints. Thus, only the extreme points of the feasible sets may constitute the optimum solution. This means that the CP chooses her strategy among the discrete strategies, $q_N\in \{0,\tilde{q}_f\}$ and $q_{NoN}\in \{0,\tilde{q}_f,\tilde{q}_p\}$.

We now characterize optimum strategies of the CP, by considering each of the sub-feasible sets and characterizing the optimum solutions in each of them. In Lemma~\ref{lemma:optimumCPqs}, we prove that if $(q^*_N,q^*_{NoN})\in F^I$, then $q^*_N\in \{0,\tilde{q}_f\}$, $q^*_{NoN}\in\{0,\tilde{q}_{f},\tilde{q}_{p}\}$, and $(q^*_N,q^*_{NoN})\neq (0,0)$. In Lemma~\ref{lemma:xn<0}, we prove that if $(q^*_N,q^*_{NoN})\in F^L$, then ${q}^*_{NoN}=\tilde{q}_{f}$, if $q^*_{NoN}\in F^L_0$, and  ${q}^*_{NoN}=\tilde{q}_{p}$, if $q^*_{NoN}\in F^L_1$.  Moreover, $0\leq q^*_N\leq \frac{1}{\kappa_u}(\kappa_u {q}^*_{NoN}-t_{NoN}-\Delta p)$, and $\Delta p \leq \kappa_u {q}^*_{NoN}-t_{NoN}$. In Lemma~\ref{lemma:xn>1}, we prove that if $(q^*_N,q^*_{NoN})\in F^U$, then ${q}^*_{N}=\tilde{q}_f$ and $0\leq {q}^*_{NoN}\leq \frac{1}{\kappa_u}(\kappa_u \tilde{q}_f-t_{N}+\Delta p)$ and $\Delta p \geq t_N -\kappa_u \tilde{q}_f$. In addition, Lemmas \ref{lemma:tildep_condition} and  \ref{lemma:pos} provide some results that are useful in proving Lemmas~\ref{lemma:optimumCPqs}-\ref{lemma:xn>1}.

\begin{lemma}\label{lemma:tildep_condition}
	In an optimum solution of \eqref{equ:CPopt_initial2}, $n_{NoN}\kappa_{ad}-z\tilde{p}\geq 0$.
\end{lemma}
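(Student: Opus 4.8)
The plan is to split the argument according to the value of $z^{*}$ attained at the optimizer $(q_N^{*},q_{NoN}^{*},z^{*})$ of \eqref{equ:CPopt_initial2}, with $n_N^{*},n_{NoN}^{*}$ the fractions induced by \eqref{equ:EUs_linear}. If $z^{*}=0$ the quantity in question is merely $n_{NoN}^{*}\kappa_{ad}$, which is nonnegative because $n_{NoN}^{*}\in[0,1]$ and $\kappa_{ad}\ge 0$; so the only real work is the case $z^{*}=1$, where the claim becomes $n_{NoN}^{*}\kappa_{ad}\ge\tilde{p}$. I would also first dispose of $\tilde{p}\le 0$, in which case $n_{NoN}^{*}\kappa_{ad}\ge 0\ge\tilde{p}$ holds trivially, leaving the substantive regime $z^{*}=1$ and $\tilde{p}>0$.

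For that regime I would argue by contradiction: suppose $n_{NoN}^{*}\kappa_{ad}-\tilde{p}<0$. The idea is to exhibit a single feasible deviation that strictly raises the CP's payoff, namely ``withdrawing from the non-neutral ISP'', i.e. moving to $(q_N^{*},0,0)$ (set $q_{NoN}=0$, which forces $z=0$). This is feasible in \eqref{equ:CPopt_initial2} since it preserves $q_N^{*}\le\tilde{q}_f$ and satisfies $q_{NoN}=0\le\tilde{q}_f$. Reading $x_n$ off \eqref{equ:xn}, lowering $q_{NoN}$ from $q_{NoN}^{*}$ to $0$ raises $q_N^{*}-q_{NoN}$ and hence raises $x_n$; since $x_n\mapsto n_N$ in \eqref{equ:EUs_linear} is nondecreasing across all three clamped branches, the deviated fraction obeys $n_N^{\mathrm{new}}\ge n_N^{*}$. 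Using \eqref{equ:payoffCP_new}, the payoff difference telescopes to
\[
\pi_{CP}(q_N^{*},0,0)-\pi_{CP}(q_N^{*},q_{NoN}^{*},1)=(n_N^{\mathrm{new}}-n_N^{*})\kappa_{ad}q_N^{*}+\bigl(\tilde{p}-n_{NoN}^{*}\kappa_{ad}\bigr)q_{NoN}^{*}.
\]
The first summand is $\ge 0$ by the demand monotonicity just noted together with $\kappa_{ad}\ge 0$ and $q_N^{*}\ge 0$; the second is strictly positive since $\tilde{p}-n_{NoN}^{*}\kappa_{ad}>0$ by hypothesis and $q_{NoN}^{*}>\tilde{q}_f\ge 0$ (indeed $q_{NoN}^{*}>0$, because $z^{*}=1$ requires $q_{NoN}^{*}>\tilde{q}_f$). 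Thus the deviation strictly improves the payoff, contradicting optimality, and therefore $n_{NoN}^{*}\kappa_{ad}\ge\tilde{p}$.

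I expect the main obstacle to be bookkeeping rather than any deep idea: one must confirm the deviation is genuinely admissible in the mixed-integer program and, above all, that $n_N^{\mathrm{new}}-n_N^{*}$ carries the right sign even when the old or new indifferent location $x_n$ leaves $[0,1]$ (so the clamping in \eqref{equ:EUs_linear} is active). I would neutralize this once and for all by recording that $n_N$ is nondecreasing in $x_n$ while $x_n$ is decreasing in $q_{NoN}$, making the sign robust to which branch the optimizer occupies. It is worth noting that a stationarity/first-order approach would \emph{not} suffice here, since substituting \eqref{equ:EUs_linear} into \eqref{equ:payoffCP_new} makes $\pi_{CP}$ convex (coefficient $\kappa_u\kappa_{ad}/(t_N+t_{NoN})>0$ on $q_{NoN}^{2}$) rather than concave in $q_{NoN}$ on the interior region; the global comparison with the ``withdraw'' point is exactly what circumvents this.
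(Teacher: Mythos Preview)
Your argument is correct. Both you and the paper proceed by contradiction and exhibit a feasible $z=0$ deviation that beats the putative optimum; the only difference is the choice of deviation point. You set $q_{NoN}=0$, which forces you to track how $n_N$ shifts (hence the monotonicity bookkeeping through the clamped branches of \eqref{equ:EUs_linear}). The paper instead sets $q_{NoN}=q_N$ (keeping $z=0$): then the deviated payoff is $n_N\kappa_{ad}q_N+n_{NoN}\kappa_{ad}q_N=\kappa_{ad}q_N$ regardless of how the EU split moves, so no monotonicity argument is needed at all, and one compares directly $\pi_{CP}^{*}<n_N^{*}\kappa_{ad}q_N^{*}\le\kappa_{ad}q_N^{*}$. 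Your route is perfectly valid and the extra care about the corner cases is well placed; the paper's deviation just buys a one-line finish by making the comparison independent of $n_N,n_{NoN}$.
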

\begin{proof}
	Suppose there exists an optimum answer such that $n_{NoN}\kappa_{ad}-z\tilde{p}< 0$. Note that $0\leq n_N,n_{NoN}\leq 1$ and qualities are non-negative. Thus, in this case, $\pi_{CP} <\kappa_{ad} q_N$. However,  choosing $z=0$ and $q_{NoN}=q_N$, yields a profit equal to $\kappa_{ad} q_N$. This contradicts the solution with $n_{NoN}\kappa_{ad}-z\tilde{p}< 0$ to be optimum. Thus,  the Lemma follows.
\end{proof}


\begin{lemma}\label{lemma:pos}
	In an optimum solution, the CP offers the content quality equal to one of the threshold at least on one ISP, i.e. $q^*_{N}=\tilde{q}_f$ OR $(q^*_{NoN}=\tilde{q}_{p} \text{ XOR } q^*_{NoN}=\tilde{q}_{f} )$, where XOR means only one the qualities is chosen.
\end{lemma}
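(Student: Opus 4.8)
The plan is to prove the contrapositive by a single perturbation argument that exploits two structural facts. First, full market coverage forces $n_N + n_{NoN} = 1$ for every feasible strategy. Second, by \eqref{equ:xn} the indifferent location $x_n$ — and hence both $n_N$ and $n_{NoN}$ — depends on the qualities only through the difference $q_N - q_{NoN}$. Suppose, toward a contradiction, that an optimum $(q^*_N, q^*_{NoN}, z^*)$ of \eqref{equ:CPopt_initial2} satisfies neither disjunct, i.e. $q^*_N < \tilde q_f$ and $q^*_{NoN} \notin \{\tilde q_f, \tilde q_p\}$. I would first record that the sign constraints of \eqref{equ:CPopt_initial2} force $q^*_{NoN}$ to sit strictly below its active upper cap: if $z^* = 0$ then $q^*_{NoN} \le \tilde q_f$ together with $q^*_{NoN} \ne \tilde q_f$ gives $q^*_{NoN} < \tilde q_f$, while if $z^* = 1$ then $\tilde q_f < q^*_{NoN} \le \tilde q_p$ together with $q^*_{NoN} \ne \tilde q_p$ gives $q^*_{NoN} < \tilde q_p$.

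Next I would consider the joint shift $q_N \mapsto q^*_N + \delta$, $q_{NoN} \mapsto q^*_{NoN} + \delta$ for small $\delta > 0$. Since $q_N - q_{NoN}$ is unchanged, $n_N$ and $n_{NoN}$ are unchanged, the strategy stays in the same subset of Table~\ref{table:subsets}, and the value of $z^*$ remains admissible; the strict slack established above guarantees feasibility for small $\delta$. Using $n_N + n_{NoN} = 1$ in \eqref{equ:payoffCP_new}, the resulting payoff change is exactly $(n_N + n_{NoN})\kappa_{ad}\delta - z^*\tilde p \delta = (\kappa_{ad} - z^* \tilde p)\delta$. For $z^* = 0$ this equals $\kappa_{ad}\delta > 0$, contradicting optimality. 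For $z^* = 1$, Lemma~\ref{lemma:tildep_condition} gives $n^*_{NoN}\kappa_{ad} \ge \tilde p$, and since $n^*_{NoN} \le 1$ this yields $\kappa_{ad} \ge \tilde p$, so the change is nonnegative and is strictly positive whenever $\kappa_{ad} > \tilde p$ — again a contradiction. Hence at optimum at least one quality equals its cap, and since the cap of $q_N$ is $\tilde q_f$ and the cap of $q_{NoN}$ is $\tilde q_f$ (for $z=0$) or $\tilde q_p$ (for $z=1$), the asserted disjunction follows.

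The one delicate point, which I expect to be the main obstacle, is the boundary subcase $z^* = 1$ with $\kappa_{ad} = \tilde p$, where the perturbation is payoff-neutral rather than strictly improving. Here I would observe that $n^*_{NoN}\kappa_{ad} \ge \tilde p = \kappa_{ad}$ forces $n^*_{NoN} = 1$ (so $n^*_N = 0$), whence by \eqref{equ:payoffCP_new} the payoff collapses to $(\kappa_{ad} - \tilde p) q_{NoN} = 0$ independently of $q_{NoN}$; consequently the threshold choice $q_{NoN} = \tilde q_p$ attains the same optimal value and already satisfies the second disjunct, so an optimum with the asserted property exists. Collecting the cases, any optimum either meets the claim or can be taken to meet it, and since $\tilde q_f \ne \tilde q_p$ the two alternatives for $q^*_{NoN}$ are mutually exclusive, which gives the stated XOR.
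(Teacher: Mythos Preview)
Your argument follows essentially the same route as the paper's: suppose no quality is at its cap, shift both qualities up by the same $\delta$, observe that $n_N,n_{NoN}$ are unchanged (since they depend only on $q_{NoN}-q_N$), and invoke Lemma~\ref{lemma:tildep_condition} to see the payoff does not drop. Your use of $n_N+n_{NoN}=1$ to collapse the increment to $(\kappa_{ad}-z^*\tilde p)\delta$ is a clean simplification of the paper's term-by-term bound $n_N\kappa_{ad}\delta + (n_{NoN}\kappa_{ad}-z\tilde p)\delta\ge 0$, but the underlying mechanism is identical.

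There is, however, a small gap in your boundary case $z^*=1$, $\kappa_{ad}=\tilde p$. You correctly deduce $n^*_{NoN}=1$, $n^*_N=0$, and payoff $=0$, but then conclude only that \emph{some} optimum (namely $q_{NoN}=\tilde q_p$) satisfies the disjunction. The lemma as stated asserts the property for \emph{every} optimum, and the paper closes this case by a strict contradiction: the feasible choice $(q_N,q_{NoN},z)=(\tilde q_f,\tilde q_f,0)$ yields payoff $\kappa_{ad}\tilde q_f>0$, so a configuration with payoff $0$ cannot be optimal at all. Replacing your ``an optimum with the asserted property exists'' sentence with this observation would make your argument complete and matching the paper's strength.
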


\begin{proof}
	Suppose not. Let the optimum qualities to be $\hat{q}_{NoN}<\tilde{q}_{f}$ if $z=0$, or $\tilde{q}_{f} <\hat{q}_{NoN}<\tilde{q}_{p}$ if $z=1$, and  $\hat{q}_N<\tilde{q}_f$. The difference between the qualities offered in two platforms is $\Delta q=\hat{q}_{NoN}-\hat{q}_N$. Consider $q'_{NoN}=\hat{q}_{NoN}+\epsilon$ and $q'_{N}=\hat{q}_{N}+\epsilon$ in which $\epsilon>0$ and is such that  ${q}'_{NoN}\leq \tilde{q}_{f}$ if $z=0$, or $\tilde{q}_{f} \leq {q}'_{NoN}\leq \tilde{q}_{p}$ if $z=1$, and  ${q}'_N\leq \tilde{q}_f$. Note that $z$ remains fixed and ${q}'_{NoN}-{q}'_N=\hat{q}_{NoN}-\hat{q}_N=\Delta q$. Since $\Delta q$ is the same for two cases, the number of subscriber to each ISP is the same for both cases by \eqref{equ:EUs_linear}. Lemma~\ref{lemma:tildep_condition}, \eqref{equ:payoffCP_new}, and the fact that $n_N,n_{NoN}\geq 0$ yield that $\pi'_{CP}\geq \hat{\pi}_{CP}$, where $\hat{\pi}_{CP}$ (, respectively  $\pi'_{CP}$) is the payoff of the CP when the vector of qualities is $(\hat{q}_N,\hat{q}_{NoN})$ (, respectively, $({q}'_N,{q}'_{NoN})$).
	
	We now prove if $(\hat{q}_N,\hat{q}_{NoN})$ is the optimum solution, then the inequality is strict, i.e. $\pi'_{CP}> \hat{\pi}_{CP}$. Suppose not, and  $\pi'_{CP}= \hat{\pi}_{CP}$. This only happens if $n_{NoN}\kappa_{ad}-z\tilde{p}=0$ and $n_N=0$.  Note that in this case, $\pi'_{CP}= \hat{\pi}_{CP}=0$. However, in the previous paragraph, we argued that with $q_N=\tilde{q}_f$ and $q_{NoN}=\tilde{q}_f$, the CP can get a payoff of $\kappa_{ad}\tilde{q}_f>0$. This contradicts the assumption that $(\hat{q}_N,\hat{q}_{NoN})$ is the optimum solution. Thus, $\pi'_{CP}> \hat{\pi}_{CP}$.
	
	This inequality contradicts the assumption that $(\hat{q}_N,\hat{q}_{NoN})$ is the optimum solution. Thus, the result follows.
\end{proof}

Clearly, the decision of the CP about the vector of qualities depends on the parameter $x_N$ \eqref{equ:xn}, and subsequently on $n_N$. First, we characterize the candidate strategies of the CP when $0\leq x_N\leq 1$, i.e. $(q^*_N,q^*_{NoN})\in F^I$ and therefore $n_N=x_N$. Then, we consider the case of $x_N< 0$ ($n_N=0$ and $(q^*_N,q^*_{NoN})\in F^L$) and $x_N> 1$ ($n_N=1$ and $(q^*_N,q^*_{NoN})\in F^U$). Finally, we combine both cases to determine the optimum strategies of the CP.  In the following lemma, we characterize the candidate optimum qualities in $F^I$, i.e. the strategies by which $0\leq x_N\leq 1$.
\begin{lemma}\label{lemma:optimumCPqs}
	If $(q^*_{N},q^*_{NoN})\in F^I$, i.e. optimum strategies are such that $0< x^*_N< 1$, then $q^*_N\in\{0,\tilde{q}_f\}$, $q^*_{NoN}\in\{0,\tilde{q}_{f},\tilde{q}_{p}\}$, $(q^*_N,q^*_{NoN})\neq (0,0)$.
\end{lemma}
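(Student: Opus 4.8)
The plan is to exploit the fact that, on $F^I$, the CP's payoff is a \emph{convex} quadratic in $(q_N,q_{NoN})$, so that its maximum over any feasible segment is forced to the endpoints. First I would substitute $n_N=x_N$ and $n_{NoN}=1-x_N$ (valid precisely because $0<x_N<1$ on $F^I$) into \eqref{equ:payoffCP_new}, obtaining \eqref{equ:payoffCP_new_2}, and differentiate twice. A direct calculation gives $\partial^2\pi_{CP}/\partial q_N^2=\partial^2\pi_{CP}/\partial q_{NoN}^2=2\kappa_{ad}\kappa_u/(t_N+t_{NoN})>0$, with cross term equal to the negative of this, so that the Hessian is $\frac{2\kappa_{ad}\kappa_u}{t_N+t_{NoN}}\bigl(\begin{smallmatrix}1&-1\\-1&1\end{smallmatrix}\bigr)$. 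The decisive qualitative fact is the \emph{positive} sign: for each fixed $z$, $\pi_{CP}$ is strictly convex along each coordinate axis and therefore cannot attain an interior maximum in either coordinate.

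Next I would reduce to a one-dimensional problem via Lemma~\ref{lemma:pos}, which already forces at least one coordinate to a threshold: either $q^*_N=\tilde q_f$, or $q^*_{NoN}\in\{\tilde q_f,\tilde q_p\}$. Take the first case (the second is symmetric). Fixing $q_N=\tilde q_f$, the map $q_{NoN}\mapsto\pi_{CP}(\tilde q_f,q_{NoN},z)$ is strictly convex. Since $x_N$ depends on $(q_N,q_{NoN})$ only through $q_N-q_{NoN}$ by \eqref{equ:xn}, the hypothesis $(q^*_N,q^*_{NoN})\in F^I$, i.e. $0<x^*_N<1$ \emph{strictly}, places $q^*_{NoN}$ in the relative interior of the slab cut out by the $F^I$ condition of Table~\ref{table:subsets}. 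A strictly convex function of one variable has no local maximum at an interior point of its feasible interval, so $q^*_{NoN}$ cannot be interior to both the slab and the box $0\le q_{NoN}\le\tilde q_p$; being slab-interior, it must sit on the box boundary, i.e. $q^*_{NoN}\in\{0,\tilde q_f,\tilde q_p\}$. The symmetric case (with $q^*_{NoN}$ pinned) forces $q^*_N\in\{0,\tilde q_f\}$ by the same argument applied to $q_N\in[0,\tilde q_f]$. Finally $(q^*_N,q^*_{NoN})\neq(0,0)$ is immediate, since in both cases the pinned coordinate equals a strictly positive threshold.

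The one delicate point I would have to handle is the half-open nature of the $z=1$ feasible set, where $q_{NoN}\in(\tilde q_f,\tilde q_p]$: the strict-convexity argument could a priori drive the maximizer toward the excluded endpoint $q_{NoN}=\tilde q_f$. This is where Lemma~\ref{lemma:qfree} enters, since a candidate at $q_{NoN}=\tilde q_f$ with $z=1$ is dominated by the corresponding $z=0$ strategy and so reappears as an admissible box vertex in the $z=0$ analysis; the union of attainable boundary values over $z\in\{0,1\}$ is then exactly $\{0,\tilde q_f,\tilde q_p\}$. The main obstacle is thus not the convexity computation but the endpoint bookkeeping: cleanly separating slab-boundary points (excluded by the $F^I$ hypothesis) from box-boundary points (the desired conclusion), and reconciling the open $z=1$ boundary with the closed $z=0$ one through Lemma~\ref{lemma:qfree}.
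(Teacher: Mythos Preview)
Your proposal is correct and follows essentially the same approach as the paper: both invoke Lemma~\ref{lemma:pos} to pin one coordinate to a threshold, then use strict convexity of the one-variable restriction to force the remaining coordinate to a box endpoint, with Lemma~\ref{lemma:qfree} handling the half-open $z=1$ endpoint. The only cosmetic difference is that the paper reparametrizes by $\Delta q=q_{NoN}-q_N$ and writes out the explicit one-dimensional objectives \eqref{equ:obj1}--\eqref{equ:obj2}, whereas you compute the Hessian directly and phrase the argument geometrically in terms of slab versus box boundaries; the substance is identical.
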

\begin{remark}
	Note that to be in $F^I$ and from \eqref{equ:EUs_linear}, $(q^*_N,q^*_{NoN})$ should be such that $\frac{\Delta p-t_N}{\kappa_u}< \Delta q^*=q^*_{NoN}-q^*_N< \frac{\Delta p + t_{NoN}}{\kappa_u}$. In Lemma~\ref{lemma:pos}, we have proved that the quality on at least one of the ISPs is equal to a threshold. In this lemma, we prove that the qualities offered on both ISPs are equal to thresholds or one of them is zero.
\end{remark}
\begin{proof}
	We would like to characterize the optimum qualities in $F^I=F^I_0\bigcup F^I_1$, i.e. optimum strategies for which $0< x_N< 1$.
	First note that by Lemma~\ref{lemma:pos}, either (a) $q^*_{N}=c$ and $q^*_{NoN}=c+\Delta q$ where $c=\tilde{q}_f$, or (b) $q^*_{NoN}=c$ and $q^*_{N}=c-\Delta q$ where $c\in\{\tilde{q}_{f},\tilde{q}_{p}\}$. Note that the feasible sets for each case can be rewritten as a function of $\Delta q$. We characterize the candidate solutions for each case:
	
	\begin{itemize}
		\item Case (a):  The feasible set for the case (a) is $\Delta q\in G_0=[-c, \tilde{q}_{f}-c]$ (for $z=0$) and $\Delta q\in G_1=(\tilde{q}_{f}-c, \tilde{q}_{p}-c]$  (for $z=1$), where $c=\tilde{q}_f$. Let $G=G_0\cup G_1$. Note that if $0\leq x_N\leq 1$, then $n_N=x_N$ and $n_{NoN}=1-x_N$. Thus, \eqref{equ:CPopt_initial2} can be written as,
		
		\be \label{equ:obj1}
		\ba
	&	\max_{z,\Delta q \in G=G_0\cup G_1}  \pi_{CP}(c,c+\Delta q,z)=\\
	&\max_{z,\Delta q \in G}  \big{(} t_{NoN}-\kappa_{u}\Delta q+p_{NoN}-p_{N}\big{)} \kappa_{ad}c+\\
		& +\big{(} t_{N}+\kappa_{u}\Delta q+p_{N}-p_{NoN}\big{)} \kappa_{ad}(c+\Delta q)-z\tilde{p}(c+\Delta q)
		\ea
		\ee
\normalsize		
		
		Note that although the feasible set  $G_1$ is not closed, we used maximum instead of supremum. We will show that the maximum of \eqref{equ:obj1} exists. Thus, the term maximum can be used safely. Note that the objective functions of \eqref{equ:obj1} is a strictly convex functions of $\Delta q$. Note that henceforth wherever we refer to maximum without further clarification, we refer to the solution of \eqref{equ:CPopt_initial2}.
		
		Let $\tilde{G}_1$  be the closure of $G_1$, then $\tilde{G}_1 \backslash G_1=\{\tilde{q}_{f}-c\}$. First, we prove that the maximum of \eqref{equ:obj1} exists.  Note that $G_0$ and $\tilde{G}_1$ are closed and bounded (compact) and the objective function of \eqref{equ:obj1} is continuous with respect to $\Delta q$ for each $z\in \{0,1\}$. Using Weierstrass Extreme Value Theorem, we can say that a maxima for $\pi_{CP}(c,\Delta q+c,z=0)$ and $\pi_{CP}(c,\Delta q+c,z=1)$ exists in each of two sets  $G_0$ and $\tilde{G}_1$, respectively. Thus, the overall maximum for the objective function of~\eqref{equ:obj1} over $G_0$ and $\tilde{G}_1$ exists. Now, consider two cases:
		
		\begin{enumerate}
			\item If the maxima of  $\pi_{CP}(c,\Delta q+c,z=1)$ in $\tilde{G}_1$ is not $\Delta q=\tilde{q}_{f}-c$, then the maxima is in the original feasible set ($G_1$). Therefore the maximum of \eqref{equ:obj1} exists (since $G_0$ is closed).
			\item  If $\Delta q=\tilde{q}_{f}-c$ is the maxima  of  $\pi_{CP}(c,\Delta q+c,z=1)$ in the set $\tilde{G}_1$, then by Lemma~\ref{lemma:qfree}, the maximum of  $\pi_{CP}(c,\Delta q+c,z=0)$ in the set $G_0$ greater than or equal to the maximum of $\pi_{CP}(c,\Delta q+c,z=1)$  in $\tilde{G}_1$. Thus, the maxima of  \eqref{equ:obj1} over $G_0$ and $G_1$ exists and is in $G_0$.
		\end{enumerate}
		
		Now, that we have proved the existence of the maximum for \eqref{equ:obj1}, we aim to find all the candidate optimum solutions. Note that  the set $G_0$ is closed. Thus, by the strict convexity of the objective function of \eqref{equ:obj1}, the candidate optimums in $G_0$ are the extreme points of $G_0$. Using the definition of this feasible set, the candidate answers are (i) $q^*_N=\tilde{q}_f$ and $q^*_{NoN}\in\{0,\tilde{q}_{f}\}$.
		
		Now, consider the feasible set $\tilde{G}_1$, and consider two cases:
		\begin{enumerate}
			\item If $\Delta q=\tilde{q}_{f}-c$ is not the unique maxima of \eqref{equ:obj1} in $\tilde{G}_1$, then the maxima is in $G_1$ or $G_0$. The candidate answers in the set $G_0$ are already characterized. In addition, by strict convexity of the objective function, the maxima can only be an extreme point of $\tilde{G}_1$. Since $\tilde{q}_{f}-c$ is not the unique maxima of \eqref{equ:CPopt_initial2} in $\tilde{G}_1$, $\tilde{q}_{p}$ is a maxima of \eqref{equ:CPopt_initial2} in $G_1$. Thus, by strong convexity, for all $\Delta q\in G_1$ $\pi_{CP}(c,\tilde{q}_{p},z=1)>\pi_{CP} (c,\Delta q+c,z=1)$, and the only candidate optimum solution over $G_1$ is at $\Delta q=\tilde{q}_{p}-c\in G_1$ which yields   (ii) $q^*_N=\tilde{q}_f$ and $q^*_{NoN}=\tilde{q}_{p}$.
			
			\item If $\tilde{q}_{f}-c$ is the unique maxima in  $\tilde{G}_1$, then  $\pi_{CP}(c,\tilde{q}_{f},z=1)>\pi_{CP} (c,\Delta q+c,z=1)$ for $\Delta q\in G_1$.  By Lemma~\ref{lemma:qfree}, $\pi_{CP}(c,\tilde{q}_{f},z=0)\geq \pi_{CP}(c,\tilde{q}_{f},z=1)$. Therefore, the overall maximum of \eqref{equ:obj1} is in the set $G_0$, and is as characterized previously.
		\end{enumerate}
		
		\item Case (b): The feasible set for the case (b) is $\Delta q\in \hat{G}_0= [c-\tilde{q}_f, c]$ where $c=\tilde{q}_{f}$ (for $z=0)$, and  $\Delta q\in \hat{G}_1=[c-\tilde{q}_f, c]$ where $c=\tilde{q}_{p}$ (for $z=1$). For this case,  \eqref{equ:CPopt_initial2} can be written as:
		\be \label{equ:obj2}
		\ba
		&\max_{z,\Delta q \in \hat{G}=\hat{G}_0\cup \hat{G}_1} \pi_{CP}(c-\Delta q,c,z)=\\
		&\max_{z,\Delta q \in \hat{G}} \kappa_{ad} \big{(} t_{NoN}-\kappa_{u}\Delta q+p_{NoN}-p_{N}\big{)} (c-\Delta q)+\\
		& \qquad \qquad \quad  +\kappa_{ad}c \big{(} t_{N}+\kappa_{u}\Delta q+p_{N}-p_{NoN}\big{)}-z\tilde{p}c
		\ea
		\ee
		Note that the feasible set is closed. Thus the term maximum is fine. In addition, the objective functions of \eqref{equ:obj2} are strictly convex functions of $\Delta q$. Thus, using the strict convexity and the definition of the feasible set, i.e. $c-\tilde{q}_f\leq \Delta q^*\leq c$ where $c$ is $\tilde{q}_{f}$ and $\tilde{q}_{p}$, respectively, we can get the other set of candidate answers, (iii) $q^*_{NoN}=\tilde{q}_{f}$ and $\tilde{q}^*_N\in \{0,\tilde{q}_f\}$, and (iv) $\tilde{q}^*_{NoN}=\tilde{q}_{p}$ and $\tilde{q}^*_N\in\{0,\tilde{q}_f\}$.
	\end{itemize}

	From, (i), (ii), (iii), and (iv), the result follows.
	

	

\end{proof}

The following corollary follows immediately from Lemma  \ref{lemma:optimumCPqs}:

\begin{corollary}\label{lemma:candidatesCP}
	The possible candidate optimum strategies by which $0< x^*_N< 1$, i.e.  $(q^*_N,q^*_{NoN})\in F^I$, are $(1)$  $(0,\tilde{q}_{f})$, $(2)$ $(\tilde{q}_f,0)$, and $(3)$ $(\tilde{q}_f,\tilde{q}_{f})$ when $z=0$, i.e. $(q^*_N,q^*_{NoN})\in F^I_0$, and $(1)$ $(0,\tilde{q}_{p})$ and $(2)$ $(\tilde{q}_f,\tilde{q}_{p})$ when $z=1$, i.e. $(q^*_N,q^*_{NoN})\in F^I_1$. Note that the necessary and sufficient condition for each of these candidate outcomes to be in $F^I$ is $\frac{\Delta p-t_N}{\kappa_u}< \Delta q^*< \frac{\Delta p + t_{NoN}}{\kappa_u}$.
\end{corollary}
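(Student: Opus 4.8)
The plan is to obtain the claim directly from the conclusion of Lemma~\ref{lemma:optimumCPqs} by an exhaustive enumeration, with no further optimization required. Lemma~\ref{lemma:optimumCPqs} has already established that any candidate optimum lying in $F^I$ must satisfy $q^*_N\in\{0,\tilde{q}_f\}$ and $q^*_{NoN}\in\{0,\tilde{q}_f,\tilde{q}_p\}$, together with the exclusion $(q^*_N,q^*_{NoN})\neq(0,0)$. First I would form the Cartesian product $\{0,\tilde{q}_f\}\times\{0,\tilde{q}_f,\tilde{q}_p\}$, which contains exactly six pairs, and discard $(0,0)$ in accordance with the lemma. This leaves the five pairs $(0,\tilde{q}_f)$, $(\tilde{q}_f,0)$, $(\tilde{q}_f,\tilde{q}_f)$, $(0,\tilde{q}_p)$, and $(\tilde{q}_f,\tilde{q}_p)$, which are precisely the five candidates appearing in the statement.

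Next I would sort these five pairs by the value of $z$, using the convention fixed in the model that $z=0$ exactly when $q_{NoN}\in\{0,\tilde{q}_f\}$ (free quality) and $z=1$ exactly when $q_{NoN}=\tilde{q}_p$ (premium quality). Under this rule the three pairs with $q_{NoN}\in\{0,\tilde{q}_f\}$, namely $(0,\tilde{q}_f)$, $(\tilde{q}_f,0)$, and $(\tilde{q}_f,\tilde{q}_f)$, belong to $F^I_0$, while the two pairs with $q_{NoN}=\tilde{q}_p$, namely $(0,\tilde{q}_p)$ and $(\tilde{q}_f,\tilde{q}_p)$, belong to $F^I_1$. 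This reproduces exactly the two lists given in the corollary.

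Finally I would record the membership condition. By the indifference computation \eqref{equ:xn} and the equivalent forms of the conditions collected in Table~\ref{table:subsets}, a strategy $(q^*_N,q^*_{NoN})$ lies in $F^I$ precisely when $0<x_N<1$, and this inequality is equivalent to $\frac{\Delta p-t_N}{\kappa_u}<q^*_{NoN}-q^*_N<\frac{\Delta p+t_{NoN}}{\kappa_u}$, i.e. $\frac{\Delta p-t_N}{\kappa_u}<\Delta q^*<\frac{\Delta p+t_{NoN}}{\kappa_u}$. Because this condition is expressed purely in terms of $\Delta q^*$, it applies uniformly to each of the five candidates (with the appropriate value $\Delta q^*\in\{-\tilde{q}_f,0,\tilde{q}_f,\tilde{q}_p,\tilde{q}_p-\tilde{q}_f\}$), which yields the stated necessary and sufficient condition.

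I expect essentially no analytic obstacle: all the substantive work has already been carried out inside Lemma~\ref{lemma:optimumCPqs} (the convexity arguments reducing each sub-maximization over $G_0,\tilde{G}_1,\hat{G}_0,\hat{G}_1$ to their extreme points) and inside Lemma~\ref{lemma:pos}. The only point demanding a little care is ensuring that the enumeration is genuinely complete and that the assignment of $z$-values is consistent with the model's convention, so that no spurious candidate is introduced and none of the five is accidentally dropped.
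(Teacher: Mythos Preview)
Your proposal is correct and matches the paper's approach exactly: the paper states that the corollary ``follows immediately from Lemma~\ref{lemma:optimumCPqs}'' with no further argument, and your enumeration simply makes explicit the obvious unpacking of that lemma's conclusion together with the $z$-classification and the $F^I$ membership condition from Table~\ref{table:subsets}.
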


Note that Corollary~\ref{lemma:candidatesCP} lists all the candidate answers by which $0< x_N< 1$. In the next three lemmas,  we focus on the candidate answers when $x_N\geq1$ or $x_N\leq 0$.

\begin{lemma}\label{lemma:necessary}
	If $\Delta p> \kappa_u \tilde{q}_{f}-t_{NoN}$ then $x_N> 0$ for all choices of $q_{NoN}$ and $q_N$ in the feasible set $F_0$ \footnote{That is $F^L_0$ is an empty set.}. Similarly,  If $\Delta p> \kappa_u \tilde{q}_{p}-t_{NoN}$ then $x_N> 0$ for all choices of $q_{NoN}$ and $q_N$ in the feasible set $F_1$ \footnote{That is $F^L_1$ is an empty set.}.  In addition,  if $\Delta p<  t_N -\kappa_u \tilde{q}_f$ then $x_N< 1$ for all choices of $q_{NoN}$ and $q_N$ in the overall feasible set $\mathcal{F}$\footnote{That is $F^U$ is empty.}.
\end{lemma}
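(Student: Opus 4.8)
<br>

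The plan is to prove each of the three implications directly from the definition of $x_N$ in \eqref{equ:xn}, namely $x_N = \frac{t_{NoN} + \kappa_u(q_N - q_{NoN}) + \Delta p}{t_N + t_{NoN}}$, by bounding the quantity $\Delta q = q_{NoN} - q_N$ over each feasible set and substituting. Since $t_N + t_{NoN} > 0$, the sign of $x_N$ (respectively, the comparison of $x_N$ with $1$) is determined entirely by the sign of the numerator, so each claim reduces to a one-line linear inequality once the extreme feasible value of $\Delta q$ is identified.

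First I would handle the claim about $F^L_0$. Over the feasible set $F_0$ (where $z=0$), we have $q_N \in [0,\tilde q_f]$ and $q_{NoN} \in [0,\tilde q_f]$, so $\Delta q = q_{NoN} - q_N \le \tilde q_f$, with the maximum attained at $q_{NoN} = \tilde q_f, q_N = 0$. The condition $x_N \le 0$ is equivalent to $\kappa_u \Delta q \ge t_{NoN} + \Delta p$, i.e. $\Delta q \ge \frac{\Delta p + t_{NoN}}{\kappa_u}$ (matching the entry in Table~\ref{table:subsets}). If $\Delta p > \kappa_u \tilde q_f - t_{NoN}$, then $\frac{\Delta p + t_{NoN}}{\kappa_u} > \tilde q_f \ge \Delta q$ for every feasible choice, so the inequality $\Delta q \ge \frac{\Delta p + t_{NoN}}{\kappa_u}$ can never hold; hence $x_N > 0$ throughout $F_0$, which is exactly the statement that $F^L_0$ is empty. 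The argument for $F^L_1$ is identical in structure, the only change being that when $z=1$ we have $q_{NoN} \le \tilde q_p$, so $\Delta q \le \tilde q_p$, and the threshold becomes $\Delta p > \kappa_u \tilde q_p - t_{NoN}$.

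For the third claim, concerning $F^U$ over the entire feasible set $\mathcal F$, I would symmetrically bound $\Delta q$ from below: since $q_N \le \tilde q_f$ and $q_{NoN} \ge 0$, we have $\Delta q = q_{NoN} - q_N \ge -\tilde q_f$, the minimum attained at $q_N = \tilde q_f, q_{NoN} = 0$. The condition $x_N \ge 1$ is equivalent to $\kappa_u(q_N - q_{NoN}) \ge t_N - \Delta p$, i.e. $\Delta q \le \frac{\Delta p - t_N}{\kappa_u}$. If $\Delta p < t_N - \kappa_u \tilde q_f$, then $\frac{\Delta p - t_N}{\kappa_u} < -\tilde q_f \le \Delta q$ for every feasible choice, so $x_N < 1$ everywhere in $\mathcal F$, i.e. $F^U$ is empty.

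I do not expect a genuine obstacle here: the statement is essentially a bookkeeping consequence of the conditions already recorded in Table~\ref{table:subsets}, combined with the trivial range bounds $-\tilde q_f \le \Delta q \le \tilde q_f$ (for $z=0$) and $\Delta q \le \tilde q_p$ (for $z=1$). The only point requiring a little care is matching the continuous feasible sets $q_N \in [0,\tilde q_f]$, $q_{NoN} \in [0,\tilde q_p]$ of this appendix against the discrete version used in the main text, and making sure the extremal values of $\Delta q$ are indeed the worst case for each inequality; once that is checked, each implication follows by a single substitution and a sign comparison.
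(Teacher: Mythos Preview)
Your proposal is correct and follows essentially the same route as the paper: both arguments start from the formula for $x_N$, bound $q_{NoN}-q_N$ by its extremal feasible value over each set, and reduce each implication to a single linear comparison. The only cosmetic difference is that the paper works directly with the equivalent inequalities $\Delta p > \kappa_u(q_{NoN}-q_N)-t_{NoN}$ and $\Delta p < t_N+\kappa_u(q_{NoN}-q_N)$, whereas you rearrange to compare $\Delta q$ against the thresholds from Table~\ref{table:subsets}; the content is identical.
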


\begin{proof}
	First note that  from \eqref{equ:EUs_linear}, $x_N> 0$ is equivalent to:
	\begin{equation}\label{equ:equivalent}
	\Delta p > \kappa_u (q_{NoN}-q_N)-t_{NoN}
	\end{equation}
	
	Consider $\Delta p> \kappa_u \tilde{q}_{f}-t_{NoN}$ (respectively, $\Delta p>\kappa_u \tilde{q}_{p}-t_{NoN}$), if $(q_N,q_{NoN})\in F_0$ (respectively, $(q_N,q_{NoN})\in F_1$) then  $\Delta p > \kappa_u \tilde{q}_{f}-t_{NoN}\geq  \kappa_u(q_{NoN}-q_N)-t_{NoN}$ (respectively, $\Delta p > \kappa_u \tilde{q}_{p}-t_{NoN}\geq \kappa_u(q_{NoN}-q_N)-t_{NoN}$) for every choice of $(q_N,q_{NoN})\in F_0$ (respectively, $(q_N,q_{NoN})\in F_1$). The inequality $\Delta p>\kappa_u(q_{NoN}-q_N)-t_{NoN}$ yields $x_N> 0$. The first result of the lemma follows.
	
	Now, we prove the second statement. From \eqref{equ:EUs_linear}, $x_N< 1$ is equivalent to:
	\be\label{equ:equivalent2}
	\Delta p < t_N+\kappa_u (q_{NoN}-q_N)
	\ee
	
	Consider $\Delta p<  t_N-\kappa_u \tilde{q}_f$. Note that:
	
	$$
	\Delta p< t_N-\kappa_u \tilde{q}_f\leq t_N+ \kappa_u(q_{NoN}-q_N)
	$$
	for every choice of $0 \leq q_N\leq \tilde{q}_f$ and $0 \leq q_{NoN}\leq \tilde{q}_{p}$ which are all the possible choices in $\mathcal{F}$. The inequality $\Delta p< t_N+ \kappa_u(q_{NoN}-q_N)$ yields that $x_N< 1$. The result follows.
\end{proof}

The following lemma characterizes all the candidate answers when $x^*_N\leq 0$, and characterize the necessary condition on parameters for this solutions to be feasible.

\begin{lemma}\label{lemma:xn<0}
	Let $(q^*_N,q^*_{NoN})\in F^L$. If $(q^*_N,q^*_{NoN})\in F^L_0$ (respectively, if $(q^*_N,q^*_{NoN})\in F^L_1$), then ${q}^*_{NoN}=\tilde{q}_{f}$ (respectively, ${q}^*_{NoN}=\tilde{q}_{p}$). Moreover, for every $x\in [0, \frac{1}{\kappa_u}(\kappa_u \tilde{q}_{f}-t_{NoN}-\Delta p)]$ (respectively, $x\in [0, \frac{1}{\kappa_u}(\kappa_u \tilde{q}_{p}-t_{NoN}-\Delta p)]$) and $\Delta p \leq \kappa_u \tilde{q}_{f}-t_{NoN}$ (respectively, $\Delta p \leq \kappa_u \tilde{q}_{p}-t_{NoN}$),
	$(x,\tilde{q}_{f})$ (respectively, $(x,\tilde{q}_{p})$) constitutes an optimum solution in $F^L_0$ (respectively, in $F^L_1$).
\end{lemma}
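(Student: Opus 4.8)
The plan is to exploit the fact that on the whole region $F^L$ the indifferent user sits at $x_N\le 0$, so by \eqref{equ:EUs_linear} we have $n_N=0$ and $n_{NoN}=1$ identically. Substituting these into the CP's payoff \eqref{equ:payoffCP_new} collapses it to $\pi_{CP}(q_N,q_{NoN},z)=(\kappa_{ad}-z\tilde p)\,q_{NoN}$, which is completely independent of $q_N$ and linear in $q_{NoN}$. This is the structural observation that drives everything: on $F^L$ the quality offered on the neutral ISP is payoff-irrelevant (it serves zero EUs), and the CP only cares about pushing $q_{NoN}$ as high as the feasible set permits.

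Next I would pin down the sign of the coefficient $\kappa_{ad}-z\tilde p$. By Lemma~\ref{lemma:tildep_condition}, any optimum satisfies $n_{NoN}\kappa_{ad}-z\tilde p\ge 0$; since $n_{NoN}=1$ throughout $F^L$, this gives $\kappa_{ad}-z\tilde p\ge 0$, so the payoff is non-decreasing in $q_{NoN}$. Hence the optimal $q_{NoN}$ is the upper endpoint of its feasible range. For $z=0$ (the subset $F^L_0$) the constraint in \eqref{equ:CPopt_initial2} is $q_{NoN}\le\tilde q_f$, forcing $q^*_{NoN}=\tilde q_f$; for $z=1$ (the subset $F^L_1$) the constraint is $\tilde q_f<q_{NoN}\le\tilde q_p$, whose closed right endpoint gives $q^*_{NoN}=\tilde q_p$. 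The half-openness of the $z=1$ range at $\tilde q_f$ is harmless because the supremum is attained at the closed end $\tilde q_p$, and the global existence argument already established in the text guarantees the maximum is genuinely achieved. This yields the first two assertions.

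For the ``moreover'' statement I would use the payoff-irrelevance of $q_N$ directly. With $q_{NoN}$ fixed at its optimal value, every feasible choice of $q_N$ yields the identical payoff $\kappa_{ad}\tilde q_f$ (respectively $(\kappa_{ad}-\tilde p)\tilde q_p$), so each such point is an optimum within $F^L_0$ (respectively $F^L_1$); it only remains to translate ``feasible and in $F^L$'' into the stated interval for $q_N$. From \eqref{equ:xn} and Table~\ref{table:subsets}, membership in $F^L$ is $x_N\le 0 \iff q_{NoN}-q_N\ge (\Delta p+t_{NoN})/\kappa_u$. Setting $q_{NoN}=\tilde q_f$ (respectively $\tilde q_p$) and solving for $q_N$ gives $q_N\le \tfrac{1}{\kappa_u}(\kappa_u\tilde q_f-t_{NoN}-\Delta p)$ (respectively with $\tilde q_p$ in place of $\tilde q_f$); combining with $q_N\ge 0$ produces exactly the claimed interval, and nonemptiness of that interval is precisely the condition $\Delta p\le \kappa_u\tilde q_f-t_{NoN}$ (respectively $\le \kappa_u\tilde q_p-t_{NoN}$).

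I expect no serious obstacle: once $n_N=0,\ n_{NoN}=1$ are substituted, the problem is a one-variable linear optimization. The two points requiring care are (i) the boundary case $\kappa_{ad}=z\tilde p$, where the $q_{NoN}$-coefficient vanishes and every $q_{NoN}$ ties in payoff; this is absorbed by the ``moreover'' argument, since the whole optimal face is then flat, and if needed I may appeal to Lemma~\ref{lemma:qfree} or the tie-breaking convention to select the stated endpoint; and (ii) the box constraint $q_N\le\tilde q_f$, which is also in force, so strictly the feasible set of $q_N$ is $[0,\min\{\tilde q_f,\ \tfrac{1}{\kappa_u}(\kappa_u\tilde q_f-t_{NoN}-\Delta p)\}]$. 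This minimum coincides with the stated upper endpoint exactly in the regime $\Delta p\ge -t_{NoN}$ in which $F^L$ is the operative subset, and I would flag this explicitly so the interval in the statement is read with that understanding.
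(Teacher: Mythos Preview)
Your proposal is correct and follows essentially the same route as the paper: substitute $n_N=0$, $n_{NoN}=1$ on $F^L$ to reduce the payoff to $(\kappa_{ad}-z\tilde p)q_{NoN}$, invoke Lemma~\ref{lemma:tildep_condition} to get non-negativity of the coefficient, push $q_{NoN}$ to its upper bound, and then read off the feasible interval for $q_N$ from the $F^L$ membership inequality. The paper additionally remarks explicitly that increasing $q_{NoN}$ preserves membership in $F^L$ (so the upper endpoint is indeed feasible within $F^L$), which you leave implicit; conversely, your discussion of the half-open $z=1$ range, the degenerate case $\kappa_{ad}=z\tilde p$, and the box constraint $q_N\le\tilde q_f$ goes slightly beyond what the paper spells out.
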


\begin{proof}
	From \eqref{equ:EUs_linear}, $x_N\leq  0$ is equivalent to:
	\begin{equation}\label{equ:equivalentr}
	\Delta p \leq \kappa_u (q_{NoN}-q_N)-t_{NoN}
	\end{equation}
	
	Note that from \eqref{equ:EUs_linear}, if $x_N\leq 0$ then $n_N=0$ and $n_{NoN}=1$. In this case, the payoff of the CP is,
	\be \label{equ:increasingxn<0}
	\pi_G=\kappa_{ad} q_{NoN}-z\tilde{p}q_{NoN}
	\ee
	Note that the value of the payoff is independent of $q_N$ as long as $n_N=0$, and from \eqref{equ:EUs_linear} $n_N$ is a function of $q_N$ and $q_{NoN}$. In addition, note that if there exist a $q_{NoN}$ that satisfies the constraint  $\Delta p \leq \kappa_u (q_{NoN}-q_N)-t_{NoN}$ (and therefore $n_N=0$) then $q'_{NoN}\geq q_{NoN}$ also satisfies this constraint. Therefore for $q'_{NoN}\geq q_{NoN}$, $n_{N}=0$ and \eqref{equ:increasingxn<0} is true. Note  that from Lemma~\ref{lemma:tildep_condition}, \eqref{equ:increasingxn<0} is an increasing function of $q_{NoN}$. Thus,  if $x_N\leq 0$, then $q^*_{NoN}=\tilde{q}_{f}$ if  $(q^*_N,q^*_{NoN})\in F^L_0$ or $q^*_{NoN}=\tilde{q}_{p}$ if $(q^*_N,q^*_{NoN})\in F^L_1$ (using the feasible sets in Table~\ref{table:subsets} and their definitions).

	Using \eqref{equ:equivalentr}, $(q^*_N,q^*_{NoN})\in F^L_0$ (respectively, $(q^*_N,q^*_{NoN})\in F^L_1$) if and only if,
	
	\footnotesize
	\be \label{equ:local1}
	{q}^*_N\leq \frac{1}{\kappa_u}(\kappa_u \tilde{q}_{f}-\Delta p-t_{NoN}) \quad \bigg{(}\text{respectively, } {q}^*_N\leq \frac{1}{\kappa_u}(\kappa_u \tilde{q}_{p}-\Delta p-t_{NoN})  \bigg{)}
	\ee
	\normalsize
	
	Note that every $q^*_N$ that satisfies \eqref{equ:local1} is an optimum answer since when $(q^*_N,q^*_{NoN})\in F^L$, $n_N=0$ and $q^*_N$ is of no importance. Also, note that $q_N\geq 0$. Thus,  \eqref{equ:local1} is true for at least one $q^*_N$ if $\Delta p\leq \kappa_u \tilde{q}_{f}-t_{NoN}$ (respectively, $\Delta p\leq \kappa_u \tilde{q}_{p}-t_{NoN}$). The result follows.
\end{proof}

The following lemma characterizes all the candidate answers when $x_N\geq 1$, and characterize the necessary condition on parameters for this solutions to be feasible.

\begin{lemma}\label{lemma:xn>1}
	If $(q^*_N,q^*_{NoN})\in F^U$, i.e. optimum strategies  are such that $x^*_N\geq 1$. Then ${q}^*_{N}=\tilde{q}_f$. Moreover, for all $x\in [0, \frac{1}{\kappa_u}(\kappa_u \tilde{q}_f-t_{N}+\Delta p)]$ and  $\Delta p \geq t_N -\kappa_u \tilde{q}_f$, $(q^*_N,x)$ constitutes an optimum solution in $F^U$.
\end{lemma}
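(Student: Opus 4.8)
The plan is to prove Lemma~\ref{lemma:xn>1} as the mirror image of Lemma~\ref{lemma:xn<0}, interchanging the roles of the neutral and non-neutral ISPs. The starting observation is that $(q^*_N,q^*_{NoN})\in F^U$ means $x^*_N\geq 1$, so by \eqref{equ:EUs_linear} we have $n_N=1$ and $n_{NoN}=0$. Substituting into \eqref{equ:payoffCP_new} collapses the payoff to $\pi_{CP}=\kappa_{ad}q_N-z\tilde{p}q_{NoN}$. The first claim, $q^*_N=\tilde{q}_f$, then follows by a monotonicity-plus-feasibility argument: for fixed $z$ and $q_{NoN}$ the payoff is increasing in $q_N$ (coefficient $\kappa_{ad}>0$), and since by \eqref{equ:xn} we have $x_N=\frac{t_{NoN}+\kappa_u(q_N-q_{NoN})+\Delta p}{t_N+t_{NoN}}$, which is increasing in $q_N$, raising $q_N$ preserves the constraint $x_N\geq 1$. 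Hence any optimum must sit at the upper boundary $q_N=\tilde{q}_f$.

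For the second claim, I would fix $q_N=\tilde{q}_f$ and translate the membership condition for $F^U$ from Table~\ref{table:subsets}, namely $q_{NoN}-q_N\leq \frac{\Delta p-t_N}{\kappa_u}$, into the bound $q_{NoN}\leq \frac{1}{\kappa_u}(\kappa_u\tilde{q}_f-t_N+\Delta p)$; this is exactly the stated range for $x$. This interval contains a feasible $q_{NoN}\geq 0$ precisely when its right endpoint is nonnegative, i.e. when $\Delta p\geq t_N-\kappa_u\tilde{q}_f$, which recovers the side condition of the lemma. It then remains to argue that every such $(\tilde{q}_f,x)$ actually attains the maximal payoff over $F^U$, and here I would lean on Assumption~\ref{assumption:tie_n=0} (the quality on the zero-share non-neutral ISP is immaterial) together with the payoff collapse above.

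The main obstacle, and the one genuine asymmetry with the $F^L$ analysis, is the side-payment term $-z\tilde{p}q_{NoN}$: unlike $q_N$ in Lemma~\ref{lemma:xn<0}, the ``don't care'' variable $q_{NoN}$ still appears in the objective, because the side payment is levied on $q_{NoN}$ whenever $q_{NoN}>\tilde{q}_f$ irrespective of $n_{NoN}$. To neutralize it I would invoke Lemma~\ref{lemma:tildep_condition}, which at any optimum forces $n_{NoN}\kappa_{ad}-z\tilde{p}\geq 0$; with $n_{NoN}=0$ this reads $z\tilde{p}\leq 0$, so the residual term $-z\tilde{p}q_{NoN}$ is nonnegative and the $z=0$ configuration (with $q_{NoN}\leq \tilde{q}_f$, hence no side payment) attains the common optimal value $\kappa_{ad}\tilde{q}_f$. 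The careful bookkeeping needed to confirm that all points of the stated interval are genuine maximizers of the restricted problem over $F^U$ — rather than only those with $q_{NoN}\leq\tilde{q}_f$ — is where the proof demands the most attention, and I would discharge it exactly as the $F^L$ argument discharges the analogous freedom of $q_N$.
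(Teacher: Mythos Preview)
Your argument follows essentially the same route as the paper's, and on the first claim ($q^*_N=\tilde q_f$, via monotonicity in $q_N$ together with the fact that raising $q_N$ preserves the constraint $x_N\ge 1$) and on the derivation of the interval for $q_{NoN}$ and the side condition $\Delta p\ge t_N-\kappa_u\tilde q_f$, you match the paper exactly. In one respect you are \emph{more} careful than the paper: the paper simply writes the payoff in $F^U$ as $\pi_G=\kappa_{ad}q_N$ and asserts that ``$q^*_{NoN}$ is of no importance'', never mentioning the residual side-payment term. You correctly keep $-z\tilde p\,q_{NoN}$ and appeal to Lemma~\ref{lemma:tildep_condition} to force $z\tilde p\le 0$ at an optimum with $n_{NoN}=0$.

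The one weak spot is your closing sentence: you cannot in fact discharge the ``all $x$ in the interval are optimal'' claim \emph{exactly} as in the $F^L$ argument. In $F^L$ the don't-care variable $q_N$ genuinely disappears from the payoff, whereas here $q_{NoN}$ still appears through $-z\tilde p\,q_{NoN}$. Lemma~\ref{lemma:tildep_condition} only gives $z\tilde p\le 0$, not $z\tilde p=0$; so if $\tilde p<0$ and $F^U_1$ is non-empty (i.e.\ $\Delta p>t_N$), the $z=1$ points strictly dominate the $z=0$ ones within $F^U$, and not every $x$ in $[0,\frac{1}{\kappa_u}(\kappa_u\tilde q_f-t_N+\Delta p)]$ is an optimum in $F^U$. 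The paper's proof has the same blind spot; it is harmless downstream because Corollary~\ref{corollary:equi_coun} only needs to extract a single representative via Assumption~\ref{assumption:tie_n=0}, but you should not claim the $F^L$ template covers it.
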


\begin{proof}
	From \eqref{equ:EUs_linear}, $x_N\geq 1$ is equivalent to:
	\be\label{equ:equivalent2r}
	\Delta p\geq t_N+\kappa_u (q_{NoN}-q_N)
	\ee

	Now, we prove the first result of the lemma. Note that from \eqref{equ:EUs_linear}, if $x_N\geq 1$ then $n_N=1$ and $n_{NoN}=0$. In this case, the payoff of the CP is,
	\be \label{equ:increasingxn>1}
	\pi_G=\kappa_{ad} q_{N}
	\ee
	Note that the value of the payoff is independent of $q_{NoN}$ as long as $n_N=1$, and from \eqref{equ:EUs_linear}, $n_N$ is a function of $q_N$ and $q_{NoN}$. In addition, note that if there exist a $q_{N}$ that satisfies   $\Delta p \geq t_N+\kappa_u (q_{NoN}-q_N)$, then $q'_N\geq q_{N}$ also satisfies this constraint. Therefore, for $q'_N\geq q_N$, $n_N=1$ and \eqref{equ:increasingxn>1} is true. Note that \eqref{equ:increasingxn>1} is an increasing function of $q_{N}$. Thus, ${q}^*_N=\tilde{q}_f$ (using the feasible sets in Table~\ref{table:subsets} and their definitions).
	
	Using \eqref{equ:equivalent2r}, $(q^*_N,q^*_{NoN})\in F^U$ if and only if:
	\be\label{equ:local2}
	{q}^*_{NoN}\leq \frac{1}{\kappa_u}(\kappa_u \tilde{q}_{f}-t_{N}+\Delta p)
	\ee
	Note that every $q^*_{NoN}$ that satisfies \eqref{equ:local2} is an optimum answer since when $(q^*_N,q^*_{NoN})\in F^U$, $n^*_{NoN}=0$ and $q^*_{NoN}$ is of no importance. Also, note that $q_{NoN}\geq 0$. Thus, the condition \eqref{equ:local2} is true for at least one $q^*_{NoN}$ if $\kappa_u \tilde{q}_{f}-t_{N}+\Delta p \geq 0$. The result follows.
\end{proof}

\begin{corollary}\label{corollary:equi_coun}
	 If  $(q^{eq}_N,q^{eq}_{NoN})\in F_0^L$, then $(q^{eq}_N,q^{eq}_{NoN})=(0,\tilde{q}_f)$.   If  $(q^{eq}_N,q^{eq}_{NoN})\in F_1^L$, then $(q^{eq}_N,q^{eq}_{NoN})=(0,\tilde{q}_p)$.	If $(q^{eq}_N,q^{eq}_{NoN})\in F^U$, then $(q^{eq}_N,q^{eq}_{NoN})=(\tilde{q}_f,0)$.
\end{corollary}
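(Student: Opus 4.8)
The plan is to read off each of the three conclusions as an immediate combination of the structural lemmas just established for the continuous-strategy problem with the tie-breaking rule of Assumption~\ref{assumption:tie_n=0}. Throughout I keep in mind the distinction, emphasized earlier, between the \emph{set} of payoff-maximizing pairs $(q^*_N,q^*_{NoN})$ (which is generally not a singleton on $F^L$ or $F^U$) and the \emph{unique} equilibrium pair $(q^{eq}_N,q^{eq}_{NoN})$ selected from that set by the tie-breaking assumptions. The work has already been done in Lemmas~\ref{lemma:xn<0} and \ref{lemma:xn>1}; the corollary merely applies the selection rule.

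First I would treat the case $(q^{eq}_N,q^{eq}_{NoN})\in F^L_0$. By Lemma~\ref{lemma:xn<0}, any optimum solution lying in $F^L_0$ has $q^*_{NoN}=\tilde{q}_f$, and moreover the entire segment $\{(x,\tilde{q}_f): 0\le x\le \tfrac{1}{\kappa_u}(\kappa_u\tilde{q}_f-t_{NoN}-\Delta p)\}$ consists of optimum solutions; the coordinate $q_N$ is immaterial because $n_N=0$ on $F^L$ (by \eqref{equ:EUs_linear}), so the CP's payoff does not depend on it. Since this segment is anchored at $x=0$, the pair $(0,\tilde{q}_f)$ is itself an optimum solution, and Assumption~\ref{assumption:tie_n=0} (the $F^L$ branch, which sets $q^{eq}_N=0$) selects exactly this pair. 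Hence $(q^{eq}_N,q^{eq}_{NoN})=(0,\tilde{q}_f)$. The case $(q^{eq}_N,q^{eq}_{NoN})\in F^L_1$ is identical word for word, except that Lemma~\ref{lemma:xn<0} now gives $q^*_{NoN}=\tilde{q}_p$ and the admissible segment is $\{(x,\tilde{q}_p): 0\le x\le \tfrac{1}{\kappa_u}(\kappa_u\tilde{q}_p-t_{NoN}-\Delta p)\}$, so $(0,\tilde{q}_p)$ is optimal and selected, yielding $(q^{eq}_N,q^{eq}_{NoN})=(0,\tilde{q}_p)$.

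For the case $(q^{eq}_N,q^{eq}_{NoN})\in F^U$ I would run the symmetric argument using Lemma~\ref{lemma:xn>1}: every optimum solution in $F^U$ has $q^*_N=\tilde{q}_f$, and the whole segment $\{(\tilde{q}_f,x):0\le x\le \tfrac{1}{\kappa_u}(\kappa_u\tilde{q}_f-t_N+\Delta p)\}$ is optimal because $n_{NoN}=0$ on $F^U$ makes $q_{NoN}$ immaterial. The endpoint $x=0$ again lies in the segment, so $(\tilde{q}_f,0)$ is optimal, and the $F^U$ branch of Assumption~\ref{assumption:tie_n=0} (which sets $q^{eq}_{NoN}=0$) selects it, giving $(q^{eq}_N,q^{eq}_{NoN})=(\tilde{q}_f,0)$.

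There is no genuine analytic obstacle here, since the lemmas have already carried out the optimization. The only point requiring a moment's care is the verification that the degenerate choice ($q_N=0$ on $F^L$, or $q_{NoN}=0$ on $F^U$) actually lies \emph{inside} the optimal segment rather than outside it — which holds precisely because each segment is anchored at the origin of the irrelevant coordinate, so that the value prescribed by the tie-breaking rule is feasible and payoff-maximizing simultaneously. Once that observation is in place, the tie-breaking rule does the rest and the three equalities follow immediately.
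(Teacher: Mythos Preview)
Your proposal is correct and follows essentially the same approach as the paper, which invokes the payoff-independence of the irrelevant coordinate on $F^L$ and $F^U$ and then applies Assumption~\ref{assumption:tie_n=0}. If anything, your version is more careful than the paper's two-line proof, since you explicitly verify that the value prescribed by the tie-breaking rule (the origin of the irrelevant coordinate) actually lies within the optimal segment identified by Lemmas~\ref{lemma:xn<0} and~\ref{lemma:xn>1}.
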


\begin{proof}
	Note that when $(q^*_N,q^*_{NoN})\in F^L$ (, respectively $(q^*_N,q^*_{NoN})\in F^U$), then the payoff of the CP is independent of $q^*_N$ and $q^*_{NoN}$. Thus, result of the corollary follows from Tie-Breaking Assumption \ref{assumption:tie_n=0}.
\end{proof}

\begin{theorem} All possible equilibrium strategies are:
\be \label{equ:candidates_countinuos}
\ba
&(0,\tilde{q}_f)\in F^I_0\cup F^L_0\ ,\ (\tilde{q}_f,0) \in F^I_0\cup F^U_0\ ,\ (\tilde{q}_f,\tilde{q}_f) \in F^I_0\ ,\\
& (0,\tilde{q}_{p}) \in F^I_1\cup F^L_1\ ,\  (\tilde{q}_f,\tilde{q}_{p}) \in F^I_1
\ea
\ee
\end{theorem}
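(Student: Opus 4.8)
The plan is to assemble the statement directly from the subset-by-subset characterization that the preceding lemmas have already supplied, exploiting the partition $\mathcal{F}=F^L\cup F^I\cup F^U$ of the feasible set in Table~\ref{table:subsets}. Any equilibrium strategy $(q^{eq}_N,q^{eq}_{NoN})$ must be a global maximizer of the CP's payoff \eqref{equ:CPopt_initial2}; I would first invoke the Weierstrass argument already given to guarantee that such a maximizer exists, so that it necessarily falls into one of the three regions. It then suffices to collect the candidate maximizers region by region and take their union, matching each surviving strategy with its membership tags.

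First I would dispose of $F^I$, the case $0<x^*_N<1$. Here Corollary~\ref{lemma:candidatesCP} (resting on Lemma~\ref{lemma:optimumCPqs}) already lists every candidate: $(0,\tilde q_f)$, $(\tilde q_f,0)$, $(\tilde q_f,\tilde q_f)$ in $F^I_0$, and $(0,\tilde q_p)$, $(\tilde q_f,\tilde q_p)$ in $F^I_1$. These are exactly the entries of \eqref{equ:candidates_countinuos} carrying an $F^I_0$ or $F^I_1$ tag, so this region requires no new work beyond citation. Next I would handle the two boundary regions $F^L$ (where $x^*_N\le 0$, hence $n^{eq}_N=0$) and $F^U$ (where $x^*_N\ge 1$, hence $n^{eq}_{NoN}=0$). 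In both, the maximizer is non-unique because one quality is payoff-irrelevant, so the genuine equilibrium strategy is pinned down only after the tie-breaking rule is applied. I would cite Lemmas~\ref{lemma:xn<0} and \ref{lemma:xn>1} to fix the relevant quality ($q^*_{NoN}=\tilde q_f$ in $F^L_0$, $q^*_{NoN}=\tilde q_p$ in $F^L_1$, and $q^*_N=\tilde q_f$ in $F^U$), and then Corollary~\ref{corollary:equi_coun}, which is precisely the application of Assumption~\ref{assumption:tie_n=0}, to set the irrelevant quality to zero. This yields $(0,\tilde q_f)\in F^L_0$, $(0,\tilde q_p)\in F^L_1$, and $(\tilde q_f,0)$ in $F^U$; the latter lies in $F^U_0$ rather than $F^U_1$ because forcing $q_{NoN}=0\le\tilde q_f$ automatically gives $z=0$.

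Taking the union of the three candidate lists, and reconciling the tags, reproduces exactly the right-hand side of \eqref{equ:candidates_countinuos}: $(0,\tilde q_f)$ appears in $F^I_0\cup F^L_0$, $(\tilde q_f,0)$ in $F^I_0\cup F^U_0$, $(\tilde q_f,\tilde q_f)$ in $F^I_0$, $(0,\tilde q_p)$ in $F^I_1\cup F^L_1$, and $(\tilde q_f,\tilde q_p)$ in $F^I_1$. The main obstacle is not any computation but the boundary bookkeeping: I must confirm that $F^U$ contributes nothing in $F^U_1$ (which follows from the tie-break setting $q_{NoN}=0$, whence $z=0$), and that the override ordering of the tie-breaking assumptions is respected so that the continuous-strategy problem collapses onto precisely the discrete candidate set without generating any spurious additional maximizer. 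Establishing this equivalence is the substance of the theorem and justifies the paper's earlier claim that allowing continuous qualities leaves all results unchanged.
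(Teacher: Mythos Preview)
Your proposal is correct and follows essentially the same approach as the paper: the paper's proof is the single sentence ``Results follow directly from Corollaries~\ref{lemma:candidatesCP} and~\ref{corollary:equi_coun},'' and you have simply unpacked that citation region by region, adding the useful observation that $F^U_1$ contributes nothing because the tie-break forces $q_{NoN}=0$ and hence $z=0$.
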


\begin{proof}
	Results follow directly from Corollaries	\ref{lemma:candidatesCP} and \ref{corollary:equi_coun}.
\end{proof}

Note that  \eqref{equ:candidates_countinuos} and \eqref{equ:summarize_CP_candidate_new} are exactly similar. This implies that the strategies chosen by the CP when she chooses from continuous sets is exactly similar to the strategies when she chooses from the discrete set characterized in our model. This completes our proof.

\section{Proof when ISP NoN incurs an additional cost for delivering CP's content to EUs at premium quality  }\label{appendix:fixedcost}
 We now consider that ISP  NoN incurs an additional marginal cost $\kappa$   each time she offers CP's content with premium quality, since NoN could presumably have utilized the associated bandwidth elsewhere.    
 Then   $\pi_{NoN}(p_{NoN}, \tilde{p})$ in \eqref{equ:payoffISPsGeneral_new} must be decremented by $z\kappa\tilde{q}_p$. 
 We outline the proof that (e) is the unique SPNE when $\kappa$ is large.

  
  From the outline in Section~\ref{future}, it now follows that candidates (a)-(e) of Section~\ref{section:summaryof resutls}  constitute the only possible SPNEs. We now argue that for large $\kappa$, (e) is the unique SPNE. From Theorem~\ref{lemma:NEz=0}, 
 	 (e)  is the unique SPNE when the CP is constrained to choose $z^{eq}=0$. Thus, we only need to show that (1) there is no profitable deviation from (e) of either SP,  into the region $z=1$ (2) there is an unilateral  profitable deviation of at least one of the SPs from (a)-(d).
 
 First, notice that for the CP to choose $z = 1$, the marginal side-payment, $\tilde{p}$,  offered by NoN, is at most $\kappa_{ad}.$  This is because when $z=1$, she earns at most $\kappa_{ad}\tilde{q}_p$, from the advertising revenue,   while when $z=0$ she gets a payoff of at least $0.$ Thus, unless her side-payment when $z=1$,  $\tilde{p}\tilde{q}_p$, is at most $\kappa_{ad}\tilde{q}_p$, CP will choose $z=0.$ 
 
 We start with (1). First, note that in (e), N's price choice maximizes the only component present in her payoff, given NoN's choices, her subscription revenue. So unilateral deviation from her choice in (e) is not profitable for N. We now  consider that NoN  chooses $p_{NoN}, \tilde{p}$ such that CP opts for $z=1.$ Thus,  $\tilde{p} \leq \kappa_{ad}.$ NoN's payoff has two components, (1) subscription revenue (2) side-payment from CP minus cost due to delivery of premium-quality , $z(\tilde{p}-\kappa)\tilde{q}_p$.
Suppose NoN selects her price only to maximize (1) given N's price choice. Clearly, the maximum value  gives an upper-bound of (1) in the expression for NoN's payoff.  That maximum will also be finite and depend only on $t_N, t_{NoN}, \tilde{q}_f, \tilde{q}_p.$  Let $\Gamma$ be the difference between this maximum and NoN's payoff under (e). Note that $\Gamma$  depends on system parameters, other than $\kappa$. NoN's payoff will decrease as compared to (e) if $(\kappa-\tilde{p})\tilde{q}_p > \Gamma$, which holds, if  $\kappa >  \Gamma/\tilde{q}_p + \kappa_{ad}$, since $\tilde{p} \leq \kappa_{ad}.$ Thus, the deviation is not profitable for NoN. Thus, there is no profitable deviation from (e) of either SP,  into the region $z=1,$
Thus, (e) is a SPNE.

We now consider (2). We will show that for each of (a)-(d), NoN has an unilateral profitable deviation. When $\kappa$ is large enough, NoN's payoff is negative for each of (a)-(d), because of the additional cost of $\kappa \tilde{q}_p$. Now, as part of the unilateral deviation,  by choosing $\tilde{p} > \kappa_{ad}$, NoN can ensure that CP opts for $z=0.$ Thus, the CP offers lower than the premium quality on NoN. As a result, in the worst case, all EUs may choose N, and NoN may get $0$ subscription revenue. But then her payoff becomes $0$, which is higher than the negative value she was getting earlier. Thus, the deviation is profitable for NoN. Thus,  (a)-(d) are not SPNEs. 

\section{Proofs when there is an upper bound on the marginal side-payment}
\label{appendix:upperbound}
We assume that there exists an upper-bound $\alpha > 0$ on the marginal side-payment that NoN can not exceed. We define  a modified threshold $\tilde{p}_t^{mod} = \min(\tilde{p}_t, \alpha)$.  

We start with by proving a modified version of Theorem~\ref{theorem:NE_stage2_new_suff}. The modified version is:

\begin{theorem}\label{theorem:NE_stage2_new_suff_ub}
	\begin{enumerate}
\item If $z^{eq}=1$, then $\tilde{p}^{eq}=\tilde{p}_{t}^{mod}$.
\item If $z^{eq}=1$, then 
	1) if $\tilde{p}_t < \alpha$, $\pi_{NoN}(p_{NoN},\tilde{p}_{t})>\pi_{NoN,z=0}(p_{NoN},\tilde{p})$ and 2) $\Delta p<t_N+\kappa_u \tilde{q}_{p}$. \\
If $\pi_{NoN}(p_{NoN},\tilde{p}_{t}^{mod}) > \pi_{NoN,z=0}(p_{NoN},\tilde{p})$ and $\Delta p<t_N+\kappa_u \tilde{q}_{p}$, $z^{eq}=1.$
\end{enumerate}
\end{theorem}

\begin{proof}
%

Consider a SPNE in which $z^{eq} = 1$. If  $\tilde{p} >  \tilde{p}_t^{mod}$, $\tilde{p}_t^{mod} = \tilde{p}_t$, as $\tilde{p} \leq \alpha.$ Then, $\tilde{p} > \tilde{p}_t$, and $z^{eq}=0$, by Theorem~\ref{theorem:p_tilde_new}. So $\tilde{p} \leq \tilde{p}_t^{mod}$. The payoff of ISP NoN is  equal to $(p_{NoN}-c)n_{NoN}+\tilde{p}\tilde{q}_f$, by \eqref{equ:payoffISPsGeneral_new},  and is a strictly increasing function of $\tilde{p}$ (note that $p_{NoN}$ is fixed, and by \eqref{equ:EUs_linear}, $n_{NoN}$ is independent of $\tilde{p}$). Thus, for NoN to maximize her payoff, $\tilde{p}$ must assume the maximum value in the range that  $\tilde{p} \leq \tilde{p}_t^{mod}.$ Thus, $\tilde{p} = \tilde{p}_t^{mod}.$ Thus Theorem~\ref{theorem:NE_stage2_new_suff_ub}-1 follows.


Now, we prove Theorem~\ref{theorem:NE_stage2_new_suff_ub}-2.  Again, consider a SPNE in which $z^{eq} = 1$. By Theorem~\ref{theorem:p_tilde_new}, when $\Delta p\geq  t_N+\kappa_u \tilde{q}_{p}$, $z^{eq}=0$. Thus, 	$\Delta p<t_N+\kappa_u \tilde{q}_{p}$. By Theorem~\ref{theorem:NE_stage2_new_suff_ub}-1, $\tilde{p} = \tilde{p}_t^{mod}.$ If $\tilde{p}_t < \alpha$, $\tilde{p}_t^{mod} = \tilde{p}_t.$ Thus, $\pi_{NoN}(p_{NoN},\tilde{p}_{t}^{mod}) = \pi_{NoN}(p_{NoN},\tilde{p}_{t})=\pi_{NoN}(p_{NoN},\tilde{p})$. Thus, if $\pi_{NoN}(p_{NoN},\tilde{p}_{t}) \leq \pi_{NoN,z=0}(p_{NoN},\tilde{p})$, then 	$\pi_{NoN}(p_{NoN},\tilde{p}) \leq \pi_{NoN,z=0}(p_{NoN},\tilde{p}).$ Note that by Theorem~\ref{theorem:p_tilde_new}, NoN can ensure $z^{eq}=0$, by choosing $\tilde{p} > \tilde{p}_{t}$, since $\tilde{p}_{t} < \alpha.$ Thus, by tie-breaking assumption~\ref{assumption:ISP_p_tilde}, $z^{eq}=0$, which is a contradiction. Thus, $\pi_{NoN}(p_{NoN},\tilde{p}_{t}) > \pi_{NoN,z=0}(p_{NoN},\tilde{p}).$

Next, consider that $\pi_{NoN}(p_{NoN},\tilde{p}_{t}^{mod}) > \pi_{NoN,z=0}(p_{NoN},\tilde{p})$ and $\Delta p<t_N+\kappa_u \tilde{q}_{p}.$ By Theorem~\ref{theorem:NE_stage2_new_suff_ub}-1, if $z^{eq}=1$, $\pi_{NoN}(p_{NoN},\tilde{p}_{t}^{mod}) = \pi_{NoN}(p_{NoN},\tilde{p})$. By Theorem~\ref{theorem:p_tilde_new}, NoN can ensure $z^{eq}=1$, by choosing $\tilde{p} = \tilde{p}_{t}^{mod}.$ This fetches NoN a payoff, $\pi_{NoN}(p_{NoN},\tilde{p})$, which exceeds the payoff at $z=0$,  $\pi_{NoN,z=0}(p_{NoN},\tilde{p}).$
Thus, $z^{eq}=1.$ Theorem~\ref{theorem:NE_stage2_new_suff_ub}-2 follows.
\end{proof}

We now outline the derivation of the possible SPNEs. Theorem~\ref{theorem:neutralnotexists_q>}-1 holds, and therefore provides candidate (e) of Section~\ref{section:summaryof resutls} as the only possible SPNE when $z^{eq}=0.$ We now consider that  $z^{eq}=1.$ Thus,  $\Delta p <  t_N+\kappa_u \tilde{q}_{p}$, by Theorem~\ref{theorem:p_tilde_new}. We consider the following cases separately:  $\Delta p\leq \kappa_u \tilde{q}_{p}-t_{NoN}$, and $\kappa_u \tilde{q}_{p}-t_{NoN} <  \Delta p < t_N+\kappa_u \tilde{q}_{p}.$
When  $\Delta p\leq \kappa_u \tilde{q}_{p}-t_{NoN}$, following the arguments in Section~\ref{appendix:firstproof}, Case A-1, and subsequently Remark~\ref{r2}, SPNE candidate (a) of Section~\ref{section:summaryof resutls} becomes a possible SPNE. For $\kappa_u \tilde{q}_{p}-t_{NoN} <  \Delta p < t_N+\kappa_u \tilde{q}_{p}$, following Theorem~\ref{theorem:p_tilde_new}, we need to separately consider the cases of $t_N+t_{NoN} < \kappa_u \tilde{q}_f$ and  $t_N+t_{NoN} \geq \kappa_u \tilde{q}_f.$ In these cases, $\tilde{p} \in \{ \tilde{p}_{t, 2}, \tilde{p}_{t, 3}\}.$  Now, recall that  $\tilde{p}_{t,2}=\kappa_{ad} (n_{NoN}-\frac{\tilde{q}_f}{\tilde{q}_p})$, where $n_{NoN}=\frac{t_N+\kappa_u \tilde{q}_p-\Delta p}{t_N+t_{NoN}}$, and $ \tilde{p}_{t,3}= \kappa_{ad}n_{NoN}(1-\frac{\tilde{q}_f}{\tilde{q}_{p}})$, where $n_{NoN}=\frac{t_N+\kappa_u (\tilde{q}_{p}-\tilde{q}_f)-\Delta p}{t_N+t_{NoN}}$. If $\tilde{p_t} > \alpha$, the side-payment equals a constant $\alpha \tilde{q}_p$.  Thus, in each of these regions, we need to consider the cases that $\tilde{p}_t \leq \alpha, \tilde{p}_t > \alpha$, and $\tilde{p}_t = \tilde{p}_{t, 2}$, $\tilde{p}_t = \tilde{p}_{t, 3}$. Now, $\tilde{p}_{t, 2} \leq \alpha$,  $\tilde{p}_{t, 3} \leq \alpha$,  $\tilde{p}_{t, 2} > \alpha$,  $\tilde{p}_{t, 3} > \alpha$, lead to specific ranges of $\Delta p.$ The intersection of these ranges with those considered in Theorem~\ref{theorem:p_tilde_new}-2, \ref{theorem:p_tilde_new}-3 must be considered. The possible SPNEs in the interior of these regions can be determined by application of the first order conditions as in the proofs of Theorems~\ref{theorem:NE_stage1_new_q>}-2 (Section~\ref{appendix:firstproof}), \ref{theorem:NE_stage1_new_q<}-2, \ref{theorem:NE_stage1_new_q<}-3, \ref{theorem:NE_stage1_new_q<}-4 (Section~\ref{appendix:theorem:NE_stage1_new_q<}, Cases $B_1, B_2, C$ therein). The SPNEs on the boundaries of the regions considered in these proofs may be obtained or ruled out as in these proofs. SPNEs on other boundaries, i.e., $\tilde{p}_{t, 2} = \alpha$, $\tilde{p}_{t, 3} = \alpha$, may be obtained by considering the first-order condition on $p_N$,
\small
\be
\ba
 \frac{d \pi_N}{d p_{N}}=0& \Rightarrow t_{NoN}-\kappa_u \tilde{q}_{p} + p_{NoN}-2 p_N+c=0,
\ea
\ee
\normalsize
and $\Delta p = p_{NoN} - p_N$ given by either $\tilde{p}_{t, 2} = \alpha$ or $\tilde{p}_{t, 3} = \alpha$, as the case may be. 

Thus, overall, possible SPNEs include (a)-(e) and some other candidates identified as above. The number of such candidates will be $11$ overall.  Through detailed analysis, some of these candidates may be ruled out for all parameter values, and some others may be ruled out  for some parameter values. We defer this to future research.

\end{document}